\newcommand{\lmcscenterline}[1]{\begin{center} #1 \end{center}}
\DeclareMathAlphabet{\mathpzc}{OT1}{pzc}{m}{it}
\renewcommand{\vec}[1]{\bm{#1}}
\newcommand{\set}[1]{\{#1\}}
\newcommand{\Comment}[1]{ }
\newcommand{\ByDef}{\triangleq}
\newcommand{\Gammainf}{\Gamma^{\mathsf{x}}}
\newcommand{\fofder}[1]{\mathsf{f}(#1)}
\newcommand{\fofderaux}[1]{\mathsf{f\mbox{-}aux}(#1)}
\newcommand{\dertoderinf}[1]{\mathsf{fder}(#1)}
\newcommand{\dertoderinfaux}[1]{\mathsf{fder\mbox{-}aux}(#1)}
\newcommand{\nextmove}[1]{\mathsf{nextm}(#1)}
\newcommand{\finplay}[1]{\mbox{{\sc fp}}(\bm{#1})}
\newcommand{\movlabtree}[1]{\mathsf{mlt}(#1)}
\newcommand{\game}{\mathcal{G}}
\newcommand{\es}{\mathcal{E}}
\newcommand{\Puniverse}{\mathfrak{P}_\mathfrak{U}}
\newcommand{\finParts}[1]{{\mathcal{P}_{\!\mbox{\scriptsize fin}}(#1)}}
\newcommand{\euniverse}{\mathbf{E}}
\newcommand{\upto}[2]{#2_{\!/ #1}}
\newcommand{\setof}[1]{\widehat{#1}}
\newcommand{\partecipants}{\mathfrak{P}}
\newcommand{\wf}{\mathcal{W}\!}
\newcommand{\rt}[1]{\mathsf{rt}(#1)}
\newcommand{\rtsigma}[2]{\mathsf{rt}_{#1}(#2)}
\newcommand{\rts}[1]{\mathsf{rts}(#1)}
\newcommand{\srt}[1]{\mathsf{srt}(#1)}
\newcommand{\srtaux}[2]{\mathsf{srt\mbox{-}aux}(#1,#2)}
\newcommand{\rtaux}[3]{\mathsf{rt\mbox{-}aux}_{#1}(#2,#3)}
\newcommand{\orch}[1]{\mathsf{orch}(#1)}
\newcommand{\regtree}[1]{\mathit{tree}(#1)}
\def \External {\Sigma}
\newcommand{\bigExternal}{\Sigma}
\def \Internal {\oplus}
\newcommand{\bigInternal}{\bigoplus}
\newcommand{\modelsACRel}{\models^{\!\mbox{\tiny $\ACRel$}}}
\newcommand{\modelsSubcontr}{\models^{\mbox{\tiny \!$\subcontr$}}}
\newcommand{\Rel}{\mathpzc R}
\newcommand{\FunH}{\mathcal{H}}
\newcommand{\FunK}{{\mathcal K}}
\newcommand{\Subst}[2]{\{#1/#2\}}
\newcommand{\DScrachcard}{\Dual{\tt{scratchcard}}}
\newcommand{\Cheque}{\tt{cheque}}
\newcommand{\Bag}{\tt{bag}}
\newcommand{\DBag}{\Dual{\Bag}}
\newcommand{\Belt}{\tt{belt}}
\newcommand{\DBelt}{\Dual{\Belt}}
\newcommand{\Price}{\tt{price}}
\newcommand{\DPrice}{\Dual{\Price}}
\newcommand{\Card}{\tt{card}}
\newcommand{\DCard}{\Dual{\Card}}
\newcommand{\Cash}{\tt{cash}}
\newcommand{\DCash}{\Dual{\Cash}}
\newcommand{\sBag}{\mbox{\tt \tiny bag}}
\newcommand{\sDBag}{\Dual{\sBag}}
\newcommand{\sPrice}{\mbox{\tt \tiny price}}
\newcommand{\sDPrice}{\Dual{\sPrice}}
\newcommand{\sCard}{\mbox{\tt \tiny card}}
\newcommand{\sDCard}{\Dual{\sCard}}
\newcommand{\sCash}{\mbox{\tt \tiny cash}}
\newcommand{\sDCash}{\Dual{\sCash}}
\newcommand{\Byr}{\mathsf{Buyer}}
\newcommand{\Slr}{\mathsf{Seller}}
\newcommand{\vdBS}{\vdash_{\Byr\!\pp\!\Slr}}
\newcommand{\ored}[1]{\mathstackrel{#1}{\longrightarrow}}      
\newcommand{\rlbk}{\sf rbk}
\newcommand{\mathstackrel}[2]{\mathrel{\stackrel{#1}{#2}}}
\newcommand{\Lts}[1]{\mathstackrel{#1}{\Longrightarrow}}
\newcommand{\comply}{\dashv}
\newcommand{\complyR}{\comply^{\mbox{\tiny {\sf r\hspace{-0.3pt}b\!k}}}}
\newcommand{\ncomplyR}{\not\complyR}
\newcommand{\complyTB}{\comply_{\mbox{\tiny {\sf t\!\,b}}}}
\newcommand{\complyTBO}{\complyO_{\mbox{\tiny {\sf t\!b}}}}
\newcommand{\complyF}{\,\mbox{\scriptsize $\backsim\hspace{-3.5pt}\mathbf{\mid}$}\,}
\newcommand{\complyP}{\dashv\!\!\!\dashv}
\newcommand{\complyOco}{\complyO_{co}}
\newcommand{\complyOcok}[1]{\complyO_{#1}}
\newcommand{\complyO}{\comply^{\mbox{\tiny {\sf Orch}}}}
\newcommand{\complyOF}{\complyF^{\!\mbox{\tiny {\sf Orch}}}}
\newcommand{\ACRel}{\,\mathbb{A\!C}\,}
\newcommand{\ACRelco}{\mathbb{A\!C}_{\!\mathit{co}}}
\newcommand{\ACRelk}[1]{\,{\ACRel_{\!#1}}\,}
\newcommand{\subcontrk}[1]{\,{\subcontr_{\!#1}}\,}
\newcommand{\Actdot}{.}
\newcommand{\procdot}{.}
\newcommand{\Funct}{\mathbf{F}}
\newcommand{\auxFunct}[2]{\mathcal{F}(#1,#2)}
\newcommand{\orchAct}[2]{\langle\mbox{\small $#1$},\mbox{\small $#2$}\rangle}
\newcommand{\Dual}[1]{\overline{#1}}
\newcommand{\QDual}[1]{\widehat{#1}}
\newcommand{\subcontr}{\preccurlyeq}
\newcommand{\Sem}[1]{[\hspace{-0.6mm}[ #1 ]\hspace{-0.6mm}]}
\newcommand{\Procinfty}{\PProcinfty\!\!\!\text{\bf -aux}}
\newcommand{\PProcinfty}{{\text{\bf R}^{\!\infty}}}
\newcommand{\Clause}[1]{{\bf Clause} ~#1}
\newcommand{\Proc}{{\text{\bf R-aux}}}
\newcommand{\PProc}{{\text{\bf R}}}
\newcommand{\der}{\vartriangleright}
\newcommand{\derinfty}{\der^{\!\!\mbox{\tiny $\infty$}}}
\newcommand{\derOrch}{\der_{\!\mbox{\tiny \sf o}}}
\newcommand{\derOrchinfty}{\derOrch^{\!\!\mbox{\tiny $\infty$}}}
\newcommand{\derinfOrch}{\der_{\!\mbox{\tiny \sf o}}^{\!\mbox{\tiny \sf x}}}
\newcommand{\Deriv}{\mathcal {D}}
\newcommand{\Der}{\Deriv}
\newcommand{\Prove}{\textbf{Prove}}
\newcommand{\OrchToDerAux}{\textbf{O2D-aux}}
\newcommand{\OrchToDer}{\textbf{O2D}}
\newcommand{\Synth}{\textbf{Synth}}
\newcommand{\IF}{\textbf{if}}
\newcommand{\THEN}{\textbf{then}}
\newcommand{\WHERE}{\textbf{where}}
\newcommand{\ELSE}{\textbf{else}}
\newcommand{\CASE}{\textbf{case}}
\newcommand{\OF}{\textbf{of}}
\newcommand{\AND}{\textbf{and}}
\newcommand{\LET}{\textbf{let}}
\newcommand{\FAIL}{\textbf{fail}}
\newcommand{\IN}{\textbf{in}}
\newcommand{\FA}{\textbf{for all}}
\newcommand{\TcomplHyp}{\mbox{\scriptsize \sc Hyp}}
\newcommand{\TcomplAx}{\mbox{\scriptsize \sc Ax}}
\newcommand{\CkptcomplHyp}{\mbox{\sc Hyp}}
\newcommand{\CkptcomplAx}{\mbox{\sc Ax}}
\newcommand{\Inv}{\mathfrak{Inv}}
\newcommand{\Buf}{\mbox{\small $\mathbb{B}$}}
\newcommand{\Bufsc}[2]{\Buf_{#1}^{\mbox{\tiny $#2$}} \kern-2pt \raise -1.5pt\hbox{$\rfloor$}}
\newcommand{\Bufcs}[2]{\raise -1.5pt\hbox{$\lfloor$}\kern-1pt{}_{#1}^{\mbox{\tiny $#2$}}\Buf}
\def \leftinseq #1|#2|{#1|\,#2\,|}
\def \rightinseq |#1|#2{|\,#1\,|\,#2}
\def\numberstoc|#1|_#2{|#1|_{#2}}
\def\numberctos _#1|#2|{|#1|_{#2}}
\newcommand{\Nat}{\mathbb{N}} 
\newcommand{\Set}[1]{\{\,#1\,\}}
\def\Pred[#1]{~[\,#1\,]}
\newcommand{\Iff}{\Leftrightarrow}
\renewcommand{\implies}{~\Rightarrow~}
\newcommand{\OrchStep}[1]{\stackrel{#1}{\mapsto}}
\newcommand{\playerA}{\mathsf A}
\newcommand{\playerB}{\mathsf B}
\newcommand{\playerC}{\mathsf C}
\newcommand{\Zero}{{\bf 0}}
\newcommand{\ff}{\,\mathsf{to}_{\!\pp}\!\!}
\newcommand{\fftb}{\,\mathsf{to}_{{\mid\hspace{-1.3pt}\mid\hspace{-1.3pt}\mid}}\!}
\newcommand{\buf}[1]{[#1]}
\newcommand{\tblts}[1]{\mathstackrel{\,#1\,}{~\longrightarrow\hspace{-5.5mm}\longrightarrow~}}
\newcommand{\notblts}{\,\,\not\!\!\!\!\!\tblts{}}
\newcommand{\tbltsstar}[1]{\mathstackrel{\,#1\,}{~\longrightarrow\hspace{-5.5mm}\longrightarrow^{*}~}}
\newcommand{\tbolts}[1]{\mathstackrel{\,#1\,}{~\longrightarrow\hspace{-5.5mm}\longrightarrow_{\!o}~}}
\newcommand{\lts}[1]{\mathrel{\xrightarrow{\,#1\,}{}}}
\newcommand{\ltsstar}[1]{\mathrel{\xrightarrow{\,#1\,}{\!\!^*}}}
\newcommand{\nottblts}[1]{\;\;\not\!\!\!\!\!\!\tblts{#1}}
\newcommand{\nottbolts}[1]{\;\;\not\!\!\!\!\!\!\tbolts{#1}}
\newcommand{\ltsOrch}[2]{\OrchStep{\langle #1,#2 \rangle}}
\newcommand{\ltsOrchplus}[2]{\OrchStep{\langle #1,#2 \rangle^+}}
\newcommand{\rec}{{\sf rec} \, }
\newcommand{\notLts}[1]{~\not\!\!\mathstackrel{#1}{\Longrightarrow}}
\newcommand{\Ctrs}{Contracts}
\newcommand{\ASC}{\mathsf{A\!S\!C}}
\newcommand{\Stacks}{{\sf Histories}}
\newcommand{\emptystack}{[\;]}
\newcommand{\cons}{\!:\!}
\newcommand{\ctrs}{contracts}
\newcommand{\SbehavH}{{\sf RCH}}
\newcommand{\CommRule}{\textsf{comm}}
\newcommand{\RbkRule}{\textsf{rbk}}
\newcommand{\rb}{\sf rb}
\newcommand{\Names}{\mathcal{N}}
\newcommand{\CoNames}{\overline{\Names}}
\newcommand{\stopA}{{\bf 1}}
\newcommand{\stopf}{\mathfrak{1}}
\newcommand{\Act}{\mbox{\bf Act}}
\newcommand{\OrchAct}{\mbox{\bf OrchAct}}
\newcommand{\Orch}{\mbox{\sf Orch}}
\newcommand{\Trace}[1]{\mbox{\sf T\!r}(#1)}
\newcommand{\MaxTrace}[1]{\mbox{\sf  M\!\,a\!\,x\!T\!r}(#1)}
\newcommand{\FinMaxTrace}[1]{\mbox{\sf  F\!\,i\!\,n\!\,M\!\,a\!\,x\!T\!r}(#1)}
\newcommand{\sat}[1]{\mbox{\it sat}(#1)}
\newcommand{\snd}[1]{\mathsf{snd}(#1)}
\newcommand{\tbAct}{{\bf tb}\hspace{-1pt}\Act}
\newcommand{\seq}[1]{\langle#1\rangle}
\newcommand{\Sbehav}{{\sf SC}}
\newcommand{\back}{\prec}
\newcommand{\tbASC}{\ASC^{[\,]}}
\newcommand{\sctb}{\tbASC}
\newcommand{\np}[2]{#1{\back}\,#2}
\newcommand{\subcontrF}{\ll}
\newcommand{\pf}[1]{\mathbin{\|_{#1}}}
\newcommand{\tbpf}[1]{\mathbin{\,\tbpp_{\!#1}}}
\newcommand{\pp}{\,\|\,}
\newcommand{\tbpp}{\,\mid\hspace{-4.3pt}\mid\hspace{-4.3pt}\mid\,}
\newcommand{\lab}{\mathit{l}}
\newcommand{\dersc}{\;\blacktriangleright\;}
\newcommand{\derscinfty}{\;\blacktriangleright^{\!\!\mbox{\tiny $\infty$}}\;}
\newcommand{\cmark}{\mbox{\ding{51}}}%
\newcommand{\xmark}{\mbox{\ding{55}}}%
\theoremstyle{definition}\newtheorem{notation}{Notation}
\begin{document}

\title[Retractability, games and orchestrators for session contracts]{Retractability, games and orchestrators\\ for session contracts}

\author[Barbanera and de' Liguoro]{Franco Barbanera}	
\address{Dipartimento di  Matematica e Informatica, University of Catania}	
\email{barba@dmi.unict.it}  

\author[]{Ugo de' Liguoro}	
\address{Dipartimento di Informatica, University of Torino}	
\email{ugo.deliguoro@unito.it}  
\thanks{This work was partially supported by the European Union ICT COST Action IC1405 Reversible computation - extending horizons of computing, the Project FIR 1B8C1 of the University of Catania and Project FORMS 2015 of the University of Turin.}	



\keywords{contract, session contract, retractability, orchestrator, concurrent game,  strategy}
\subjclass{[Theory of Computation]:Logics and meanings of programs-}
\titlecomment{The present paper is a reorganised, revised and extended version of the  COORDINATION 2016 conference paper {\em ``A game interpretation of retractable contracts''} \cite{BdL16} by the same authors}


\begin{abstract}
  \noindent Session contracts is a formalism enabling to investigate client/server interaction protocols and to interpret session types. We extend session contracts in order to represent outputs whose actual sending in an interaction depends on a third party or on a mutual agreement between the partners. Such contracts are hence adaptable, or as we say ``affectible''. In client/server systems, in general, compliance stands for the satisfaction of all client's requests by the server. We define an abstract notion of ``affectible compliance''  and show it to have a precise three-party game-theoretic interpretation. This in turn is shown to be equivalent to a compliance based on interactions that can undergo a sequence of failures and rollbacks, as well as to a compliance based on interactions which can be mediated by an orchestrator.
Besides, there is a one-to-one effective correspondence between winning strategies and orchestrators. The relation of subcontract for affectible contracts is also investigated.
\end{abstract}

\maketitle


The notion of {\em contract} ~\cite{CCLP06,LP07,CGP10} has been proposed as an abstraction to formally specify and check the behaviour of software systems, and especially of web services. In particular, in the setting of service-oriented architectures, the concept of agreement, often called {\em compliance}, is of paramount importance while searching for components and ensuring that they will properly interact with each other. The main challenge is that compliance has to meet the contrasting requirements of guaranteeing  correctness of interactions w.r.t. certain safety and liveness conditions, while remaining coarse enough to maximize the possibilities of finding compliant components in a library or services through the web.

The main conceptual tool to face the issue is that of relaxing the constraint of a perfect correspondence among contracts through {\em contract refinement}, also called sub-contract \cite{CGP10,BravettiZ09a} and  sub-behaviour \cite{BdL13} relations, that is pre-order relations such that processes conforming to more demanding contracts (which are lower in the pre-order) can be safely substituted in contexts allowing more permissive ones. Indeed contract refinement closely resembles subtyping, as it is apparent in the case of session types \cite{GH05,BdL13}, and it is related to (but doesn't  coincide with) observational pre-orders and
{\em must-testing} in process algebra \cite{LP07,BernardiH13}.

However, since the first contributions to the theory of contracts \cite{CGP10}, a rather different approach has been followed, based on the idea of filtering out certain actions that, althought unmatched on both sides of a binary interaction, can be neglected or prevented by the action of a mediating process called the {\em orchestrator} \cite{Padovani10,PadTCS,BdL15ice,BdL15}, without compromising the reaching of the goals of the participants, like the satisfaction of all client requests in a client-server architecture.

Another route for the same purpose is to change the semantics of contracts so that interacting processes can adapt  to each other by means of a rollback mechanism: these are the {\em retractable contracts} proposed in  \cite{BDLdL15}.
Although compliance can be decided in advance, interaction among processes exposing retractable contracts undergoes a sequence of failures and backtracks that might be avoided by extracting information from the compliance check.

The contribution of the present paper is to show that the use of orchestrators and retractability  are indeed two equivalent approaches to get compliance by adapting, affecting, the behaviour
of the partners of a client/server interaction,
at least in the case of  binary {\em session contracts} \cite{BdL13,BH13}.
These are contracts that limit the non-determinism by constraining both external and internal choices to a more regular form, so that they can be looked at as an interpretation of session types into the formalism of contracts \cite{BdL13,BH13}. In particular, session contracts can be seen as binary session types without value or channel passing.

The contracts we consider in this paper are session contracts with external choices of outputs
that we abstractly look at, in a sense, as the {\em affectible}, adaptable  parts of a contract.
These contracts are syntactically the same as the retractable session contracts \cite{BDLdL15}, but instead of adding rollback to the usual contract semantics, we 
formalise inside an abstract notion of compliance the fact that the actual sending of {\em affectible}, adaptable outputs can be influenced by an agreement between  the interaction partners or by some entity external to the system, in order to make the partners compliant.  
In particular, {\em affectible compliance}, i.e. 
compliance got by means of a (run-time) adaptation of the contracts' behaviours, will be first  abstractedly presented  as a coinductively defined relation.
This relation will be proved later on to be decidable and to coincide both with the retractable compliance relation of \cite{BDLdL15} involving failures and rollbacks and with the orchestrated compliance, where
the (affectible) synchronizations are influenced by elements of a particular class of orchestrators in the sense of \cite{Padovani10} and \cite{BdL15}. 

The essence of this equivalence is that the above mentioned orchestrators correspond to winning strategies in certain concurrent games that naturally model affectible contracts. In \cite{BartJLAMP} the theory of contracts has been grounded on games over event structures among multiple players; applying this framework to affectible contracts, the interaction among a client and a server can be seen as a play in a  three-party game. Player $\playerA$ moves according to the normal actions of the client; player $\playerB$ moves according to the normal actions of the server, whereas moves by player $\playerC$ correspond to affectible actions on both sides.   The server $\sigma$ is hence affectible compliant with the  client $\rho$ whenever 
$\playerC$ has a winning strategy in the game with players $\playerA$ and $\playerB$, where
player $\playerC$ wins when she/he succeeds to lead the system $\rho\|\sigma$ to a successful state (the client terminates) or the interaction proceeds indefinitely without deadlocking.

The payoff of the game theoretic interpretation is that there is a precise correspondence between winning strategies for player $\playerC$ and elements of a class of orchestrators in the sense of \cite{Padovani10}  and \cite{BdL15}. Such a correspondence is of interest on its own, since strategies are abstract entities while orchestrators are terms of a process algebra and concrete witnesses of the agreement among participants of a session.
Moreover,  we can decide whether a client/server pair is affectible compliant by means of an algorithm that synthesizes an orchestrator, if any, or reports failure.\\
We also show that there is a one-to-one correspondence between orchestrators and derivations
in the formal system axiomatizing the relation of affectible compliance.

The substitutability relation (affectible subcontract) on servers, induced by the relation of affectible compliance,  can be defined as for the usual subcontract relations. Its decidability, however has to be proved in a direct
way: the introduction of adaptability implies that decidability of the subcontract relation cannot be simply inferred from decidability of compliance. It descends instead
from correctness and termination of the proof recontruction algorithm for the formal system for the affectible subcontract relation. Moreover, we shall show how a derivation in such a formal system does correspond in an effective way to a functor on orchestrators. In particular, if $\sigma$
is proved to be a subcontract of $\sigma'$, the functor transforms any orchestrator making
$\sigma$ compliant with a client $\rho$ into an orchestrator making $\sigma'$ compliant with $\rho$.

The present paper is a reorganised, revised and extended version of  \cite{BdL16}, where most of the proofs had been omitted and where the correspondence between derivations for the compliance relation and orchestrators was not present. Besides, the relation of subcontract was not investigated in \cite{BdL16}, as well as the correspondence between derivations for the 
subcontract relation and orchestrator functors.

\subsection*{Overview of the paper.}
In Section \ref{sec:affectcompliance} we define affectible session contracts and the abstract notion of compliance on them. In Section \ref{sec:gameinterpr} we recall the notion of multi-player game from \cite{BartJLAMP} based on event structures. We then show how 
it is possible to interpret a client/server system $\rho\!\pp\!\sigma$ with a three-players game $\game_{\rho\!\pp\!\sigma}$ by means of a turn-based operational semantics.  
Sections \ref{sec:retractablecompliance} will be devoted to the formalization of the notion of interactions with rollbacks and the related notion of retractable compliance. Orchestrators and orchestrated interactions for affectible contracts will be defined in Section \ref{sec:orchcompl},
together with the notion of orchestrated compliance. In \ref{sec:affectible-retractable} 
an axiomatization for affectible compliance will be provided and we shall show how all the above mentioned notions are related with each other: the Main Theorem I will essentially state that
 the abstract notion of affectible compliance and its
game-theoretic interpretation are but an abstract representation of both retractable and orchestrated compliance. The Main Theorem II will show instead how it is effectively possible to
get derivations, winning strategies and orchestrators out of each other. The definition of the subcontract relation, its axiomatization and decidability, together with the correspondence between
subcontract derivations and orchestrators functors will be the topic of Section \ref{sec:subcontractrel}. A Conclusion and Future Work section (Section \ref{sec:concl}) will be the last one before some appendices.
We shall use a simple working example through the various sections in order to clarify the
notions we introduce.
Many proofs and accessory formalisms will be detailed in the appendices at the end of the paper.

%

\section{ Affectible session contracts}
\label{sec:affectcompliance}


{\em Affectible session contracts} (affectible contracts for short) stem from {\em session contracts} \cite{BdL13,BernardiH13}\footnote{The name used in \cite{BdL13} was  actually {\em session behaviours}.}. With respect to session contracts, affectible contracts 
add the affectible output construct, which is called retractable  output in \cite{BDLdL15}.
The affectible output operator aims at representing points where the client/server interaction can be influenced by a third party or by an {\em agreement} between the two partners; consequently it is natural to use the CCS external choice operator as it is the case of the input branching (which is always affectible). Outputs in an internal choice are regarded as unaffectible actions and treated as unretractable in the setting of \cite{BDLdL15}.

\begin{defi}[Affectible session contracts]
\label{Adef:ckpt-behav}
Let $\Names$ {\em (names)} be some countable set of symbols and let $\CoNames = \Set{\Dual{a} \mid a \in \Names}$ {\em (co-names)}, with
$\Names\cap\CoNames = \emptyset$. \\
The set $\ASC$ of {\bf affectible session contracts} is defined as the set of the {\bf closed} (with respect to the binder $\rec$) expressions generated by the following 
grammar, 

\[\begin{array}{lcl@{\hspace{8mm}}l}
\sigma,\rho&:=& \mid ~ \stopA &\mbox{(success)} \\[1mm]
       &     & \mid ~\sum_{i\in I} a_i.\sigma_i  & \mbox{(input)} \\[1mm]
       &     & \mid ~\sum_{i\in I} \Dual{a}_i.\sigma_i& \mbox{(affectible output)}\\[1mm]
       &     & \mid ~\bigoplus_{i\in I} \Dual{a}_i.\sigma_i & \mbox{(unaffectible output)}\\[1mm]
       &     & \mid  ~x  & \mbox{(variable)}\\[1mm]
       &     & \mid ~\rec x. \sigma &  \mbox{(recursion)}
\end{array}
\]
where 
\begin{itemize}
\item $I$ is non-empty and finite;
\item names  and co-names in choices are pairwise distinct;
\item $\sigma$ is not a variable in $\rec x.\sigma$.
\end{itemize}
\end{defi} 
\noindent

The set $\Act = \Names\cup\CoNames$ is the set of \emph{actions}, ranged over by the metavariables $\alpha$, $\alpha'$, $\alpha_1$, $\alpha_2$, etc. On $\Act$ the usual involution
$(\Dual{\,\cdot\,})$ is defined, that is such that $\Dual{\Dual{\alpha}}=\alpha$.

\begin{notation}
\label{not:affcontracts}
As usual, when $I=\Set{1,..,n}$, we shall indifferently use the notations $a_1\Actdot \sigma_1 + \cdots + a_n\Actdot \sigma_n$ and $\External_{i\in I} a_i\Actdot \sigma_i$, as well as $\Dual{a}_1\Actdot \sigma_1 \oplus \cdots \oplus \Dual{a}_n\Actdot \sigma_n$ and $ \bigoplus_{i\in I} \Dual{a}_i\Actdot \sigma_i$. \\
We also shall write $\alpha_k.\sigma_k + \sigma'$ to denote $\External_{i\in I} \alpha_i.\sigma_i$ where $k\in I$ and $\sigma'=\External_{i\in (I\setminus\Set{k})} \alpha_i.\sigma_i$.
Similarly for the internal choice. \\
We assume $I$ never to be a singleton in $\External_{i\in I} \Dual{a}_i.\sigma_i$. This means that
a term like $a_k.\sigma$ has to be unambiguously read as $\bigoplus_{i\in \Set{k}} \Dual{a}_i.\sigma$ and not as $\External_{i\in \Set{k}} \Dual{a}_i.\sigma$
\end{notation}

Recursion is guarded and hence contractive in the usual sense.
Unless stated otherwise, we take the equi-recursive view of recursion, by equating $\rec x \procdot \sigma $ with $\sigma\Subst{x}{\rec x \procdot \sigma}$.
The trailing $\stopA$ is normally omitted: for example, we shall write $a + b$ for $a \Actdot \stopA + b \Actdot \stopA$. 
Affectible contracts will be considered modulo commutativity of internal and external choices.

The notion of compliance for (client-server) pairs of contracts is usually defined by means of an LTS.
A {\em client-server system} (a {\em system} for short) is a pair $\rho\pp \sigma$ of contracts, where $\rho$ plays the role of client and $\sigma$ of server; 
let the relation $\rho\pp \sigma\lts{}\rho'\pp \sigma'$ represent a communication step resulting into the new system $\rho',\sigma'$; now$\rho$ and $\sigma$ are  compliant if $\rho\pp \sigma\lts{*}\rho'\pp \sigma'\not\lts{}$ implies $\rho = \stopA$. Then one studies the properties of the compliance relation, possibly with reference to the contract syntax alone (see e.g. \cite{BdL13}). With affectible contracts, however, the semantics
of $\|$ is more complex as it involves a form of backtracking. For the present exposition we prefer to move from 
an abstract coinductive
definition for the affectible compliance relation. Later on we shall prove that defining compliance out of the retractable operational semantics (Section \ref{sec:retractablecompliance}) is equivalent to defining it out of the orchestrated operational semantics (Section \ref{sec:orchcompl}) and that both approaches are equivalent to the abstract affectible compliance below.


\begin{defi}\label{def:ACRel}
 The $\mathbb{A}$ffectible $\mathbb{C}$ompliance relation $\ACRel \subseteq\ASC\times\ASC$ is coinductively
defined as follows.
 Let $\rho,\sigma\in\ASC$. Then $\rho\ACRel\sigma$ if one of the following conditions holds:
\begin{enumerate}
\item \label{def:ACRel-1} $\rho = \stopA$;
\item \label{def:ACRel-2} $\rho = \sum_{i\in I}\alpha_i.\rho_i$, $\sigma = \sum_{j\in J}\Dual{\alpha}_j.\sigma_j$ and 
	$\exists k \in I \cap J.\; {\rho_k} \ACRel {\sigma_k}$;
\item \label{def:ACRel-3} $\rho = \bigoplus_{i\in I}\Dual{a}_i.\rho_i$, $\sigma = \sum_{j\in J}a_j.\sigma_j$,
	$I\subseteq J$ and $\forall k \in I. \; {\rho_k} \ACRel {\sigma_k}$;
\item \label{def:ACRel-4} $\rho = \sum_{i\in I}a_i.\rho_i$, $\sigma = \bigoplus_{j\in J}\Dual{a}_j.\sigma_j$,
	$I\supseteq J$ and $\forall k \in J. \; {\rho_k} \ACRel {\sigma_k}$.
\end{enumerate}

\end{defi}

\noindent
Let us informally describe the sense of the previous definition. Consider a system $\rho\pp \sigma$ of affectible contracts.
An unaffectible communication between $\rho$ and $\sigma$ is a synchonization involving
unaffectible outputs and inputs. In such a synchronization, any of the unaffectible outputs exhibited by one of the two contracts can be expected, since it depends
on an internal decision of the process whose behaviour is abstractedly represented by the 
contract.

 An affectible  communication, instead, is a synchronization involving affectible outputs and inputs. Affectable outputs are intended not to depend
on internal decisions, but to be {\em influenced  from the outside} in order to enable the system not to get to any stuck state.
In client/server interactions there is a bias towards the client. So a stuck state can be interpreted
by any pair $\rho\pp \sigma$ where $\rho \neq \stopA$ but no communication is possible.\\
So, in a system $\rho\pp \sigma$,  the server $\sigma$ is {\em affectible-compliant} with the client $\rho$ if either $\rho = \stopA$,  namely the client has successfully terminated; or all unaffectible communications of the system $\rho\pp \sigma$ lead to compliant systems; or there exists an affectible communication leading to a compliant system. 

In the above informal description, the worlds `` influenced  from the outside'' are rather abstract; they can be made concrete either via the characterization in terms of retractable computations, as done in Section \ref{sec:retractablecompliance}, or in terms of orchestrated interactions as done in
Section \ref{sec:orchcompl}.

\begin{exa}
{\em
Let us consider the following example from \cite{BDLdL15}.
A $\mathsf{Buyer}$ is looking for a bag ($\DBag$) or a belt ($\DBelt)$; she will decide how to pay, either by credit card ($\DCard$) or by cash ($\DCash$), after knowing the $\Price$ from the $\mathsf{Seller}$:
\[
\mathsf{Buyer} =\DBag.\Price.(\DCard \oplus \DCash) + \DBelt. \Price.(\DCard \oplus \DCash)
\]


\noindent
The $\mathsf{Seller}$ does not accept credit card payments for items of low price, like belts, but only for more expensive ones, like bags:
\[
\mathsf{Seller} = \Belt. \DPrice.\Cash + \Bag.\DPrice.(\Card + \Cash)
\]
From the previous definition it is not difficult to check that $\mathsf{Buyer}\ACRel\mathsf{Seller}$.
}
\end{exa}

\begin{rem}
\label{rem:noduality}
Notice that, unlike for session contracts, we have no notion of syntactical duality for affectible contracts, 
since it would not be definable in a natural way (i.e. in such a way duality be involutive and
a contract be always compliant with its dual).
In fact for affectible contracts we should define
$\Dual{\Dual{a}+\Dual{b}}$  either as $a+b$ or as $\Dual{a}\oplus\Dual{b}$. In both cases, however, we would lose the involutive property of the duality operator.  In the first case we would get  $\Dual{\Dual{\Dual{a}+\Dual{b}}} = \Dual{a+b}=\Dual{a}\oplus \Dual{b}$ (since
the duality operator should reasonably be a conservative extension over session contracts, and hence $\Dual{a+b}=\Dual{a}\oplus \Dual{b}$). A similar problem would arise  
by defining $\Dual{\Dual{a}+\Dual{b}}=\Dual{a}\oplus\Dual{b}$.
\end{rem}

%

\section{Game-theoretic interpretation of affectible contracts}
\label{sec:gameinterpr}

Following \cite{BartJLAMP} we interpret client-server systems of affectible contracts as games over event structures. 
This yields a game-theoretic interpretation of affectible compliance that will be of use to relate this last notion to both retractable and orchestrated compliance.


For the reader's convenience we briefly recall the basic notions of event structure and game associated to an LTS.


\begin{defi}[Event structure \cite{WinskelG:eves}]
Let $\mathbf{E}$ be a denumerable universe of {\em events}, ranged over by $e,e', \ldots$,
and let $\mathbf A$ be a universe of {\em action labels}, ranged over by $\alpha,\alpha',\beta,\ldots$.
Besides, let $E\subseteq \mathbf{E}$ and let $\#\subseteq E\times E$  be 
an irreflexive and symmetric relation (called {\em conflict relation}).
\begin{enumerate}
\item The predicate {\it CF} on sets $X\subseteq E$ and the set {\it Con} of finite {\em conflict-free} sets are defined~by
\lmcscenterline{
{\it CF}$(X)=\forall e,e'\in X. \neg(e\#e')$ \hspace{10mm}{\it Con} = $\Set{X\subseteq_{\mathit{fin}} E \mid \mbox{\it CF}(X)}$}

\item
An {\em event structure} is a quadruple $\mathcal{E} = (E,\#,\vdash,\lab)$  where
\begin{itemize}
\item
$\vdash\,\subseteq \mbox{\it Con}\times E$ is a relation 
such that
$sat(\vdash)=\;\vdash$~ (i.e. $\vdash$ is {\em saturated}), where 
$sat(\vdash)=\,\Set{(Y,e)\mid X\vdash e \And X\subseteq Y\in \mbox{\it Con}}$;
\item 
$\lab: E\rightarrow\mathbf{A}$ is a labelling function.
\end{itemize} 
\end{enumerate}
\end{defi}

\noindent
Given a set $E$ of events, $E^\infty$ denotes the set of sequences (both finite and infinite)
of its elements.
We denote by $\vec{e} = \seq{e_0 e_1\cdots }$ (or simply $e_0 e_1\cdots$) a sequence of events\footnote{Differently than in \cite{BartJLAMP}, we use the notation $\vec{e}$ for sequences instead of $\sigma$, which refers to a contract here.}.
Given $\vec{e}$, we denote by $\setof{\vec{e}}$ the set of its elements,
 by $|\vec{e}|$ its length (either a natural number or $\infty$) and by $\upto{i}{\vec{e}}$, for $i < |\vec{e}|$, 
 the subsequence $\seq{e_0e_1\cdots e_{i-1}}$ of its first $i$ elements.
Given a set $X$ we denote by $|X|$ its cardinality. $\Nat$ is the set of natural numbers.
The symbol `$\rightharpoonup$' will be used, as usual, to denote partial mappings.


\begin{defi}[LTS over configurations \cite{BartJLAMP}]
Given an event structure  {$\es = (E,\#,\vdash, \lab)$}, we define the LTS
$(\finParts{E},E,\rightarrow_\es)$ (where $\finParts{E}$ is the set of states, $E$ the set of labels and the labelled transition $\rightarrow_\es$)  as follows:
\lmcscenterline{
$C\lts{e}_{\!\!\es} C\cup\Set{e}$ \hspace{2mm}if  \hspace{2mm} $C\vdash e$, $e\not\in C$ ~and~ $CF(C\cup\Set{e})$}
\end{defi}
\noindent
We shall omit the subscript in $\rightarrow_\es$ when clear from the context.
For sake of brevity we shall often denote an LTS $(S,L,\lts{})$ by simply $(S,\lts{})$ or $\lts{}$.

Given an LTS $(S,\rightarrow)$ and a state $s\in S$, we denote by $\Trace{s,\rightarrow}$ the set of the (finite or infinite) traces in $(S, \rightarrow)$ starting in $s$, that is
$\Trace{s,\rightarrow} = \Set{s_0 s_1\cdots \mid s_0=s, s_i\lts{}s_{i+1}}$.

\subsection{Multi-player games}
 All the subsequent definitions and terminology are from \cite{BartJLAMP}, except in the case of games 
that we call ``multi-player'' instead of ``contracts'', which would be confusing in the present setting.

A set of participants (players) to a game will be denoted by $\partecipants$, whereas the universe of participants is denoted by $\Puniverse$. We shall use $A$, $B$,\ldots as variables ranging over $\partecipants$ or $\Puniverse$. The symbols $\playerA$, $\playerB$, \ldots will denote particular elements of $\partecipants$ or $\Puniverse$.
We assume that each event is associated to a player by means of a function 
$\pi : \mathbf{E} \rightarrow \Puniverse$. Moreover, given $A\in\Puniverse$ we define $\euniverse_{A} =\Set{e\in \mathbf{E} \mid \pi(e)=	A}$.

\begin{defi}[Multi-player game]\leavevmode
\begin{enumerate}
\item
A {\em game} $\game$ is a pair $(\es,\Phi)$ where
$\es = (E,\#,\vdash, \lab)$ is an event structure
and\linebreak
$\Phi : \Puniverse \rightharpoonup E^\infty \rightarrow \Set{-1,0,1}$  associates each participant and trace with a {\em payoff}.\\
Moreover, for all $X\vdash e$ in $\es$, $\Phi(\pi(e))$ is defined. We say that $\game$ is a game with participants $\mathfrak{P}$ whenever $\Phi A$ is defined for each player $A$ in $\mathfrak{P}$.
\item
A {\em play} of a game $\game=(\es,\Phi)$ is a (finite or infinite) trace of $(\emptyset,\rightarrow_\es)$ i.e. an element of 
$\Trace{\emptyset,\rightarrow_\es}$.
\end{enumerate}
\end{defi}

\begin{defi}[Strategy and conformance]\leavevmode
\begin{enumerate}
\item
A {\em strategy} $\Sigma$ for a participant $A$ in a game $\game$ is a function which maps each finite
play $\vec{e}{=}\seq{e_0\cdots e_n}$ to a  (possibly empty) subset of $\euniverse_A$ 
such that: ~
$
e{\in}\Sigma(\vec{e}) \implies \vec{e}e \mbox{ is a play of $\game$.}
$
\item
A play $\vec{e}=\seq{e_0e_1\cdots }$ {\em conforms} to a strategy $\Sigma$ for a participant $A$ in $\game$ if, for all $0\leq i < |\vec{e}|$,
~$
e_i\in\euniverse_A  \implies e_i\in\Sigma(\upto{i}{\vec{e}}).$
\end{enumerate}
\end{defi}

In general there are neither a turn rule nor alternation of players, similarly to concurrent games in \cite{AM99}.
A strategy $\Sigma$ provides ``suggestions'' to some player on how to legally move continuing finite plays (also called ``positions'' in game-theoretic literature). But $\Sigma$ may be ambiguous at some places, since $\Sigma(\vec{e})$ may contain more than an event; in fact it can be viewed as a partial mapping which is undefined when
$\Sigma(\vec{e}) = \emptyset$. 

We refer to \cite{BartJLAMP} for the general definition of winning strategy for multi-player games
(briefly recalled also in Remark \ref{rem:BartStrat} below),
since it involves the conditions of fairness and innocence, which will be trivially satisfied in our  interpretation of affectible client-server systems, where the notion of winning strategy
corresponds to the one that will be given in Definition \ref{def:Phi}.

We define now the notion of {\em univocal strategy}. When showing the equivalence between the various notions of compliance and the existence of winning strategies, we shall restrict to 
univocal strategies for the sake of simplicity.

\begin{defi}[Univocal strategies]
\label{def:univocalstrat}
$\Sigma$ is {\em univocal} if $\forall \vec{e}.~|\Sigma(\vec{e})| \leq 1$.
\end{defi}

\subsection{Turn-based operational semantics and compliance}
 Toward the game theoretic interpretation of 
a client-server system $\rho\pp\sigma$, we introduce an operational semantics of affectible contracts, 
making explicit the idea of a three-player game.
We interpret the internal choices and the  input actions of the client as moves of a player $\playerA$ and the internal choices and the 
input actions of the server as moves of a player $\playerB$. The synchronisations due to
affectible choices are instead interpreted as moves of the third player $\playerC$.

From a technical point of view this is a slight generalization and adaptation to our scenario of the turn-based semantics of ``session types''
in \cite{BartJLAMP}, \S 5.2. The changes are needed both because we have three players instead of two, and because session types are just session contracts, that is affectible contracts without affectible outputs.

\begin{defi}[Single-buffered $\ASC$]
The set $\sctb$ of {\em single-buffered} affectible contracts is defined by
\[
\tbASC = \ASC\cup\Set{\Zero}\cup\Set{\buf{\Dual{a}}\sigma \mid  \Dual{a} \in \CoNames, \sigma\in \ASC}.\]
\end{defi}
\noindent
We use the symbols $\tilde\rho, \tilde\sigma,\tilde\rho',\tilde\sigma'\ldots$ to denote elements of $\tbASC$.
A {\em turn-based configuration} (a configuration for short) is a pair $\tilde{\rho} \tbpp \tilde{\sigma}$,
where $\tilde\rho,\tilde\sigma\in\tbASC$.

As in \cite{BartJLAMP}, we have added the ``single buffered'' contracts $\buf{\Dual{a}}\sigma$ to represent the situation in which $\Dual{a}$ is the only output offered after an internal choice. Since the actual synchronization takes place in a subsequent step, $\Dual{a}$ is ``buffered'' in front of the continuation $\sigma$.
Such a technical device is adopted for two separate motivations. First, since we wish our game interpretation of configurations to extend the one provided in \cite{BartJLAMP}. Second, since by means of that we shall manage to get easier proofs. In fact, using the following turn-based operational semantics any move of player $\playerA$, unless getting to a stuck state, is necessarily followed by a move of $\playerB$; and any move of player $\playerB$  is necessarily preceded by a move of $\playerA$.


\begin{defi}[Turn-based operational semantics of configurations]
\label{def:tbLTS}
Let  $\tbAct = \{ \playerA,\playerB,\playerC\} \times (\Act \cup \Set{\cmark})$, where $\playerA,\playerB,\playerC$ are particular elements of $\Puniverse$.\\
In Figure \ref{fig:tb-opsem} we define the LTS $\tblts{}$ over
turn-based configurations, with labels in $\tbAct$.
\end{defi}
\noindent
An element $\playerA{:}\Dual{a}$ (resp. $\playerA{:}\Dual{a}$ ) of $\tbAct$ represents an output (resp. input) action performed by
player $\playerA$. Similarly for $\playerB$. An element $\playerC{:}a$ represents one of the possible way the player $\playerC$ can affect the interaction between affectible-outputs and inputs. An element $\playerC{:}\cmark$ represents instead a ``winning'' move.
We use the symbols $\beta,\beta',\ldots$ to denote elements of $\tbAct$.

Comparing $\tblts{}$ with the usual LTS for session contracts \cite{BdL13,BH13}, we observe that $\buf{\Dual{a}}\sigma$ is a duplicate of
$\Dual{a}.\sigma$, with the only difference that now there is a redundant step in \linebreak
$\Internal_{i\in I}\Dual{a}_i.\rho_i\tbpp \tilde\sigma \tblts{\playerA:\Dual{a}_k}\buf{\Dual{a}_k}\rho_k \tbpp \tilde\sigma$ when $I$ is the 
singleton $\Set{k}$. Also, besides the reductions concerning the affectible choices, we have the new reduction $\stopA\tbpp \tilde{\rho} \tblts{\playerC:\cmark} \Zero \tbpp \tilde{\rho}$ to signal when
player $\playerC$ wins.

\noindent
Let $\vec{\beta}\!=\!\seq{\beta_1\cdots\beta_n}\!\in\!\tbAct^*$. We shall use the notation $\tblts{\vec{\beta}}=\tblts{\beta_1}\!\!\circ\cdots\circ\!\!\tblts{\beta_n}$.\\

\begin{figure}[t]
\hrule
\vspace*{2mm}
\[ \begin{array}{rcl@{\hspace{9mm}}rcl}

\Internal_{i\in I}\Dual{a}_i.\rho_i\tbpp \tilde\sigma &\tblts{\playerA:\Dual{a}_k} &\buf{\Dual{a}_k}\rho_k \tbpp \tilde\sigma&



\External_{i\in I} a_i.\rho_i \tbpp \buf{\Dual{a}_k} \sigma &\tblts{\playerA:a_k}& \rho_k \tbpp \sigma \\[2mm]

\tilde\rho\tbpp \Internal_{i\in I}\Dual{a}_i.\sigma_i  & \tblts{\playerB:\Dual{a}_k} &\tilde\rho\tbpp\buf{\Dual{a}_k}\sigma_k &

\buf{\Dual{a}_k} \rho \tbpp \External_{i\in I} a_i.\sigma_i  &\tblts{\playerB:a_k}& \rho \tbpp \sigma_k \\[2mm]

\Dual{a}.\rho + \rho' \tbpp a. \sigma + \sigma'  &\tblts{\playerC:a}& \rho\tbpp \sigma
&
a.\rho + \rho' \tbpp \Dual{a}. \sigma + \sigma'  &\tblts{\playerC:a}& \rho\tbpp \sigma\\[2mm]

\Dual{a}.\rho + \rho' \tbpp a. \sigma &\tblts{\playerC:a}& \rho\tbpp \sigma
&
a.\rho \tbpp \Dual{a}. \sigma + \sigma'  &\tblts{\playerC:a}& \rho\tbpp \sigma\\[2mm]

\multicolumn{6}{c}{
\stopA\tbpp \tilde{\rho} ~~\tblts{\playerC:\cmark} ~~\Zero \tbpp \tilde{\rho}
}\\
\mbox{where } (k\in I)
\end{array} \]
\vspace{-4mm}
\caption{Turn-based operational semantics of turn-based configurations}\label{fig:tb-opsem}
\vspace{8pt}
\hrule
\end{figure}

The relation of turn-based compliance on single-buffered affectible contracts (and hence on affectible contracts) is defined coinductively using the operational semantics formalised by the LTS $\tblts{}$, as follows.

\begin{defi}[Turn-Based Compliance Relation $\complyTB$]\label{ccr}\leavevmode
\begin{enumerate}[beginpenalty=99]
\item \label{ccr1}
Let
$\FunH: \mathcal{P}(\sctb\!{\times}\!\sctb){\rightarrow}  \mathcal{P}(\sctb\!{\times}\!\sctb)$ be such that, 
 for any   $\Rel\!\subseteq\! \sctb\!{\times}\!\sctb$, we have
$(\tilde\rho,\tilde\sigma) \in \mathcal{H}(\Rel)$\, if \,the following conditions hold:

\begin{enumerate}
\item\label{c1} 
$\tilde\rho\tbpp \tilde\sigma \, \notblts{}$~~ implies ~~ $\rho=\Zero$;

\item\label{c2} 
$\forall \tilde\rho',\tilde\sigma' $.~  $ [~ \tilde\rho\tbpp \tilde\sigma \tblts{\beta} {\tilde\rho'}\tbpp {\tilde\sigma'}$~~ implies~~ ${\tilde\rho'}\;\Rel \;{\tilde\sigma'}~]$,\\~~where $\beta\in \{\playerA{:}a,\playerA{:}\Dual{a},\playerB{:}a,\playerB{:}\Dual{a} \mid a\in\Names\}$;

\item\label{c3} 
$\exists a\in\Names . \tilde\rho\tbpp \tilde\sigma \!\!\tblts{\playerC:a}$  implies ~ $\exists \tilde\rho',\tilde\sigma',a $.  $[\tilde\rho\tbpp \tilde\sigma \tblts{\playerC:a} {\tilde\rho'}\tbpp {\tilde\sigma'}$ and ${\tilde\rho'}\;\Rel \;{\tilde\sigma'}]$;
\end{enumerate}
\item \label{ccr2} 
A relation $\Rel\subseteq \sctb\times\sctb$ is a
{\em turn-based compliance relation\/} if 
\hbox{$\Rel\subseteq \FunH(\!\Rel\!)$.} \\
$\complyTB$ is the greatest solution of the equation $X=\FunH(X)$, that is ~~~$\complyTB~=~\nu\FunH$.
\item \label{ccr3}
For $\rho,\sigma\in\ASC$,
we say that $\sigma$ is {\em turn-based compliant} with $\rho$  if ~
$\rho\complyTB \sigma$.
\end{enumerate}
\end{defi}

We can now show that turn-based compliance restricted to contracts in $\ASC$ and affectible compliance do coincide.

\begin{prop}[$\ACRel$ and $\complyTB$ equivalence.]
\label{th:complyequivtbcomply}
Let $\rho,\sigma\in\ASC$.
$$ \rho\complyTB\sigma ~~~~  \Iff   ~~~~ \rho\ACRel\sigma.$$
\end{prop}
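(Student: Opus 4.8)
The plan is to prove the two implications separately, each by coinduction, the main task being to bridge the two state spaces: $\ACRel$ lives on $\ASC\times\ASC$, whereas $\complyTB=\nu\FunH$ is defined over the larger space $\tbASC\times\tbASC$ of single-buffered configurations. Throughout I would adopt the equi-recursive convention, identifying $\rec x.\sigma$ with $\sigma\Subst{x}{\rec x.\sigma}$; since recursion is guarded, every contract has a well-defined head form among $\stopA$, an input $\sum_i a_i.\rho_i$, an affectible output $\sum_i\Dual{a}_i.\rho_i$, or an unaffectible output $\bigoplus_i\Dual{a}_i.\rho_i$, and both relations are insensitive to unfolding, so I may reason on head forms and on the rules of Figure \ref{fig:tb-opsem}.

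For the implication $\rho\ACRel\sigma\Impl\rho\complyTB\sigma$, I would exhibit a turn-based compliance relation containing $\ACRel$. Define $\Rel\subseteq\tbASC\times\tbASC$ as the union of $\ACRel$, of $\{\stopA\}\times\tbASC$, of $\{\Zero\}\times\tbASC$, and of the two families of half-consumed pairs produced by the buffering steps: $\{(\buf{\Dual{a}_k}\rho_k,\,\sum_{j\in J}a_j.\sigma_j)\mid \bigoplus_{i\in I}\Dual{a}_i.\rho_i\ACRel\sum_{j\in J}a_j.\sigma_j,\ k\in I\}$, arising from a client internal choice (clause (3) of Definition \ref{def:ACRel}), together with the symmetric family $\{(\sum_{i\in I}a_i.\rho_i,\,\buf{\Dual{a}_k}\sigma_k)\mid\ldots,\ k\in J\}$ from a server internal choice (clause (4)). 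I then check $\Rel\subseteq\FunH(\Rel)$ clause by clause. A clause-(3) pair $(\bigoplus_{i\in I}\Dual{a}_i.\rho_i,\sum_{j\in J}a_j.\sigma_j)$ offers exactly the player-$\playerA$ moves $\tblts{\playerA{:}\Dual{a}_k}$ ($k\in I$) into the half-consumed pairs, matching condition (b); each half-consumed pair then offers, since $I\subseteq J$, exactly one synchronising move into some $(\rho_k,\sigma_k)\in\ACRel$, again condition (b), with conditions (a) and (c) vacuous. Clause-(4) pairs are symmetric. A clause-(2) pair is handled by condition (c) alone, the witnessing $k\in I\cap J$ yielding a player-$\playerC$ move into $(\rho_k,\sigma_k)$ while there are no player-$\playerA/\playerB$ moves. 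The $\stopA$- and $\Zero$-pairs satisfy all three conditions trivially. By coinduction $\Rel\subseteq\nu\FunH=\complyTB$, whence $\ACRel\subseteq\complyTB$.

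For the converse $\rho\complyTB\sigma\Impl\rho\ACRel\sigma$ (with $\rho,\sigma\in\ASC$), I would show that the restriction of $\complyTB$ to $\ASC\times\ASC$ is a post-fixed point of the functional generating $\ACRel$, so that the coinductive principle of Definition \ref{def:ACRel} delivers the inclusion. Assuming $\rho\complyTB\sigma$ and using $\complyTB\subseteq\FunH(\complyTB)$, I proceed by cases on the head form of $\rho$. If $\rho=\stopA$, clause (1) holds. If $\rho=\bigoplus_{i\in I}\Dual{a}_i.\rho_i$, condition (b) on each player-$\playerA$ move gives $(\buf{\Dual{a}_i}\rho_i,\sigma)\in\complyTB$; the decisive point is that a buffered client output can only be discharged against a server input, so were $\sigma$ equal to $\stopA$, an affectible output, or an internal choice, a stuck configuration with non-$\Zero$ client would be reached, contradicting condition (a). Hence $\sigma=\sum_{j\in J}a_j.\sigma_j$, avoidance of stuckness forces $I\subseteq J$, and condition (b) on the player-$\playerB$ synchronisation yields $\rho_i\complyTB\sigma_i$: this is clause (3). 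The case $\rho=\sum_i a_i.\rho_i$ splits according to whether $\sigma$ is an unaffectible output (clause (4), with $J\subseteq I$ forced as above by the server-side buffer) or an affectible output (clause (2), extracted from condition (c), with $I\cap J\neq\emptyset$ forced by condition (a)); the case $\rho=\sum_i\Dual{a}_i.\rho_i$ forces $\sigma$ to be an input and again yields clause (2). In every remaining pairing of head forms the initial configuration, or a one-step successor, is stuck with client $\neq\Zero$, so condition (a) rules it out, mirroring the absence of a corresponding $\ACRel$-clause.

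The main obstacle is precisely this mismatch of state spaces introduced by the single-buffered encoding: one unaffectible synchronisation of $\ACRel$ is simulated by two turn-based steps passing through a buffered configuration. The heart of both directions is therefore the analysis of these intermediate configurations, and especially the observation that a configuration with two pending buffers, or with a buffered output facing a non-matching partner, is stuck and hence, by condition (a), never compliant. It is exactly this stuckness that pins down the head shape of $\sigma$ from that of $\rho$ and forces the side conditions $I\subseteq J$, $J\subseteq I$, $I\cap J\neq\emptyset$ of Definition \ref{def:ACRel}; verifying it rule-by-rule against Figure \ref{fig:tb-opsem} is the only genuinely delicate bookkeeping.
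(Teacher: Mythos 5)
Your proof is correct, but it follows a genuinely different route from the paper's. You argue by pure relation-level coinduction in both directions: for $\ACRel\subseteq\complyTB$ you close $\ACRel$ under the intermediate single-buffered configurations (adding the half-consumed pairs and the $\stopA$/$\Zero$ pairs) and verify that the resulting $\Rel$ satisfies $\Rel\subseteq\FunH(\Rel)$, so $\Rel\subseteq\nu\FunH$; for the converse you show that the restriction of $\complyTB$ to $\ASC\times\ASC$ is a post-fixed point of the functional $\FunK$ generating $\ACRel$, with the stuckness analysis of doubly-buffered and mismatched-buffer configurations doing the work of pinning down the head form of $\sigma$ and the side conditions $I\subseteq J$, $J\subseteq I$, $I\cap J\neq\emptyset$. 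The paper instead (Appendix~\ref{appendix:complyequivtbcomply}) factors the equivalence through witness objects: it characterises $\rho\ACRel\sigma$ by the existence of a regular tree in $\rts{\rho\pp\sigma}$ with no $\xmark$-leaf (Lemma~\ref{lem:equivtb2}), characterises $\rho\complyTB\sigma$ analogously via $\rts{\rho\tbpp\sigma}$ (Lemma~\ref{lem:equivtb1}), and then exhibits the maps $\ff$ and $\fftb$ translating $\xmark$-free trees back and forth, where your ``one $\ACRel$ step equals two turn-based steps'' observation reappears as the collapsing/expansion of buffered nodes. Your approach is shorter and more elementary, needing no auxiliary syntax beyond the two functionals; what the paper's heavier tree machinery buys is reusability: the same trees $\rts{\cdot}$ and their decorated variants are the backbone of the later correspondences with winning strategies and orchestrators (e.g.\ the constructions behind Theorems~\ref{th:complyAwinstrat} and~\ref{th:derstratorchequiv}), so proving this proposition your way would leave that infrastructure still to be built. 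One small point worth making explicit if you write this up: in your second direction the post-fixed-point property of the restricted relation is established by reasoning about $\complyTB$-pairs \emph{outside} $\ASC\times\ASC$ (the buffered intermediates), which is legitimate precisely because conditions (a)--(c) of Definition~\ref{ccr} hold for every pair of $\nu\FunH$, not only for the restricted ones.
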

\proof
See  Appendix \ref{appendix:complyequivtbcomply}
\qed

It will be relatively easy to get a game interpretation of the relation $\complyTB$. So the proposition \ref{th:complyequivtbcomply} will be used to get a game interpretation for
the relation $\ACRel$.


\subsection{Three-player game interpretation for $\ASC$ client/server systems.}
Using the turn-based semantics, we associate to any client/server system an event structure, and then a three-player game\footnote{
Such interpretation is called {\em semantic-based} in \cite{BartJLAMP} and it applies quite naturally to our context. 
Instead the {\em syntax-based} approach (which is equivalent to the semantic-based one in a two-players setting; see \cite{BartJLAMP} \S 5.3.2) cannot be straightforwardly extended to a three-player game.}, extending the treatment of session types with two-player games in \cite{BartJLAMP}. For our purposes we just consider the  LTS of a given 
client/server system instead of an arbitrary one.

\begin{defi}[ES of affectible-contracts systems]
\label{def:esacs}
Let $\rho\pp\sigma$ be a client/server system of affectible contracts. We define the event structure
$\Sem{\rho\pp\sigma}=(E,\#,\vdash, \lab)$, where
\begin{itemize}
\item
$E=\Set{(n,\beta) \mid n\in\Nat, \beta \in \tbAct}$
\item
$\# = \Set{((n,\beta_1),(n,\beta_2)) \mid n\in\Nat, \beta_1,\beta_2 \in \tbAct, \beta_1\neq\beta_2}$
\item
$\vdash\, =\, \sat{\vdash_{\!\rho\!\pp\!\sigma}}$\\[-2mm]
where 
$\vdash_{\!\rho\!\pp\!\sigma} = \Set{(X,(n,\beta))  \mid \rho\tbpp\sigma \tblts{\snd{X}} \tilde{\rho}'\tbpp\tilde{\sigma}' \!\!\tblts{\beta}\mbox{ and } n=|X|+1}$
\item
$\lab(n,\beta) = \beta$
\end{itemize}
where the partial function $\snd{\mbox{-}}$ maps any $X= \Set{(i,\beta_i)}_{i=1..n}$ to $\seq{\beta_1\cdots\beta_n}$, and it is undefined over sets not of the shape of $X$.
\end{defi}

Events in $\Sem{\rho\pp\sigma}$ are actions in $\tbAct$ paired with time stamps. Two events are in conflict if different actions should be performed at the same time, so that configurations must be linearly ordered w.r.t. time. The relation 
$X \vdash_{\!\rho\!\pp\!\sigma} (n,\beta)$ holds if $X$ is a trace in the LTS of $\rho\pp\sigma$ of length $n-1$; therefore the enabling
$Y \vdash (n,\beta)$ holds if and only if $Y$ includes a trace of length $n-1$ that can be prolonged by $\beta$, possibly including $(n,\beta)$ itself and any other action that might occur after $\beta$ in the LTS.

\begin{exa}
By the above definition, $\vdBS$ in $\Sem{\mathsf{Buyer}\pp\mathsf{Seller}}$
corresponds to
\[\begin{array}{llr}
\big\{ &\emptyset\vdBS (1,(\playerC{:}\Belt)),\ \emptyset\vdBS (1,(\playerC{:}\Bag)),\\
& \Set{\!(1,(\playerC{:}\Belt))\!}  \vdBS (2,(\playerB{:}\DPrice)),
\Set{\!(1,(\playerC{:}\Bag))\!} \vdBS (2, (\playerB{:}\DPrice)),\\
& \Set{\!(1,(\playerC{:}\Belt)), (2,(\Slr{:}\DPrice))\!}\vdBS (3,(\playerA{:}\Price)), \ldots\\
& \ldots X_1 \vdBS (6,(\playerC,\cmark))& \big\}
\end{array}\]
where $X_1 =  \Set{\!(1,(\playerC{:}\Bag)), (2,(\playerB{:}\DPrice)), (3,(\playerA{:}\Price)),
(4,(\playerA{:}\DCash)),(5,(\playerB{:}\Cash))
 }$.
The $\vdash_{\!\rho\!\pp\!\sigma}$ of this simple example is finite. It is not so in general for systems with recursive contracts.
\end{exa}

\noindent
The following definition is a specialisation of Definitions 4.6 and 4.7 in \cite{BartJLAMP}.\\
We use $\MaxTrace{s,\rightarrow}$ and $\FinMaxTrace{s,\rightarrow}$ to denote the
set of maximal traces and finite maximal traces, respectively, of $\Trace{s,\rightarrow}$.

\begin{defi}[Game $\game_{\rho\!\pp\!\sigma}$]
\label{def:Phi}\noindent
Given $\rho,\sigma \in \ASC$, we define the game $\game_{\rho\!\pp\!\sigma}$ as
$(\Sem{\rho\pp\sigma},\Phi)$, where:
$\pi(n,\beta) = A$ if $\beta = A{:}\alpha$,\hspace{2mm}$\Phi A$ is defined only if $A\in\Set{\playerA,\playerB,\playerC}$ and\\
\[\Phi A \vec{e} = \left\{ 
\begin{array}{c@{~~~~}l@{~~~~}l}
1 &  \mbox{ if }~~ \mathbf{P}(A,\vec{e})\\
-1 & \mbox{ otherwise }
\end{array}\right.
\]
where $\mathbf{P}(A,\vec{e})$ holds whenever
\[
\vec{e}\!\in \!\Trace{\emptyset,\rightarrow_{\Sem{\rho\!\pp\!\sigma}}}
\!\And \! [\vec{e}\!\in\! \FinMaxTrace{\emptyset,\rightarrow_{\Sem{\rho\!\pp\!\sigma}}}
\implies \exists \vec{e'}\!,\!n.\,\vec{e}= \vec{e'}(n,\!(A{:}\cmark))].
\]
In words $\mathbf{P}(A,\vec{e})$ holds if whenever $\vec{e}$ is a maximal trace in the
LTS $\Trace{\emptyset,\rightarrow_{\Sem{\rho\!\pp\!\sigma}}}$ which is finite, then it ends
by the event $(n,\!(A{:}\cmark)$, where $n$ is just a time stamp, $A{:}\cmark$ is the successful action
performed by the participant $A$. Note that if $\vec{e}$ is not a trace of the LTS or it is not maximal, then
$\mathbf{P}(A,\vec{e})$ trivially holds.

A player $A$ {\em wins} in the sequence of events $\vec{e}$ if 
$\Phi A\,\vec{e} > 0$.
A strategy $\Sigma$ for player $A$ is {\em winning} if $A$ wins in all  plays conforming to $\Sigma$.
\end{defi}
Note that, if an element $\vec{e}$ of $\Trace{\emptyset,\rightarrow_{\Sem{\rho\!\pp\!\sigma}}}$ is infinite, $\mathbf{P}(A,\vec{e})$ holds for any $A$ since the implication is
vacuously satisfied. If  $\vec{e}$ is finite, instead, $\mathbf{P}(A,\vec{e})$ holds only in case
the last element of the sequence $\vec{e}$ is of the form $(n,\!(A{:}\cmark))$.

\begin{exa}
\label{ex:gamestrat}
For the game $\game_{\mathsf{Buyer}\pp\mathsf{Seller}}$, it is possible to check that,
for instance,
\begin{center}
$\Phi \playerC \vec{s_1} = 1$,~~~ $\Phi \playerA \vec{s_1} = -1$,~~~ $\Phi \playerB \vec{s_2} = -1$,~~~ $\Phi \playerC \vec{s_3} = -1$
\end{center}
where
\begin{center}
  \begin{tabular}{l}
 $\vec{s_1} \!=${\small  $(1,(\playerC{:}\Bag))(2,(\playerB{:}\DPrice))(3,(\playerA{:}\Price))
(4,(\playerA{:}\DCash))(5,(\playerB{:}\Cash))(6,(\playerC,\cmark))$},\\
$\vec{s_2} =$ {\small  $(4,(\playerA{:}\Bag))(1,(\playerC{:}\DPrice))$}\\
$\vec{s_3} =$ {\small  $(1,(\playerC{:}\Bag))(2,(\playerB{:}\DPrice))(3,(\playerA{:}\Price))
(4,(\playerA{:}\DCash))(5,(\playerB{:}\Cash))$}
\end{tabular}
\end{center}

Let us define a particular strategy $\widetilde{\Sigma}$ for $\playerC$ in $\game_{\mathsf{Buyer}\pp\mathsf{Seller}}$ as follows:
\begin{equation}
\label{eq:tildesigma}
\widetilde{\Sigma}(\vec{s}) = 
\left\{ \begin{array}{l@{~~~}l}
       \Set{(1,(\playerC{:}\Bag))} & \mbox{ if } \vec{s}=\seq{}\\
       \Set{(6,(\playerC,\cmark))} & \mbox{ if } \vec{s}=\vec{s_3}\\
        \emptyset                       & \mbox{ for any other play }
          \end{array} \right.
\end{equation}
The strategy $\widetilde{\Sigma}$ for $\playerC$ in $\game_{\mathsf{Buyer}\pp\mathsf{Seller}}$ is winning (actually it is the only winning strategy for the game of this example).
Moreover, $\widetilde{\Sigma}$ is univocal (in our simple example there are not non univocal strategies for player $\playerC$).
\end{exa}

In a game $\game_{\rho\!\pp\!\sigma}$ it is not restrictive to look, for player $\playerC$, at univocal strategies only, as established in the next lemma.

We say that $\Sigma$ {\em refines} $\Sigma'$, written $\Sigma\leq\Sigma'$, if and only if $\Sigma(\vec{e}) \subseteq \Sigma'(\vec{e})$ for all $\vec{e}$.

\begin{lem}\label{lem:refinement}
If $\playerC$ has a winning strategy $\Sigma$, then $\playerC$ has a univocal winning strategy $\Sigma'$ such that $\Sigma'\leq\Sigma$.
\end{lem}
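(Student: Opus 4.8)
The plan is to obtain $\Sigma'$ from $\Sigma$ by pruning, at each finite play, all but one of the events that $\Sigma$ suggests. First I would fix a well-ordering of the denumerable universe of events $\euniverse$ and set
\[
\Sigma'(\vec{e}) =
\begin{cases}
\Set{\min \Sigma(\vec{e})} & \text{if } \Sigma(\vec{e}) \neq \emptyset,\\
\emptyset & \text{otherwise.}
\end{cases}
\]
By construction $\Sigma'(\vec{e}) \subseteq \Sigma(\vec{e})$ for every finite play $\vec{e}$, so $\Sigma' \leq \Sigma$, and $|\Sigma'(\vec{e})| \leq 1$, so $\Sigma'$ is univocal. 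It is moreover a legal strategy for $\playerC$: since $\Sigma'(\vec{e}) \subseteq \Sigma(\vec{e}) \subseteq \euniverse_\playerC$ and $\Sigma$ already guarantees that $\vec{e}e$ is a play whenever $e \in \Sigma(\vec{e})$, the same guarantee is inherited by $\Sigma'$. The only nontrivial point left is that $\Sigma'$ is still winning.

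The first key step is to observe that conformance is monotone in the strategy. If a play $\vec{e}$ conforms to $\Sigma'$, then for every index $i$ with $e_i \in \euniverse_\playerC$ we have $e_i \in \Sigma'(\upto{i}{\vec{e}}) \subseteq \Sigma(\upto{i}{\vec{e}})$, which is precisely the requirement for $\vec{e}$ to conform to $\Sigma$. Hence every play conforming to $\Sigma'$ also conforms to $\Sigma$, i.e. the set of plays conforming to $\Sigma'$ is contained in the set of plays conforming to $\Sigma$.

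The conclusion then follows by inheriting the winning condition downward along this inclusion: since $\Sigma$ is winning, $\Phi\,\playerC\,\vec{e} > 0$ for all plays $\vec{e}$ conforming to $\Sigma$, hence a fortiori for all plays conforming to $\Sigma'$, so $\Sigma'$ is winning. The step I expect to need the most care — and the only place where the specific shape of $\game_{\rho\pp\sigma}$ enters — is ruling out the apparent danger that pruning $\playerC$'s options creates new \emph{losing} plays, in which $\playerC$ is left unable to fire the winning move $\playerC{:}\cmark$. This worry is dispelled by the exact form of $\Phi$ in Definition \ref{def:Phi}: a finite play is penalised only when it is maximal in the LTS $\rightarrow_{\Sem{\rho\pp\sigma}}$, that is when it lies in $\FinMaxTrace{\emptyset,\rightarrow_{\Sem{\rho\pp\sigma}}}$, and not merely when it is maximal among the $\Sigma'$-conforming plays. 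Any such LTS-maximal play conforming to $\Sigma'$ is, by the inclusion just established, an LTS-maximal play conforming to $\Sigma$, which by the winning of $\Sigma$ already ends in $\playerC{:}\cmark$; thus no new losing play can arise and restricting to a univocal strategy is harmless.
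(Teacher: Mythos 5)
Your proposal is correct and follows essentially the same argument as the paper: define $\Sigma'$ by selecting the minimum of $\Sigma(\vec{e})$ under a well-founded total ordering of the countable event set, observe that every play conforming to $\Sigma'$ conforms to $\Sigma$, and conclude a fortiori that $\Sigma'$ is winning. Your closing paragraph merely makes explicit what the paper's terse ``a fortiori'' step relies on — that the payoff $\Phi$ penalises only LTS-maximal finite plays, independently of the strategy — which is a worthwhile clarification but not a different proof.
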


\begin{proof}
Let $\Sigma(\vec{e}) = \Set{e_1,\ldots, e_{n}}$. 
Since events are a countable set we may suppose w.l.o.g. that there is a total and well founded ordering over $E$, so that if $\Set{e_1,\ldots, e_{n}} \neq \emptyset$ there exists an $e' = \min \Set{e_1,\ldots, e_{n}}$;
let us define $\Sigma'(\vec{e}) = e'$.

By definition $\Sigma'$ is
a univocal refinement of $\Sigma$ such that any play 
which conforms $\Sigma'$ conforms $\Sigma$ as well. 
Now if $\Sigma$ is winning then the payoff $\Phi \playerC \vec{e} = 1$ for any play $\vec{e}$ conforming $\Sigma$ and hence a fortiori for any play conforming $\Sigma'$; hence $\Sigma'$ is winning for $\playerC$.
\end{proof}

\begin{rem}
\label{rem:BartStrat}
According to \cite{BartJLAMP}, $A$ wins in a play if $\wf A \vec{e} > 0$, where
$\wf A \vec{e} = \Phi A \vec{e}$ if all players are ``innocent'' in $\vec{e}$, while $\wf A \vec{e} = -1$ if $A$ is ``culpable''. Moreover, if $A$ is innocent and someone else culpable, $\wf A \vec{e}=  +1$.
A strategy $\Sigma$ of $A$ is winning if $A$ wins in all {\em fair} plays conforming to $\Sigma$.
A play $\vec{e}$ is ``fair'' for a strategy $\Sigma$ of a player $A$ if any event in $E_A$ which is infinitely often enabled is eventually performed. Symmetrically $A$ is ``innocent'' in $\vec{e}$ if she eventually plays all persistently enabled moves of her in $\vec{e}$, namely if she is fair to the other players, since the lack of a move by $A$ might obstacle the moves by others; she is ``culpable'' otherwise. 
As said above, Definition \ref{def:Phi} is a particularisation of the general definitions in \cite{BartJLAMP}. In fact in a game 
$\game_{\rho\!\pp\!\sigma}$
no move of any player can occur more than once in a play $\vec{e}$ because of time stamps. Therefore no move can be ``persistently enabled'',
nor it can be prevented, since it can be enabled with a given time stamp only if
there exists a legal transition in the LTS with the same label.
Hence any player is innocent in a play $\vec{e}$ of  $\game_{\rho\!\pp\!\sigma}$ and all plays are fair. Therefore $\wf$ coincides with $\Phi$.
\end{rem}

The last lemma is useful in proofs, and it is essentially a definition unfolding.

\begin{lem}
\label{lem:winC}
In a play $\vec{e}$ of a three-player game $\game_{\rho\!\pp\!\sigma}$, player $A$ wins if and only if \\
either $\vec{e}$ is infinite
or $A = \playerC$ and $\upto{k}{\vec{e}}=\seq{e_0e_1\cdots (k,\playerC\!:\!\cmark)}$ for some $k$.
\end{lem}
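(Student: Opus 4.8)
The plan is to unfold Definition \ref{def:Phi} and reduce the claim to a direct case analysis on whether the play $\vec{e}$ is finite or infinite, using the characterisation of the payoff $\Phi$ and Remark \ref{rem:BartStrat} (which guarantees $\wf$ coincides with $\Phi$ in these games, so ``$A$ wins'' literally means $\Phi A \vec{e} > 0$, i.e. $\mathbf{P}(A,\vec{e})$ holds). First I would recall that, by definition, $A$ wins in $\vec{e}$ precisely when $\Phi A \vec{e} = 1$, which holds iff $\mathbf{P}(A,\vec{e})$ holds, and that (by the note following Definition \ref{def:Phi}) $\mathbf{P}(A,\vec{e})$ is vacuously true whenever $\vec{e}$ is infinite, whereas for finite $\vec{e}$ it holds iff $\vec{e}$ is not a maximal trace or else ends with an event of the form $(n,(A{:}\cmark))$.

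For the infinite case the two directions are immediate: if $\vec{e}$ is infinite then every player wins, which matches the right-hand disjunct ``$\vec{e}$ is infinite''. So the content of the lemma lies entirely in the finite case. Here I would argue that the only transition in the turn-based LTS whose label is a $\cmark$-move is the rule $\stopA \tbpp \tilde\rho \tblts{\playerC:\cmark} \Zero \tbpp \tilde\rho$ of Figure \ref{fig:tb-opsem}; consequently a $\cmark$-labelled event can only belong to player $\playerC$, so an event $(n,(A{:}\cmark))$ forces $A = \playerC$. This is the key syntactic observation that ties the abstract payoff condition to the concrete operational semantics.

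I would then split the finite case according to whether $\vec{e}$ is a maximal trace. If $\vec{e}$ is not maximal, then $\mathbf{P}(A,\vec{e})$ holds for every $A$; but a non-maximal finite play can always be extended, and after a $\playerC{:}\cmark$ move the configuration is $\Zero\tbpp\tilde\rho$, from which no further transition is possible, so a play ending in $\cmark$ is necessarily maximal. Hence if $\vec{e}$ is finite and non-maximal it does \emph{not} end in a $\cmark$-event, and one checks that in this situation the right-hand side of the lemma also fails for the disjunct with $A=\playerC$, because the final-element condition would require the play to terminate with $\playerC{:}\cmark$; so I must verify that ``$A$ wins'' is genuinely equivalent to the stated disjunction even on non-maximal plays — this is where I expect the bookkeeping to be most delicate. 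The cleanest route is to observe that the statement should be read for \emph{plays}, and that the right-hand condition $\upto{k}{\vec{e}}=\seq{e_0e_1\cdots (k,\playerC{:}\cmark)}$ for some $k$ is meant to capture exactly ``$\vec{e}$ ends in a $\playerC{:}\cmark$ event'', matching $\mathbf{P}$ on maximal finite plays. If $\vec{e}$ is finite and maximal, then $A$ wins iff $\vec{e}$ ends with $(n,(A{:}\cmark))$, which by the syntactic observation forces $A=\playerC$ and gives exactly the second disjunct; conversely that disjunct, together with maximality of any $\cmark$-terminated play, yields $\mathbf{P}(\playerC,\vec{e})$.

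The main obstacle I anticipate is not conceptual but a matter of pinning down the precise reading of the right-hand disjunction so that it coincides with $\mathbf{P}$ uniformly across maximal and non-maximal finite plays; once the observation that $\cmark$ is exclusively a $\playerC$-move and that any $\cmark$-terminated configuration $\Zero\tbpp\tilde\rho$ is stuck is in place, the equivalence is a routine unfolding of Definition \ref{def:Phi}. Since the authors describe this lemma as ``essentially a definition unfolding'', I would keep the write-up short, isolating the $\cmark$-move observation as the single substantive point and treating the rest as case analysis.
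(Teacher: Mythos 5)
Your proposal is correct and takes essentially the same route as the paper's own proof, which is a two-line unfolding of Definition~\ref{def:Phi} combined with exactly your key observation that the only $\cmark$-labelled transition in Fig.~\ref{fig:tb-opsem} is $\stopA\tbpp\tilde{\rho}\tblts{\playerC:\cmark}\Zero\tbpp\tilde{\rho}$, so a $\cmark$-event can only belong to $\playerC$. The subtlety you flag about finite non-maximal plays (where $\mathbf{P}(A,\vec{e})$ holds vacuously while the right-hand disjunction fails) is genuine, but it is an imprecision in the lemma's statement that the paper's proof silently glosses over; your resolution --- restricting the equivalence to the plays that matter and noting that any $\cmark$-terminated play is necessarily maximal since $\Zero\tbpp\tilde{\rho}$ is stuck --- is the appropriate one and goes slightly beyond what the paper bothers to write down.
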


\begin{proof}
 By Definition \ref{def:Phi}
$\Phi A \vec{e} > 0$ if either $\vec{e}$ is infinite or it contains  a move $(k,(A:\cmark))$ which has to be a move by $A$. But the only player that can play such a move is $\playerC$ (see Fig.~\ref{fig:tb-opsem}). \end{proof}

Notice that, by Remark~\ref{rem:BartStrat}, in our setting this lemma holds also in case we take into account the definition of {\em winning in a play} provided in \cite{BartJLAMP}, where $\wf$ is used.

%

\section{Retractable operational semantics and retractable compliance.}
\label{sec:retractablecompliance}

We provide now an operational semantics for client-server systems of affectible contracts, 
based
on a rollback operation, as proposed in  \cite{BDLdL15}.
The rollback operation acts on the recording of the  branches discarded during the {\em retractable}
syncronizations.
The retractable syncronizations are those involving affectible outputs. The actual sending of such outputs can hence be looked at as depending on an {\em agreement} between client and server.  The computation can roll back to such agreement points
when the client/server interaction gets stuck. When this happens, the interaction branch that had been followed up to this moment is discarded and another, possibly successful branch of interaction is pursued.\\

We begin by defining the notion of {\em contracts with history} as follows:

\begin{defi}[\Ctrs\ with history]
Let $\Stacks$  be the set of expressions (referred to also as {\em stacks}) generated by the grammar:
\lmcscenterline{$\vec{\gamma} ::= \emptystack \mid  \vec{\gamma} \cons  \sigma$ ~~~
where $\sigma\in\ASC\cup\{\circ\}$.} Then the set of {\em \ctrs\ with history} is defined by:
\lmcscenterline{$\SbehavH = \{ \np{\vec{\gamma}}\sigma \mid \vec{\gamma}\in \Stacks, \sigma\in \ASC\cup\{\circ\}\, \}.$}
\end{defi}

Histories are finite lists of contracts representing the branches that have not been followed in the previous agreement points. The effect of rolling back to the last agreement point is modeled by restoring the last contract of the history as the current
contract and by  trying a different branch, if any. In case a contract in a history was a sum whose branches have all been tried, the empty sum of remaining alternatives is represented by $\circ$ (see the next definition of transition). We use the notation $\np{\vec{\gamma}}\sigma$ for the contract $\sigma$ with history $\gamma$.

This is formalised by the operational semantics of contracts with histories that is defined as follows.

\begin{defi}[LTS of \Ctrs\ with History]\label{scs}
{\small
\[\begin{array}{rl@{\hspace{16mm}}ll}
(+) & \np{\vec{\gamma}}{\alpha.\sigma + \sigma'}
			\ored{\alpha} 
	\np{\vec{\gamma} \cons \sigma'}\sigma  & 
(\oplus) &  \np{\vec{\gamma}}{{ \Dual{a}.\sigma \oplus \sigma'}}
			\ored{\tau} 
		\np{\vec{\gamma}}\Dual{a}.\sigma 
\\ 
(\alpha)
& \np{\vec{\gamma}}\alpha.\sigma \ored{\alpha} \np{\vec{\gamma}\cons\circ}{\sigma}
&
(\rb)  &
	\np{\vec{\gamma}\cons\sigma'}{\sigma}  \lts{\rb}  \np{\vec{\gamma}}\sigma'
\end{array}\]
}
\end{defi}

Rule $(+)$ formalises the selection of a branch of an external choice. The effect of the selection is that the unselected branches are memorised on top of the stack as last contract of the history.
Notice that the memorization on the stack occurs also in case of an external input choice, like for instance in
$\np{\vec{\gamma}}{a.\sigma_1 + b.\sigma_2}
			\ored{a} 
	\np{\vec{\gamma} \cons b.\sigma_2}\sigma_1$. This because a possible 
partner could be an external output choice and the resulting syncronization would hence be a retractable one.
However, in case the partner were an internal choice and hence in case the resulting syncronization were unretractable,
the memorized contract should not be taken into account by any future rollback. The operational semantics manages to
take care of such a possibility by means of the use of the symbol `$\circ$' (see Example \ref{ex:interaction} below).\\
Rule ($\oplus$) formalises the fact that when an internal choice occurs, the stack remains unchanged. Rule ($\alpha$) concerns instead the execution of a single action:  the history is modified by pushing a `$\circ$' on the stack, meaning that the only available branch has been tried and no alternative is left. 
Rule $(\rb)$ recovers the contract on the top of the stack 
(if the stack is different from $\emptystack$) and replaces the current one with it. Note that the combined effect of rules ($\oplus$) and $(\alpha)$ is that the alternative branches of an internal choice are unrecoverable.

The interaction of a client with a server is modeled by the reduction of their parallel composition, that can be either
forward, consisting of CCS style synchronisations and single internal choices, or backward if there is no possible forward reduction, the client is different than $\stopA$ (the fulfilled contract) and rule $(\rb)$ is applicable on both sides.

\begin{defi}[TS of Client/Server Pairs]\label{def:cspairlts} 
We define the relation $\ored{}$ over pairs of  \ctrs\ with histories by the following rules:
{\small
\[\begin{array}{c}
\prooftree
\np{\vec{\delta}}\rho \ored{\alpha} \np{\vec{\delta'}}{\rho'} \qquad \np{\vec{\gamma}}\sigma \ored{\Dual{\alpha}} \np{\vec{\gamma'}}{\sigma'} 
\justifies
\np{\vec{\delta}}\rho\pp \np{\vec{\gamma}}\sigma \ored{} \np{\vec{\delta'}}{\rho'}\pp \np{\vec{\gamma'}}{\sigma'}
\using (\text{\em \CommRule})
\endprooftree \\[5mm] 
\prooftree
\np{\vec{\delta}}\rho \ored{\tau} \np{\vec{\delta} }{\rho'} 
\justifies
\np{\vec{\delta}}\rho\pp \np{\vec{\gamma}}\sigma \ored{} \np{\vec{\delta}}{\rho'}\pp \np{\vec{\gamma}}{\sigma}
\using (\tau)
\endprooftree\\[5mm] 
\prooftree
\np{\vec{\gamma}}{\rho} \lts{\rb} \np{\vec{\gamma'}}{\rho'} \qquad \np{\vec{\delta}}{\sigma} \lts{\rb} \np{\vec{\delta'}}{\sigma'} 
	\qquad  \rho\neq\stopA
\justifies
\np{\vec{\gamma}}{\rho}\pp\np{\vec{\delta}}{\sigma} \ored{} \np{\vec{\gamma'}}{\rho'}\pp\np{\vec{\delta'}}{\sigma'}
\using (\text{\em \RbkRule})
\endprooftree 
\end{array}\]
}

\noindent
plus the rule symmetric to $(\tau)$ w.r.t. $\|$. Moreover, rule $(\text{\em \RbkRule})$ applies only if neither $(\text{\em \CommRule})$ nor $(\tau)$ do.
\end{defi}

\begin{exa}
\label{ex:interaction}
Let us consider the following client/server system
\lmcscenterline{
$\np{\vec{\gamma}}{a.c.\rho_1 + b.\rho_2} \pp \np{\vec{\delta}}{\Dual{a}.d.\sigma_1 + \Dual{b}.\sigma_2}$}
We are in presence of an agreement point, and the operational semantics defined above 
is such that during the following syncronization the unchosen branches are memorized:
\lmcscenterline{
$\np{\vec{\gamma}}{a.c.\rho_1 + b.\rho_2} \pp \np{\vec{\delta}}{\Dual{a}.d.\sigma_1 + \Dual{b}.\sigma_2}
\ored{} 
\np{\vec{\gamma}\cons b.\rho_2}{c.\rho_1 } \pp \np{\vec{\delta}\cons \Dual{b}.\sigma_2}{d.\sigma_1}
$}
Now, no syncronization is possible in the system $\np{\vec{\gamma}\cons b.\rho_2}{c.\rho_1 } \pp \np{\vec{\delta}\cons \Dual{b}.\sigma_2}{d.\sigma_1}$, and hence the rules dealing with rollback allow to recover the 
branches which were not selected in the previous agreement point:
\lmcscenterline{
$\np{\vec{\gamma}\cons b.\rho_2}{c.\rho_1 } \pp \np{\vec{\delta}\cons \Dual{b}.\sigma_2}{d.\sigma_1}
\ored{} 
\np{\vec{\gamma}}{ b.\rho_2} \pp \np{\vec{\delta}}{\Dual{b}.\sigma_2}
$}
The interaction can now proceed along the previously unchosen branches.

Let us consider now, instead, the following client/server system:
\lmcscenterline{
$\np{\vec{\gamma}}{a.c.\rho_1 + b.\rho_2} \pp \np{\vec{\delta}}{\Dual{a}.d.\sigma_1 \oplus \Dual{b}.\sigma_2}$}
We are not in presence of an agreement point and the interaction involves an internal choice
by the server:
\lmcscenterline{
$\np{\vec{\gamma}}{a.c.\rho_1 + b.\rho_2} \pp \np{\vec{\delta}}{\Dual{a}.d.\sigma_1 \oplus \Dual{b}.\sigma_2}
\ored{} 
\np{\vec{\gamma}}{a.c.\rho_1 + b.\rho_2} \pp \np{\vec{\delta}}{\Dual{a}.d.\sigma_1}
$}
Now the syncronization in $\np{\vec{\gamma}}{a.c.\rho_1 + b.\rho_2} \pp \np{\vec{\delta}}{\Dual{a}.d.\sigma_1}$ causes, in the client, the memorization of $b.\rho_2$, namely the client's branch not selected by the 
output $\Dual{a}$. In the server, instead, the single action $\Dual{a}$ causes
the memorization of `\,$\circ$'  on the stack:
\lmcscenterline{
$\np{\vec{\gamma}}{a.c.\rho_1 + b.\rho_2} \pp \np{\vec{\delta}}{\Dual{a}.d.\sigma_1}
\ored{} 
\np{\vec{\gamma}\cons b.\rho_2}{c.\rho_1} \pp \np{\vec{\delta}\cons \circ}{d.\sigma_1}
$}
Since the system $\np{\vec{\gamma}\cons b.\rho_2}{c.\rho_1} \pp \np{\vec{\delta}\cons \circ}{d.\sigma_1}$ is stuck, we recover the tops of the stacks as current contracts:
\lmcscenterline{
$\np{\vec{\gamma}\cons b.\rho_2}{c.\rho_1} \pp \np{\vec{\delta}\cons \circ}{d.\sigma_1}
\ored{} 
\np{\vec{\gamma}}{ b.\rho_2} \pp \np{\vec{\delta}}{\circ}
$}
The fact that `\,$\circ$' cannot synchronize with anything forces us to apply again rule 
(\RbkRule) on $\np{\vec{\gamma}}{ b.\rho_2} \pp \np{\vec{\delta}}{\circ}$,
 that is to keep on rolling back
in order to get to the first agreement point memorized in $\vec{\gamma}$ and $\vec{\delta}$.
The contract $b.\rho_2$ and the symbol `\,$\circ$' recovered from the top of the stacks will be simply ``thrown away''.
\end{exa}

\noindent
Up to the rollback mechanism, compliance in the retractable setting is defined as usual with client-server contracts.

\begin{defi}[Retractable Compliance, $\complyR$]\label{ccr}\hfill
\begin{enumerate}
\item \label{c1}
The binary relation $\complyR$ on \ctrs\ with histories is defined as follows: \\
$\text{ for any } 
\vec{\delta'},\rho',\vec{\gamma'},\sigma'$,~
$\np{\vec{\delta}}\rho\complyR \np{\vec{\gamma}}\sigma$   holds whenever 
\lmcscenterline{$\np{\vec{\delta}}\rho\pp \np{\vec{\gamma}}\sigma
                    \ored{*}
                     \np{\vec{\delta'}}\rho'\pp \np{\vec{\gamma'}}\sigma'
                     \not\!\!\ored{} 
\mbox{ implies }\rho'=\stopA$}
\item\label{c2} The relation $\complyR$ on affectible \ctrs\ is defined by:~~
$\rho\complyR\sigma ~~~~\mbox{if}~~~~ \np\emptystack\rho\complyR \np\emptystack\sigma.$
\end{enumerate}
\end{defi}

\begin{exa}[\cite{BDLdL15}] \label{example1}
In the $\mathsf{Buyer}/\mathsf{Seller}$ example we have that, in case
a belt is agreed upon and the buyer decides to pay using her credit card, the system gets stuck in an unsuccessful state. This causes a rollback enabling a successful
state to be reached.\\
{\small
\[\begin{array}{lrcl}
  &
\np{\emptystack}{\mathsf{Buyer}}
 & \pp & \np{\emptystack}{\mathsf{Seller}}\\[2mm]

 \ored{\CommRule} &
	\np{\DBag.\Price.(\DCard \oplus \DCash)}{\Price.(\DCard \oplus \DCash)}
& \pp & \np{\Bag.\DPrice.(\Card + \Cash)}{\DPrice.\Cash}\\[2mm]

 \ored{\CommRule} &
	\np{\DBag.\Price.(\DCard \oplus \DCash)\cons \circ }{(\DCard \oplus \DCash)}
 & \pp & \np{\Bag.\DPrice.(\Card + \Cash)\cons \circ }{\Cash}\\[2mm]

 \ored{\tau} &
	\np{\DBag.\Price.(\DCard \oplus \DCash)\cons \circ }{\DCard}
& \pp & \np{\Bag.\DPrice.(\Card + \Cash)\cons \circ }{\Cash}\\[2mm]

\ored{\RbkRule} &
	\np{\DBag.\Price.(\DCard \oplus \DCash)}{\circ }
 & \pp & \np{\Bag.\DPrice.(\Card + \Cash)}{\circ }\\[2mm]

\ored{\RbkRule} &
	\np{\emptystack}{\DBag.\Price.(\DCard \oplus \DCash) }
 & \pp & \np{\emptystack}{\Bag.\DPrice.(\Card + \Cash)}\\[2mm]

\ored{\CommRule} &
	\np{\circ}{\Price.(\DCard \oplus \DCash) }
 & \pp & \np{\circ}{\DPrice.(\Card + \Cash)}\\[2mm]

\ored{\CommRule} &
	\np{\circ\cons\circ}{(\DCard \oplus \DCash) }
 & \pp & \np{\circ\cons\circ}{(\Card + \Cash)}\\[2mm]

\ored{\tau} &
	\np{\circ\cons\circ}{\DCard }
 & \pp & \np{\circ\cons\circ}{(\Card + \Cash)}\\[2mm]

\ored{\CommRule} &
	\np{\circ\cons\circ\cons\circ}{\stopA }
 & \pp & \np{\circ\cons\circ\cons\Cash}{\stopA}\\[2mm]

\,\,\not\!\!\ored{}
	
\end{array}\]
}

\noindent
There are other possible reduction sequences:
the one in which a belt is agreed upon but the buyer decides to pay by cash;
the one in which a bag is agreed upon and  the buyer decides to pay by cash;
the one in which a bag is agreed upon and  the buyer decides to pay by credit card.\\
It is possible to check that also for all these other reduction sequences a successful state is  reached. 
So we have that $\mathsf{Buyer}\complyR\mathsf{Seller}$.
\end{exa}


It can be checked that if
$\np{\vec{\gamma}}{\rho}\pp\np{\vec{\delta}}{\sigma} \ored{} \np{\vec{\gamma'}}{\rho'}\pp\np{\vec{\delta'}}{\sigma'}$ and 
$\vec{\gamma}$ and $\vec{\delta}$ have the same length, then this is also true of $\vec{\gamma'}$ and $\vec{\delta'}$.\\
Now, let $\np{\emptystack}\rho\pp \np{\emptystack}\sigma
                    \ored{*}
                     \np{\vec{\gamma'}}\rho'\pp \np{\vec{\delta'}}\sigma'
                     \not\!\!\ored{}$.
Since $\np{\vec{\gamma'}}\rho'\pp \np{\vec{\delta'}}\sigma'
                     \not\!\!\ored{}$, a fortiori
rule $(\text{\em \RbkRule})$ does not apply and then, by rule $(\rb)$,   either  $\vec{\gamma'}$ or $\vec{\delta'}$ must be empty. Besides, by what said before, they must have the same lenght. Hence
they are both empty.\\ This is stated in the item (\ref{lem:stack-len-1}) of the following lemma (see Appendix \ref{subsec:rollprop} for the proof)
and it will be essential, together with item (\ref{lem:stack-len-2}), to prove some relevant results in Section \ref{sec:affectible-retractable} about the retractable compliance relation.

\begin{lem}\label{lem:stack-len}\leavevmode
\begin{enumerate}
\item
\label{lem:stack-len-1}
If
	$\np{\emptystack}{\rho} \pp \np{\emptystack}{\sigma} \ored{*} \np{\vec{\delta}}{\rho'} \pp \np{\vec{\gamma}}{\sigma'} \not\!\!\ored{}$, then	 $\vec{\delta} = \vec{\gamma} = \emptystack$.
\item
\label{lem:stack-len-2}
 If
$\np {\vec{\delta}}{\rho} \complyR \np {\vec{\gamma}}{\sigma}$, then 
	 $\np {\vec{\delta'} \cons \vec{\delta}}{\rho} \complyR \np {\vec{\gamma'} \cons \vec{\gamma}}{\sigma}$
	 for all $\vec{\delta'}$, $\vec{\gamma'}$. 
\end{enumerate}
\end{lem}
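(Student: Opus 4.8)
For part (\ref{lem:stack-len-1}) I would make precise the two ingredients already sketched before the statement. The first is an \emph{equal-length invariant}: if $\np{\vec\delta}{\rho}\pp\np{\vec\gamma}{\sigma}\ored{}\np{\vec{\delta'}}{\rho'}\pp\np{\vec{\gamma'}}{\sigma'}$ with $|\vec\delta|=|\vec\gamma|$, then $|\vec{\delta'}|=|\vec{\gamma'}|$. The plan is to prove it by a case analysis on the rule justifying the step. Rule $(\CommRule)$ fires one of $(+)$, $(\alpha)$ on each side, and each of these pushes exactly one item, so both stacks grow by one; the $(\tau)$ rule and its symmetric use $(\oplus)$, which leaves the acting stack unchanged and does not touch the other; rule $(\RbkRule)$ pops exactly one item on each side. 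Hence the difference $|\vec\delta|-|\vec\gamma|$ is preserved, and by induction on the length of $\ored{*}$, starting from the equal (empty) histories, the two stacks stay of equal length, so in particular they agree in length at the reached configuration.

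It then remains to inspect the reached configuration $\np{\vec\delta}{\rho'}\pp\np{\vec\gamma}{\sigma'}\not\!\!\ored{}$. Being stuck, neither $(\CommRule)$ nor $(\tau)$ applies, so the side condition enabling $(\RbkRule)$ holds; consequently $(\RbkRule)$ can be blocked only by a missing premise $\np{\vec\delta}{\rho'}\lts{\rb}\cdots$ or $\np{\vec\gamma}{\sigma'}\lts{\rb}\cdots$, or by $\rho'=\stopA$. Since rule $(\rb)$ needs a non-empty stack, as long as $\rho'\neq\stopA$ one of $\vec\delta,\vec\gamma$ must be $\emptystack$, and the invariant forces both to be $\emptystack$. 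I expect the honest bookkeeping of the $\rho'=\stopA$ alternative to be the delicate point here: a terminated client blocks $(\RbkRule)$ regardless of the stacks, so emptiness is not forced by the rollback rule in that case, and the conclusion is really the one used on the \emph{failure} configurations met when refuting compliance.

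For part (\ref{lem:stack-len-2}) the plan is a projection argument between the two client/server transition systems. The guiding remark is that every rule of Definition~\ref{scs} acts only at the \emph{top} of the stacks: $(+),(\alpha),(\oplus)$ push onto or leave the top, and $(\rb)$ removes the top. So any run out of $\np{\vec{\delta'}\cons\vec\delta}{\rho}\pp\np{\vec{\gamma'}\cons\vec\gamma}{\sigma}$ that never rolls back past the inserted bottom segments $\vec{\delta'},\vec{\gamma'}$ becomes, once those segments are erased, literally a run out of $\np{\vec\delta}{\rho}\pp\np{\vec\gamma}{\sigma}$: applicability of $(\CommRule)$ and $(\tau)$ depends only on the current contracts, and applicability of $(\RbkRule)$ only on non-emptiness of the top segment, which is untouched while we stay above $\vec{\delta'},\vec{\gamma'}$. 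The goal is thus to show that, under $\np{\vec\delta}{\rho}\complyR\np{\vec\gamma}{\sigma}$, the extended system never performs a rollback crossing into $\vec{\delta'},\vec{\gamma'}$.

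The crucial step is this boundary argument, and I expect it to be the main obstacle. Consider the first point at which the extended run would cross the boundary, i.e.\ the segment above $\vec{\delta'}$ (resp.\ $\vec{\gamma'}$) is empty and a rollback is about to pop into it. Its projection is a configuration reachable from $\np{\vec\delta}{\rho}\pp\np{\vec\gamma}{\sigma}$ whose corresponding history is $\emptystack$; as $(\CommRule)$ and $(\tau)$ are unavailable (a rollback is the only move) and the projected history is empty, the projected configuration is stuck, so by $\np{\vec\delta}{\rho}\complyR\np{\vec\gamma}{\sigma}$ its client is $\stopA$. A terminated client blocks $(\RbkRule)$ in the extended system too, so the crossing is never taken; the same comparison shows that any stuck configuration reached by the extended system has client $\stopA$, which is exactly $\np{\vec{\delta'}\cons\vec\delta}{\rho}\complyR\np{\vec{\gamma'}\cons\vec\gamma}{\sigma}$. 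When $|\vec\delta|\neq|\vec\gamma|$ the two boundaries are reached at different moments, but part (\ref{lem:stack-len-1})'s invariant keeps $|\vec\delta|-|\vec\gamma|$ constant, and a one-sided crossing would again project to a stuck original configuration with non-$\stopA$ client, contradicting compliance; part (\ref{lem:stack-len-1}) thus supplies precisely the structural control that keeps the projection well defined.
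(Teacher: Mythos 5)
Your proof is correct and follows essentially the same route as the paper's: part~(\ref{lem:stack-len-1}) via the identical equal-length bookkeeping ($(\CommRule)$ pushes one element on each stack, $(\tau)$ touches neither, $(\RbkRule)$ pops one from each) plus the stuck-state analysis, and part~(\ref{lem:stack-len-2}) via the same projection idea, which the paper presents contrapositively and only for a single element added at the bottom of one stack (your ``first crossing of the boundary'' is exactly the paper's minimal-length reduction that first restores the added element, and its two cases---element never used / element used---match your no-crossing and crossing cases).

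One remark: the ``delicate point'' you flag in part~(\ref{lem:stack-len-1}) is real, and it is in fact a gap in the paper's own proof, which opens by asserting that a stuck pair must have an empty stack; this fails when $\rho'=\stopA$, e.g.\ $\np{\emptystack}{a+b}\pp\np{\emptystack}{\Dual{a}+\Dual{b}}$ reduces in one $(\CommRule)$ step to $\np{\emptystack\cons b}{\stopA}\pp\np{\emptystack\cons\Dual{b}}{\stopA}$, which is stuck (the side condition of $(\RbkRule)$ blocks rollback) with both stacks non-empty. So the statement is literally true only under the extra hypothesis $\rho'\neq\stopA$, which is precisely how it is applied to the non-compliance witnesses in the proof of Lemma~\ref{lem:coinductiveCharRbk}; your qualified conclusion is the correct one.
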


%

\section{Orchestrated operational semantics and orchestrated compliance}
\label{sec:orchcompl}

In the present section we define an operational semantics for affectible contracts  and  for client-server systems, where
the interaction between a client and a server -- in particular when affectible outputs are concerned -- is driven by  a third process, a mediator, called {\em orchestrator}.
For a general discussion on orchestrators for contracts and session-contracts,
we refer to \cite{Padovani10,PadTCS} and \cite{BdL15} respectively.

\begin{defi}[LTS for affectible session contracts]\label{scs}
Let $\Act = \Names\cup\CoNames\cup\Set{\Dual{a}^+\mid \Dual{a}\in\CoNames}$.
\[\begin{array}{lrcl@{\hspace{16mm}}lrcl}
(+) &   a.\sigma + \sigma'
			&\ored{a} &
	\sigma
& 
(\Dual{+}) &   \Dual{a}.\sigma + \sigma'
			&\ored{\Dual{a}^+} &
	\sigma
\\[2mm]
(\oplus) &  \Dual{a}.\sigma \oplus \sigma'
			&\ored{} &
		\Dual{a}.\sigma
&
(\alpha) &   \alpha.\sigma
			&\ored{\alpha} &
	\sigma
\end{array}\]

\end{defi}
As for other orchestrated formalisms, the task of an orchestrator is to mediate the interaction between a client and a server, by selecting or constraining the possible interactions.
In the present setting orchestrators, that we dub {\em strategy-orchestrators}, are defined as a variant of the session-orchestrators of \cite{BdL15}, which in turn are a restriction of orchestrators in \cite{Padovani10}.

A strategy orchestrator does mediate between two affectible session contracts by selecting one of the possible affectible choices and constraining non-affectible ones.
The orchestration actions have two possible shapes: either
$\orchAct{\alpha}{\Dual{\alpha}}$, enabling the unaffectible
synchronization, or  
 $\orchAct{\alpha}{\Dual{\alpha}}^+$, enabling the affectible
synchronization. 

\begin{defi}[Strategy Orchestrators]\leavevmode 
\begin{enumerate}
\item
The set $\OrchAct$ of {\em strategy-orchestration actions} is defined by
\lmcscenterline{
$\OrchAct = \Set{\orchAct{\alpha}{\Dual{\alpha}} \mid \alpha \in \Names\cup\CoNames}\cup \Set{\orchAct{\alpha}{\Dual{\alpha}}^+ \mid \alpha \in \Names\cup\CoNames}
$}
We let $\mu,\mu',\ldots$ range over elements of $\OrchAct$ with the shape 
 $\orchAct{\alpha}{\Dual{\alpha}}$, and $\mu^+,{\mu'}^+,\ldots$ range over elements of $\OrchAct$ with the shape 
 $\orchAct{\alpha}{\Dual{\alpha}}^+\!$.
\item
We define the set $\Orch$ of {\em strategy orchestrators}, ranged over by $f, g, \ldots$, as the {\em closed} (with respect to the binder $\rec$) terms generated by the following grammar:
 \[ \begin{array}{lrl@{\hspace{4mm}}l@{\hspace{8mm}}l}
f, g & ::= & \stopf  & \mbox{(idle)}\\
 & \mid & \mu^+.f & \mbox{(prefix)} 
\\
 & \mid & \mu_1.f_1\vee\ldots\vee\mu_n.f_n & \mbox{(disjunction)}\\
 & \mid & x & \mbox{(variable)}\\
 & \mid & \rec x.f & \mbox{(recursion)}
\end{array} \] 
where the $\mu_i$'s in a disjunction are pairwise distinct. Moreover, 
we impose strategy orchestrators to be {\em contractive}, i.e.~the $f$ in $\rec x.f$ is assumed not to be a variable.
\end{enumerate}
\end{defi}

We write $\bigvee_{i\in I}\mu_i.f_i$ as short for $\mu_1.f_1\vee\ldots\vee\mu_n.f_n$, where
$I=\Set{1,\ldots, n}$.\\
If not stated otherwise, we consider recursive orchestrators up-to unfolding, that is we equate $\rec x \procdot f $ with $f\Subst{x}{\rec x \procdot f}$. We omit trailing $\stopf$'s.

Strategy orchestrators are ``simple orchestrators'' in \cite{Padovani10} and ``synchronous orchestrators'' in \cite{PadTCS},
but for the kind of prefixes which are allowed in a single prefix or in a disjunction. In fact a prefix
$\orchAct{\alpha}{\Dual{\alpha}}^+$ cannot occur in disjunctions, 
where all the orchestrators must be prefixed by $\orchAct{\alpha}{\Dual{\alpha}}$ actions.
When clear from the context we shall refer to stategy orchestrators just as orchestrators.

\begin{defi}[Strategy orchestrators LTS] 
We define the labelled transition system
$(\Orch, \OrchAct , \OrchStep{})$ by
{\small
 \[ \begin{array}{c@{\hspace{8mm}}c@{\hspace{8mm}}c@{\hspace{8mm}}c}
\mu^+.f \OrchStep{\mu^+} f 
 & 
 &
&
 (\bigvee_{i\in I}\mu_i.f_i) \,\OrchStep{\mu_k}\, f_k
 \hspace{3mm}(k\in I)
\end{array} \]
}
\end{defi}

An {\em orchestrated system} $\rho \pf{f} \sigma$ is the client-server system $\rho \pp \sigma$ whose interactions are mediated by the orchestrator $f$.

\begin{defi}[LTS for orchestrated systems\footnote{The present LTS corresponds to a three-way synchronous communication,
which could cause some problems  from the
implementation point of view. We abstract away from these problems in the
theoretical setting of the present paper; an actual implementation might modularize 
each orchestrated synchronizations into two distinct binary synchronizations.}]
\hfill\\
Let $\rho,\sigma\in\ASC$ and $f\in\Orch$. 
{\small
\[ \begin{array}{c@{\hspace{24mm}}c}
\prooftree
 \rho\lts{}\rho'
\justifies
 \rho \pf{f} \sigma \lts{} \rho' \pf{f} \sigma
\endprooftree
&
\prooftree
 \sigma\lts{}\sigma'
\justifies
 \rho \pf{f} \sigma \lts{} \rho \pf{f} \sigma'
\endprooftree
\\[6mm]
\prooftree
 \rho \lts{a}\rho' \quad f\ltsOrchplus{\Dual{a}}{a} f' \quad \sigma \lts{\Dual{a}^+}\sigma'
\justifies
 \rho \pf{f} \sigma \lts{+} \rho' \pf{f'} \sigma'
\endprooftree
&
\prooftree
 \rho \lts{\Dual{a}^+}\rho' \quad f\ltsOrchplus{a}{\Dual{a}} f' \quad \sigma \lts{a}\sigma'
\justifies
 \rho \pf{f} \sigma \lts{+} \rho' \pf{f'} \sigma'
\endprooftree
\end{array} \] 
\[ \begin{array}{c}
\prooftree
 \rho \lts{\alpha}\rho' \quad f\ltsOrch{\Dual{\alpha}}{\alpha} f' \quad \sigma \lts{\Dual{\alpha}}\sigma' 
\justifies
 \rho \pf{f} \sigma \lts{\tau} \rho' \pf{f} \sigma'
\using (\mbox{\small $\alpha\in\Names\cup\CoNames$})
\endprooftree
\end{array} \] 
}

\noindent
Moreover we define $\Lts{} \,=\, \ltsstar{}\circ \, (\lts{\tau}\cup\lts{+})$.
\end{defi}

In both transitions $\lts{+}$ and $\lts{\tau}$, synchronization may happen only if the orchestrator has a transition with the appropriate orchestration action. This is because in an orchestrated interaction both client and server are committed to the synchronizations allowed by the orchestrator only. 
Because of its structure an orchestrator always selects exactly one synchronisation of affectible actions on client and server side, while the disjunction of orchestrators represents the constraint that only certain synchronisations of unaffectible actions are permitted.

\begin{defi}[Strategy-orchestrated Compliance]\hfill
\label{def:disstrictcompl}
\begin{enumerate}
\item
\label{def:disstrictcompl-i}
$f: \rho\complyO \sigma$ ~~if ~~  for any $\rho'$ and $\sigma'$, the following holds:
\[ \begin{array}{rcl}
\rho \pf{f} \sigma \Lts{}^* \rho' \pf{f'} \sigma' \notLts{} &~~\mbox{implies}~~& \rho'=\mbox{\em $\stopA$}.
 \end{array} \]
\item
$ \begin{array}{@{}rcl}
\rho\complyO \sigma &~~\textrm{if}~~& \exists f .~\Pred[ f: \rho\complyO \sigma].
 \end{array} $

\end{enumerate}
\end{defi}



\begin{exa}\label{ex:orchderseq}
Let $f = \orchAct{\Bag}{\DBag}^+.\orchAct{\DPrice}{\Price}.(\orchAct{\Card}{\DCard}.\stopf\vee\orchAct{\Cash}{\DCash}.\stopf)$.
For our $\mathsf{Buyer}/\mathsf{Seller}$ example, we can notice that the orchestrator $f$ forces 
the buyer and the seller to agree on a bag, and then leaves to the client the choice of the type of
payment. Hence, in case the client chooses to pay by card, the orchestrated client/server interaction
proceeds as~follows.

\[\small\begin{array}{lrll}
  &
{\mathsf{Buyer}}
 & \pf{\orchAct{\sBag}{\sDBag}^+.\orchAct{\sDPrice}{\sPrice}.(\orchAct{\sCard}{\sDCard}.\stopf\vee\orchAct{\sCash}{\sDCash}.\stopf)} & {\mathsf{Seller}}\\[3mm]

 \ored{} &
	{\Price.(\DCard \oplus \DCash)}
& \pf{\orchAct{\sDPrice}{\sPrice}.(\orchAct{\sCard}{\sDCard}.\stopf\vee\orchAct{\sCash}{\sDCash}.\stopf)} & {\DPrice.(\Card + \Cash)}\\[3mm]

 \ored{} &
	{(\DCard \oplus \DCash)}
 & \pf{\orchAct{\sCard}{\sDCard}.\stopf\vee\orchAct{\sCash}{\sDCash}.\stopf} & {\Card + \Cash}\\[3mm]

 \ored{} &
	{\DCard}
& \pf{\orchAct{\sCard}{\sDCard}.\stopf\vee\orchAct{\sCash}{\sDCash}.\stopf} & {\Card + \Cash} \\[3mm]

\ored{} &
	{\stopA }
 & \pf{\stopf} & {\stopA }\\[3mm]
\,\,\not\!\!\ored{}
\end{array}\]

\noindent Since there are not infinite reduction sequences and a succesful state is reached for any other possible maximal reduction sequence (the only other possible one being the one where
the buyer chooses to pay by cash), we have that $f : \mathsf{Buyer}\complyO\mathsf{Seller}$.
\end{exa}

\begin{rem}
\label{rem:nondetorch}
As we shall see, strategy-orchestrators do correspond to univocal strategies for player $\playerC$ in $\game_{\rho\!\pp\!\sigma}$ and are technically easier to work with.
(In the proofs it will be easier to deal with partial functions rather than relations). On the other hand, we can recover
a full correspondence among unrestricted strategies for $\playerC$ and orchestrators by allowing disjunctions of affectible synchronization actions 
$\orchAct{\alpha}{\Dual{\alpha}}^+$. 
In a session-based scenario, however, we expect any nondeterminism to depend solely on 
either the client or the server. By allowing $f=\orchAct{\Dual{a}}{a}^+.f_1\vee\orchAct{\Dual{b}}{b}^+.f_2$ 
in the system ${a.\rho_1 + b.\rho_2 \pf{f} \Dual{a}.\sigma_1 + \Dual{b}.\sigma_2}$, the 
nondeterminism would depend on the orchestrator as well. 
\end{rem}

%

\section{Linking up all together: Main Results.}
\label{sec:affectible-retractable}

In this section we connect all the notions defined up to now.
We begin by showing that the relation $\ACRel$ can be axiomatized by means of the following formal system.
In such a system the symbol
$\complyF$ is the syntactic counterpart for the $\ACRel$ relation.

\begin{defi}[Formal system $\der$ for $\ACRel\!$]\label{def:formalCompl}
An environment $\Gamma$ is a finite set of expressions of the form $\delta\complyF\gamma$ where
$\delta,\gamma\in\ASC$. The judgments of System $\der$ are expressions of the form
$\Gamma\der\rho\complyF\sigma$. The axioms and rules of $\der$ are as in Figure
\ref{fig:systder}, where in rule $(\mbox{\footnotesize$+\cdot+$})$ we assume 
that a term of the form $a.\rho$ can be used instead of $a.\rho+\rho'$.
\end{defi}

\begin{figure}
\hrule
\vspace{2mm}
\label{fig:systder}
\[\begin{array}{c@{\hspace{8mm}}c}
(\mbox{\footnotesize$\CkptcomplAx$}):\Inf{}
	{\Gamma\der \stopA \complyF \sigma}
&
(\mbox{\footnotesize$\CkptcomplHyp$}):\Inf{}
{\Gamma, \rho\complyF\sigma \der \rho\complyF\sigma }
\end{array}\]
\[\begin{array}{c}
(\mbox{\footnotesize$+\cdot+$}):
\Inf{
     \Gamma, \alpha.\rho+\rho'\complyF\Dual{\alpha}.\sigma+\sigma'
     	\der \rho
    	  \complyF
		\sigma}
{ \Gamma\der \alpha.\rho+\rho'\complyF\Dual{\alpha}.\sigma+\sigma'}
\\[5mm]
(\mbox{\footnotesize$\oplus\cdot+$}):
\Inf{
    \forall i\in I.~ \Gamma,\mbox{\small $\bigoplus$}_{i\in I} \Dual{a}_i.{\rho}_i\complyF
    	\mbox{\small $\sum$}_{j\in I\cup J} a_j.{\sigma}_j\der 
    	\rho_i
    	\complyF
    		\sigma_i}
{ \Gamma\der \mbox{\small $\bigoplus$}_{i\in I} \Dual{a}_i.{\rho}_i\complyF
    	\mbox{\small $\sum$}_{j\in I\cup J} a_j.{\sigma}_j}
\\[5mm]
(\mbox{\footnotesize$+\cdot\oplus$}):
\Inf{
    \forall i\in I.~ \Gamma,\mbox{\small $\sum$}_{j\in I\cup J} a_j.{\sigma}_j\complyF \mbox{\small $\bigoplus$}_{i\in I} \Dual{a}_i.{\rho}_i \der 
    	\rho_i
    	\complyF
    		\sigma_i}
{ \Gamma\der\mbox{\small $\sum$}_{j\in I\cup J} a_j.{\sigma}_j\complyF \mbox{\small $\bigoplus$}_{i\in I} \Dual{a}_i.{\rho}_i    }	\\[6mm]
\end{array}\]
\caption{System $\der$}
\vspace{2mm}
\hrule
\end{figure}



We shall write $\der\rho\complyF \sigma$ as an abbreviation for $\emptyset\der\rho\complyF \sigma$. Moreover, we shall write $\Der :: \Gamma\der\rho\complyF \sigma$ to denote that
$\Der$ is a derivation having the judgment $\Gamma\der\rho\complyF \sigma$ as conclusion.
When clear from the context, we shall write simply $\Gamma\der\rho\complyF \sigma$ to
state that there exists a derivation $\Der$ such that $\Der :: \Gamma\der\rho\complyF \sigma$.

\begin{exa}
\label{ex:derivation}
Let us derive in system $\der$ the  judgment 
$\emptyset\der\mathsf{Buyer}\complyF\mathsf{Seller}$
\begin{equation}
\label{eq:exampleder}
\Inf[+,+]{
 \Inf[+,+]{
      \Inf[\oplus,+]
           {
             \Inf[\CkptcomplAx]
             {}
             {\Gamma'' \der \stopA \complyF \stopA}
\qquad
             \Inf[\CkptcomplAx]
             {}
             {\Gamma'' \der \stopA \complyF \stopA}
             }
            {\Gamma'  \der \DCard \oplus \DCash \complyF \Card + \Cash}
            }
            {\mathsf{Buyer}  \complyF \mathsf{Seller} \der \Price.(\DCard \oplus \DCash) \complyF \DPrice.(\Card + \Cash)}
         }{
     \der \mathsf{Buyer} \complyF \mathsf{Seller} 
           }
\end{equation}
\noindent
where
\begin{center}
\footnotesize
$\Gamma' = \mathsf{Buyer}  \complyF \mathsf{Seller},\ \Price.(\DCard \oplus
\DCash) \complyF \DPrice.(\Card + \Cash)$ \\ and
 $\Gamma''= \Gamma' ,\ \DCard \oplus \DCash \complyF \Card + \Cash$
\end{center}
\end{exa}
\medskip
We now prove the relevant properties of this system. First,
the proposed axiomatisation exactly captures $\ACRel$ as intended.

\subsection{Soundness and Completeness of $\der$ with respect to $\ACRel\!$}
In this subsection we prove that the system $\der$ is sound and complete with respect 
to the affectible compliance relation $\ACRel$, namely
$$\der\rho\complyF \sigma \hspace{1mm}\textit{ if and only if } \hspace{1mm}\rho\ACRel\sigma$$
Moreover, we shall get  decidability of $\ACRel$ as a corollary.

We begin by showing that the relation $\ACRel$ of Definition \ref{def:ACRel}  can be equivalently defined in a stratified way.

\begin{defi}\label{def:coACRel}\hfill
\begin{enumerate}
\item
\label{def:coACRel-i}
Let
$\FunK: \mathcal{P}(\ASC\times\ASC)\rightarrow  \mathcal{P}(\ASC\times\ASC)$ be such that, 
 for any   $\Rel\!\subseteq\! \ASC\!\times\!\ASC$, we get
$(\rho,\sigma) \in \FunK(\Rel)$ if either $\rho=\stopA$ or one of the following holds:
\begin{enumerate}
\item \label{lem:coinductiveChar-2} 
	$\rho = \sum_{i\in I}\alpha_i.\rho_i$, $\sigma = \sum_{j\in J}\Dual{\alpha}_j.\sigma_j$  and
	$\exists h \in I \cap J.\; ({\rho_h},{\sigma_h})\in\Rel$;
\item \label{lem:coinductiveChar-3} 
      $\rho = \bigoplus_{i\in I}\Dual{a}_i.\rho_i$, $\sigma = \sum_{j\in J}a_j.\sigma_j$,
	$I\subseteq J$ and $\forall h \in I. \; ({\rho_h},{\sigma_h})\in\Rel$;
\item \label{lem:coinductiveChar-4} 
       $\rho = \sum_{j\in J}a_j.\rho_j$, $\sigma = \bigoplus_{i\in I}\Dual{a}_i.\sigma_i$,
	$I\subseteq J$ and $\forall h \in I. \; ({\rho_h},{\sigma_h})\in\Rel$.
\end{enumerate}

\item 
A relation $\Rel\subseteq \ASC\times\ASC$ is an
{\em affectible compliance relation\/} if 
\hbox{$\Rel\subseteq \FunK(\!\Rel\!)$.}

\item
For any $n\in\Nat$ we define $\ACRelk{n}\subseteq\ASC \times \ASC$ as follows:\\
$\ACRelk{0} = \ASC \times \ASC$, whereas, for $n > 0$
$\ACRelk{n}=\FunK(\ACRelk{n-1})$

\item
We define $\ACRelco =\bigcap_{n\in\Nat} \ACRelk{n}$
\end{enumerate}
%
\end{defi}

\begin{fact}
\label{fact:relrelco}
$\ACRel = \ACRelco = \nu(\FunK)$.
\end{fact}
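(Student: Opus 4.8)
The plan is to prove the two equalities $\ACRel = \nu(\FunK)$ and $\nu(\FunK) = \ACRelco$ separately: the first is essentially definitional, while the second is the dual of Kleene's theorem and goes through thanks to the finiteness of the index sets built into the grammar of $\ASC$.

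First I would observe that $\ACRel = \nu(\FunK)$ holds by unwinding the two definitions. Up to the harmless renaming of index variables (clause \ref{def:ACRel-4} of Definition \ref{def:ACRel} is clause \ref{lem:coinductiveChar-4} of $\FunK$ with the roles of $I$ and $J$ interchanged), the clauses defining $\FunK$ are exactly the defining clauses of the coinductive relation $\ACRel$. Hence a relation $\Rel$ is an affectible compliance relation in the sense of Definition \ref{def:coACRel}, i.e. $\Rel \subseteq \FunK(\Rel)$, precisely when it is one of the relations over which the coinductive Definition \ref{def:ACRel} takes the union, and $\ACRel$ is by construction the largest such relation. Since $\FunK$ is monotone on the complete lattice $(\mathcal{P}(\ASC\times\ASC), \subseteq)$, the Knaster--Tarski theorem identifies the largest post-fixed point of $\FunK$ with its greatest fixed point; therefore $\ACRel = \nu(\FunK)$.

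It then remains to show $\nu(\FunK) = \ACRelco$. Since $\ACRelk{1} = \FunK(\ACRelk{0}) \subseteq \ACRelk{0}$ and $\FunK$ is monotone, the sequence $(\ACRelk{n})_{n\in\Nat}$ is a descending chain. For the inclusion $\nu(\FunK) \subseteq \ACRelco$ I would argue by induction on $n$ that $\nu(\FunK) \subseteq \ACRelk{n}$: the base case is $\nu(\FunK) \subseteq \ASC\times\ASC = \ACRelk{0}$, and the step follows from $\nu(\FunK) = \FunK(\nu(\FunK)) \subseteq \FunK(\ACRelk{n}) = \ACRelk{n+1}$ by monotonicity; intersecting over $n$ gives $\nu(\FunK) \subseteq \ACRelco$.

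The converse inclusion is the crux. It suffices to show that $\ACRelco$ is a post-fixed point, $\ACRelco \subseteq \FunK(\ACRelco)$, since then $\ACRelco \subseteq \nu(\FunK)$ because $\nu(\FunK)$ is the greatest post-fixed point. Using that the chain is descending we have $\ACRelco = \bigcap_{n\geq 1}\ACRelk{n} = \bigcap_{n\geq 1}\FunK(\ACRelk{n-1}) = \bigcap_{m\geq 0}\FunK(\ACRelk{m})$, so the goal reduces to the co-continuity inequality $\bigcap_{m}\FunK(\ACRelk{m}) \subseteq \FunK\big(\bigcap_m \ACRelk{m}\big)$. Fix $(\rho,\sigma)$ in the left-hand side. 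The clause of $\FunK$ that can witness membership is determined by the syntactic shapes of $\rho$ and $\sigma$ alone, which are fixed and pairwise incompatible across clauses, so the same clause applies for every $m$. If $\rho = \stopA$ we are done. For the universally quantified clauses \ref{lem:coinductiveChar-3} and \ref{lem:coinductiveChar-4}, each continuation pair $(\rho_h,\sigma_h)$ with $h \in I$ lies in $\ACRelk{m}$ for every $m$, hence in $\ACRelco$, so the clause holds for $\FunK(\ACRelco)$. The only delicate case is the existential clause \ref{lem:coinductiveChar-2}: for each $m$ there is some witness $h_m \in I\cap J$ with $(\rho_{h_m},\sigma_{h_m}) \in \ACRelk{m}$, and a priori the witness varies with $m$. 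Here I would invoke finiteness of $I\cap J$: by the pigeonhole principle some index $h$ serves as $h_m$ for infinitely many $m$, and since the chain is descending $(\rho_h,\sigma_h) \in \ACRelk{m}$ then holds cofinally, hence for all $m$, giving $(\rho_h,\sigma_h)\in\ACRelco$ and thus $(\rho,\sigma)\in\FunK(\ACRelco)$. This establishes $\ACRelco \subseteq \FunK(\ACRelco)$ and closes the argument. The main obstacle is precisely this existential clause; everything else is bookkeeping, and it is the hypothesis that $I$ (and a fortiori $I\cap J$) is finite that makes $\FunK$ $\omega$-co-continuous and collapses the greatest fixed point to the $\omega$-intersection $\ACRelco$.
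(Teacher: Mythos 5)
Your proof is correct. Note that the paper itself offers no proof of this statement: it is recorded as a \emph{Fact}, implicitly appealing to the standard theory of monotone and co-continuous operators, so there is no argument of the authors' to compare against; yours supplies exactly the justification being taken for granted. Your two-part decomposition is the right one: the equality $\ACRel = \nu(\FunK)$ is definitional modulo Knaster--Tarski (the clauses of Definition~\ref{def:ACRel} and of $\FunK$ coincide up to renaming of index sets, so $\ACRel$ is by construction the greatest post-fixed point, which for a monotone operator on $\mathcal{P}(\ASC\times\ASC)$ equals the greatest fixed point), while $\nu(\FunK)=\ACRelco$ requires $\omega$-co-continuity of $\FunK$ along the descending chain $(\ACRelk{n})_{n\in\Nat}$. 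You also correctly isolated the only genuinely delicate point: the universally quantified clauses pass to the intersection for free, but the existential clause (case~\ref{lem:coinductiveChar-2} of Definition~\ref{def:coACRel}) could a priori have an $m$-dependent witness, and it is the finiteness of $I\cap J$, guaranteed by the grammar of $\ASC$, together with the pigeonhole argument on the descending chain, that forces a single witness to work at every level. Without that finiteness the fact would be false in general (the $\omega$-limit of the iterates need not be a post-fixed point), so flagging this as the crux, rather than as bookkeeping, is exactly right.
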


\noindent
Notice that $\ACRelk{k} \subseteq \ACRelk{k-1}$ for all $k$.
We define now a stratified notion of validity for judgments in system $\der$.

\begin{defi}[Stratified $\ACRel$\!-semantics for $\der$] \label{def:StratComplModel} 
Let $\Gamma$ be a set of statements of the form $\rho\complyF\sigma$ and let $k\in\Nat$. We define
\begin{enumerate}
\item
$ \begin{array}{rcl}
\modelsACRel_k \Gamma &\mbox{ if }& \forall {(\rho' \complyF \sigma') \in \Gamma} .
 \end{array} $ \hspace{-2mm}
$\rho' \ACRelk{k}\sigma'$\,;
\item 
$ \begin{array}{rcl}
\Gamma\modelsACRel_k \rho \complyF \sigma & \mbox{ if } & \modelsACRel_k \Gamma \implies \rho\ACRelk{k}\sigma
 \end{array}$.
\end{enumerate}
\end{defi}

We can now proceed by proving the soundness of $\der$.


\begin{figure}[t]
\hrule
\vspace{2mm}
\centering
{\small
\begin{tabbing}
\Prove\=$(\Gamma\der \rho\complyF\sigma)$ \\ [1mm]
\IF \> $\rho = \stopA$~~ \THEN ~~~~~
	$(\mbox{\scriptsize$\CkptcomplAx$}):\Inf{}{\Gamma\der \stopA \complyF \sigma}$\\ [1mm]
\ELSE \> \IF \= ~~~$\rho\complyF\sigma \in \Gamma$ ~~\THEN~~~~~
	$(\mbox{\scriptsize$\CkptcomplHyp$}):\Inf{}{\Gamma, \rho\complyF\sigma \der \rho\complyF\sigma}$ \\ [2mm]

\ELSE \> \IF \> ~~\=~~ \= $\rho = \bigoplus_{i\in I}\Dual{a}_i.\rho_i$ ~\= \AND~ $\sigma = \sum_{j\in J}a_j.\sigma_j$~\AND~ $I\subseteq J$\\[2mm]
\>\>\> \AND~ \FA~$k\in I$ ~~$\mathcal{ D}_k = $~\Prove$(\Gamma, \rho\complyF\sigma \der \rho_k\complyF\sigma_k) \neq \FAIL$\\ [2mm]
\>  \THEN ~~~~~
	$(\mbox{\scriptsize$\oplus\cdot+$}):\Inf{
    		(\forall k\in I) ~~\mathcal{ D}_k\qquad }{
    		\Gamma\der \rho\complyF\sigma}$ \\ [1mm]
\ELSE \> \IF \> ~~~$\rho = \sum_{i\in I}a_i.\rho_i$ ~\AND~ $\sigma = \bigoplus_{j\in J}\Dual{a}_j.\sigma_j$~\AND~ $I\supseteq J$\\ [2mm]
\>\>\> \AND~~\FA~$k\in J$~~~$\mathcal{ D}_k = $~\Prove$(\Gamma, \rho\complyF\sigma \der \rho_k\complyF\sigma_k)\neq \FAIL$\\ [2mm]
\>  \THEN ~~~~~
	$(\mbox{\scriptsize$+\cdot\oplus$}):\Inf{
    		(\forall k\in J) ~~\mathcal{ D}_k\qquad}{
    		\Gamma\der \rho\complyF\sigma}$  ~~~~\ELSE ~~\FAIL\\ [1mm]

\ELSE \> \IF ~~~~  $\rho = \sum_{i\in I}\alpha_i.\rho_i$ ~ \AND~$\sigma = \sum_{j\in J}\Dual{\alpha}_j.\sigma_j$ \\[2mm]
\> \> \> \AND~~~{\bf exists} $k \in I\cap J$ {\bf s.t.} $\mathcal{ D} = $ \Prove$(\Gamma, \rho\complyF\sigma \der \rho_k\complyF\sigma_k) \neq \FAIL$ \\ [2mm]
\>  \THEN  ~~~~~
	$(\mbox{\scriptsize$+\cdot+$}):\Inf{\Der}
	{\Gamma\der \rho\complyF\sigma}$ ~~~~\ELSE ~~\FAIL\\ [2mm]

\ELSE \> \FAIL
\end{tabbing}
}
\caption{The procedure\;\Prove.}\label{fig:Prove}
\vspace{2mm}
\hrule
\end{figure}

\begin{prop}[Soundness of $\der$]
\label{prop:soundnessACRel}
If \,$\der\rho\complyF \sigma$, then \,$\modelsACRel\rho\complyF\sigma$.
\end{prop}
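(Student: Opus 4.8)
The plan is to prove a stronger, environment-sensitive statement by induction on the approximation index, using the stratified semantics of Definition~\ref{def:StratComplModel}. Concretely, I would establish
\[
(\star)\qquad \text{for all } k\in\Nat \text{ and every } \Der :: \Gamma\der\rho\complyF\sigma,\qquad \Gamma\modelsACRel_k\rho\complyF\sigma.
\]
The proposition then follows by taking $\Gamma=\emptyset$: since $\modelsACRel_k\emptyset$ holds vacuously, $(\star)$ gives $\rho\ACRelk{k}\sigma$ for every $k$, whence $\rho\in\bigcap_{k}\ACRelk{k}=\ACRelco=\ACRel$ by Fact~\ref{fact:relrelco}, that is $\modelsACRel\rho\complyF\sigma$.

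I would prove $(\star)$ by induction on $k$, with a case analysis on the last rule of $\Der$; crucially, \emph{no} inner induction on the derivation is needed, because the induction hypothesis at level $k-1$ is available for all derivations at once. The base case $k=0$ is immediate, since $\ACRelk{0}=\ASC\times\ASC$ makes every judgment valid. For $k\ge 1$, assume $\modelsACRel_k\Gamma$ and inspect the last rule. The axiom $(\CkptcomplAx)$ gives $\rho=\stopA$, so $(\stopA,\sigma)\in\FunK(\ACRelk{k-1})=\ACRelk{k}$ directly from the definition of $\FunK$. The rule $(\CkptcomplHyp)$ is handled by the assumption itself, since the conclusion $\rho\complyF\sigma$ occurs in the environment and $\modelsACRel_k\Gamma$ forces $\rho\ACRelk{k}\sigma$.

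The heart of the argument is the three logical rules $(+\cdot+)$, $(\oplus\cdot+)$, $(+\cdot\oplus)$, each of which ``consumes one synchronisation step'' and reinserts its own conclusion as a hypothesis into the premises' environments. Treating $(+\cdot+)$ as representative, write $c$ for the conclusion $\alpha.\rho+\rho'\complyF\Dual{\alpha}.\sigma+\sigma'$, so the premise is $\Der' :: \Gamma,c\der\rho\complyF\sigma$. From $\modelsACRel_k\Gamma$ and $\ACRelk{k}\subseteq\ACRelk{k-1}$ we obtain $\modelsACRel_{k-1}\Gamma$; applying the induction hypothesis at level $k-1$ to the whole derivation $\Der$ (legitimate, because $(\star)$ at $k-1$ quantifies over all derivations) yields that $c$ is valid at level $k-1$, i.e. $(\alpha.\rho+\rho')\ACRelk{k-1}(\Dual{\alpha}.\sigma+\sigma')$. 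Hence $\modelsACRel_{k-1}(\Gamma,c)$ holds, and the induction hypothesis applied to $\Der'$ gives $\rho\ACRelk{k-1}\sigma$. By clause~(\ref{lem:coinductiveChar-2}) of the definition of $\FunK$ this is exactly what forces $(\alpha.\rho+\rho')\in\FunK(\ACRelk{k-1})=\ACRelk{k}$, so $c$ is valid at level $k$, as required.

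The cases $(\oplus\cdot+)$ and $(+\cdot\oplus)$ follow the same template, using clauses~(\ref{lem:coinductiveChar-3}) and~(\ref{lem:coinductiveChar-4}) respectively and discharging each premise subderivation $\Der_i$ (for $i$ ranging over the relevant index set) by the induction hypothesis at level $k-1$, after first validating the augmented environment via the induction hypothesis on $\Der$ itself. The main obstacle is precisely this apparent circularity: the logical rules place their own conclusion into the environment of their premises, so one cannot simply induct on derivation size or read off validity at the same level. The device that breaks the loop is the decrease of the approximation index---the self-referential hypothesis is discharged by invoking the level-$(k-1)$ statement on $\Der$, which is sound because the outer induction on $k$ is well-founded; keeping the bookkeeping of indices straight (why premises land at level $k-1$ and where monotonicity $\ACRelk{k}\subseteq\ACRelk{k-1}$ is used) is the only delicate part.
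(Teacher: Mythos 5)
Your proof is correct and follows essentially the same route as the paper's: the paper likewise proves the strengthened statement that $\Gamma\der\rho\complyF\sigma$ implies $\Gamma\modelsACRel_k\rho\complyF\sigma$ for all $k$ (by what it calls a simultaneous induction over the derivation $\Der$ and over $k$), and it handles the circular environment augmentation exactly as you do --- first re-validating the rule's conclusion at level $k-1$ via the hypothesis applied to $\Der$ itself, then applying the hypothesis to the premise and stepping up one level through $\FunK$. Your observation that the inner induction on the derivation can be dropped altogether, because the level-$(k-1)$ statement is quantified over \emph{all} derivations, is a correct mild streamlining of the paper's induction bookkeeping rather than a genuinely different argument.
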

\proof
Actually we show a stronger property, namely that
\lmcscenterline{
\,$\Gamma\der\rho\complyF \sigma$ implies\,$\Gamma\modelsACRel\rho\complyF\sigma$.
}

\noindent
By Fact \ref{fact:relrelco},  it is enough to show that 
\[ \begin{array}{rcl}
\Gamma\der \rho \complyF \sigma &\text{implies}& \Gamma\modelsACRel_k \rho \complyF \sigma \textrm{ for all }k. 
\end{array} \]

We proceed by simultaneous induction over the derivation $\Deriv :: \Gamma\der \rho \complyF \sigma$ and over $k$. 
Since 
$\Gamma\modelsACRel_0\rho \complyF \sigma$ trivially holds, we shall keep the case $k = 0$ implicit. Let $k > 0$;
we distinguish the possible cases of the last rule in $\Deriv$.

\begin{description}

\item [Case $(\TcomplAx)$] 
Then $\Deriv$ consists of the inference 
 \[
\Inf[\TcomplAx] { \Gamma\der\stopA \complyF \sigma }
 \]
and the thesis is immediate since $\stopA \ACRelk{k} \sigma$;

\item [Case $(\TcomplHyp)$] 
Then $\Deriv$ consists of the inference: 
 \[ 
\Inf[\TcomplHyp] { \Gamma, \rho \complyF \sigma \der \rho \complyF \sigma }
 \]
and $\Gamma, \rho \complyF \sigma \modelsACRel_k \rho \complyF \sigma$ trivially holds.

\item [Case $(+\cdot +)$] Then $\Der$ ends by 
 \[ \Inf[+\cdot +] { \Gamma, \alpha.\rho+\rho'\complyF\Dual{\alpha}.\sigma+\sigma' \der \rho \complyF \sigma }{ \Gamma\der \alpha.\rho+\rho'\complyF\Dual{\alpha}.\sigma+\sigma'} 
 \]
 If $\modelsACRel_k\Gamma$ then $\modelsACRel_{k{-}1} \Gamma$; by induction over $k$ we know that
 $\Gamma\modelsACRel_{k{-}1} \alpha.\rho+\rho'\complyF\Dual{\alpha}.\sigma+\sigma'$ that implies that
 $  \alpha.\rho+\rho'\ACRelk{k{-}1} \Dual{\alpha}.\sigma+\sigma'$ and hence 
	$\modelsACRel_{k{-}1} \Gamma, \alpha.\rho+\rho'\complyF\Dual{\alpha}.\sigma+\sigma'$. From this, by induction over $\Der$, 
we
get $\rho \ACRelk{k-1} \sigma$; by definition of $\ACRelk{k}$ this implies
$\alpha.\rho+\rho'  \modelsACRel_{k} \Dual{\alpha}.\sigma+\sigma'$ and we conclude that 
$\Gamma\modelsACRel_{k} \alpha.\rho+\rho'\complyF\Dual{\alpha}.\sigma+\sigma'$ as desired.
\item [Cases $(+\cdot\oplus)$  and $(+\cdot\oplus)$] Similar to case $(+\cdot +)$.
\qed
\end{description}

To get decidability and to prove the completeness property of system $\der$, we study the proof-search procedure \Prove\ defined in
Figure \ref{fig:Prove}. The procedure  \Prove, given a judgment $\Gamma\der\rho\complyF \sigma$, attempts to reconstruct a derivation of it in system $\der$. Such a procedure is correct and terminating: it either returns a derivation, if any, or it fails, in case the judgment is not derivable in the system.

\begin{lem}\label{prop:provecorr}
The proof search algorithm $\mathbf{Prove}$\ for $\der$ is correct and terminating. In particular,
\begin{enumerate}
\item 
\label{prop:provecorr-i}
{\em \Prove}$(\Gamma\der \rho\complyF\sigma)=\Der \neq$ {\em \FAIL}~~~~implies~~~~$\Der::\Gamma\der \rho\complyF\sigma$;
\item 
\label{prop:provecorr-i-bis}
{\em \Prove}$(\Gamma\der \rho\complyF\sigma)=$ {\em \FAIL}~~~~implies~~~~$\Gamma \not\modelsACRel_k \rho\complyF\sigma$ 
for some $k$;
\item 
\label{prop:provecorr-ii}
{\em \Prove}$(\Gamma\der \rho\complyF\sigma)$ terminates for all judgments $\Gamma\der \rho\complyF\sigma$.
\end{enumerate}
\end{lem}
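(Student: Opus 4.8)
The plan is to prove the three statements in the order (\ref{prop:provecorr-ii}) termination, then (\ref{prop:provecorr-i}) soundness of the returned tree, then (\ref{prop:provecorr-i-bis}) soundness of failure, since the last two proceed by induction on the \emph{finite} tree of recursive calls generated by \Prove, whose finiteness is exactly what the termination statement supplies. For termination I would first observe that, because recursion is guarded and we work up to unfolding, the set $S$ of contracts reachable from $\rho$ and $\sigma$ by repeatedly unfolding and passing to the continuations of a choice is finite. Every judgment occurring in a recursive call of $\Prove(\Gamma\der\rho\complyF\sigma)$ has the form $\delta\complyF\gamma$ with $\delta,\gamma\in S$, and each recursive call strictly enlarges $\Gamma$ by the current judgment. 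Since the first two clauses of \Prove stop a branch as soon as $\rho=\stopA$ or $\rho\complyF\sigma\in\Gamma$, no judgment can repeat along a branch: its second occurrence would be caught by the \CkptcomplHyp\ clause. Hence every branch has length at most $|S\times S|$, and since the index sets $I,J$ are finite every node is finitely branching; by K\"onig's Lemma the call tree is finite, which is termination (\ref{prop:provecorr-ii}).

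Statement (\ref{prop:provecorr-i}) is then a routine structural induction on this finite call tree. In each non-failing clause \Prove returns a tree obtained by applying exactly one rule of System $\der$ whose side conditions (the shapes of $\rho,\sigma$ together with the inclusions $I\subseteq J$, $I\supseteq J$, or the existence of some $k\in I\cap J$) coincide with those tested in the clause, and whose premises are the outputs $\Der_k$ (respectively $\Der$) of the recursive calls. By the induction hypothesis each such output already derives its own judgment in $\der$, so the assembled tree derives $\Gamma\der\rho\complyF\sigma$.

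The delicate statement is (\ref{prop:provecorr-i-bis}), and this is where I expect the main obstacle. I would prove it by induction on the finite call tree, strengthening it to exhibit a precise stratification level: if $\Prove(\Gamma\der\rho\complyF\sigma)=\FAIL$ then there is $k$ with $\modelsACRel_k\Gamma$ and $\rho\not\ACRelk{k}\sigma$, which by Definition \ref{def:StratComplModel} is $\Gamma\not\modelsACRel_k\rho\complyF\sigma$. The tension is that the two requirements pull in opposite directions: since $\ACRelk{k}\subseteq\ACRelk{k-1}$, the condition $\rho\not\ACRelk{k}\sigma$ wants $k$ large, whereas keeping $\modelsACRel_k\Gamma$ (every coinductive hypothesis in $\Gamma$ still valid at level $k$) wants $k$ small. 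The observation that reconciles them is that $\Gamma$ contains only ancestor judgments, and every ancestor is a judgment on which a structural rule of $\der$ was actually applied; hence its two sides already match a clause of $\FunK$ (Definition \ref{def:coACRel}), and such a pair lies in $\ACRelk{1}$ automatically, because its continuations are only compared at level $0=\ASC\times\ASC$. Iterating this remark shows that an ancestor leaves $\ACRelk{\cdot}$ strictly later than the descendant where the failure originates. Concretely, tracking for each judgment the least level $\phi$ at which it leaves $\ACRelk{\cdot}$, one finds that at the universal clauses $(\oplus\cdot+)$ and $(+\cdot\oplus)$ the value $\phi$ of a node is $1$ plus the minimum of the $\phi$'s of its children, while at the existential clause $(+\cdot+)$ it is $1$ plus their maximum; following the child that realises the failure therefore makes $\phi$ strictly increase toward the root. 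Consequently every proper ancestor of the current node has strictly larger $\phi$, hence is still valid at the level $k=\phi$ of the current conclusion, yielding $\modelsACRel_k\Gamma$ together with $\rho\not\ACRelk{k}\sigma$; the base failures (wrong top shapes, a missing inclusion $I\subseteq J$/$I\supseteq J$, or $I\cap J=\emptyset$ with $\rho\neq\stopA$) are exactly the configurations excluded by $\FunK$, so there $\phi=1$ and the same ancestor remark gives $\modelsACRel_1\Gamma$.

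The point I expect to need the most care is making this ``$+1$'' bookkeeping between the failure level of a node and those of its ancestors fully precise: the single-level induction hypothesis ``ancestors are valid at $\phi(\text{node})$'' does not by itself bridge the gap of one level created when descending, so the hypothesis must instead record the strict monotonicity of $\phi$ along the recursion path (equivalently, that a proper ancestor's drop-out level exceeds that of the node by at least its depth difference). This is subtlest across the existential clause $(+\cdot+)$, where \Prove searches the alternatives in $I\cap J$ and fails only after all of them fail, so that one must follow the child of maximal $\phi$ rather than an arbitrary failing one. Once this is in place, combining the three statements with Fact \ref{fact:relrelco} and Proposition \ref{prop:soundnessACRel} (for empty $\Gamma$ the environment is vacuously valid at every level) yields decidability and completeness of $\der$.
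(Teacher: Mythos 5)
Your termination argument (item 3) and your treatment of item 1 are correct and coincide in substance with the paper's own: by regularity there are only finitely many pairs of subexpressions, so along any branch the membership test of clause $(\CkptcomplHyp)$ must eventually fire, and the assembled tree is a derivation because each clause of \Prove{} instantiates exactly one rule of $\der$ with the same side conditions.

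The gap is in item 2, exactly at the point you flag as delicate, but the repair you propose does not work: the monotonicity invariant is false, and in fact item 2 as stated is false at intermediate calls, so no strengthening of a node-by-node induction on the call tree can establish it. Write $\phi(\delta,\gamma)$ for the least $k$ with $(\delta,\gamma)\notin\ACRelk{k}$, and consider
\[
\rho_A=\rec x.\bigl(\,\Dual{a}.(\Dual{c}.\rho_N\oplus\Dual{d}.x)\ \oplus\ \Dual{b}.\Dual{g}\,\bigr),
\qquad
\sigma_A=\rec y.\bigl(\,a.(c.\sigma_N+d.y)\ +\ b.\Dual{g}\,\bigr),
\]
where $\rho_N=\Dual{e}_1.\Dual{e}_2.\Dual{f}_1$, $\sigma_N=e_1.e_2.\Dual{f}_2$, and let $\rho_B=\Dual{c}.\rho_N\oplus\Dual{d}.\rho_A$, $\sigma_B=c.\sigma_N+d.\sigma_A$ be the continuations after $a$. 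Then $\phi(\rho_N,\sigma_N)=3$; $\phi(\rho_A,\sigma_A)=2$, because its $b$-child $(\Dual{g},\Dual{g})$ matches no clause of Definition \ref{def:coACRel} and leaves at level $1$; and $\phi(\rho_B,\sigma_B)=1+\min(3,2)=3$. Running \Prove{} on $\emptyset\der\rho_A\complyF\sigma_A$, the call on $\rho_N\complyF\sigma_N$ fails and carries the environment $\Gamma'=\{\rho_A\complyF\sigma_A,\ \rho_B\complyF\sigma_B\}$, while its sibling, which is literally the ancestor judgment $\rho_A\complyF\sigma_A$, succeeds by $(\CkptcomplHyp)$ and short-circuits the minimum defining $\phi(\rho_B,\sigma_B)$. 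Consequently: (a) the ancestor $\rho_B\complyF\sigma_B$ has $\phi=3$, not $\geq\phi(\rho_N,\sigma_N)+1=4$, and since $\rho_N\complyF\sigma_N$ is the \emph{only} failing child of that node, no choice of which failing child to follow restores monotonicity; (b) there is no $k$ at all with $\modelsACRel_k\Gamma'$ and $(\rho_N,\sigma_N)\notin\ACRelk{k}$, because the former forces $k\leq 1$ (as $(\rho_A,\sigma_A)\notin\ACRelk{2}$) while the latter forces $k\geq 3$; hence $\Gamma'\modelsACRel_k\rho_N\complyF\sigma_N$ holds for every $k$ although the call fails. Note that this same example defeats the paper's own proof of item 2, which silently drops the conjunct $\modelsACRel_k\Gamma$ and ignores that the recursive calls enlarge $\Gamma$; your diagnosis of the weak point is accurate, but the statement being proved node-by-node is simply not true.

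The argument that does work abandons stratification at intermediate nodes. Prove by induction on the finite call tree that $\Prove(\Gamma\der\rho\complyF\sigma)=\FAIL$ implies $(\rho,\sigma)\notin\ACRel$, using the fixed point $\ACRel=\FunK(\ACRel)$ of Fact \ref{fact:relrelco} instead of the approximants: at a base failure no clause of Definition \ref{def:coACRel} applies; at a universal clause one failing child already contradicts $(\rho,\sigma)\in\FunK(\ACRel)$; at the existential clause all children have failed; $(\CkptcomplHyp)$-successes of siblings are irrelevant. Then, whenever every judgment of $\Gamma$ lies in $\ACRel$ --- in particular for $\Gamma=\emptyset$, the only instance used in Proposition \ref{thr:completeness} --- pick $k$ with $(\rho,\sigma)\notin\ACRelk{k}$; since $\ACRel\subseteq\ACRelk{k}$, the conjunct $\modelsACRel_k\Gamma$ comes for free and $\Gamma\not\modelsACRel_k\rho\complyF\sigma$ follows. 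Item 2 must be weakened to this form (equivalently, to its contrapositive: if all of $\Gamma$ and the conclusion are in $\ACRel$, then \Prove{} succeeds), which is in effect your own closing remark about empty environments promoted to the induction hypothesis.
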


\proof \leavevmode
\begin{enumerate}
\item Immediate by construction of \Prove.

\item Let $\Prove(\Gamma\der \rho\complyF\sigma)=\FAIL$. Then the procedure terminates: let $h$ be the 
number of nested calls of \Prove\ in this execution. We claim that $\Gamma \not\modelsACRel_{h+1} \rho\complyF\sigma$ which we prove by induction over $h$. If $h = 0$ then $ \rho\complyF\sigma \not \in \Gamma$ and none of the conditions defining $\ACRelk{1}$ is satisfied.
If $h > 0$ then again $\rho\complyF\sigma \not \in \Gamma$ and also $\rho\neq \stopA$. As $h > 0$ there is at least one recursive call
of \Prove\ and hence either $\rho = \sum_{i\in I}\alpha_i.\rho_i$, $\sigma = \sum_{j\in J}\Dual{\alpha}_j.\sigma_j$ and $I\cap J \neq \emptyset$
or $\rho = \bigoplus_{i\in I}\Dual{\alpha}_i.\rho_i$, $\sigma = \sum_{j\in J}\alpha_j.\sigma_j$ and
$I\subseteq J$ or $\rho = \sum_{i\in I}\Dual{\alpha}_i.\rho_i$, $\sigma = \bigoplus_{j\in J}\alpha_j.\sigma_j$ and 
$I\supseteq J$. In the first case the hypothesis that $\Prove(\Gamma\der \rho\complyF\sigma)=\FAIL$ implies that
for all $i \in I \cap J$, $\Prove(\Gamma\der \rho_i\complyF\sigma_i)=\FAIL$; but the number of recursive calls in any call of
$\Prove(\Gamma\der \rho_i\complyF\sigma_i)$ will be less than $h$, hence by induction there exists some $l < h$ such that
$\Gamma \not\modelsACRel_{l+1} \rho_i\complyF\sigma_i$, that implies
$\Gamma \not\modelsACRel_{h} \rho_i\complyF\sigma_i$ for all $i \in I \cap J$ since $\ACRelk{h}\subseteq \ACRelk{l+1}$. It follows that
$\Gamma \not\modelsACRel_{h+1} \rho\complyF\sigma$ by clause (2) in the definition of $\ACRelk{h+1}$ that is the only one that applies.
The other cases of $\rho$ and $\sigma$ are treated similarly.

\item
Notice that
in all recursive calls \Prove$(\Gamma, \rho\complyF\sigma \!\der\! \rho_k\complyF\sigma_k)$
inside
\Prove$(\Gamma\der \rho\complyF\sigma)$ the expressions $\rho_k$ and  $\sigma_k$ are subexpressions of $\rho$ and $\sigma$
respectively.  Since contract expressions generate regular trees, there are only finitely many such subexpressions; therefore the if clause 
$\rho\complyF\sigma \in \Gamma$ corresponding to axiom ($\CkptcomplHyp$) cannot fail infinitely many times.
This implies that the number of nested calls of procedure \Prove\ is always finite. 
\qed
\end{enumerate}
Decidability now immemdiately descends as a corollary.

\begin{cor}\label{cor:ACRelDecidable}
The relation $\ACRel$ is decidable.
\end{cor}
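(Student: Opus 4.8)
The plan is to turn the proof-search procedure $\Prove$ of Figure \ref{fig:Prove} into a decision procedure for $\ACRel$. First I would observe that, given $\rho,\sigma\in\ASC$, it suffices to run $\Prove(\emptyset\der\rho\complyF\sigma)$: by Lemma \ref{prop:provecorr}(\ref{prop:provecorr-ii}) this computation always terminates, so the only remaining task is to read off membership in $\ACRel$ from its two possible outcomes.

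In the successful case, where $\Prove(\emptyset\der\rho\complyF\sigma)=\Der\neq\FAIL$, Lemma \ref{prop:provecorr}(\ref{prop:provecorr-i}) gives $\Der::\emptyset\der\rho\complyF\sigma$, i.e.\ $\der\rho\complyF\sigma$. Soundness (Proposition \ref{prop:soundnessACRel}) then yields $\modelsACRel\rho\complyF\sigma$, which for the empty environment unfolds to $\rho\ACRel\sigma$. In the failing case, where the procedure returns $\FAIL$, Lemma \ref{prop:provecorr}(\ref{prop:provecorr-i-bis}) provides some $k$ with $\emptyset\not\modelsACRel_k\rho\complyF\sigma$; since $\modelsACRel_k\emptyset$ holds vacuously, this means precisely $\rho\not\ACRelk{k}\sigma$. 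By Fact \ref{fact:relrelco} we have $\ACRel=\ACRelco=\bigcap_{n}\ACRelk{n}\subseteq\ACRelk{k}$, hence $\rho\not\ACRel\sigma$.

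Putting the two cases together, $\Prove(\emptyset\der\rho\complyF\sigma)$ halts returning a derivation exactly when $\rho\ACRel\sigma$, and returns $\FAIL$ exactly when $\rho\not\ACRel\sigma$; this is a decision procedure for $\ACRel$. I do not expect a genuine obstacle here, since all the substantive work has already been carried out: termination and the correctness of both the positive and the negative answers are packaged in Lemma \ref{prop:provecorr}, and the positive-answer half also relies on the soundness direction of the axiomatisation. The only point requiring a little care is the interplay between the unsubscripted semantics $\modelsACRel$ appearing in soundness and the stratified semantics $\modelsACRel_k$ appearing in the failure clause, but this is exactly what Fact \ref{fact:relrelco} reconciles, so the corollary reduces to a direct bookkeeping combination of these facts.
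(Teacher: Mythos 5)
Your proof is correct and follows exactly the route the paper intends: the paper derives the corollary "immediately" from Lemma \ref{prop:provecorr} (termination plus correctness of $\Prove$), with the positive answer justified via Proposition \ref{prop:soundnessACRel} and the negative answer via the stratified failure clause together with Fact \ref{fact:relrelco}. Your write-up is simply a careful unfolding of that same bookkeeping, so there is nothing to correct.
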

\noindent
The previous lemma also enables us to get the completeness property.

\begin{prop}[Completeness  w.r.t $\modelsACRel$] \label{thr:completeness} If \hspace{0.5mm}
${\modelsACRel}\rho\complyF\sigma $, then $\der \rho\complyF\sigma.$
\end{prop}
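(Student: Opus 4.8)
The plan is to derive completeness directly from the correctness and termination of the proof-search procedure $\Prove$ already established in Lemma \ref{prop:provecorr}, so that no new inductive argument is needed beyond some bookkeeping. First I would unfold what the hypothesis $\modelsACRel\rho\complyF\sigma$ amounts to on the empty environment. By Definition \ref{def:StratComplModel}, $\emptyset\modelsACRel_k\rho\complyF\sigma$ holds iff $\rho\ACRelk{k}\sigma$, because the premise $\modelsACRel_k\emptyset$ is vacuously satisfied. Hence $\modelsACRel\rho\complyF\sigma$, read as $\emptyset\modelsACRel_k\rho\complyF\sigma$ for every $k$, is equivalent to $(\rho,\sigma)\in\ACRelk{k}$ for all $k$, that is to $\rho\ACRelco\sigma$; and by Fact \ref{fact:relrelco} this in turn coincides with $\rho\ACRel\sigma$.

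Next I would run the procedure on the judgment $\emptyset\der\rho\complyF\sigma$. By Lemma \ref{prop:provecorr}(\ref{prop:provecorr-ii}) the call $\Prove(\emptyset\der\rho\complyF\sigma)$ terminates, so it returns either a derivation $\Der\neq\FAIL$ or $\FAIL$. The only real step is to rule out failure. Suppose, toward a contradiction, that $\Prove(\emptyset\der\rho\complyF\sigma)=\FAIL$. Then Lemma \ref{prop:provecorr}(\ref{prop:provecorr-i-bis}) yields some $k$ with $\emptyset\not\modelsACRel_k\rho\complyF\sigma$, i.e.\ $(\rho,\sigma)\notin\ACRelk{k}$. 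This contradicts the hypothesis, which by the previous paragraph gives $(\rho,\sigma)\in\ACRelk{k}$ for all $k$.

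Therefore $\Prove(\emptyset\der\rho\complyF\sigma)=\Der$ for some $\Der\neq\FAIL$, and Lemma \ref{prop:provecorr}(\ref{prop:provecorr-i}) guarantees $\Der::\emptyset\der\rho\complyF\sigma$, that is $\der\rho\complyF\sigma$, as required. The substantive ingredients are already discharged inside Lemma \ref{prop:provecorr}: termination (\ref{prop:provecorr-ii}), which rests on the regularity and consequent finiteness of the relevant contract subterms, is what legitimizes the dichotomy ``derivation or $\FAIL$'', while completeness is essentially the contrapositive of the failure clause (\ref{prop:provecorr-i-bis}) specialized to the empty context. The main point to handle carefully is the vacuous-satisfaction reading of $\modelsACRel_k\emptyset$ together with the use of Fact \ref{fact:relrelco} to move freely between $\ACRel$, $\ACRelco$ and the approximants $\ACRelk{k}$; everything else is a direct combination of the three parts of the lemma.
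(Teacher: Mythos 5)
Your proof is correct and follows essentially the same route as the paper's: both arguments invoke termination of $\Prove$ (Lemma \ref{prop:provecorr}(\ref{prop:provecorr-ii})), rule out $\FAIL$ via the contrapositive of clause (\ref{prop:provecorr-i-bis}) using the hypothesis $\rho\ACRelk{k}\sigma$ for all $k$, and conclude with the correctness clause (\ref{prop:provecorr-i}). The only difference is that you spell out explicitly the vacuous reading of $\modelsACRel_k\emptyset$ and the passage through Fact \ref{fact:relrelco}, which the paper's terser proof leaves implicit.
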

\proof
Now suppose that $\rho \ACRel \sigma$; by \ref{prop:provecorr}.(\ref{prop:provecorr-ii}) the computation of $\Prove(\emptyset \der \rho\complyF \sigma)$ terminates; the value cannot be \FAIL\ by  \ref{prop:provecorr}.(\ref{prop:provecorr-i-bis}), since 
$\rho \ACRelk{k} \sigma$ for all $k$ by hypothesis, hence the algorithm \Prove\ yields a derivation $\Der$ of $\emptyset \der \rho\complyF \sigma$ by
\ref{prop:provecorr}.(\ref{prop:provecorr-i}). \qed

%
%
%
%

%


%
%
%
%

\subsection{Characterizations of  $\ACRel$}

The relation $\rho \ACRel \sigma$ can be characterized in terms of the existence of a winning strategy for player $\playerC$ in the game $\game_{\rho\!\pp\!\sigma}$, a condition that in turn is equivalent to $\rho$ and $\sigma$ being retractable compliant as well as being orchestrated compliant. This is the content of the next theorem, whose proof establishes a tight correspondence among strategies and orchestrators.

\begin{thm}[Main Theorem I]\hfill\\
\label{th:complyAwinstrat}
Let  $\rho,\sigma\in\ASC$,
The following conditions are equivalent:
\begin{enumerate}[label=(\arabic*)]
\item
\label{th:complyAwinstrat-1}
$\rho\ACRel \sigma$
\item
\label{th:complyAwinstrat-2}
$\rho\complyR \sigma$
\item
\label{th:complyAwinstrat-3}
There exists a winning strategy for player $\playerC$ in $\game_{\rho\pp\sigma}$.
\item
\label{th:complyAwinstrat-4}
There exists an orchestrator $f$ such that ~$f: \rho\complyO \sigma$.
\end{enumerate}
\end{thm}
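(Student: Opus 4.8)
The plan is to use affectible compliance $\ACRel$ --- equivalently, turn-based compliance $\complyTB$ by Proposition \ref{th:complyequivtbcomply} --- as the central hub, and to establish the three links \ref{th:complyAwinstrat-1}$\,\Leftrightarrow\,$\ref{th:complyAwinstrat-3}, \ref{th:complyAwinstrat-3}$\,\Leftrightarrow\,$\ref{th:complyAwinstrat-4} and \ref{th:complyAwinstrat-1}$\,\Leftrightarrow\,$\ref{th:complyAwinstrat-2} separately, rather than a single cycle; this isolates the game/orchestrator correspondence --- the real payoff --- from the operational backtracking argument. Throughout, Lemma \ref{lem:refinement} lets me restrict to univocal strategies for $\playerC$ without loss of generality.

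For \ref{th:complyAwinstrat-1}$\,\Leftrightarrow\,$\ref{th:complyAwinstrat-3} I would first replace $\ACRel$ by $\complyTB$ and then read off a winning strategy from a turn-based compliance relation. Since the event structure $\Sem{\rho\pp\sigma}$ is built directly from the LTS $\tblts{}$, a position of the game is just a turn-based configuration together with a time stamp. Given $\rho\complyTB\sigma$, I define the univocal strategy $\Sigma$ that plays the success move $\playerC{:}\cmark$ whenever the client is $\stopA$ and otherwise, at an affectible-choice position, selects the $\playerC$-move witnessing clause (c3) of $\FunH$ into a still-compliant configuration. By Lemma \ref{lem:winC} a play is won by $\playerC$ exactly when it is infinite or ends with $\playerC{:}\cmark$; conditions (c1)--(c3) guarantee that no finite maximal conforming play ends in a stuck configuration with client $\neq\stopA$, so $\Sigma$ is winning. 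Conversely, from a univocal winning $\Sigma$ I would take $\Rel$ to be the set of configurations occurring along plays conforming to $\Sigma$ and check $\Rel\subseteq\FunH(\Rel)$, whence $\Rel\subseteq\complyTB$; the three clauses of $\FunH$ correspond respectively to the absence of stuck non-$\stopA$ states, the fact that $\playerA$/$\playerB$ moves are forced, and the existence of a single winning $\playerC$-move supplied by $\Sigma$.

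The link \ref{th:complyAwinstrat-3}$\,\Leftrightarrow\,$\ref{th:complyAwinstrat-4} is the strategy--orchestrator correspondence. I would define mutually inverse translations between univocal strategies for $\playerC$ and strategy orchestrators, exploiting that the orchestrator syntax mirrors the shape of $\playerC$'s choices: a single affectible prefix $\orchAct{\alpha}{\Dual{\alpha}}^+$ encodes the unique $\playerC$-move chosen at an affectible-choice position, while a disjunction $\bigvee_i \orchAct{\alpha_i}{\Dual{\alpha_i}}.f_i$ merely relays the unaffectible synchronizations performed by $\playerA$ and $\playerB$. One shows by induction that the orchestrated reductions $\Lts{}$ agree step-by-step with $\tblts{}$ along conforming plays, so that a stuck orchestrated configuration with client $\neq\stopA$ arises precisely when some conforming maximal play fails to reach $\playerC{:}\cmark$; hence $f:\rho\complyO\sigma$ iff the associated strategy is winning. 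Regularity of affectible contracts (finitely many subterms up to unfolding) ensures that the orchestrator produced from a strategy is a finite term once recursion is folded, and that the induction is well founded.

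Finally, \ref{th:complyAwinstrat-1}$\,\Leftrightarrow\,$\ref{th:complyAwinstrat-2} connects the abstract relation to the rollback semantics, and I expect this to be the main obstacle. The difficulty is reconciling the existential/universal shape of $\ACRel$ with the concrete exploration-with-backtracking of $\ored{}$, where histories record unselected branches and `$\circ$' marks exhausted or unretractable alternatives. For \ref{th:complyAwinstrat-2}$\,\Rightarrow\,$\ref{th:complyAwinstrat-1} I would show that $\Set{(\rho,\sigma)\mid \np\emptystack\rho\complyR\np\emptystack\sigma}$ is an affectible compliance relation by checking the $\FunK$-clauses: a successful branch surviving all rollbacks provides the witness required in the affectible-output case, while the internal-output cases follow because every internal alternative must itself succeed. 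For the converse I would argue that if $\rho\ACRel\sigma$ then along any maximal $\ored{}$-reduction from empty stacks the computation can only get stuck with client $\stopA$: at each agreement point $\ACRel$ supplies a compliant continuation, and rollback --- together with item (\ref{lem:stack-len-1}) of Lemma \ref{lem:stack-len} (stacks are emptied exactly at genuine stuck states) and item (\ref{lem:stack-len-2}) (stability under stack extension) --- guarantees that the backtracking search eventually reaches it. The bookkeeping of when a memorised branch is a real alternative (as opposed to a discarded `$\circ$' coming from an internal choice) is the delicate point, and is exactly where these two items on stack lengths are needed.
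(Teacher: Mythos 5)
Your decomposition coincides with the paper's on two of the three links. The paper also proves \ref{th:complyAwinstrat-2}$\,\Leftrightarrow\,$\ref{th:complyAwinstrat-1} by showing that $\complyR$ satisfies exactly the clauses defining $\ACRel$ (Lemma \ref{lem:coinductiveCharRbk}, resting on Lemma \ref{lem:stack-len}, just as you predict), and it proves \ref{th:complyAwinstrat-1}$\,\Leftrightarrow\,$\ref{th:complyAwinstrat-3} by first passing to $\complyTB$ via Proposition \ref{th:complyequivtbcomply}; where you work directly with the set of configurations visited by plays conforming to $\Sigma$, the paper inserts a regular-tree intermediary (the trees $\rts{\rho\tbpp\sigma}$ without $\xmark$-leaves, Lemmas \ref{lem:equivtb1}--\ref{lem:stratfromtree}), but the mathematical content is the same. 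The genuine divergence is the orchestrator link: you close the cycle with \ref{th:complyAwinstrat-3}$\,\Leftrightarrow\,$\ref{th:complyAwinstrat-4}, a direct strategy/orchestrator correspondence, whereas the paper proves \ref{th:complyAwinstrat-1}$\,\Leftrightarrow\,$\ref{th:complyAwinstrat-4} by axiomatizing orchestrated compliance (systems $\derOrch$ and $\derinfOrch$, Theorem \ref{th:scder}) and mapping finite derivations of $\der\rho\complyF\sigma$ to orchestrators via $\fofder{\cdot}$. Your route is essentially what the paper develops separately, in Appendix \ref{appendix:mainthmII}, for Main Theorem II (the constructions $\Sigma_f$ and $\orch{\rtsigma{\Sigma}{\rho\tbpp\sigma}}$); it buys an equivalence proof with no proof theory, at the cost of the derivations-as-orchestrators correspondence that the paper needs anyway for Main Theorem II and for the synthesis algorithm.

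There is, however, one concrete gap in your \ref{th:complyAwinstrat-3}$\,\Rightarrow\,$\ref{th:complyAwinstrat-4} step. You claim that regularity of affectible contracts ensures that the orchestrator extracted from a winning strategy ``is a finite term once recursion is folded.'' That does not follow: contracts are regular, but a strategy is a function of the entire play, so over recursive contracts a univocal winning strategy may choose differently at different visits to the same configuration; the induced choice tree is then not regular and cannot be folded into a finite $\rec$-term of $\Orch$. The repair is to first make the strategy positional, and the cheapest way to do that is precisely to detour through \ref{th:complyAwinstrat-1}: from the winning strategy obtain $\rho\ACRel\sigma$ (your \ref{th:complyAwinstrat-3}$\,\Rightarrow\,$\ref{th:complyAwinstrat-1} argument, which never needs regularity of the strategy), and then synthesize a finite orchestrator from the resulting finite derivation or from the coinductive clauses --- which is in effect why the paper routes the orchestrator equivalence through \ref{th:complyAwinstrat-1} rather than \ref{th:complyAwinstrat-3}. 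A smaller point: your verification of clause (c3) for the relation of visited configurations silently assumes that a winning strategy must actually move at a position where only $\playerC$ can move; under the bare payoff $\Phi$ of Definition \ref{def:Phi}, a strategy that stops at a finite non-maximal play wins vacuously, so this step needs the innocence/culpability reading of winning recalled in Remark \ref{rem:BartStrat}. The paper's own Lemma \ref{lem:treefromstrat} leans on the same point, so you are no worse off, but it should be made explicit.
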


By the above theorem, soundness and completeness of system $\der$, as well as the decidability property, immediately transfers from  $\ACRel$  to both $\complyR$ and $\complyO$.
This might look a bit weird at a first sight, since the  judgements of system $\der$ abstract away  from both histories and orchestrators, which are essential, respectively, for the definition of rollback (and hence of retractable compliance) and for the definition of orchestrated  compliance. 
However, much as it happens with logic, ``proofs'', namely derivations, can be interpreted as strategies in games determined by their conclusion;
on the other hand the informative contents of a derivation lies in the choice of actions and co-actions involved in the
interaction among a client and a server, which is exactly the effect of an orchestrator.\\

The proof of Theorem \ref{th:complyAwinstrat} will be developed in Appendix \ref{appendix:mainthmI} by proving the following equivalences:
\[\begin{array}{ccccc}
                                      &\mbox{\footnotesize \bf\ref{subsect:rbkiffaffect}} &        &  \mbox{\footnotesize \bf \ref{subsec:tbstratequiv}}\\[-2mm]
\ref{th:complyAwinstrat-2} & \Iff & \ref{th:complyAwinstrat-1} & \Iff & \ref{th:complyAwinstrat-3}\\[1mm]
                                       &   & \Updownarrow & \hspace{-8mm}\mbox{\footnotesize \bf\ref{subsec:affectorchequiv}}\\[1mm]
                                       &  & \ref{th:complyAwinstrat-4}
\end{array}\]

The proofs of such equivalences roughly follow  the following schemas.

\begin{description}

\item[$\ref{th:complyAwinstrat-1} \Iff \ref{th:complyAwinstrat-2}$]
The relation $\complyR$ is completely characterized by the properties
defining the relation $\ACRel$, using in an essential way Lemma \ref{lem:stack-len}.

\item[$\ref{th:complyAwinstrat-1} \Iff \ref{th:complyAwinstrat-3}$]
Since $\ACRel\ =\ \complyTB$ by Theorem \ref{th:complyequivtbcomply},
it is enough to show that 
\lmcscenterline{\hspace{-6mm}
$\rho \complyTB \sigma \mbox{\ \em if and only if \
there exists a winning strategy for player $\playerC$ in $\game_{\rho\pp\sigma}$}$
\hspace{4mm}} 
This is proved by using a characterization of  $\complyTB$ in terms of regular trees
without ``synchronization-failure'' leaves. A tree of this sort can be obtained out of a winning strategy, and vice versa.

\item[$\ref{th:complyAwinstrat-1} \Iff \ref{th:complyAwinstrat-4}$]
We provide a formal system $\derinfOrch$ that is sound and complete with respect to the 
$\complyO$ relation. Then we define a procedure that, given a derivation $\Der ::\ \der \rho\complyF\sigma$, returns a derivation $\Der'::\ \derinfOrch \fofder{\Der}: \rho\complyOF\sigma$,
where  $\fofder{\cdot}$ is a map from derivation to orchestrators, simultaneously defined together with
the first procedure.
\end{description}

\subsection{Getting strategies, derivations and orchestrators out of each other}
Strategies, derivations and orchestrators mentioned in the Main Theorem I \ref{th:complyAwinstrat} can
effectively be computed out of each other as
stated in the following theorem.


\begin{thm}[Main Theorem II]
\label{th:derstratorchequiv}\hfill
\begin{enumerate}[label=(\arabic*)]
\item
\label{th:derstratorchequiv-i}
Given a derivation $\Der$ for $\der\rho\complyF \sigma$,
an orchestrator $f_\Der$ can be computed out of $\Der$, such that $f_\Der: \rho\complyO \sigma$;
\item
\label{th:derstratorchequiv-ii}
Given an orchestrator $f$ such that $f: \rho\complyO \sigma$, a strategy $\Sigma_f$ which is winning for player $\playerC$ in game $\game_{\rho\pp\sigma}$ can be effectively obtained
out of $f$;
\item
\label{th:derstratorchequiv-iii}
Given a winning strategy $\Sigma$ for player $\playerC$ in the game $\game_{\rho\pp\sigma}$,
an orchestrator $f_\Sigma$ such that $f_\Sigma: \rho\complyO \sigma$ can be effectively obtained out of the strategy;
\item
\label{th:derstratorchequiv-iv}
Given an orchestrator $f$ such that $f: \rho\complyO \sigma$, a derivation $\Der_f$ for $\der\rho\complyF \sigma$ can be effectively obtained out of $f$.
\end{enumerate}
\end{thm}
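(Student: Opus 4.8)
The four maps revolve around orchestrators as the concrete pivot between the abstract derivations of system $\der$ and the winning strategies for $\playerC$: items \ref{th:derstratorchequiv-i} and \ref{th:derstratorchequiv-iv} relate derivations to orchestrators, while \ref{th:derstratorchequiv-ii} and \ref{th:derstratorchequiv-iii} relate orchestrators to strategies. For \ref{th:derstratorchequiv-i} I would simply take $f_\Der := \fofder{\Der}$, the map from derivations to orchestrators already constructed in the proof of Main Theorem~\ref{th:complyAwinstrat} (equivalence \ref{th:complyAwinstrat-1}$\Iff$\ref{th:complyAwinstrat-4}): it is defined by recursion on $\Der$, sending an application of $(+\cdot+)$ to a single affectible prefix $\orchAct{\alpha}{\Dual{\alpha}}^+.\fofder{\Der'}$, an application of $(\oplus\cdot+)$ or $(+\cdot\oplus)$ to a disjunction $\bigvee_i\orchAct{\alpha_i}{\Dual{\alpha}_i}.\fofder{\Der_i}$ over all branches of the internal choice, the axiom $(\CkptcomplAx)$ to $\stopf$, and each discharge of a hypothesis $\rho\complyF\sigma\in\Gamma$ by $(\CkptcomplHyp)$ to an orchestrator variable bound by a $\rec$ placed where that hypothesis first entered $\Gamma$. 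Regularity of contracts bounds the number of distinct judgments, so $f_\Der$ is a well-formed finite orchestrator, and $f_\Der:\rho\complyO\sigma$ is exactly what is established for Main Theorem~\ref{th:complyAwinstrat}.

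For \ref{th:derstratorchequiv-iv} I would invert this construction by a proof search driven by $f$. Given $f:\rho\complyO\sigma$, I unfold $f$ in lock-step with the shape of the pair $\rho\complyF\sigma$: an affectible prefix $\orchAct{\alpha}{\Dual{\alpha}}^+$ selects rule $(+\cdot+)$, a disjunction selects $(\oplus\cdot+)$ or $(+\cdot\oplus)$ according to which side carries the internal choice, and $\stopf$ closes a branch with $(\CkptcomplAx)$; revisiting a pair already on the search stack closes it with $(\CkptcomplHyp)$. Regularity of both $\rho,\sigma$ and $f$ guarantees finitely many distinct judgments, hence termination and a finite derivation $\Der_f::\der\rho\complyF\sigma$, each of whose rule choices is forced by the transition offered by $f$.

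Items \ref{th:derstratorchequiv-ii} and \ref{th:derstratorchequiv-iii} rest on a simulation between the orchestrated reductions of $\rho\pf{f}\sigma$ and the turn-based plays of $\game_{\rho\pp\sigma}$: an affectible step $\lts{+}$, licensed by a prefix $\orchAct{\alpha}{\Dual{\alpha}}^+$ of $f$, matches a move $\playerC{:}\alpha$, while the steps $\lts{\tau}$ and the internal-choice steps $\lts{}$ match moves of $\playerA$ and $\playerB$. For \ref{th:derstratorchequiv-ii} I define $\Sigma_f$ on a finite play $\vec{e}$ by replaying $\vec{e}$ to reach a residual orchestrator and returning the unique $\playerC$-event that enables the affectible prefix it then offers, or the event $(n,\playerC{:}\cmark)$ once the client is $\stopA$; the prefix-versus-disjunction discipline of $\Orch$ makes $\Sigma_f$ univocal. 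Any play conforming to $\Sigma_f$ mirrors a maximal orchestrated reduction, which by $f:\rho\complyO\sigma$ is either infinite or reaches $\rho'=\stopA$ and hence ends with $\playerC{:}\cmark$, so by Lemma~\ref{lem:winC} it is winning for $\playerC$. For \ref{th:derstratorchequiv-iii} I first replace $\Sigma$ by a univocal refinement using Lemma~\ref{lem:refinement}, then read $f_\Sigma$ off the regular tree of plays conforming to $\Sigma$: a directed $\playerC{:}\alpha$ becomes an affectible prefix, an uncontrolled fan-out over $\playerA/\playerB$ moves becomes the matching disjunction, and $\playerC{:}\cmark$ becomes $\stopf$; regularity lets me fold the tree into a finite $\rec$-term, and the same simulation yields $f_\Sigma:\rho\complyO\sigma$.

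The main obstacle I expect is the faithful matching of the two operational semantics. The turn-based LTS of Figure~\ref{fig:tb-opsem} inserts auxiliary single-buffered steps and separates each internal choice from the ensuing synchronization, whereas the orchestrated LTS of Section~\ref{sec:orchcompl} performs a three-way synchronization in one step. Setting up the simulation as a step-by-step correspondence up to these buffering steps, and verifying that deadlocked, maximal and infinite computations are preserved in both directions, is where the care lies; the remaining work is routine recursion on the regular structure of derivations, orchestrators and plays.
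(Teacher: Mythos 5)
Your proposal is correct and follows essentially the same route as the paper: item (1) via the map $\fofder{\cdot}$ from the proof of Theorem~\ref{th:complyAwinstrat}, item (4) via an orchestrator-driven variant of the proof-search procedure $\Prove$ (the paper's $\OrchToDer$), item (3) via Lemma~\ref{lem:refinement} plus reading an orchestrator off the regular tree of plays conforming to the univocal strategy, and item (2) via the simulation between orchestrated reductions and turn-based plays. The "main obstacle" you identify for item (2) is exactly what the paper resolves by introducing the turn-based orchestrated LTS and the relation $\complyTBO$ together with Lemma~\ref{lem:complyOeqcomplyTBO}, so your outline matches the paper's actual development point for point.
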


\noindent
The proof of Main Theorem II is in Appendix \ref{appendix:mainthmII} and the related constructions are provided along the following lines:

\begin{description}
\item[\ref{th:derstratorchequiv-i} \ref{subsect:derstrat-i}] We use the function $\fofder{\cdot}$ from derivations to orchestrators defined in the proof of $\ref{th:complyAwinstrat-1} \Iff \ref{th:complyAwinstrat-4}$ of Theorem \ref{th:complyAwinstrat}.

\item[\ref{th:derstratorchequiv-ii} \ref{subsect:derstrat-ii}] A ``turn-based'' version ($\complyTBO$) of the relation
$\complyO$ 
is provided at the beginning of Appendix \ref{appendix:mainthmII}. Given an orchestrator $f$ such that $f: \rho\complyTBO \sigma$ 
(and hence $f: \rho\complyO \sigma$) we define a procedure yielding a suitable regular tree which can be decorated in order obtain
a winning strategy $\Sigma_f$
for player $\playerC$ in $\game_{\rho\!\pp\! \sigma}$.

\item[\ref{th:derstratorchequiv-iii} \ref{subsect:derstrat-iii}]
We use a construction defined in the proof of $\ref{th:complyAwinstrat-1} \Iff \ref{th:complyAwinstrat-3}$ of Theorem \ref{th:complyAwinstrat}. 
Given a
univocal winning strategy $\Sigma$ for player $\playerC$ in the game $\game_{\rho\!\pp\!\sigma}$ we obtain
a tree $\mathsf{T}$ representing all the possible plays of the game where player $\playerC$ follows the
stategy $\Sigma$. Then an orchestrator $f_\Sigma$ is obtained out of 
$\mathsf{T}$, such that $f_\Sigma: \rho\complyTBO \sigma$ (and hence $f_\Sigma: \rho\complyO \sigma$).

\item[\ref{th:derstratorchequiv-iv} \ref{subsec:orchtoder}] We define a procedure $\OrchToDer$ that, given $f$, $\rho$ and $\sigma$ such that $f: \rho\complyO \sigma$, returns a derivation $\Der_f::\ \der \rho\complyF \sigma$. The procedure
$\OrchToDer$ is obtained by adaptating the  proof search procedure $\Prove$. In particular, in $\OrchToDer$ the search is
driven by the orchestrator $f$. Correctness and termination of $\OrchToDer$ are proved as for $\Prove$.
\end{description}

\begin{exa}
\label{ex:startoutorch}
Let $\Der$ be the derivation in Example \ref{ex:derivation}. If we consider the function $\fofder$ from derivation to orchestrators mentioned before (and formally defined in Appendix \ref{subsec:affectorchequiv}, Definition \ref{def:derinforch}), we have that:
$$\fofder{\Der} = \orchAct{\Bag}{\DBag}^+.\orchAct{\DPrice}{\Price}.(\orchAct{\Card}{\DCard}.\stopf\vee\orchAct{\Cash}{\DCash}.\stopf)
\footnote{Actually the application of $\mathsf{f}$ to the derivation $\Der$ does produce some vacuous $\rec$ binders. We omit them here for sake of readability.}$$

\noindent
If we dub $f= \fofder{\Der}$, we have that the strategy $\Sigma_f$, obtained by the construction in the proof of
\ref{th:derstratorchequiv-ii}, is such that
$$\Sigma_f(\vec{e})= \left\{ \begin{array}{l@{~~~}l}
       \Set{(1,(\playerC{:}\Bag))} & \mbox{ if } \vec{e}=\seq{}\\
       \Set{(6,(\playerC,\cmark))} & \mbox{ if } \vec{e}=\vec{s_3}\\
        \emptyset                       & \mbox{ for any other play }
          \end{array} \right.
$$
where $\vec{s_3} =$ {\small  $(1,(\playerC{:}\Bag))(2,(\playerB{:}\DPrice))(3,(\playerA{:}\Price))
(4,(\playerA{:}\DCash))(5,(\playerB{:}\Cash))$}.
Observe that $\Sigma_f$ corresponds to the strategy $\tilde\Sigma$ as defined in Example \ref{ex:gamestrat}.
The construction of \ref{th:derstratorchequiv-iii}, instead, yields $f$ out of $\Sigma_f$,
whereas $\mathbf{O2D}(f,\mathsf{Buyer},\mathsf{Seller}) = \Der$.
\end{exa}

\subsection{Orchestrator synthesis}
Working on Proposition \ref{prop:provecorr} and Theorem \ref{th:derstratorchequiv}\ref{th:derstratorchequiv-i} we obtain a synthesis algorithm $\Synth$ that is defined in Figure \ref{fig:algSynth}.
The algorithm  $\Synth$
takes a (initially empty) set of assumptions $\Gamma$ and two affectible contracts $\rho$ and $\sigma$, and returns a set $O$ of  orchestrators
(and hence a set of strategies by the above results), if any, such that for any $f\in O$ we have $f: \rho\complyO \sigma$; the algorithm returns the empty set in case no such an orchestrator exists.
In the algorithm $\Synth$ we consider orchestrators as explicit terms, that is we do not consider recursion up-to recursion unfolding.\\

\begin{exa}
It is not difficult to check that by computing $\mathbf{Synth}(\emptyset, \mathsf{Buyer},\mathsf{Seller})$ we get a set just consisting of 
exactly the orchestrator $f$ of Example \ref{ex:orchderseq}, which we have
shown to be such that $f:\mathsf{Buyer}\complyP\mathsf{Seller}$:
$$
\mathbf{Synth}(\emptyset,\! \mathsf{Buyer},\!\mathsf{Seller})\!=\!\big\{ \orchAct{\Bag}{\!\DBag}^+.\orchAct{\DPrice}{\!\Price}(\orchAct{\Card}{\DCard}.\stopf\vee\orchAct{\Cash}{\DCash}.\stopf)\big\}
$$
\end{exa}
\begin{figure}[t]
\label{fig:algSynth}
\hrule
\vspace{2mm}
{\small
\begin{tabbing}
\Synth\=$(\Gamma, \rho,\sigma)$ = \+ \\ [2mm]
\IF\ $x:\rho\complyOF\sigma \in \Gamma$ \THEN\ $\Set{x}$ 
	\\ [2mm]
\ELSE\; \= \IF\ $\rho = \stopA$ \THEN\ $\Set{\stopA}$ 
	\\ [2mm]

\ELSE \> \IF\ \= $\rho = \bigoplus_{i\in I} \Dual{a}_i\Actdot {\rho}_i $ \AND\ $\sigma = \sum_{j\in I\cup J} a_j\Actdot {\sigma}_j$ 
			 \THEN\ \\ 
	 \> \LET\ \= $\Gamma' = \Gamma,\; x{:}\rho \complyOF \sigma$ ~
\ \IN \\ 
\>\> $\Set{ \rec x \procdot  \bigvee_{i\in I} \orchAct{a_i}{\Dual{a}_i}.f_i \mid  \forall i\in I. f_i\in \Synth(\Gamma', \rho_i, \sigma_i}$ \\ [2mm]

\ELSE \> \IF\ \= $\rho = \sum_{j\in I\cup J} \Dual{a}_j\Actdot {\rho}_i $ \AND\ $\sigma = \bigoplus_{i\in I} a_i\Actdot {\sigma}_i$ 
			 \THEN\ \\ 
	 \> \LET\ \= $\Gamma' = \Gamma,\;x{:}\rho \complyOF \sigma_i$ \IN \\ 
\>\>$\Set{ \rec x \procdot  \bigvee_{i\in I} \orchAct{a_i}{\Dual{a}_i}.f_i \mid \forall i\in I. f_i\in \Synth(\Gamma', \rho_i, \sigma_i}$ \\ [2mm]

\ELSE \> \IF\ \= $\rho = \sum_{i\in I} \Dual{\alpha}_i\Actdot {\rho}_i $ \AND\ $\sigma = \sum_{j\in J} \alpha_j\Actdot {\sigma}_j $ (where $\alpha\in\Names\cup\CoNames$) \AND\ $|I|\geq 2$ \THEN \\ 
	\> \LET\ \= $\Gamma' = \Gamma,\; x{:}\rho\complyOF \sigma$ \IN \\ 
	\>\> $\bigcup_{i\in I}\Set{\rec x \procdot \orchAct{\alpha_i}{\Dual{\alpha}_i}^+\Actdot f \mid f\in \Synth\,(\Gamma',{\rho}_i ,\sigma_i) } $ 
	\\ [2mm]

\ELSE \> $\emptyset$
\end{tabbing}
\vspace{-6mm}
}\caption{The algorithm \Synth.}\label{fig:Synth}
\vspace{2mm}
\hrule
\end{figure}
\noindent
The algorithm $\Synth$ can be proved to be terminating.

\begin{prop}[Termination of \Synth]
\label{th:soundcomplSynth-i}
\label{lem:terminationSynth}
\lmcscenterline{$\mathbf{Synth}(\Gamma, \rho, \sigma)$ terminates for any $\Gamma,\rho$ and $\sigma$.}
\end{prop}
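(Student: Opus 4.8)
The plan is to adapt, essentially verbatim, the termination argument already given for the proof-search procedure $\Prove$ in Lemma~\ref{prop:provecorr}\ref{prop:provecorr-ii}, since $\Synth$ is obtained from $\Prove$ by replacing bare compliance statements with variable-tagged assumptions $x{:}\rho\complyOF\sigma$ and by collecting orchestrators along the way. First I would observe that in every recursive call $\Synth(\Gamma', \rho_i, \sigma_i)$ occurring inside a call $\Synth(\Gamma, \rho, \sigma)$, the contracts $\rho_i$ and $\sigma_i$ are subexpressions of $\rho$ and of $\sigma$, respectively. Because affectible contracts generate regular trees, each contract has only finitely many distinct subexpressions; hence only finitely many pairs $(\rho',\sigma')$ of contracts can ever arise as arguments of the nested calls triggered by a fixed initial invocation of $\Synth$.

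Next I would track how the assumption set evolves. Each of the three structural clauses of $\Synth$ (the two clauses for an internal choice facing an external one, and the clause for two affectible choices with $\abs{I}\geq 2$) extends $\Gamma$ to $\Gamma'$ by recording, via a fresh variable $x$, the compliance statement currently under consideration before it recurs; and the leading clause returns immediately, with the singleton $\Set{x}$, whenever the current statement is already present in $\Gamma$. Consequently, along any branch of the tree of recursive calls no compliance statement can be processed by a structural clause twice: a second occurrence of the same pair would be intercepted by the guard $x{:}\rho\complyOF\sigma \in \Gamma$ and terminate that branch at once. Thus the length of every branch is bounded by the number of distinct pairs of subexpressions of the initial contracts, which is finite by the previous paragraph.

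Finally, since the index set $I$ in Definition~\ref{Adef:ckpt-behav} is always finite, each invocation of $\Synth$ spawns only finitely many recursive calls, so the tree of recursive calls is finitely branching. A finitely branching tree all of whose branches are of bounded length is itself finite (König's lemma), whence the computation of $\Synth(\Gamma, \rho, \sigma)$ halts for any $\Gamma$, $\rho$ and $\sigma$. Note that termination does not depend on $\Synth$ succeeding or returning a nonempty set: even on the $\emptyset$-returning fall-through clause the recursion simply stops, and the well-foundedness is governed entirely by the finite stock of subexpression pairs together with the repeat-detecting guard.

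I do not expect a genuine obstacle here, as the reasoning is a direct transcription of the one for $\Prove$. The only point that needs a line of care is the faithful bookkeeping of $\Gamma$: one must check that the assumption stored by each structural clause is exactly the statement being processed, so that the guard really does detect repeats and cut off cycles introduced by the equi-recursive reading of $\rec$; this is immediate from the definition of $\Synth$ in Figure~\ref{fig:Synth}. The presence of the collected orchestrator sets plays no role in the counting and can be ignored for the purpose of termination.
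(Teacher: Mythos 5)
Your proof is correct and follows essentially the same route as the paper's: both rest on the fact that regular trees have finitely many subexpressions, so the guard $x{:}\rho\complyOF\sigma\in\Gamma$ must eventually fire on any branch that is not cut off by a non-recursive clause. Your version merely makes explicit what the paper leaves implicit (the repeat-detection bounding branch length, finite branching, and the K\"onig-style conclusion), which is a harmless elaboration rather than a different argument.
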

\proof
%
%
All session contracts in the recursive calls of  \Synth\  are sub-expressions of either $\rho$ or $\sigma$ or of a session contract in a judgement in $\Gamma$ (which is finite). Since session contracts are regular trees, their sub-expressions are a finite set, so that the test \mbox{$x : \rho\complyOF\sigma \in \Gamma$} in the first clause of  \Synth\  is always successfully reached in case the algorithm does not terminate because of the last clause. \qed

The algorithm $\Synth$ can be proved to be correct and complete in the following sense:
whenever it does not fail, it does return a set of correct orchestrators.
Moreover, if an orchestrator exists for given $\rho$ and $\sigma$, it is actually ''captured'' by our synthesis algorithm.\\

\noindent
Given an orchestrator
$f$ we denote by $\regtree{f}$ its corresponding (possibly infinite) regular tree.

\begin{prop}[Correctness and Completeness of \Synth]
\label{th:soundcomplSynth} \leavevmode
\begin{enumerate}[beginpenalty=99,label=\arabic*)]
\item
\label{fact:PdsFmSynthExt1}
If $f\in \mathbf{Synth}(\emptyset, \rho, \sigma)\neq\emptyset$~then ~$f: \rho\complyO \sigma$.
\item
\label{fact:PdsFmSynthExt}
If $f: \rho\complyO \sigma$~ then~ there exists $g\in\mathbf{Synth}(\emptyset, \rho, \sigma)\neq\emptyset$ such that with $\regtree{f}=\regtree{g}$.
\end{enumerate}
\end{prop}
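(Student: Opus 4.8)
The plan is to reduce both claims to the correspondence between derivations of system $\der$ and strategy orchestrators, exploiting that $\Synth$ (Figure~\ref{fig:Synth}) is nothing but a set-valued variant of the proof-search procedure $\Prove$ (Figure~\ref{fig:Prove}) which, instead of recording a derivation $\Der$, records the orchestrator $\fofder{\Der}$ associated with it. Concretely, I would first isolate the structural correspondence lemma: \emph{for every environment $\Gamma$ and all $\rho,\sigma\in\ASC$, an orchestrator $g$ lies in $\Synth(\Gamma,\rho,\sigma)$ up to $\regtree{\cdot}$ if and only if $g=\fofder{\Der}$ up to $\regtree{\cdot}$ for some derivation $\Der$ of $\widehat\Gamma\der\rho\complyF\sigma$}, where $\widehat\Gamma$ forgets the orchestrator variables decorating the hypotheses $x{:}\delta\complyOF\gamma$ of $\Gamma$.

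This lemma is proved by induction on the recursion of $\Synth$, which is legitimate because $\Synth$ terminates (Proposition~\ref{lem:terminationSynth}). Each clause of $\Synth$ matches exactly one rule of $\der$ (Figure~\ref{fig:systder}): the clause $x{:}\rho\complyOF\sigma\in\Gamma$ returning the variable $x$ corresponds to $(\CkptcomplHyp)$; the clause $\rho=\stopA$ returning $\stopf$ corresponds to $(\CkptcomplAx)$; the two clauses returning disjunctions $\bigvee_{i\in I}\orchAct{a_i}{\Dual a_i}.f_i$ correspond to $(\oplus\cdot+)$ and $(+\cdot\oplus)$, the range $i\in I$ mirroring exactly the premises of those rules; and the two-sided external-choice clause, returning a prefix $\orchAct{\alpha_i}{\Dual\alpha_i}^+.f$ per admissible synchronisation $i$, corresponds to $(+\cdot+)$, the union $\bigcup_{i\in I}$ recording that each agreement point yields a separate orchestrator. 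Matching these clauses against the clause-by-clause definition of $\fofder{\cdot}$ gives both directions; the only bookkeeping is the quotient by $\regtree{\cdot}$, needed to absorb recursion unfolding and the vacuous $\rec$ binders that $\Synth$ and $\fofder{\cdot}$ insert, as already observed in Example~\ref{ex:startoutorch}.

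From this lemma both statements follow. For part~1, if $f\in\Synth(\emptyset,\rho,\sigma)$ then $f=\fofder{\Der}$ for some $\Der::\der\rho\complyF\sigma$, and the derivation-to-orchestrator direction of Main Theorem~II (item~\ref{th:derstratorchequiv-i}, which is precisely $f_\Der=\fofder{\Der}$) yields $f:\rho\complyO\sigma$. For part~2, assume $f:\rho\complyO\sigma$. By item~\ref{th:derstratorchequiv-iv} of Main Theorem~II the orchestrator-driven procedure $\OrchToDer$ returns a derivation $\Der_f::\der\rho\complyF\sigma$, and $\OrchToDer$ is designed so that feeding $\Der_f$ back through $\fofder{\cdot}$ recovers $f$ up to regular tree, i.e. $\regtree{\fofder{\Der_f}}=\regtree{f}$ (this is the round trip witnessed in Example~\ref{ex:startoutorch}, where $\OrchToDer(f,\Byr,\Slr)=\Der$ and $\fofder{\Der}=f$). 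Setting $g:=\fofder{\Der_f}$, the correspondence lemma places $g$ in $\Synth(\emptyset,\rho,\sigma)$ up to $\regtree{\cdot}$; hence $\Synth(\emptyset,\rho,\sigma)\neq\emptyset$ and $\regtree{g}=\regtree{f}$, as required.

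I expect the delicate step to be the completeness direction, and specifically the identity $\regtree{\fofder{\Der_f}}=\regtree{f}$. It requires that $\OrchToDer$, while being driven by $f$, faithfully transcribes \emph{all} of $f$'s branching into $\Der_f$, and conversely that $\fofder{\cdot}$ adds nothing, so that the round trip is the identity modulo unfolding and vacuous binders. The subtlety is that $\Synth$ produces orchestrators in a canonical shape --- exactly one disjunct per internally selectable output and exactly one prefix per chosen agreement point --- so one must verify that a compliant $f$ carries no structure outside this shape that the round trip would silently discard; controlling this faithfulness, rather than the routine clause matching, is where the real work lies.
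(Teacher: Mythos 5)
Your part~1) is essentially right and is, in substance, what the paper does: the paper reads $\Synth$ as a proof-search procedure for the orchestrator-decorated system $\derinfOrch$ (Lemma~\ref{lem:scSynth} in Appendix~\ref{appendix:synth}), so that soundness is immediate, and then transfers to $\complyO$ via Theorem~\ref{th:scder} and Proposition~\ref{prop:derOrchderinfOrchequiv}; your correspondence lemma through $\der$ and $\fofder{\cdot}$ is an equivalent repackaging of that observation. The genuine problem is part~2). Your argument rests on the round-trip identity $\regtree{\fofder{\Der_f}}=\regtree{f}$ for $\Der_f=\OrchToDer(f,\rho,\sigma)$, and this identity is not merely ``delicate'': it is false, and no alternative choice of $g$ repairs the argument.

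Concretely, take $\rho=\rec x.(a.x+b.x)$, $\sigma=\rec y.(\Dual{a}.y+\Dual{b}.y)$ and the alternating orchestrator $f=\rec z.\orchAct{\Dual{a}}{a}^+\!.\,\orchAct{\Dual{b}}{b}^+\!.\,z$. Then $f:\rho\complyO\sigma$, since the orchestrated interaction never reaches a stuck state and Definition~\ref{def:disstrictcompl} is vacuously satisfied; note that every action of $f$ is exercised, so this is not one of the ``junk'' phenomena (residual orchestrator after the client reaches $\stopA$, unused disjuncts) you mention --- although those break the identity as well. Now $\OrchToDerAux$ tests the clause $\rho\complyF\sigma\in\Gamma$ \emph{before} inspecting $f$; since after the synchronisation on $a$ the contract pair is again exactly $(\rho,\sigma)$, the recursive call closes immediately with $(\CkptcomplHyp)$, regardless of the fact that $f$ continues with $b$. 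Hence $\Der_f$ is the two-rule derivation $(+\cdot+)$ over $(\CkptcomplHyp)$, and $\fofder{\Der_f}=\rec x.\orchAct{\Dual{a}}{a}^+\!.x$, whose regular tree synchronises on $a$ forever and thus differs from $\regtree{f}$. Worse, $\Synth(\emptyset,\rho,\sigma)=\Set{\rec x.\orchAct{\Dual{a}}{a}^+\!.x,\;\rec x.\orchAct{\Dual{b}}{b}^+\!.x}$, so \emph{no} element of the synthesised set has the tree of $f$: because of its $(\CkptcomplHyp)$ memoisation, $\Synth$ can only produce orchestrators that behave identically each time a contract pair recurs, whereas $\complyO$ admits non-positional orchestrators such as $f$. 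So the ``faithfulness'' you flag as the remaining work is an actual obstruction, not a verification to be carried out. (To be fair, the paper's own completeness step --- the derivation-normalisation in Lemma~\ref{lem:scSynth}, which claims $\regtree{\cdot}$ is preserved when repeated judgments along a path to $(\CkptcomplHyp)$ are collapsed --- founders on the same example; your instinct located the soft spot exactly, but your proposal needs precisely the statement that fails, and a correct treatment would have to weaken the proposition, e.g.\ asserting tree equality only up to such collapses.)
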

\proof
See Appendix \ref{appendix:synth}.
\qed

%

\section{Subcontract relation: Definition and Main Results.}
\label{sec:subcontractrel}

The notion of compliance naturally induces a substitutability relation on servers that may be used for implementing contract-based query engines (see \cite{Padovani10} for a discussion).

\begin{defi}[Affectible subcontract relation] \label{def:osubcontr}
Let $\sigma, \sigma' \in \ASC$ . We define
 \[ \sigma \subcontr \sigma' ~~\ByDef~~ \forall \rho \Pred[ \rho\ACRel\sigma \implies \rho \ACRel \sigma' ]
\]
\end{defi}

\begin{exa}
Consider the following new version of $\mathsf{Seller}$, that also accepts cheques as payment for the bag and enables 
customers (may be those with a fidelity card or those who make shopping on Christmas eve) to win the bag by means of a scratch card.
\begin{tabbing}
$\mathsf{SellerII} =$ \= $\Belt. \DPrice.\Cash $\\
                             \> $+$ \\
                             \> $\Bag.$ \= ($\DPrice.(\Card + \Cash + \Cheque$)\\
                                          \> \> $+$\\
                                           \>\> $\DScrachcard$)
\end{tabbing}
It turns out that $\mathsf{Seller} \subcontr \mathsf{SellerII}$, so that in particular  
$\mathsf{Buyer}\, \ACRel\,\mathsf{SellerII}$ holds.
\end{exa}

As done for several notions of compliance for session contracts, decidability of the subcontract relation could be obtained as an immediate consequence of decidability of $\ACRel$ if we managed to have a proper notion of dual contract and if the following property could be proved:
\begin{equation}
\label{eq:nonpropertydual}
\sigma\subcontr\sigma' \ \Iff\  \Dual{\sigma}\ACRel\sigma'
\end{equation}

However, as already discussed in Remark \ref{rem:noduality}, in the present setting the notion of duality is hardly definable
so that we have no chance to get (\ref{eq:nonpropertydual}).

Nonetheless decidability of $\subcontr$ can be obtained in a direct way by means of a formal system
axiomatising the subcontract relation
and of a proof-search algorithm in the style of $\Prove$.

\subsection{A sound and complete formal system for $\subcontr$}

\begin{defi}[The Formal System $\dersc$ for $\subcontr$]\label{def:formalSubcontr}
A judgment in the formal system $\dersc$ is an expression of the form
$\Gamma\dersc \rho \subcontrF \sigma$ where 
$\Gamma$ is a finite set of expressions with the form  $\delta \subcontrF \gamma$, with
$\rho,\sigma,\delta,\gamma\in\Sbehav$.
Axioms and inference rules of $\dersc$ are as in Figure \ref{fig:forsystsubcontr}, where
the following provisos hold:
\begin{itemize}
\item in rule $(\mbox{\footnotesize $ \oplus\cdot+\mbox{\,-\!}\subcontrF $})$ we
assume
that a term of the form $\Dual{a}.\sigma_1$ can be used instead of $\Dual{a}.\sigma_1+\sigma'$;
\item  in rule $(\mbox{\footnotesize $ \oplus\cdot+\mbox{\,-\!}\subcontrF $})$ we
assume either $\mbox{\small $\sum$}_{i\in I} \alpha_i.{\sigma}_i$ or $\mbox{\small $\sum$}_{j\in I\cup J} \alpha_j.\sigma'_j$ (not both) can be of the form $\Dual{a}.\sigma$\footnote{This conditions are needed in order to let the system to be syntax-directed.}.
\end{itemize}
\end{defi}
\noindent
In system $\dersc$\!, the symbol $\subcontrF$ is used as syntactical counterpart of the relation $\subcontr$.

The ideas behind rules ($+\cdot+\mbox{\,-\!}\subcontr$) and ($\oplus\cdot\oplus\mbox{\,-\!}\subcontr$) are fairly intuitive.
Let us informally see why rule ($\oplus\cdot+\mbox{\,-\!}\subcontr$) can relate affectible and unaffectible outputs,  unlike what happens for other subcontract relations for session contracts.
We first observe that a contract $\rho$ which is compliant with a term of the form 
$\mbox{\small $\bigoplus$}_{h\in H} \Dual{a}_h.\sigma_h$
must be such that $\rho = \mbox{\small $\sum$}_{h\in H'\supseteq H} a_h.\rho_i$
with $\rho_h\complyP \sigma_h$ for any $h\in H$.
A term $\sigma'$ different from $\mbox{\small $\bigoplus$}_{h\in I} \Dual{a}_h.\sigma_i$
and such that $\rho\complyP \sigma'$ can be either of the form
$\mbox{\small $\bigoplus$}_{k\in K} \Dual{a}_k.\sigma_k$,  and this case is dealt
with by  rule $(\mbox{\footnotesize $ \oplus\cdot\oplus\mbox{\,-\!}\subcontrF $})$; or of the form
$\mbox{\small $\sum$}_{k\in K} \Dual{a}_k.\sigma'_k$. Notice that
in order to have $\mbox{\small $\sum$}_{h\in H'} a_h.\rho_i
\complyP \mbox{\small $\sum$}_{k\in K} \Dual{a}_k.\sigma''_i$ it is enough that there exists
$p\in H'\cap K$ such that $\rho_p\complyP \sigma'_p$.
 This is precisely what is guaranteed  by the premise of rule ($+\cdot+\mbox{\,-\!}\subcontr$).



We can prove system $\dersc$ to be sound and complete for the subcontract relation $\subcontr$. \\
\begin{figure}
\hrule
\vspace{2mm}
\[\begin{array}{c@{\hspace{8mm}}c}
(\mbox{\tiny $\CkptcomplAx \mbox{\,-\!}\subcontrF$}):\Inf{}{\Gamma\dersc \stopA \subcontrF \sigma'}
&
(\mbox{\tiny $ \CkptcomplHyp\mbox{\,-\!}\subcontrF $}):\Inf{\Gamma, \sigma\subcontrF\sigma' \dersc \sigma\subcontrF\sigma'}\\[8mm]
\multicolumn{2}{c}{
(\mbox{\footnotesize $ \oplus\cdot+\mbox{\,-\!}\subcontrF $}):\Inf{\Gamma, \Dual{a}.\sigma_1\oplus\sigma_2\subcontrF\Dual{a}.\sigma_1'+\sigma_2'
     	\dersc \sigma_1
    	  \subcontrF \sigma_1'}{\Gamma\dersc \Dual{a}.\sigma_1\oplus\sigma_2\subcontrF\Dual{a}.\sigma_1'+\sigma_2'}
}\\[8mm]
\multicolumn{2}{c}{
(\mbox{\footnotesize $ +\cdot+\mbox{\,-\!}\subcontrF $}):\Inf{\forall h\in I.~~ \Gamma,\mbox{\small $\sum$}_{i\in I} \alpha_i.{\sigma}_i\subcontrF
    	\mbox{\small $\sum$}_{j\in I\cup J} \alpha_j.\sigma'_j\dersc 
    	\sigma_h
    	\subcontrF
    		\sigma'_h}{\Gamma\dersc \mbox{\small $\sum$}_{i\in I} \alpha_i.{\sigma}_i\subcontrF
    	\mbox{\small $\sum$}_{j\in I\cup J} \alpha_j.\sigma'_j}
}
\\[8mm]
\multicolumn{2}{c}{
(\mbox{\footnotesize $ \oplus\cdot\oplus\mbox{\,-\!}\subcontrF $}):\Inf
{\forall h\in I.~~ \Gamma,\mbox{\small $\bigoplus$}_{j\in I\cup J}\Dual{a}_j.{\sigma}_j\subcontrF \mbox{\small $\bigoplus$}_{i\in I} \Dual{a}_i.\sigma'_i \dersc 
    	\sigma_h
    	\subcontrF
    		\sigma'_h}{\Gamma\dersc\mbox{\small $\bigoplus$}_{j\in I\cup J} \Dual{a}_j.\sigma'_j\subcontrF \mbox{\small $\bigoplus$}_{i\in I} \Dual{a}_i.\sigma'_i}
}
\end{array}\]
\caption{The formal system $\dersc$}\label{fig:forsystsubcontr}
\vspace{2mm}
\hrule
\end{figure}
Proof search termination for system $\dersc$ can be shown in the same way as done for $\der$.
\begin{prop}[Proof search termination]
\label{prop:proofsearchtermsubcontr}
For system ${\dersc}\!$, proof search does terminate.
\end{prop}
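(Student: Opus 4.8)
The plan is to transpose, almost verbatim, the termination argument already used for the proof-search procedure $\Prove$ of system $\der$ in Lemma~\ref{prop:provecorr}.(\ref{prop:provecorr-ii}). The proof search for $\dersc$ is syntax-directed: the two provisos imposed in Definition~\ref{def:formalSubcontr} guarantee that, for any conclusion $\Gamma\dersc\rho\subcontrF\sigma$, at most one of the rules in Figure~\ref{fig:forsystsubcontr} can apply, so the search tree attached to a judgment is uniquely determined. First I would inspect each inference rule and record the two facts that drive the argument. On the one hand, in each of the three recursive rules $(\oplus\cdot+\mbox{-}\subcontrF)$, $(+\cdot+\mbox{-}\subcontrF)$ and $(\oplus\cdot\oplus\mbox{-}\subcontrF)$, every premise is of the form $\Gamma'\dersc\sigma_h\subcontrF\sigma_h'$ (or $\sigma_1\subcontrF\sigma_1'$), where the contracts $\sigma_h,\sigma_h'$ appearing on the two sides of $\subcontrF$ are proper subexpressions of the contracts $\rho,\sigma$ of the conclusion. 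On the other hand, each of these rules extends the environment with the statement $\rho\subcontrF\sigma$ being analysed before descending into the premises.

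From the first observation it follows that, along any branch of the proof search starting from $\Gamma\dersc\rho\subcontrF\sigma$, the contracts on both sides of $\subcontrF$ always range over the subexpressions of the original $\rho$ and $\sigma$, together with those occurring in the (finite) initial $\Gamma$. Since affectible contracts are finite expressions denoting regular trees (recursion is guarded and we work equi-recursively), each such contract has only finitely many distinct subexpressions, so there are only finitely many distinct statements $\delta\subcontrF\gamma$ that can ever occur; call their number $N$. Now I would argue that no branch can have length exceeding $N$: by the second observation, at each recursive step the current statement $\rho\subcontrF\sigma$ is added to $\Gamma$, and as soon as a statement already present in $\Gamma$ is reached the hypothesis axiom $(\CkptcomplHyp\mbox{-}\subcontrF)$ fires and closes the branch. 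Hence the set of statements accumulated in $\Gamma$ strictly grows along a branch until a repetition closes it, bounding the branch length by $N+1$. Because each rule has finite branching (the index set $I$ is finite), the search tree has bounded depth and finite branching, and is therefore finite; this establishes termination.

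The main obstacle is not conceptual but a matter of careful bookkeeping: one must verify, rule by rule, that the environment is indeed extended with the conclusion's statement \emph{before} each recursive descent (so that the $(\CkptcomplHyp\mbox{-}\subcontrF)$ test is guaranteed to fire on any repeated statement), and that the premises' subjects really are subexpressions of the conclusion's contracts. One must also check that the syntax-directedness provisos make the search deterministic, so that there is a single well-defined procedure whose termination can be asserted; this is exactly the role played by the two italicised conditions in Definition~\ref{def:formalSubcontr}. Once these rule-by-rule checks are in place, the finiteness-of-subexpressions argument of Lemma~\ref{prop:provecorr}.(\ref{prop:provecorr-ii}) transfers directly and yields the claim.
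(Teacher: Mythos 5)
Your proposal is correct and follows exactly the route the paper takes: the paper proves this proposition by remarking that termination "can be shown in the same way as done for $\der$", i.e.\ by the argument of Lemma~\ref{prop:provecorr}.(\ref{prop:provecorr-ii}), which is precisely the subexpression/regular-tree/$(\CkptcomplHyp\mbox{\,-\!}\subcontrF)$-repetition argument you transpose. Your version merely spells out the rule-by-rule bookkeeping (environment growth, finite branching, branch length bounded by the number of distinct statements) that the paper leaves implicit.
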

\medskip
We can now proceed with the soundness and completeness properties for $\dersc$ with respect to
the relation $\subcontr$.

\noindent
We begin by defining a non-involutive ``quasi-dual'' operator on affectible contracts,
that we shall use to build counterexamples in the proof of Proposition \ref{prop:subcontrproperties} below.

\begin{defi}[An operator of quasi-duality]
The operator $\QDual{\ \cdot\ }:\ASC\rightarrow\ASC$ is inductively defined as follows.
$$\begin{array}{rcl}
\QDual{\stopA} & = & \stopA\\
\QDual{\mbox{\small $\bigoplus$}_{i\in I} \Dual{a}_i.\sigma_i} & = & \mbox{\small $\sum$}_{i\in I} a_i.\QDual{\sigma}_i\\
\QDual{\mbox{\small $\sum$}_{i\in I} a_i.\sigma_i}& = & \mbox{\small $\bigoplus$}_{i\in I} \Dual{a}_i.\QDual{\sigma}_i\\
\QDual{\mbox{\small $\sum$}_{i\in I} \Dual{a}_i.\sigma_i}& = & \mbox{\small $\sum$}_{i\in I} a_i.\QDual{\sigma}_i
\end{array}$$
\end{defi}
It is immediate to check that the operator $\QDual{\cdot}$ is not involutive: $\QDual{\QDual{\Dual{a}+\Dual{b}}} = \QDual{a + b} = \Dual{a}\oplus\Dual{b}$.\\
However it is enough for us it to enjoy the following property.
\begin{lem}
\label{lem:qdual}
Let $\sigma\in\ASC$.
$$\QDual{\sigma} \ACRel \sigma$$
\end{lem}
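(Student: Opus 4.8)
The plan is to prove the statement by coinduction, exhibiting the graph of the quasi-duality operator as an affectible compliance relation. By Fact~\ref{fact:relrelco} we have $\ACRel = \nu(\FunK)$, so by the coinduction principle underlying item~(2) of Definition~\ref{def:coACRel} (every $\Rel$ with $\Rel\subseteq\FunK(\Rel)$ is contained in $\nu(\FunK)=\ACRel$), it suffices to exhibit a single relation containing all the desired pairs and closed under $\FunK$. The obvious candidate is
$$\RelKK = \Set{(\QDual{\sigma},\sigma)\mid\sigma\in\ASC},$$
and the entire proof reduces to checking $\RelKK\subseteq\FunK(\RelKK)$, i.e.\ that each pair $(\QDual{\sigma},\sigma)$ satisfies one of the clauses of $\FunK$ with all residual pairs again of the form $(\QDual{\sigma'},\sigma')$.

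First I would record that, under the equi-recursive reading, every $\sigma\in\ASC$ is (up to unfolding) of exactly one of the four guarded shapes $\stopA$, $\sum_{i\in I}a_i.\sigma_i$, $\sum_{i\in I}\Dual{a}_i.\sigma_i$, $\bigoplus_{i\in I}\Dual{a}_i.\sigma_i$; since contracts are regular trees and $\QDual{\cdot}$ is defined by a clause-by-clause recursion that strictly follows this shape, $\QDual{\sigma}$ is a well-defined element of $\ASC$ and the same dichotomy applies to it. The verification then proceeds by a single case analysis on the top constructor of $\sigma$. The key observation making every case immediate is that $\QDual{\cdot}$ preserves both the index set $I$ and (up to $\QDual{\cdot}$) the continuations, merely swapping the polarity of the head operator.

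Concretely: if $\sigma=\stopA$ then $\QDual{\sigma}=\stopA$ and the first clause of $\FunK$ applies; if $\sigma=\bigoplus_{i\in I}\Dual{a}_i.\sigma_i$ then $\QDual{\sigma}=\sum_{i\in I}a_i.\QDual{\sigma_i}$, matching clause~(\ref{lem:coinductiveChar-4}) with the two index sets equal (so the inclusion $I\subseteq J$ degenerates to $I\subseteq I$) and residuals $(\QDual{\sigma_i},\sigma_i)\in\RelKK$; if $\sigma=\sum_{i\in I}a_i.\sigma_i$ then $\QDual{\sigma}=\bigoplus_{i\in I}\Dual{a}_i.\QDual{\sigma_i}$, matching clause~(\ref{lem:coinductiveChar-3}), again with equal index sets and residuals in $\RelKK$; and if $\sigma=\sum_{i\in I}\Dual{a}_i.\sigma_i$ then $\QDual{\sigma}=\sum_{i\in I}a_i.\QDual{\sigma_i}$, matching clause~(\ref{lem:coinductiveChar-2}), where $I\cap I=I\neq\emptyset$ furnishes the required witness $h$ together with $(\QDual{\sigma_h},\sigma_h)\in\RelKK$. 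In every case the residual pairs lie in $\RelKK$, so $\RelKK\subseteq\FunK(\RelKK)$, whence $\RelKK\subseteq\ACRel$ and in particular $\QDual{\sigma}\ACRel\sigma$ for all $\sigma$.

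The coinductive core is essentially mechanical, since polarity-swapping leaves the index sets untouched and therefore all the inclusion and non-empty-intersection side conditions of $\FunK$ are met trivially. The only point deserving care — and the mild obstacle I would be careful about — is the well-definedness of $\QDual{\cdot}$ on recursive contracts and the check that it never yields an ill-formed term: it maps affectible outputs (where $|I|\geq 2$ by Notation~\ref{not:affcontracts}) to inputs, for which no arity constraint applies, and maps inputs to internal outputs, where the singleton case $|I|=1$ is precisely the default reading admitted by Notation~\ref{not:affcontracts}. Once this bookkeeping on arities and the equi-recursive reading are dispatched, the post-fixpoint verification closes the proof.
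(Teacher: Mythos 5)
Your proof is correct, but it takes a genuinely different route from the paper's. The paper argues by induction on the structure of $\sigma$: the base case is immediate, and in each case the induction hypothesis gives $\QDual{\sigma_i}\ACRel\sigma_i$ for the continuations, after which the matching clause of Definition~\ref{def:ACRel} closes the case (the paper spells out only the affectible-output case $\sigma=\sum_{i\in I}\Dual{a}_i.\sigma_i$, where $I\cap I=I$ furnishes the witness, exactly as in your clause-(\ref{lem:coinductiveChar-2}) case). You instead invoke the coinduction principle: you exhibit the graph $\RelKK=\Set{(\QDual{\sigma},\sigma)\mid\sigma\in\ASC}$, verify $\RelKK\subseteq\FunK(\RelKK)$ by a single case analysis on the head constructor, and conclude $\RelKK\subseteq\nu(\FunK)=\ACRel$ via Fact~\ref{fact:relrelco}. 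The case-by-case content is identical in the two proofs (the polarity swap preserves index sets, so all side conditions of $\FunK$ hold trivially), but the justifying principle differs, and the difference is not merely cosmetic: since $\ACRel$ is coinductively defined and contracts are read equi-recursively as possibly infinite regular trees, structural induction as stated in the paper does not literally cover recursive contracts --- unfolding $\rec x.\sigma$ is not structurally smaller --- and would need to be recast as induction over the stratification $\ACRelk{k}$ to be fully rigorous. Your post-fixpoint argument sidesteps this issue: closure of $\RelKK$ under $\FunK$ holds regardless of whether the trees are finite, so recursion needs no special treatment beyond well-definedness of $\QDual{\cdot}$ on regular trees, which you address, along with the arity bookkeeping (singleton output sums reading as internal choice per Notation~\ref{not:affcontracts}) that the paper leaves implicit. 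In short, your proof is a sound and somewhat more robust alternative to the paper's.
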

\proof
By induction on the structure of $\sigma$.  The base case is immediate.
Let us just consider the most interesting case, the other ones being similar.
\begin{description}
\item[Case $\sigma=\mbox{\small $\sum$}_{i\in I} \Dual{a}_i.\sigma_i$]
By the induction hypothesis we have that $\forall i \in I.\QDual{\sigma_i}\ACRel \sigma_i$ (and hence, a fortiori, $\exists k\in (I\cap I).\QDual{\sigma_i}\ACRel \sigma_i$).
So, we get $\QDual{\sigma}=\mbox{\small $\sum$}_{i\in I} a_i.\QDual{\sigma}_i \ACRel
\mbox{\small $\sum$}_{i\in I} \Dual{a}_i.\sigma_i=\sigma$ by 
definition of $\ACRel$ (in particular Definition \ref{def:ACRel}(\ref{def:ACRel-2})).
\qed
\end{description}

\begin{prop}\label{prop:subcontrproperties}
 $\sigma\subcontr\sigma'$ if and only if one of the following conditions holds:
\begin{enumerate}[label=(\arabic*)]
\item \label{prop:subcontrproperties-1} $\sigma = \stopA$;
\item 
\label{prop:subcontrproperties-2}

 $\sigma = \mbox{\small $\bigoplus$}_{j\in J} \Dual{a}_j.\sigma_j$, $\sigma' = \mbox{\small $\sum$}_{i\in I} a_i.{\sigma}_i$ and 
	$\exists k\in (I\cap J)\neq\emptyset.\ {\sigma_k} \subcontr {\sigma_k'}$;

\item 
\label{prop:subcontrproperties-3} 

$\sigma = \mbox{\small $\sum$}_{i\in I} \alpha_i.{\sigma}_i$, $\sigma' = \mbox{\small $\sum$}_{j\in I\cup J} \alpha_j.\sigma'_j$ and
	$\forall h \in I. \; {\sigma_h} \subcontr {\sigma'_h}$;
\item 
\label{prop:subcontrproperties-4} 

$\sigma = \mbox{\small $\bigoplus$}_{j\in I\cup J} \Dual{a}_j.\sigma_j$, $\sigma' =\mbox{\small $\bigoplus$}_{i\in I} \Dual{a}_i.\sigma'_i$ and
	$\forall h \in I. \; {\rho_h} \subcontr {\sigma_h}$.
\end{enumerate}
\end{prop}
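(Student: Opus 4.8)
I would establish the two implications separately, working throughout from the definition $\sigma\subcontr\sigma'\ByDef\forall\rho\,[\rho\ACRel\sigma\implies\rho\ACRel\sigma']$ (Definition~\ref{def:osubcontr}) and from the fact that $\ACRel$ is a fixed point of its defining clauses (Definition~\ref{def:ACRel}), so that a membership $\rho\ACRel\tau$ may be both assumed and concluded by unfolding those clauses. The backward implication is a direct verification; the forward one proceeds by a case analysis on the outermost shape of $\sigma$, using the quasi-dual of Lemma~\ref{lem:qdual} together with suitably tailored ``probe'' clients to pin down the shape of $\sigma'$ and the continuation subcontracts. (I read the continuation in case~(2) as the affectible output $\sigma'=\sum_{i\in I}\Dual{a}_i.\sigma'_i$ and the proviso in case~(4) as $\sigma_h\subcontr\sigma'_h$, matching the rules of Figure~\ref{fig:forsystsubcontr}.)

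\textbf{Backward direction ($\Leftarrow$).} Assuming one of (1)--(4), I would fix an arbitrary $\rho$ with $\rho\ACRel\sigma$ and show $\rho\ACRel\sigma'$. In each case I case-split on the clause of Definition~\ref{def:ACRel} witnessing $\rho\ACRel\sigma$: the client $\rho$ is then either $\stopA$ (immediate), or an external sum, or an internal choice whose index set relates to that of $\sigma$ as prescribed. Because the continuation hypotheses $\sigma_h\subcontr\sigma'_h$ are themselves universally quantified statements over clients, each instance $\rho_h\ACRel\sigma_h$ yields $\rho_h\ACRel\sigma'_h$ \emph{directly}, with no coinduction; the index inclusions ($I\subseteq I\cup J$, etc.) guarantee that the matching clause of $\ACRel$ is again applicable for $\sigma'$. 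For instance, in case~(2) a client $\sum_{i'\in I'}a_{i'}.\rho_{i'}$ of $\sigma=\bigoplus_{j\in J}\Dual{a}_j.\sigma_j$ satisfies $J\subseteq I'$ and $\rho_k\ACRel\sigma_k$ for all $k\in J$; picking the witness $k\in I\cap J$ with $\sigma_k\subcontr\sigma'_k$ gives $\rho_k\ACRel\sigma'_k$ with $k\in I'\cap I$, which is exactly what clause~(2) of Definition~\ref{def:ACRel} needs to conclude $\rho\ACRel\sigma'$. The other three cases are analogous.

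\textbf{Forward direction ($\Rightarrow$).} Assuming $\sigma\subcontr\sigma'$, I would split on the top constructor of $\sigma$. If $\sigma=\stopA$ we are in case~(1). Otherwise $\QDual{\sigma}\ACRel\sigma$ by Lemma~\ref{lem:qdual}, hence $\QDual{\sigma}\ACRel\sigma'$; reading off which clause of Definition~\ref{def:ACRel} can justify this already forces the outer shape of $\sigma'$ (an input sum when $\sigma$ is an input sum, an output when $\sigma$ is any output) and fixes the required index inclusion. To obtain the continuation subcontracts I would, for each relevant branch $h$, feed $\sigma$ a client that forces the interaction through $h$: a singleton unaffectible output $\Dual{a}_h.\rho$ when $\sigma$ is an input sum, a singleton input $a_h.\rho$ when $\sigma$ is an affectible output sum, and a full input sum $\sum_{j\in J}a_j.\rho_j$ (with $\rho_h$ the tested client and the remaining $\rho_j=\QDual{\sigma_j}$) when $\sigma$ is an internal choice. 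Applying $\subcontr$ to such a probe and unfolding the unique applicable clause for $\sigma'$ extracts $\rho\ACRel\sigma'_h$ from $\rho\ACRel\sigma_h$, i.e. $\sigma_h\subcontr\sigma'_h$, landing respectively in cases (3), (3), and (2)/(4). When $\sigma$ is an affectible output sum the hypothesis $\lvert I\rvert\ge 2$ is used to exclude that $\sigma'$ is an \emph{unaffectible} output: two singleton-input probes on distinct branches $h\neq h'$ would force the index set of an unaffectible $\sigma'$ to be both $\{h\}$ and $\{h'\}$, a contradiction, so $\sigma'$ must be affectible and case~(3) applies.

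\textbf{Main obstacle.} The delicate point is the subcase of the forward direction where $\sigma$ is an unaffectible output $\bigoplus_{j\in J}\Dual{a}_j.\sigma_j$ and $\sigma'$ turns out to be an affectible output (case~(2)), because there the existential ``$\exists k\in I\cap J.\ \sigma_k\subcontr\sigma'_k$'' cannot be obtained branch by branch: a client of an internal choice must serve \emph{all} branches, so no single branch can be isolated by a probe. I would argue by contradiction: if $\sigma_k\not\subcontr\sigma'_k$ for every $k\in I\cap J$, pick counterexample clients $\rho^{(k)}\ACRel\sigma_k$ with $\rho^{(k)}\not\ACRel\sigma'_k$ and assemble the single client $\sum_{j\in J}a_j.\rho_j$ with $\rho_k=\rho^{(k)}$ for $k\in I\cap J$ and $\rho_j=\QDual{\sigma_j}$ elsewhere; this is compliant with $\sigma$, but since clause~(2) of Definition~\ref{def:ACRel} against the affectible output $\sigma'$ demands \emph{some} matching branch in $I\cap J$ to comply and all of them fail, it is not compliant with $\sigma'$, contradicting $\sigma\subcontr\sigma'$. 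This global construction, rather than the routine branchwise probing, is the crux of the argument.
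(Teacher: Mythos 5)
Your proposal is correct and follows essentially the same route as the paper's own proof: the shape of $\sigma'$ is pinned down via the quasi-dual of Lemma~\ref{lem:qdual}, the universally quantified cases are handled by branchwise probes, and the crux case (2) is settled by exactly the paper's construction, namely the contradiction obtained from the assembled client $\sum_{k\in I\cap J}a_k.\rho^{(k)}+\sum_{j\in J\setminus(I\cap J)}a_j.\QDual{\sigma_j}$ built from per-branch counterexamples. Beyond that you fill in what the paper leaves implicit (the backward direction, the shape-forcing argument including the $\lvert I\rvert\ge 2$ exclusion, and the cases dismissed as ``similar''), and you correctly repair the two typos in the statement (the form of $\sigma'$ in case (2) and the proviso in case (4)).
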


\proof
\noindent
($\Rightarrow$)
Let $\sigma\subcontr\sigma'$. Then the only possibilities are necessarily the following ones:
\begin{enumerate}
\item
\label{1}
$\sigma = \stopA$;
\item
\label{2}
$\sigma = \mbox{\small $\bigoplus$}_{j\in J} \Dual{a}_j.\sigma_j$ and $\sigma' = \mbox{\small $\sum$}_{i\in I} \Dual{a}_i.{\sigma}_i$;
\item
\label{3}
$\sigma = \mbox{\small $\sum$}_{i\in I} \alpha_i.{\sigma}_i$ and $\sigma' = \mbox{\small $\sum$}_{j\in I\cup J} \alpha_j.\sigma'_j$;
\item
\label{4}
$\sigma = \mbox{\small $\bigoplus$}_{j\in I\cup J} \Dual{a}_j.\sigma_j$ and $\sigma' =\mbox{\small $\bigoplus$}_{i\in I} \Dual{a}_i.\sigma'_i$.
\end{enumerate}
We proceed now by cases, according to the shapes of $\sigma$ and $\sigma'$.
\begin{enumerate}
\item[(\ref{1})] Immediate.
\item[(\ref{2})]
We show $\exists k\in I\cap J.\ {\sigma_k} \subcontr {\sigma_k'}$ by contradiction.
Let us then assume $\forall k\in (I\cap J)\neq\emptyset.\ {\sigma_k} \not\subcontr {\sigma_k'}$  and let $\Set{\rho_{p_k}}_{k\in(I\cap J) }$ be such that, for any $k\in(I\cap J) $,
$\rho_{p_k}\ACRel\sigma$ and $\neg(\rho_{p_k}\ACRel\sigma')$.
By this and Lemma \ref{lem:qdual}. It is easy to check that  $\mbox{\small $\sum$}_{k\in (I\cap J)} a_k\rho_{p_k}  +  \mbox{\small $\sum$}_{j\in J\setminus(I\cap J)}a_j.\widehat{\sigma}_j \ACRel \sigma$,
whereas
$\neg(\mbox{\small $\sum$}_{k\in (I\cap J)} a_k\rho_{p_k}  +  \mbox{\small $\sum$}_{j\in J\setminus(I\cap J)}a_j.\widehat{\sigma}_j \ACRel \sigma')$. That is $\sigma\not\subcontr\sigma'$.
\end{enumerate}
The other cases can be proved in a similar way.
\qed

Let us define now a stratified version of $\subcontr$ inspired by Proposition \ref{prop:subcontrproperties}.

\begin{defi}\label{def:subcontrproperties}\hfill
\begin{enumerate}[label=(\arabic*)]
\item
 For $n\in\Nat$, the relation $\subcontr_n\subseteq\ASC\times\ASC$ is defined as follows:\\
$\subcontr_0 =  \ASC\times\ASC$\\
For $n>0$, $\stopA\subcontr_n\sigma$ for any $\sigma$, whereas  $\sigma\subcontr_n\sigma'$ holds if one of the following conditions holds:
\begin{enumerate}
\item 
\label{def:subcontrproperties-2}

 $\sigma = \mbox{\small $\bigoplus$}_{j\in J} \Dual{a}_j.\sigma_j$, $\sigma' = \mbox{\small $\sum$}_{i\in I} a_i.{\sigma}_i$ and 
	$\exists k\in (I\cap J)\neq\emptyset.\ {\sigma_k} \subcontr_{n-1} {\sigma_k'}$;

\item 
\label{def:subcontrproperties-3} 

$\sigma = \mbox{\small $\sum$}_{i\in I} \alpha_i.{\sigma}_i$, $\sigma' = \mbox{\small $\sum$}_{j\in I\cup J} \alpha_j.\sigma'_j$ and
	$\forall h \in I. \; {\sigma_h} \subcontr_{n-1} {\sigma'_h}$;
\item 
\label{def:subcontrproperties-4} 

$\sigma = \mbox{\small $\bigoplus$}_{j\in I\cup J} \Dual{a}_j.\sigma_j$, $\sigma' =\mbox{\small $\bigoplus$}_{i\in I} \Dual{a}_i.\sigma'_i$ and
	$\forall h \in I. \; {\rho_h} \subcontr_{n-1} {\sigma_h}$.
\end{enumerate}

\item
We define $\subcontr_{\mathit{co}} \ByDef \bigcap_n \subcontr_n$
\end{enumerate}
\end{defi}

\begin{lem}
$\subcontr \ =\ \subcontr_{\mathit{co}}$
\end{lem}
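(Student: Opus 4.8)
The plan is to read Definition \ref{def:subcontrproperties} as the iterated application of a single monotone operator and to exploit the characterisation of $\subcontr$ supplied by Proposition \ref{prop:subcontrproperties}, so that this lemma becomes the exact analogue, for the subcontract relation, of Fact \ref{fact:relrelco}. Concretely, let $\Funct$ be the operator on $\mathcal{P}(\ASC\times\ASC)$ whose clauses are those of Definition \ref{def:subcontrproperties}: $(\sigma,\sigma')\in\Funct(\Rel)$ iff $\sigma=\stopA$ or one of the three clauses holds with $\Rel$ in place of the recursive occurrences of the relation. By construction $\subcontrk{0}=\ASC\times\ASC$ and $\subcontrk{n}=\Funct(\subcontrk{n-1})$, so $\subcontrco=\bigcap_n\Funct^n(\ASC\times\ASC)$. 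The operator $\Funct$ is monotone, and it is co-continuous, i.e. it preserves intersections of descending chains: the clauses of Definition \ref{def:subcontrproperties} only ever quantify over the finite index sets $I,J$, so a universal conjunct passes to the intersection componentwise, while for the existential clause a witness $k$ occurring at infinitely many levels of a descending chain already belongs to the intersection (pigeonhole on the finite set $I\cap J$, plus descendingness). By the dual of Kleene's theorem it follows that $\subcontrco=\nu\Funct$; in particular $\subcontrco=\Funct(\subcontrco)$, so every pair in $\subcontrco$ satisfies one of the clauses of Definition \ref{def:subcontrproperties} \emph{with $\subcontrco$ itself in the recursive positions}.

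First I would settle $\subcontr\subseteq\subcontrco$. By Proposition \ref{prop:subcontrproperties} we have $\subcontr=\Funct(\subcontr)$, so $\subcontr$ is a fixed point of $\Funct$ and is therefore contained in the greatest one, $\nu\Funct=\subcontrco$. (Equivalently one proves $\subcontr\subseteq\subcontrk{n}$ by induction on $n$: the base case is trivial, and the step unfolds $\subcontr$ once via Proposition \ref{prop:subcontrproperties} and places the recursive pairs into $\subcontrk{n-1}$ by the induction hypothesis, matching the corresponding clause of $\subcontrk{n}$.)

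The substantial direction is $\subcontrco\subseteq\subcontr$, for which I would argue coinductively using $\ACRel=\nu\FunK$ (Fact \ref{fact:relrelco}). Fix $(\sigma,\sigma')\in\subcontrco$ and a client $\rho$ with $\rho\ACRel\sigma$; the goal is $\rho\ACRel\sigma'$, which gives $\sigma\subcontr\sigma'$ by Definition \ref{def:osubcontr}. To this end set
\[
\mathcal{S}=\Set{(\rho,\sigma')\mid \exists\sigma.\ \rho\ACRel\sigma \text{ and } (\sigma,\sigma')\in\subcontrco}
\]
and show $\mathcal{S}\subseteq\FunK(\mathcal{S})$, whence $\mathcal{S}\subseteq\nu\FunK=\ACRel$ by Fact \ref{fact:relrelco} and the chosen pair lies in $\mathcal{S}\subseteq\ACRel$. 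Given $(\rho,\sigma')\in\mathcal{S}$ witnessed by $\sigma$, one unfolds $\rho\ACRel\sigma$ through $\ACRel=\FunK(\ACRel)$ and $(\sigma,\sigma')\in\subcontrco$ through the fixed-point clause established above, and proceeds by case analysis on the shape of $\sigma$. The clauses align so that the recursive obligations produced for $(\rho,\sigma')$ are again of the form ``compliance composed with $\subcontrco$'', hence in $\mathcal{S}$: for instance, if $\sigma=\bigoplus_{j\in J}\Dual{a}_j.\sigma_j$ with $\sigma'=\sum_{i\in I}\Dual{a}_i.\sigma'_i$ and $(\sigma_k,\sigma'_k)\in\subcontrco$ for some $k\in I\cap J$, then $\rho=\sum_{l\in L}a_l.\rho_l$ with $L\supseteq J$ and $\rho_j\ACRel\sigma_j$ for $j\in J$; since $k\in L\cap I$ and $(\rho_k,\sigma'_k)\in\mathcal{S}$ (witnessed by $\sigma_k$), the pair $(\rho,\sigma')$ meets the input/affectible-output clause of $\FunK$ (Definition \ref{def:coACRel}(\ref{lem:coinductiveChar-2})).

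I expect the main obstacle to be exactly this case analysis: one must check, for every combination of a $\subcontrco$-clause for $(\sigma,\sigma')$ with the forced $\ACRel$-clause for $(\rho,\sigma)$, that the index sets fit (the subcontract witness falls inside the client's index set) and that each resulting recursive obligation has the shape defining $\mathcal{S}$. The co-continuity step also hinges essentially on the finiteness of $I$ and $J$, without which neither $\subcontrco=\nu\Funct$ nor the stabilisation of the existential witness would hold.
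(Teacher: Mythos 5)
Your proof is correct, and it is worth noting that the paper itself gives no proof of this lemma: it is stated bare, in exact analogy with Fact \ref{fact:relrelco} for $\ACRel$, so your argument supplies what the authors treat as routine rather than deviating from an existing proof. The decomposition you chose is the right one. For $\subcontr\subseteq\subcontrco$ you only need the left-to-right half of Proposition \ref{prop:subcontrproperties} (incidentally, the only half the paper actually proves), since a post-fixed point of $\Funct$ lies below every iterate $\subcontrk{n}$. The substantial direction $\subcontrco\subseteq\subcontr$ genuinely requires your two ingredients: first, $\subcontrco\subseteq\Funct(\subcontrco)$, where finiteness of the index sets is essential --- your observation that the applicable clause of Definition \ref{def:subcontrproperties} is fixed by the syntactic shapes of $(\sigma,\sigma')$, so that the pigeonhole argument can stabilise the existential witness along the descending chain, is exactly the point; second, the coinductive step showing that $\mathcal{S}=\Set{(\rho,\sigma')\mid\exists\sigma.\ \rho\ACRel\sigma \mbox{ and } (\sigma,\sigma')\in\subcontrco}$ is a post-fixed point of $\FunK$, i.e.\ an affectible compliance relation, which is unavoidable because $\subcontr$ is defined by quantification over all clients (Definition \ref{def:osubcontr}) and cannot be unfolded structurally except through $\ACRel=\nu\FunK$. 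The cases you leave implicit do go through: in each combination of a $\subcontrco$-clause for $(\sigma,\sigma')$ with the forced $\ACRel$-clause for $(\rho,\sigma)$, the index-set inclusions compose as you predict, and every recursive obligation lands back in $\mathcal{S}$. Two small caveats, neither affecting validity: your case analysis silently adopts the corrected reading of clause (2) of Proposition \ref{prop:subcontrproperties} and Definition \ref{def:subcontrproperties} --- the paper writes $\sum_{i\in I}a_i.\sigma_i$ where its own proof of the proposition and rule $(\oplus\cdot+\,\mbox{-}\,\subcontrF)$ of Figure \ref{fig:forsystsubcontr} show that the affectible-output sum $\sum_{i\in I}\Dual{a}_i.\sigma'_i$ is meant --- and in each case of the verification of $\mathcal{S}\subseteq\FunK(\mathcal{S})$ you should also record the trivial sub-case $\rho=\stopA$ (it is not true that the client is forced to be an external sum), which is discharged by the first disjunct in the definition of $\FunK$.
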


\begin{defi}[$\subcontr$-semantics for system $\dersc\!$] \label{def:ACRelModel}
Let $\Gamma$ be a set of statements of the form $\rho\subcontrF\sigma$. We define
\begin{enumerate}[label=(\arabic*)]
\item 
$ \begin{array}{rcl}
\modelsSubcontr \Gamma &~~\mbox{ if }~~& \forall {(\rho' \subcontrF \sigma') \in \Gamma} .
\end{array}$ 
$\Pred[ \rho' \subcontr\sigma']$;
 
\item
$ \begin{array}{rcl}
\Gamma \modelsSubcontr \rho \subcontrF \sigma &~~ \mbox{ if } ~~& 
\modelsSubcontr \Gamma \implies \rho \subcontr \sigma
 \end{array} $.
\end{enumerate}
\end{defi}
\noindent
Soundness and completeness of $\dersc$ can hence be formalized as
$$\dersc \sigma\subcontrF\sigma' ~~\Iff~~ \modelsSubcontr\sigma\subcontrF\sigma'$$
As done for $\der$, we use a stratified version of Definition \ref{def:ACRelModel}.

\begin{defi}[Stratified $\subcontr$--semantics for $\dersc\!$] \label{def:StratComplModel}
Let $\Gamma$ be a set of statements of the form $\rho\subcontrF\sigma$ and let $k\in\Nat$. We define
\begin{enumerate}
\item
$ \begin{array}{rcl}
\modelsSubcontr_k \Gamma &\mbox{ if }& \forall {(\rho' \subcontrF \sigma') \in \Gamma} .
 \end{array} $ 
$\Pred[ \rho' \subcontrk{k}\sigma'] $;
\item 
$ \begin{array}{rcl}
\Gamma\modelsSubcontr_k \rho \subcontrF \sigma & \mbox{ if } & \modelsSubcontr_k \Gamma \implies \rho\subcontrk{k}\sigma
 \end{array} $.
\end{enumerate}
\end{defi}
\noindent It is possible now to verify the following:
\begin{lem}\label{lem:stratmodelsubcontr}\hfill
\begin{enumerate}
\item
$\subcontrk{k+1} \subseteq \;\subcontrk{k}$;
\item
$\modelsSubcontr_{k+1} \Gamma \implies \modelsSubcontr_{k} \Gamma$;
\item
\label{lem:stratmodelsubcontriii}
$\forall k. \Pred[ \Gamma\modelsSubcontr_k \sigma\subcontrF\sigma'] \implies \Gamma\modelsSubcontr \sigma\subcontrF\sigma'$.
\end{enumerate}
\end{lem}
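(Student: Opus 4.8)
The plan is to dispatch the three items in the order stated, since item (2) rests on item (1) and item (3) uses both item (1) and the already-established identity $\subcontr = \subcontrco = \bigcap_n \subcontr_n$ (the latter equality being the definition of $\subcontrco$ in Definition \ref{def:subcontrproperties}).

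For item (1) I would prove $\subcontrk{k+1} \subseteq \subcontrk{k}$ by induction on $k$. The base case $\subcontrk{1} \subseteq \subcontrk{0} = \ASC\times\ASC$ is immediate, as $\subcontrk{0}$ is the full relation. For the inductive step assume $\subcontrk{k} \subseteq \subcontrk{k-1}$ and take $(\sigma,\sigma') \in \subcontrk{k+1}$. By Definition \ref{def:subcontrproperties} either $\sigma = \stopA$, in which case $\sigma \subcontrk{k} \sigma'$ holds directly, or one of the three inductive clauses of Definition \ref{def:subcontrproperties} holds with its recursive premises expressed at level $\subcontrk{k}$. Applying the induction hypothesis to each such premise weakens it from $\subcontrk{k}$ to $\subcontrk{k-1}$, and the very same clause then witnesses $(\sigma,\sigma') \in \subcontrk{k}$. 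The argument is a routine case analysis on the matching shapes of $\sigma$ and $\sigma'$, and mirrors the analogous remark $\ACRelk{k} \subseteq \ACRelk{k-1}$ used earlier for $\ACRel$.

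For item (2) nothing beyond item (1) is required: if $\modelsSubcontr_{k+1}\Gamma$ then $\rho' \subcontrk{k+1} \sigma'$ for every $(\rho' \subcontrF \sigma') \in \Gamma$, and $\subcontrk{k+1} \subseteq \subcontrk{k}$ yields $\rho' \subcontrk{k} \sigma'$ for each such hypothesis, i.e. $\modelsSubcontr_k\Gamma$. For item (3) I would first record a consequence of item (1): $\modelsSubcontr\Gamma$ implies $\modelsSubcontr_k\Gamma$ for every $k$. Indeed $\modelsSubcontr\Gamma$ means $\rho' \subcontr \sigma'$ for each hypothesis in $\Gamma$, and since $\subcontr = \bigcap_n \subcontr_n \subseteq \subcontrk{k}$ this gives $\rho' \subcontrk{k}\sigma'$ for every $k$. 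Now assume $\Gamma\modelsSubcontr_k \sigma\subcontrF\sigma'$ for all $k$, and assume $\modelsSubcontr\Gamma$; I must show $\sigma \subcontr \sigma'$. By the observation just made $\modelsSubcontr_k\Gamma$ holds for every $k$, so each hypothesis $\Gamma\modelsSubcontr_k \sigma\subcontrF\sigma'$ yields $\sigma \subcontrk{k}\sigma'$; hence $(\sigma,\sigma') \in \bigcap_n \subcontr_n = \subcontrco = \subcontr$, which is precisely $\Gamma\modelsSubcontr \sigma\subcontrF\sigma'$.

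I do not expect a genuine obstacle: the whole lemma is the standard bookkeeping linking the stratified approximants $\subcontrk{k}$ to their intersection $\subcontr$. The only point needing care is keeping the index shift consistent in item (1), where the premises defining membership in $\subcontrk{k+1}$ live at level $k$ while membership in $\subcontrk{k}$ demands premises at level $k-1$, so that the induction hypothesis is invoked exactly once per recursive premise.
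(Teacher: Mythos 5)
Your proposal is correct. The paper itself gives no proof of this lemma (it is introduced with ``It is possible now to verify the following'' and left as routine), and your argument — induction on $k$ for the monotonicity $\subcontrk{k+1}\subseteq\;\subcontrk{k}$, item (2) as an immediate corollary, and item (3) by combining the pointwise hypotheses with the identity $\subcontr\;=\;\subcontr_{\mathit{co}}\;=\;\bigcap_n\subcontrk{n}$ — is precisely the intended bookkeeping, mirroring the equally unproved observation $\ACRelk{k}\subseteq\ACRelk{k-1}$ made earlier for $\ACRel$. (One pedantic note: the fact that $\modelsSubcontr\Gamma$ implies $\modelsSubcontr_k\Gamma$ follows from $\subcontr\subseteq\;\subcontrk{k}$, i.e.\ from the intersection identity, not from item (1) as you label it; the justification you actually write out is the correct one.)
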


\begin{prop}[Soundness of $\dersc$ w.r.t $\modelsSubcontr$]
\label{prop:soundnesssubcontr}
If \,$\Gamma\dersc\rho\subcontrF \sigma$, then \,$\Gamma\modelsSubcontr\rho\subcontrF\sigma$.
\end{prop}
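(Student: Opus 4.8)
The plan is to follow verbatim the structure of the soundness proof for system $\der$ (Proposition~\ref{prop:soundnessACRel}), replacing $\complyF$ by $\subcontrF$, the relation $\ACRelk{k}$ by $\subcontrk{k}$, and the models judgement $\modelsACRel_k$ by $\modelsSubcontr_k$. By Lemma~\ref{lem:stratmodelsubcontr}(\ref{lem:stratmodelsubcontriii}) it is enough to prove the stratified statement
\[
\Gamma\dersc\rho\subcontrF\sigma \quad\text{implies}\quad \Gamma\modelsSubcontr_k\rho\subcontrF\sigma \text{ for every } k\in\Nat,
\]
since quantifying over all $k$ then recovers $\Gamma\modelsSubcontr\rho\subcontrF\sigma$.

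I would argue by simultaneous induction on the derivation $\Deriv::\Gamma\dersc\rho\subcontrF\sigma$ and on the index $k$. The case $k=0$ is vacuous because $\subcontrk{0}=\ASC\times\ASC$, so it can be kept implicit; for $k>0$ I would distinguish the last rule of $\Deriv$. The two leaf rules are immediate: for $(\CkptcomplAx\mbox{-}\subcontrF)$ we have $\rho=\stopA$, and $\stopA\subcontrk{k}\sigma$ holds for all $k>0$ by Definition~\ref{def:subcontrproperties}; for $(\CkptcomplHyp\mbox{-}\subcontrF)$ the statement $\rho\subcontrF\sigma$ lies in $\Gamma$, so $\modelsSubcontr_k\Gamma$ forces $\rho\subcontrk{k}\sigma$ directly.

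The three composite rules are all handled in the same manner; I describe $(+\cdot+\mbox{-}\subcontrF)$, whose conclusion is $\rho\subcontrF\sigma$ with $\rho=\sum_{i\in I}\alpha_i.\sigma_i$ and $\sigma=\sum_{j\in I\cup J}\alpha_j.\sigma'_j$, and whose premises are $\Gamma,\rho\subcontrF\sigma\dersc\sigma_h\subcontrF\sigma'_h$ for all $h\in I$. Assuming $\modelsSubcontr_k\Gamma$, Lemma~\ref{lem:stratmodelsubcontr}(2) gives $\modelsSubcontr_{k-1}\Gamma$; applying the induction hypothesis at index $k-1$ to the very derivation $\Deriv$ yields $\Gamma\modelsSubcontr_{k-1}\rho\subcontrF\sigma$, whence $\rho\subcontrk{k-1}\sigma$ and therefore $\modelsSubcontr_{k-1}(\Gamma,\rho\subcontrF\sigma)$. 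Feeding this into the induction hypothesis on each sub-derivation produces $\sigma_h\subcontrk{k-1}\sigma'_h$ for every $h\in I$, and the matching clause of Definition~\ref{def:subcontrproperties} lifts these back to $\rho\subcontrk{k}\sigma$. Hence $\Gamma\modelsSubcontr_k\rho\subcontrF\sigma$. The rules $(\oplus\cdot+\mbox{-}\subcontrF)$ and $(\oplus\cdot\oplus\mbox{-}\subcontrF)$ differ only in invoking, respectively, the singleton-existential clause and the $\bigoplus$-clause of Definition~\ref{def:subcontrproperties}.

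The only delicate point is the bookkeeping of the nested induction in the composite cases: verifying that the enlarged environment $\Gamma,\rho\subcontrF\sigma$ is $\modelsSubcontr_{k-1}$-valid requires applying the induction hypothesis to the same conclusion $\rho\subcontrF\sigma$ but at the strictly smaller index $k-1$, whereas the sub-derivations are invoked at both a smaller height and the smaller index $k-1$. I would make this rigorous by ordering the induction lexicographically on $(k,\text{height of }\Deriv)$, so that every appeal to the hypothesis is on a strictly smaller pair; the monotonicity facts in Lemma~\ref{lem:stratmodelsubcontr}(1)--(2) guarantee that the descent through the stratification is harmless. Everything else is a routine unfolding of the definitions, exactly parallel to Proposition~\ref{prop:soundnessACRel}.
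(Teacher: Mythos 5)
Your proof is correct and follows essentially the same route as the paper: reduction to the stratified statement via Lemma~\ref{lem:stratmodelsubcontr}(\ref{lem:stratmodelsubcontriii}), simultaneous induction over the derivation and over $k$, with the composite-rule cases handled exactly as in the paper (first apply the induction hypothesis at $k-1$ to the conclusion itself to validate the enlarged environment, then apply it to the sub-derivations). Your explicit lexicographic ordering on $(k,\text{height of }\Deriv)$ merely makes rigorous what the paper's ``simultaneous induction'' leaves implicit, so it is a welcome clarification rather than a different approach.
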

\proof
By Lemma \ref{lem:stratmodelsubcontr}(\ref{lem:stratmodelsubcontriii}),  it is enough to show that 
\[ \begin{array}{rcl}
\Gamma\dersc \rho \subcontrF \sigma &\text{implies}& \Gamma\modelsSubcontr_k \rho \subcontrF \sigma \textrm{ for all }k. 
\end{array} \] 

To do that we proceed by simultaneous induction over the derivation $\Deriv :: \Gamma\dersc \rho \subcontrF \sigma$ and over $k$. \\
Since 
$\Gamma\modelsSubcontr_0\rho \subcontrF \sigma$ trivially holds, we shall keep the case $k = 0$ implicit.
We distinguish the possible cases of the last rule in $\Deriv$.

\begin{description}
\item [Case $(+\cdot+\mbox{\,-\!}\subcontrF)$] Then $\Der$ ends by 
\[ \Inf{\forall h\in I.~~ \Gamma,\mbox{\small $\sum$}_{i\in I} \alpha_i.{\sigma}_i\subcontrF
    	\mbox{\small $\sum$}_{j\in I\cup J} \alpha_j.\sigma'_j\dersc 
    	\sigma_h
    	\subcontrF
    		\sigma'_h}{\Gamma\dersc \mbox{\small $\sum$}_{i\in I} \alpha_i.{\sigma}_i\subcontrF
    	\mbox{\small $\sum$}_{j\in I\cup J} \alpha_j.\sigma'_j}
\]
We have to prove that $\Gamma\modelsSubcontr_k \mbox{\small $\sum$}_{i\in I} \alpha_i.{\sigma}_i\subcontrF
    	\mbox{\small $\sum$}_{j\in I\cup J} \alpha_j.\sigma'_j$ for all $k$.
	
	Let $k > 0$; assume, by the induction hypothesis over $k$, that $\Gamma\modelsSubcontr_{k-1} \mbox{\small $\sum$}_{i\in I} \alpha_i.{\sigma}_i\subcontrF
    	\mbox{\small $\sum$}_{j\in I\cup J} \alpha_j.\sigma'_j$.
	If $\modelsSubcontr_k\Gamma$, then $\modelsSubcontr_{k{-}1} \Gamma$, which implies 
$\mbox{\small $\sum$}_{i\in I} \alpha_i.{\sigma}_i\subcontr_{k{-}1}
    	\mbox{\small $\sum$}_{j\in I\cup J} \alpha_j.\sigma'_j$
and hence 
	$\modelsSubcontr_{k{-}1} \Gamma, \mbox{\small $\sum$}_{i\in I} \alpha_i.{\sigma}_i\subcontrF
    	\mbox{\small $\sum$}_{j\in I\cup J} \alpha_j.\sigma'_j$.\\
By the induction hypothesis over $\Der$ we can  hence get that
$\sigma_h
    	\subcontr_{k{-}1}
    		\sigma'_h$ for all $h$.
Now, by Definition \ref{def:subcontrproperties}, we get $\sigma \subcontr_{k}\sigma'$.
\end{description}
The other cases can be proved similarly.
\qed
\bigskip
\noindent

Completeness can be shown in the same way as done for $\der$ with respect to $\ACRel$.

\begin{prop}[Completeness of $\dersc$ w.r.t $\modelsSubcontr$]
\label{thm:soundcompletesubcontr}
~~$\sigma\subcontr\sigma' ~~\mbox{ implies }~~ \dersc \sigma\subcontrF\sigma'$
\end{prop}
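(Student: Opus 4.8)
The plan is to mirror the completeness argument for system $\der$ given in Proposition~\ref{thr:completeness}, now using the proof-search procedure for $\dersc$ whose termination is guaranteed by Proposition~\ref{prop:proofsearchtermsubcontr}, together with the identity $\subcontr=\subcontr_{\mathit{co}}=\bigcap_n\subcontr_n$ established just after Definition~\ref{def:subcontrproperties} and the stratified semantics of Definition~\ref{def:StratComplModel}.

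First I would make explicit a proof-search procedure for $\dersc$, say $\Prove_{\subcontr}$, built exactly along the lines of $\Prove$ in Figure~\ref{fig:Prove}: on input $\Gamma\dersc\sigma\subcontrF\sigma'$ it returns the axiom $(\CkptcomplAx\mbox{-}\subcontrF)$ when $\sigma=\stopA$, the axiom $(\CkptcomplHyp\mbox{-}\subcontrF)$ when $\sigma\subcontrF\sigma'\in\Gamma$, and otherwise inspects the shapes of $\sigma$ and $\sigma'$ to apply the unique matching rule among $(\oplus\cdot+\mbox{-}\subcontrF)$, $(+\cdot+\mbox{-}\subcontrF)$ and $(\oplus\cdot\oplus\mbox{-}\subcontrF)$, recurring on the premises and returning $\FAIL$ when no rule fits or a recursive call fails. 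The provisos in Definition~\ref{def:formalSubcontr} ensure the rules are syntax-directed, so $\Prove_{\subcontr}$ is deterministic.

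Next I would establish the three properties analogous to Lemma~\ref{prop:provecorr}: (i) if $\Prove_{\subcontr}(\Gamma\dersc\sigma\subcontrF\sigma')$ returns $\Der\neq\FAIL$, then $\Der::\Gamma\dersc\sigma\subcontrF\sigma'$, which is immediate by construction; (ii) if it returns $\FAIL$, then $\Gamma\not\modelsSubcontr_k\sigma\subcontrF\sigma'$ for some $k$; (iii) it always terminates, which is Proposition~\ref{prop:proofsearchtermsubcontr}. Property~(ii) is the crux and the main obstacle. I would prove it by induction on the number $h$ of nested recursive calls in the failing computation, showing $\Gamma\not\modelsSubcontr_{h+1}\sigma\subcontrF\sigma'$. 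If $h=0$, then $\sigma\neq\stopA$, $\sigma\subcontrF\sigma'\notin\Gamma$ and no rule can fire, so none of the clauses of Definition~\ref{def:subcontrproperties} holds at level~$1$. If $h>0$, the shapes of $\sigma,\sigma'$ match exactly one recursive clause; the failure of the corresponding recursive calls yields, by the induction hypothesis, some level $l<h$ at which a required premise $\sigma_h\subcontrF\sigma'_h$ fails, and since $\subcontrk{h}\subseteq\subcontrk{l+1}$ by Lemma~\ref{lem:stratmodelsubcontr}(1) the defining condition of $\subcontrk{h+1}$ cannot be met, whence $\Gamma\not\modelsSubcontr_{h+1}\sigma\subcontrF\sigma'$. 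This bookkeeping is the delicate part and runs exactly parallel to the proof of Lemma~\ref{prop:provecorr}(2).

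Finally I would close the argument as for $\der$. Assuming $\sigma\subcontr\sigma'$, the identity $\subcontr=\bigcap_n\subcontr_n$ gives $\sigma\subcontrk{k}\sigma'$, i.e. $\emptyset\modelsSubcontr_k\sigma\subcontrF\sigma'$, for every $k$. By~(iii) the computation of $\Prove_{\subcontr}(\emptyset\dersc\sigma\subcontrF\sigma')$ terminates; by~(ii) it cannot return $\FAIL$, since that would contradict $\sigma\subcontrk{k}\sigma'$ holding for all $k$; hence by~(i) it produces a derivation of $\emptyset\dersc\sigma\subcontrF\sigma'$, that is $\dersc\sigma\subcontrF\sigma'$, as required.
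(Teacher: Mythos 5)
Your proposal is correct and follows exactly the route the paper intends: the paper's own ``proof'' is just the remark that completeness of $\dersc$ ``can be shown in the same way as done for $\der$ with respect to $\ACRel$'', i.e.\ via a proof-search procedure in the style of $\Prove$, its correctness/failure/termination properties (mirroring Lemma~\ref{prop:provecorr}, with termination given by Proposition~\ref{prop:proofsearchtermsubcontr}), and the stratification $\subcontr=\bigcap_n\subcontr_n$. Your write-up simply makes explicit the details the paper leaves implicit, including the key failure analysis using $\subcontrk{h}\subseteq\subcontrk{l+1}$, so it matches the intended argument step for step.
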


We then get decidability as a corollary.

\begin{cor}[Decidability of $\subcontr$]
The relation $\subcontr$ is decidable.
\end{cor}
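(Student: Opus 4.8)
The plan is to mirror exactly the argument that yielded decidability of $\ACRel$ in Corollary~\ref{cor:ACRelDecidable}, transporting it from the system $\der$ to the system $\dersc$. The key observation is that Propositions~\ref{prop:soundnesssubcontr} and~\ref{thm:soundcompletesubcontr} together establish the biconditional
\[
\dersc \sigma\subcontrF\sigma' \quad\Iff\quad \sigma\subcontr\sigma',
\]
so that deciding membership in the semantic relation $\subcontr$ reduces to deciding derivability of the judgment $\emptyset\dersc \sigma\subcontrF\sigma'$ in the formal system. Since $\dersc$ is syntax-directed (by the provisos imposed in Definition~\ref{def:formalSubcontr}, which guarantee that at most one rule applies to any given pair $\sigma,\sigma'$), the proof-search procedure for $\dersc$---the analogue of $\Prove$ for $\subcontr$---is deterministic and either returns a derivation or reports $\FAIL$.

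First I would invoke Proposition~\ref{prop:proofsearchtermsubcontr}, which asserts that proof search for $\dersc$ terminates on every input. This is the crucial effectiveness ingredient: it turns the reduction above into an actual decision procedure. On input $\sigma,\sigma'$ one runs the proof-search algorithm on $\emptyset\dersc\sigma\subcontrF\sigma'$; termination guarantees a halting computation, and the outcome is dichotomous. If a derivation $\Der::\;\dersc\sigma\subcontrF\sigma'$ is produced, soundness (Proposition~\ref{prop:soundnesssubcontr}, specialised to the empty environment, giving $\modelsSubcontr\sigma\subcontrF\sigma'$ and hence $\sigma\subcontr\sigma'$) lets us answer ``yes''. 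If the search fails, then no derivation of $\emptyset\dersc\sigma\subcontrF\sigma'$ exists, and completeness (Proposition~\ref{thm:soundcompletesubcontr}) in contrapositive form---$\sigma\subcontr\sigma'\Rightarrow\dersc\sigma\subcontrF\sigma'$---forces $\sigma\not\subcontr\sigma'$, so we answer ``no''.

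The argument is essentially routine given the three cited propositions, so I do not expect a genuine obstacle at the level of this corollary itself; the substance has already been discharged. The one place that warrants care is confirming that the failure branch is exhaustive, i.e.\ that when proof search returns $\FAIL$ there really is no derivation at all, not merely that the particular deterministic search missed one. This is exactly the correctness half of the proof-search analysis (paralleling Lemma~\ref{prop:provecorr}\ref{prop:provecorr-i-bis} for $\Prove$), and it rests on $\dersc$ being syntax-directed: since the applicable rule is uniquely determined by the shapes of $\sigma$ and $\sigma'$ and by whether $\sigma\subcontrF\sigma'\in\Gamma$, the procedure explores the only possible derivation skeleton, so its failure certifies non-derivability. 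With that observation, termination plus the soundness/completeness biconditional deliver the decidability of $\subcontr$ immediately as a corollary.
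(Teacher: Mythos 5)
Your proposal is correct and follows essentially the same route as the paper: the paper obtains the corollary directly from Proposition~\ref{prop:proofsearchtermsubcontr} (termination of proof search for $\dersc$) together with soundness (Proposition~\ref{prop:soundnesssubcontr}) and completeness (Proposition~\ref{thm:soundcompletesubcontr}), exactly mirroring how Corollary~\ref{cor:ACRelDecidable} was derived from Lemma~\ref{prop:provecorr} for $\der$. Your extra remark about the failure branch being exhaustive is the same correctness point the paper discharges for $\Prove$ in Lemma~\ref{prop:provecorr}(\ref{prop:provecorr-i-bis}) and implicitly transfers to $\dersc$, so nothing is missing.
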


\begin{rem}
By Lemma \ref{lem:qdual} it easily descends the following relation between the systems $\der$ and $\dersc$:~~~~
$\dersc \sigma \subcontrF \sigma'  ~~\Rightarrow ~~ \der \QDual{\sigma} \complyP \sigma'$.\\
The opposite does not hold. In fact $\Dual{a}+\Dual{b}=a+b \complyP \Dual{a}$,
but it is easy to check that $\Dual{a}+\Dual{b}\not\subcontr \Dual{a}$.
\end{rem}

\begin{exa}
\label{ex:subcontrder}
Let us now formally derive that $\mathsf{Seller} \subcontr \mathsf{SellerII}$:
\renewcommand{\Price}{\tt{pr}}
\renewcommand{\Card}{\tt{crd}}
\renewcommand{\Cash}{\tt{csh}}
\renewcommand{\Belt}{\tt{blt}}
\renewcommand{\Bag}{\tt{bag}}
\begin{equation}
\label{eq:exampledersubcontr}
\Inf{
 \Inf{
      \Inf
           {
             \Inf
             {}
             {\Gamma_2 \dersc \stopA \subcontrF \stopA}
             }
            {\Gamma_1  \dersc \Cash \subcontrF  \Cash}
            }
            {\textsf{S}  \subcontrF \textsf{SII} \dersc \DPrice.\Cash \subcontrF \DPrice.\Cash}
\quad
\Inf{
      \Inf
           {
             \Inf
             {}
             {\Gamma_4 \dersc \stopA \subcontrF \stopA}
                  \qquad
            \Inf
             {}
             {\Gamma_4 \dersc \stopA \subcontrF \stopA}
             }
            {\Gamma_3  \dersc \Card + \Cash \subcontrF \Card + \Cash + \Cheque}
            }
            {\textsf{S}  \subcontrF \textsf{SII} \dersc \DPrice.(\Card + \Cash) \subcontrF \DPrice.(\Card + \Cash + \Cheque) + \DScrachcard}
         }{
     \dersc \textsf{Seller} \subcontrF \textsf{SellerII} 
           }
\end{equation}
where
\lmcscenterline{\begin{tabular}{l}
$\Gamma_1 = \textsf{Seller}  \subcontrF \textsf{SellerII},\ \DPrice.\Cash \subcontrF \DPrice.\Cash$
\\
$\Gamma_2 = \Gamma_1, \Cash \subcontrF  \Cash$
\\
$\Gamma_3 = \textsf{Seller}  \subcontrF \textsf{SellerII},\ \DPrice.(\Card + \Cash) \subcontrF \DPrice.(\Card + \Cash + \Cheque) + \DScrachcard$
\\
$\Gamma_4 = \Gamma_3, \Card + \Cash \subcontrF \Card + \Cash + \Cheque$
\end{tabular}}
\end{exa}

\subsection{Derivations as orchestrator functors}
By Definition \ref{def:osubcontr} and Theorem \ref{th:complyAwinstrat}, $\sigma\subcontr\sigma'$ implies that, in case there exists $f$  such that $f:\rho\complyO\sigma$, there exists also $f'$  such that $f':\rho\complyO\sigma'$. 
In the present subsection we show how the computation of $f'$ out of $f$ can be effectively carried out by 
any derivation of  $\dersc \sigma\subcontrF\sigma'$, which can indeed be interpreted as 
a functor.

In particular, the following definition shows how to get, out of a derivation $\Der ::\ \dersc \sigma\subcontrF\sigma'$, the functor $\Funct_\Der$ mapping
an orchestrator $f$ such that $f:\rho\complyO\sigma$ for a certain $\rho$, 
to an orchestrator  $f'$ such that $f':\rho\complyO\sigma$.\\

In functors, functor variables will be denoted by {\sc f}, \mbox{\sc f'}, \ldots
In order to better grasp how Definition \ref{def:orchfunctor} below works, we shall use 
the following equivalent presentation of rule 
$(\mbox{\scriptsize $\oplus\cdot+\mbox{\,-\!}\subcontrF$})$
$$
\begin{array}{l}
\Inf[\mbox{\scriptsize $\oplus\cdot+\mbox{\,-\!}\subcontrF$}]
{ \Gamma, \mbox{\small $\bigoplus$}_{i\in I}\Dual{a}_i.\sigma_i \subcontrF
\mbox{\small $\sum$}_{j\in J}\Dual{a}_j.\sigma'_j
     	\dersc \sigma_k
    	  \subcontrF
		\sigma'_k \qquad (k\in I\cap J) \qquad  \mid J\mid \geq 2
}{
    \Gamma \dersc  \mbox{ \small $\bigoplus$}_{i\in I}\Dual{a}_i.\sigma_i \subcontrF
\mbox{\small $\sum$}_{j\in J}\Dual{a}_j.\sigma'_j
}\\
\mbox{\small where  $\mid J\mid$ denotes the cardinality of $J$}
\end{array}
$$
Besides, we shall use the following two rules instead of the more compact
(\mbox{\scriptsize $+\cdot+\mbox{\,-\!}\subcontrF$})
$$\Inf[\mbox{\scriptsize $+\cdot+\mbox{\,-\!}\subcontrF$}\mbox{\,-\!}1]{
(\forall h\in I)\qquad \Gamma,\mbox{\small $\sum$}_{i\in I} a_i.{\sigma}_i\subcontrF
    	\mbox{\small $\sum$}_{j\in I\cup J} a_j.{\sigma'}_j\dersc 
    	\sigma_h
    	\subcontrF
    		\sigma'_h
}{
    \Gamma\dersc \mbox{\small $\sum$}_{i\in I} a_i.{\sigma}_i\subcontrF
    	\mbox{\small $\sum$}_{j\in I\cup J} a_j.{\sigma'}_j
}
$$
$$
\begin{array}{c}
\Inf[\mbox{\scriptsize $+\cdot+\mbox{\,-\!}\subcontrF$}\mbox{\,-\!}2]{
(\forall h\in I)\qquad \Gamma,\mbox{\small $\sum$}_{i\in I} \Dual{a}_i.{\sigma}_i\subcontrF
    	\mbox{\small $\sum$}_{j\in I\cup J} \Dual{a}_j.{\sigma'}_j\dersc 
    	\sigma_h
    	\subcontrF
    		\sigma'_h
}{
    \Gamma\dersc \mbox{\small $\sum$}_{i\in I} \Dual{a}_i.{\sigma}_i\subcontrF
    	\mbox{\small $\sum$}_{j\in I\cup J} \Dual{a}_j.{\sigma'}_j
}\\
\mbox{\small  where the same proviso for rule $(\mbox{\scriptsize $+\cdot+\mbox{\,-\!}\subcontrF$})$
(see Definition \ref{def:formalSubcontr}) applies.}
\end{array}
$$

\bigskip
We call {\em extended environment} a set of elements of the form
$\mbox{\sc f} : \sigma {\subcontrF} \sigma'$  where $\mbox{\sc f}$ is an orchestrator variable and $\sigma,\sigma'\in\ASC$. We use the symbols $\widehat\Gamma, \widehat\Gamma',\ldots$
to range over extended environments.

\begin{defi}
\label{def:orchfunctor}
Given $\Der ::\ \Gamma \dersc \sigma\subcontrF\sigma'$, the functor $\Funct_\Der$ is defined
by 
$$
\Funct_\Der = \auxFunct{\Der}{\emptyset}
$$
where $\mathcal{F}$ is binary partial function from derivations and extended environments to orchestrator functors. We define $\mathcal{F}$  by induction on the structure of the first argument. The second argument is used as an accumulator during the construction of the orchestrator functor.  It enables us to build recursive orchestrators by keeping the name of recursion variables to be used in case get to an application of rule $(\mbox{\scriptsize $\CkptcomplHyp\mbox{\,-\!}\subcontrF$})$. So a $\rec$ binder is introduced in any step
of the construction of an orchestrator. Of course the variable introduced will remain vacuously bind in case rule  $(\mbox{\scriptsize $\CkptcomplHyp\mbox{\,-\!}\subcontrF$})$ is not encountered. We assume any orchestrator variable introduced  to be fresh.
$ \auxFunct{\Der}{\widehat\Gamma}$ is inductively defined as follows.
\medskip

\begin{description}
\item[$\Der=\prooftree 
\justifies
	\Gamma\dersc \stopA \subcontrF \sigma'
\using(\mbox{\scriptsize $\CkptcomplAx \mbox{\,-\!}\subcontrF$})
\endprooftree$~]\hfill\\[2mm]
$\auxFunct{\Der}{\widehat\Gamma}= \bm{\lambda} f.~\stopf$\\

\bigskip
\item[$\Der=
\prooftree
\justifies
\Gamma, \sigma\subcontrF\sigma' \dersc \sigma\subcontrF\sigma'
\using(\mbox{\scriptsize $\CkptcomplHyp\mbox{\,-\!}\subcontrF$})
\endprooftree$~]\hfill\\[2mm]
$\auxFunct{\Der}{\widehat\Gamma}=\mbox{\sc f}$\\
if $\mbox{\sc f}: \sigma{\subcontrF}\sigma' \in \widehat\Gamma$
\bigskip

\item[$\Der=
\prooftree  \begin{array}{c}
                              \mathcal{D'}\\
\Gamma, \mbox{\small $\bigoplus$}_{i\in I}\Dual{a}_i.\sigma_i \subcontrF
\mbox{\small $\sum$}_{j\in J}\Dual{a}_j.\sigma'_j
\dersc \sigma_k
    	  \subcontrF
		\sigma'_k \quad (k\in I\cap J) \quad  \mid J\mid \geq 2
                \end{array}
\justifies
  \Gamma \dersc \mbox{\small $\bigoplus$}_{i\in I}\Dual{a}_i.\sigma_i \subcontrF
\mbox{\small $\sum$}_{j\in J}\Dual{a}_j.\sigma'_j
\using(\mbox{\scriptsize $\oplus\cdot+\mbox{\,-\!}\subcontrF$})
\endprooftree$~]\hfill\\[2mm]
$\auxFunct{\Der}{\widehat\Gamma} =
\rec \mbox{\sc f}.  \bm{\lambda} f.~\IF~~(f = \bigvee_{h\in H\supseteq I}  \orchAct{\Dual{a}_h}{a_h}.f_h)~~
 \THEN ~~\orchAct{\Dual{a}_k}{a_k}^+.\auxFunct{\Der'}{\widehat\Gamma'}(f_k)$~~
 \ELSE ~~$\stopf$,
where
$\widehat\Gamma' =  \widehat\Gamma, \mbox{\sc f}: \mbox{\small $\bigoplus$}_{i\in I}\Dual{a}_i.\sigma_i \subcontrF
\mbox{\small $\sum$}_{j\in J}\Dual{a}_j.\sigma'_j$.

Note that the interaction between a contract  and $\mbox{\small $\bigoplus$}_{i\in I}\Dual{a}_i.\sigma_i $ (with $|I|\geq 2$), in case they are affectible compliant, is mediated by an orchestrator
which is necessarily of the form $\bigvee_{h\in H\supseteq I}  \orchAct{\Dual{a}_h}{a_h}.f_h$
or $\stopf$ (in case the contract is $\stopA$).

\bigskip
\item[$\Der=
\prooftree 
    (\forall h\in I)~~ \begin{array}{c}
                          \mathcal{D}_h \\
\Gamma,\mbox{\small $\sum$}_{i\in I} a_i.{\sigma}_i\subcontrF
    	\mbox{\small $\sum$}_{j\in I\cup J} a_j.{\sigma'}_j\dersc 
    	\sigma_h
    	\subcontrF
    		\sigma'_h
                            \end{array}
  \justifies
    \Gamma\dersc \mbox{\small $\sum$}_{i\in I} a_i.{\sigma}_i\subcontrF
    	\mbox{\small $\sum$}_{j\in I\cup J} a_j.{\sigma'}_j
\using (\mbox{\scriptsize $+\cdot+\mbox{\,-\!}\subcontrF$}\mbox{\,-\!}1)
\endprooftree$~]\hfill\\[2mm]

\begin{tabbing}
$\auxFunct{\Der}{\widehat\Gamma}= $\\
 $\rec \mbox{\sc f}. \bm{\lambda} f.$~\IF~~\=$(f = \bigvee_{h\in H'}  \orchAct{a_h}{\Dual{a}_h}.f_h)$~~ \\
\>  $\THEN ~~\bigvee_{h\in H'\cap I} \orchAct{a_h}{\Dual{a}_h}.\auxFunct{\Der_h}{\widehat\Gamma'}(f_h)$\\
\>  \ELSE ~~~~\CASE~~\=$~~f$\\
\> \> ~~ \OF   ~~~~\=$\orchAct{a_k}{\Dual{a}_k}^+.f'_k$
 \quad
 \=$\textbf{:} \quad\orchAct{a_h}{\Dual{a}_h}^+.\auxFunct{\Der_h}{\widehat\Gamma'}(f'_h)$ \qquad $(\forall h\in I)$\\
\> \> \>\textbf{otherwise} \> \textbf{:} \quad $\stopf$
\end{tabbing}
where
$\widehat\Gamma' =  \widehat\Gamma, \mbox{\sc f}: \mbox{\small $\sum$}_{i\in I} a_i.{\sigma}_i\subcontrF
    	\mbox{\small $\sum$}_{j\in I\cup J} a_j.{\sigma'}_j$\\[1mm]

Note that a contract which is in $\ACRel$ relation with  $\mbox{\small $\sum$}_{i\in I} a_i.{\sigma}_i$ is either of the form $\mbox{\small $\oplus$}_{h\in H\subseteq I} \Dual{a}_h.{\sigma}'_h$ or of the form $\mbox{\small $\sum$}_{h\in H} \Dual{a}_h.{\sigma}'_h$ with $H\cap I \neq \emptyset$. Moreover, in the first case the orchestrator which mediates their interaction 
has necessarily the form $\bigvee_{h\in H'\supseteq H}  \orchAct{a_h}{\Dual{a}_h}.f_h$;
in the second case it has necessarily the form
 $\orchAct{a_k}{\Dual{a}_k}^+.f'$ with $k\in I$. We have also to take into account the possibility of $f$ to be $\stopf$ (in case the contract is $\stopA$).

\bigskip
\item[$\Der=
\prooftree 
    (\forall h\in I)~~ \begin{array}{c}
                          \mathcal{D}_h \\
\Gamma,\mbox{\small $\sum$}_{i\in I} \Dual{a}_i.{\sigma}_i\subcontrF
    	\mbox{\small $\sum$}_{j\in I\cup J} \Dual{a}_j.{\sigma'}_j\dersc 
    	\sigma_h
    	\subcontrF
    		\sigma'_h
                            \end{array}
  \justifies
    \Gamma\dersc \mbox{\small $\sum$}_{i\in I} \Dual{a}_i.{\sigma}_i\subcontrF
    	\mbox{\small $\sum$}_{j\in I\cup J} \Dual{a}_j.{\sigma'}_j
\using (\mbox{\scriptsize $+\cdot+\mbox{\,-\!}\subcontrF$}\mbox{\,-\!}2)
\endprooftree$~]\hfill\\[2mm]
\begin{tabbing}
$\auxFunct{\Der}{\widehat\Gamma}= $\\
$\rec \mbox{\sc f}. \bm{\lambda} f.$~\=\IF~ $|I|=1$ ~\AND\  $f=\orchAct{\Dual{a}}{a}.f'  \vee f''$\\
\> \THEN\  $\orchAct{\Dual{a}}{a}^+.\auxFunct{\Der'}{\widehat\Gamma}(f')$\\
\> \ELSE\   \=\CASE~~$~~f$\\
\> \> ~~ \OF   ~~~~\=$\orchAct{\Dual{a}_h}{a_h}^+.f'_h$
 \quad
 \=$\textbf{:} \quad\orchAct{\Dual{a}_h}{a_h}^+.\auxFunct{\Der_h}{\widehat\Gamma'}(f_h)$ \qquad $(\forall h\in I)$\\
\> \> \>\textbf{otherwise} \> \textbf{:} \quad $\stopf$
\end{tabbing}
where
$\widehat\Gamma' =  \widehat\Gamma, \mbox{\sc f}: \mbox{\small $\sum$}_{i\in I} \Dual{a}_i.{\sigma}_i\subcontrF
    	\mbox{\small $\sum$}_{j\in I\cup J} \Dual{a}_j.{\sigma'}_j$\\[1mm]
      
Note that a contract which is in $\ACRel$ relation with  $\mbox{\small $\sum$}_{i\in I} \Dual{a}_i.{\sigma}_i$ is of the form $\mbox{\small $\sum$}_{h\in H} a_h.{\sigma}'_h$ with $I\cap H\neq\emptyset$. Moreover, in case $|I|\geq 2$, the orchestrator which mediates their interaction 
has necessarily the form  $\orchAct{\Dual{a}_k}{a_k}^+.f'$
with $k\in I\cap H$. If, instead, $|I| = 1$, the orchestrator could also be of the form
$\orchAct{\Dual{a}}{a}.f' \vee f''$. In such a case, the functor has to transform the orchestration action $\orchAct{\Dual{a}}{a}$ into $\orchAct{\Dual{a}}{a}^+$
(see Example \ref{ex:functSeller} for an example of a case like that).
Notice that the present rule cannot be used in case $|I| = |I\cup J| = 1$.
We have also to take into account the possibility of $f$ to be $\stopf$ (in case the contract is $\stopA$).

\bigskip
\item[$\Der=
\prooftree 
     (\forall h\in I)~~ \begin{array}{c}
                          \mathcal{D}_h \\
 \Gamma,\mbox{\small $\bigoplus$}_{j\in I\cup J}\Dual{a}_j.{\sigma}_j\subcontrF \mbox{\small $\bigoplus$}_{i\in I} \Dual{a}_i.{\sigma'}_i \dersc 
    	\sigma_h
    	\subcontrF
    		\sigma'_h  
                         \end{array}
  \justifies
    \Gamma\dersc\mbox{\small $\bigoplus$}_{j\in I\cup J} \Dual{a}_j.{\sigma}_j\subcontrF \mbox{\small $\bigoplus$}_{i\in I} \Dual{a}_i.{\sigma'}_i    	
\using (\mbox{\scriptsize $\oplus\cdot\oplus\mbox{\,-\!}\subcontrF$})
\endprooftree$~]\hfill\\[2mm]
$\auxFunct{\Der}{\widehat\Gamma}= $\\
$ \rec \mbox{\sc f}. \bm{\lambda} f.$~\IF $f = \bigvee_{h\in H'}  \orchAct{\Dual{a}_h}{a_h}.f_h$
 \THEN~~ $\bigvee_{h\in I}\orchAct{\Dual{a}_h}{a_h}.\auxFunct{\Der_h}{\widehat\Gamma'}(f_h)$~~
 \ELSE ~~$\stopf$\\[1mm]
where
${\widehat\Gamma}'=  {\widehat\Gamma}, \mbox{\sc f}: \mbox{\small $\bigoplus$}_{j\in I\cup J} \Dual{a}_j.{\sigma}_j\subcontrF \mbox{\small $\bigoplus$}_{i\in I} \Dual{a}_i.{\sigma'}_i $

Note that a contract which is in $\ACRel$ relation with  $\mbox{\small $\bigoplus$}_{j\in I\cup J} \Dual{a}_j.{\sigma}_j$ is necessarily of the form $\mbox{\small $\sum$}_{h\in H\supseteq I\cup J} a_h.{\sigma}'_h$. Moreover, the orchestrator which mediates their interaction 
has necessarily the form $\bigvee_{h\in H'\supseteq I\cup J}  \orchAct{\Dual{a}_h}{a_h}.f_h$.
We have also to take into account the possibility of $f$ to be $\stopf$ (in case the contract is $\stopA$).

\end{description}
\end{defi}
It is easy to check the following.
\begin{fact}
The function $\mathcal{F}$ restricted to empty contexts is total, i.e.
given a derivation $\Der$ in $\dersc\!$,
the computation of $\Funct_\Der$ always returns an orchestrator functor.
\end{fact}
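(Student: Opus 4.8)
The plan is to strengthen the statement so that a structural induction on $\Der$ goes through. Totality of $\auxFunct{\cdot}{\cdot}$ on the empty accumulator cannot be proved directly, because the only clause of $\mathcal{F}$ that can get stuck is the hypothesis clause $\auxFunct{\Der}{\widehat\Gamma}=\mbox{\sc f}$, which is guarded by the side condition $(\mbox{\sc f}\,{:}\,\sigma\subcontrF\sigma')\in\widehat\Gamma$; whether this holds depends on what has been accumulated in $\widehat\Gamma$ along the way. Writing $\lfloor\widehat\Gamma\rfloor = \Set{\delta\subcontrF\gamma \mid (\mbox{\sc f}\,{:}\,\delta\subcontrF\gamma)\in\widehat\Gamma}$ for the set of statements underlying an extended environment, I would prove the following invariant by induction on the structure of $\Der$: \emph{for every subderivation $\Der :: \Gamma\dersc\sigma\subcontrF\sigma'$ and every extended environment $\widehat\Gamma$ with $\Gamma\subseteq\lfloor\widehat\Gamma\rfloor$, the computation $\auxFunct{\Der}{\widehat\Gamma}$ is defined and terminates.} The Fact is then the instance obtained at the root: a derivation in $\dersc$ (that is, of $\dersc\sigma\subcontrF\sigma'$, which abbreviates $\emptyset\dersc\sigma\subcontrF\sigma'$) has empty root environment, and $\emptyset\subseteq\lfloor\emptyset\rfloor$, so $\Funct_\Der=\auxFunct{\Der}{\emptyset}$ is defined.

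First I would record that derivations in $\dersc$ are \emph{finite} trees: the hypothesis axiom $(\CkptcomplHyp\text{-}\subcontrF)$ is precisely what lets a branch be closed instead of being unfolded forever, and proof search terminates by Proposition~\ref{prop:proofsearchtermsubcontr}. Hence structural induction on $\Der$ is legitimate and, since $\mathcal{F}$ recurses only on proper subderivations $\Der_h$ (respectively $\Der'$), termination of the whole computation is immediate once definedness is established. The easy cases are the axiom $(\CkptcomplAx\text{-}\subcontrF)$, whose clause returns $\bm{\lambda} f.\,\stopf$ unconditionally, and the four inference rules, in each of which the clause of $\mathcal{F}$ is a fixed piece of functor syntax — a $\rec\mbox{\sc f}.\,\bm{\lambda} f.\,(\ldots)$ wrapping an $\IF/\THEN/\ELSE$ or a $\CASE$ whose branches are the recursive results — so that definedness of $\auxFunct{\Der}{\widehat\Gamma}$ reduces to definedness of the recursive calls $\auxFunct{\Der_h}{\widehat\Gamma'}$.

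The crux of the argument is twofold. On the one hand, in the hypothesis case the conclusion is $\Gamma',\sigma\subcontrF\sigma'\dersc\sigma\subcontrF\sigma'$, so $\sigma\subcontrF\sigma'$ belongs to the judgment's environment $\Gamma$; by the invariant $\Gamma\subseteq\lfloor\widehat\Gamma\rfloor$ we get $\sigma\subcontrF\sigma'\in\lfloor\widehat\Gamma\rfloor$, which is exactly the existence of a binding $(\mbox{\sc f}\,{:}\,\sigma\subcontrF\sigma')\in\widehat\Gamma$ making the guarded clause applicable. On the other hand I must check that the invariant is \emph{preserved} when descending through an inference rule: each of the four rules of $\dersc$ passes to its premises the environment $\Gamma$ extended by the conclusion's $\subcontrF$-statement, while the matching clause of $\mathcal{F}$ recurses with $\widehat\Gamma' = \widehat\Gamma,\,\mbox{\sc f}\,{:}\,(\text{that same statement})$; therefore $\lfloor\widehat\Gamma'\rfloor = \lfloor\widehat\Gamma\rfloor\cup\Set{\text{conclusion's statement}}$ covers the premise environment whenever $\lfloor\widehat\Gamma\rfloor$ covers $\Gamma$, and the induction hypothesis applies to every premise subderivation. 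Composing the (defined) recursive functors under the surrounding syntax then yields a well-formed orchestrator functor for $\Der$.

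The main obstacle is conceptual rather than computational: pinning down the coupling invariant $\Gamma\subseteq\lfloor\widehat\Gamma\rfloor$, which is just the observation that the accumulator mirrors, binding for binding, the assumptions available to the hypothesis rule — the recursion variables in scope for the functor correspond exactly to the discharged coinductive hypotheses. Once this mirror is made explicit every case is a mechanical check. The only point of care is bookkeeping: by the freshness convention each introduced variable $\mbox{\sc f}$ is new, so $\widehat\Gamma'$ stays a well-formed extended environment with no clashes, and the binding found in the hypothesis case is indeed the one the clause expects.
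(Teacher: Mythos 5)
Your proof is correct: the coupling invariant $\Gamma\subseteq\lfloor\widehat\Gamma\rfloor$ between the environment of the judgment and the accumulator of $\mathcal{F}$, preserved by each of the four inference rules (both sides are extended with the conclusion's $\subcontrF$-statement) and discharged at the $(\mbox{\scriptsize $\CkptcomplHyp\mbox{\,-\!}\subcontrF$})$ axiom, is exactly what makes the structural induction on the (finite) derivation go through, and the instance at the root with $\Gamma=\widehat\Gamma=\emptyset$ yields the Fact. The paper offers no proof at all — the Fact is merely asserted as easy to check — and your argument is precisely the elaboration it implicitly relies on, namely that the accumulator mirrors, binding for binding, the assumptions available to the hypothesis rule.
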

\bigskip

\begin{exa}
Let us consider the affectible contracts ${\tt d} + {\tt b}.(\Dual{\tt b} \oplus \Dual{\tt c}) $ and ${\tt d}.\Dual{\tt a} + {\tt b}.(\Dual{\tt a} + \Dual{\tt c} + \Dual{\tt e})$.
Is possible to prove that 
$$
{\tt d} + {\tt b}.(\Dual{\tt b} \oplus \Dual{\tt c}) \subcontr {\tt d}.\Dual{\tt a} + {\tt b}.(\Dual{\tt a} + \Dual{\tt c}+ \Dual{\tt e}) 
$$
by means of the following derivation $\Der$ in system $\dersc$.

\[
\Inf[\mbox{\scriptsize $+\cdot+\mbox{\,-\!}\subcontrF$}\mbox{\,-\!}1]{
     \Inf[\mbox{\scriptsize $\CkptcomplAx \mbox{\,-\!}\subcontrF$}]{}
     { \Gamma_1 \dersc \stopf \subcontrF \Dual{\tt a} } \qquad  
      \Inf[\mbox{\scriptsize $\oplus\cdot+\mbox{\,-\!}\subcontrF$}]{ \Inf[\mbox{\scriptsize $\oplus\cdot\oplus\mbox{\,-\!}\subcontrF$}]{ \Inf[\mbox{\scriptsize $\CkptcomplAx \mbox{\,-\!}\subcontrF$}]{}{ \Gamma_3 \dersc  \stopf \subcontrF \stopf} }{ \Gamma_2 \dersc  \Dual{\tt c} \subcontrF \Dual{\tt c}  }  }
    { \Gamma_1 \dersc  \Dual{\tt b} \oplus \Dual{\tt c} \subcontrF \Dual{\tt a} + \Dual{\tt c} }
      }
{ \dersc {\tt d} + {\tt b}.(\Dual{\tt b} \oplus \Dual{\tt c}) \subcontrF  {\tt d}.\Dual{\tt a} + {\tt b}.(\Dual{\tt a} + \Dual{\tt c} + \Dual{\tt e}) }
\]
where
\lmcscenterline{
\begin{tabular}{l}
$\Gamma_1 = \Set{ {\tt d} + {\tt b}.(\Dual{\tt b} \oplus \Dual{\tt c}) \subcontrF  {\tt d}.\Dual{\tt a} + {\tt b}.(\Dual{\tt a} + \Dual{\tt c} + \Dual{\tt e})}$\\
$\Gamma_2 = \Gamma_1, \Dual{\tt b} \oplus \Dual{\tt c} \subcontrF \Dual{\tt a} + \Dual{\tt c}$\\
$\Gamma_3 = \Gamma_2, \Dual{\tt c} \subcontrF \Dual{\tt c} $
\end{tabular}}

\medskip

\renewcommand{\Price}{\tt{pr}}
\renewcommand{\Card}{\tt{crd}}
\renewcommand{\Cash}{\tt{csh}}
\renewcommand{\Belt}{\tt{bt}}
\renewcommand{\Bag}{\tt{bg}}

\noindent
According to Definition  \ref{def:orchfunctor}, out of $\Der$ we can compute the following functor

\begin{tabbing}
$\mathbf{F}_{\mathcal{D}}$ =$\bm{\lambda} f.~$\=\IF~~$(f =  \orchAct{\tt d}{\Dual{\tt d}}.f_{\tt d}\vee \orchAct{\tt b}{\Dual{\tt b}}.f_{\tt b}\vee f')$\\~~
\>  \THEN ~~\= $\orchAct{\tt d}{\Dual{\tt d}}.\stopf$ $\vee$ $\orchAct{\tt b}{\Dual{\tt b}}.\ldots  $\\
\> \ELSE~~~ \=\CASE~~~~$f$\\
\> \>   \OF   ~~~~\=$\orchAct{\tt d}{\Dual{\tt d}}^+.f_{\tt d}$
 ~
 \=$\textbf{:} ~\orchAct{\tt d}{\Dual{\tt d}}^+.\stopf$\\[1mm]
\> \>  \>$\orchAct{\tt b}{\Dual{\tt b}}^+.f_{\tt b}$
 ~
 $\textbf{:} ~\orchAct{\tt b}{\Dual{\tt b}}^+$.\=\IF~~ $(f_{\tt b}=\orchAct{\Dual{\tt b}}{\tt b}.f'_{\tt b} \vee \orchAct{\Dual{\tt c}}{\tt c}.f'_{\tt c} \vee f'')$\\
\>\>\> \> \THEN ~~  $\orchAct{\Dual{\tt c}}{\tt c}^+.\stopf$    \\[0mm]
\>\>\>\> \ELSE ~~ $\stopf$\\[1mm] 
\>\>\> $\mathbf{otherwise}$  ~~ $ \textbf{:} ~\stopf$\\[1mm]
{\small (We have omitted the part '$\ldots$' for sake of readability, as well as the vacuous $\rec$ binders.)}
\end{tabbing}

It is possible to check that
$$
\Dual{\tt c} + \Dual{\tt b}. ({\tt b} + {\tt c}) \ACRel {\tt d} + {\tt b}.(\Dual{\tt b} \oplus \Dual{\tt c})
$$
and that  
$$\mathsf{f}: \Dual{\tt c} + \Dual{\tt b}. ({\tt b} + {\tt c}) 
\complyO 
{\tt d} + {\tt b}.(\Dual{\tt b} \oplus \Dual{\tt c})$$
where
$\mathsf{f} = \orchAct{\tt b}{\Dual{\tt b}}^+.(\orchAct{\Dual{\tt b}}{\tt b}.\stopf \vee \orchAct{\Dual{\tt c}}{\tt c}.\stopf)$

We have now that 
$$\mathbf{F}_{\mathcal{D}}(\mathsf{f}) = \orchAct{\tt b}{\Dual{\tt b}}^+.\orchAct{\Dual{\tt c}}{\tt c}^+.\stopf$$
and $\mathbf{F}_{\mathcal{D}}(\mathsf{f}) :   \Dual{\tt c} + \Dual{\tt b}. ({\tt b} + {\tt c}) 
\complyO 
{\tt d}.\Dual{\tt a} + {\tt b}.(\Dual{\tt a} + \Dual{\tt c} + \Dual{\tt e}))$
\end{exa}

\begin{exa}
\label{ex:functSeller}
Let us take $f = \orchAct{\Bag}{\DBag}^+.\orchAct{\DPrice}{\Price}.(\orchAct{\Card}{\DCard}.\stopf\vee\orchAct{\Cash}{\DCash}.\stopf)$. We have seen in Example  \ref{ex:startoutorch} that
$f : \textsf{Buyer}\complyO\textsf{Seller}$. If we dub now $\Der'$ the derivation \ref{eq:exampledersubcontr} in Example \ref{ex:subcontrder}, we get that
$$
\Funct_{\Der'}(f) = 
 \orchAct{\Bag}{\DBag}^+.\orchAct{\DPrice}{\Price}^+.(\orchAct{\Card}{\DCard}.\stopf\vee\orchAct{\Cash}{\DCash}.\stopf).
$$
Notice how the functor does transform the orchestration action $\orchAct{\DPrice}{\Price}$
of $f$ into the action $\orchAct{\DPrice}{\Price}^+$. In fact, whereas $\orchAct{\DPrice}{\Price}$ in $f$ has simply to enable the exchange of the $\Price$ message, the 
new orchestrator in that point has to deal with an affectible choice, since also the summand $\DScrachcard$ is present in  $\mathsf{SellerII}$.
\end{exa}

It is possible to prove that any functor $\Funct_{\Der}$ behaves as expected.

\begin{thm}[$\dersc$ derivations as orchestrator functors]
\label{thm:functder}
Given $\Der ::\ \dersc \sigma\subcontr\sigma'$, it is possible to
compute a functor $\Funct_\Der:\Orch\rightarrow\Orch$ such that:
\lmcscenterline{
for any $\rho$ and $f$ such that $f:\rho\complyO\sigma$, it holds that
$\Funct_\Der(f):\rho\complyO\sigma'$.}
\end{thm}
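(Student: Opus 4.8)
The plan is to prove the statement by induction on the structure of the derivation $\Der$, following exactly the recursive clauses defining $\mathcal F$ in Definition~\ref{def:orchfunctor}, and to package the verification as a reachability invariant on orchestrated systems. Throughout I would use Theorem~\ref{th:complyAwinstrat} (Main Theorem~I) to pass freely between $f:\rho\complyO\sigma$ and $\rho\ACRel\sigma$, and I would argue directly from the orchestrated LTS and from Definition~\ref{def:disstrictcompl}, which reduces the goal $\Funct_\Der(f):\rho\complyO\sigma'$ to the claim that every $\Lts{}$-stuck configuration reachable from $\rho\pf{\Funct_\Der(f)}\sigma'$ has $\stopA$ as its client component.

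First I would prove a \emph{shape lemma}: whenever $f:\rho\complyO\sigma$, the pair $(\rho,f)$ is forced into the form dictated by the head constructor of $\sigma$, precisely as recorded in the ``Note that\ldots'' remarks accompanying each clause of Definition~\ref{def:orchfunctor}. For example, if $\sigma=\bigoplus_{i\in I}\Dual a_i.\sigma_i$ with $|I|\ge 2$ then $f=\bigvee_{h\in H\supseteq I}\orchAct{\Dual a_h}{a_h}.f_h$ (or $f=\stopf$, which happens only when $\rho=\stopA$), and symmetrically for inputs, for affectible outputs, and for the terminated case $\sigma=\stopA$, where one obtains $\rho=\stopA$. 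Each such fact follows by inspecting the orchestrated LTS together with $\rho\ACRel\sigma$: a stuck configuration with a non-$\stopA$ client would contradict compliance, so $f$ must enable at least one synchronisation, and the only admissible synchronisations are those jointly offered by $\rho$, $\sigma$ and $f$; this both pins down the head of $f$ and guarantees that $f$ covers every internal-choice branch the client may take. This lemma is exactly what ensures that, when $\Funct_\Der(f)$ performs its \textbf{if}/\textbf{case} analysis on $f$, the intended branch is selected, and it is used in every inductive case.

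The core is then a simulation/invariant argument. I would define a set $\mathcal R$ of orchestrated configurations whose elements are the systems $\rho_1\pf{g_1}\sigma_1'$ arising along reductions of $\rho\pf{\Funct_\Der(f)}\sigma'$, each tagged with the matching configuration $\rho_1\pf{f_1}\sigma_1$ reached on the $f$-side and with the subderivation of $\Der$ (and its extended environment $\widehat\Gamma$) still governing it. I would then check the two conditions that, by Definition~\ref{def:disstrictcompl}, suffice for $\Funct_\Der(f):\rho\complyO\sigma'$: (i) $\mathcal R$ is closed under $\Lts{}$, and (ii) every $\Lts{}$-stuck element of $\mathcal R$ has $\stopA$ as client. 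Both are verified by case analysis on the last rule of the governing subderivation, which amounts to structural induction following the clauses of $\mathcal F$. In each case the shape lemma fixes the form of $f_1$, the matching clause of $\mathcal F$ fixes $g_1$, and one shows that a $\Lts{}$-step of $\rho_1\pf{g_1}\sigma_1'$ is matched by a $\Lts{}$-step of $\rho_1\pf{f_1}\sigma_1$ whose residuals re-enter $\mathcal R$ through a premise $\Der_h$ under the extended environment $\widehat\Gamma'$, whereas a stuck $\rho_1\pf{g_1}\sigma_1'$ forces the matching $\rho_1\pf{f_1}\sigma_1$ to be stuck as well, so that its client $\rho_1$ equals $\stopA$ by compliance of $f$. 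The axiom $(\CkptcomplAx\mbox{-}\subcontrF)$ is the base case (there $\rho=\stopA$ and $g=\stopf$), and the hypothesis rule $(\CkptcomplHyp\mbox{-}\subcontrF)$ is discharged by the back-edge created by the matching $\rec\mbox{\sc f}$ binder: reading recursive orchestrators up to unfolding, the variable returned at the leaf re-enters $\mathcal R$ at the ancestor configuration that had introduced the assumption, so that $\mathcal R$ admits a finite description and is genuinely closed.

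The main obstacle is the delicate matching of steps in the cases where $\mathcal F$ alters an orchestration action. The sharpest point is rule $(+\cdot+\mbox{-}\subcontrF\mbox{-}2)$ with $|I|=1$, where $\mathcal F$ rewrites an \emph{unaffectible} action $\orchAct{\Dual a}{a}$ of $f$ into the \emph{affectible} action $\orchAct{\Dual a}{a}^+$ of $g$ (cf.\ Example~\ref{ex:functSeller}): here I must verify that the new synchronisation is a legal reduction of the orchestrated LTS on \emph{both} the $\sigma'$-side and the client side, which holds precisely because the shape lemma guarantees that the corresponding branch of $\rho$ is an affectible output/input able to fire the $\orchAct{\Dual a}{a}^+$ action. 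A second delicate point is the $(+\cdot+\mbox{-}\subcontrF\mbox{-}1)$ case, where $g$ keeps only the disjuncts indexed by $H'\cap I$; one must check that this still covers every unaffectible output the client can internally select, which follows from the fact that $f$, being compliant with $\sigma$, already covered all of them and that these indices lie in $I\subseteq I\cup J$. Once these matchings and the covering condition are established, conditions (i)--(ii) hold, $\mathcal R$ witnesses compliance, and instantiating at the root of $\Der$ with the empty environment yields $\Funct_\Der(f):\rho\complyO\sigma'$, as required.
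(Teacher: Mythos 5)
Your proposal is correct, but it follows a genuinely different route from the paper's own proof. The paper argues proof-theoretically: it converts the hypothesis $f:\rho\complyO\sigma$ into a derivation in the sound and complete system $\derOrch$ (Theorem~\ref{th:scder}), passes to the infinitary systems $\derOrchinfty$, $\derscinfty$, $\derinfty$ (Lemma~\ref{lem:derderinftyequiv}), and then runs a derivation-merging procedure $\PProcinfty$ --- essentially a cut of the compliance derivation against the subcontract derivation --- showing it never fails and that the resulting decorated regular derivation can be read back as a derivation of $\Funct_{\Der''}(f):\rho\complyOF\sigma'$, whence compliance follows by soundness. You instead stay entirely on the operational side: you isolate as an explicit shape lemma the facts the paper only records as the ``Note that\ldots'' remarks inside Definition~\ref{def:orchfunctor}, and then run a simulation/safety-invariant argument, matching $\Lts{}$-steps of $\rho_1\pf{g_1}\sigma_1'$ by steps of $\rho_1\pf{f_1}\sigma_1$ and transferring stuckness so that compliance of $f$ forces the client to be $\stopA$; since Definition~\ref{def:disstrictcompl} is a safety property, closure of $\mathcal{R}$ plus stuckness transfer suffices and no progress or termination argument is needed, your back-edge treatment of $(\CkptcomplHyp\mbox{\,-\!}\subcontrF)$ via the $\rec$ binders being the semantic counterpart of the paper's move to regular infinite derivations. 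The paper's route buys reuse of machinery already established (completeness of $\derOrch$, regularity of infinitary derivations) and yields a formal derivation as certificate, at the price of a clause-by-clause non-failure check of $\PProcinfty$; your route buys a more self-contained, elementary argument that additionally proves the shape facts the paper leaves implicit and pinpoints exactly the delicate clauses (the promotion of $\orchAct{\Dual{a}}{a}$ to $\orchAct{\Dual{a}}{a}^+$ when $|I|=1$, and the restriction of disjuncts to $H'\cap I$). One small imprecision: in your shape lemma for $\sigma=\bigoplus_{i\in I}\Dual{a}_i.\sigma_i$ the correct dichotomy is ``either $f$ is a disjunction covering $I$, or $\rho=\stopA$'' --- when $\rho=\stopA$ the orchestrator need not be $\stopf$ --- but your invariant already treats the two horns exactly this way, so nothing breaks.
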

\proof
See  Appendix \ref{subsec:functder}.
\qed

%

\section{Conclusion and Future Work}
\label{sec:concl}

We have studied two approaches to loosening compliance among a client and a server in contract theory, based on the concepts of dynamic adaptation and of mediated interaction respectively. We have seen that these induce equivalent notions of compliance, which can be shown
via the abstract concept of winning strategy in a suitable class of games.

The byproduct is that the existence of the agreement among two contracts specifying adaptive behaviours is established by statically synthesizing the proper orchestrator, hence avoiding any trial and error mechanism at run time. The study in this paper has been limited to the case of binary sessions since this is the setting in which both orchestrators and retractable contracts have been introduced. However strategy based concepts of agreement have been developed in the more general scenario of multiparty interaction, which seems a natural direction for future work.

\subsubsection*{Acknowledgments.} The authors wish to thank Mariangiola Dezani for her support and Massimo Bartoletti for the preliminary insight that led to the development of the paper.

\bibliographystyle{plain}
\bibliography{session}

\begin{thebibliography}{10}

\bibitem{AM99}
S.~Abramsky and P.-A. Mellies.
\newblock Concurrent games and full completeness.
\newblock In {\em Logic in Computer Science, 1999. Proceedings. 14th Symposium
  on}, pages 431--442, 1999.

\bibitem{BdL13}
Franco Barbanera and Ugo de' Liguoro.
\newblock {Sub-behaviour relations for session-based client/server systems}.
\newblock {\em MSCS}, 25(6):1339--1381, 2015.

\bibitem{BdL16}
Franco Barbanera and Ugo de' Liguoro.
\newblock A game interpretation of retractable contracts.
\newblock In {\em {COORDINATION} 2016}, volume 9686 of {\em Lecture Notes in
  Computer Science}, pages 18--34. Springer, 2016.

\bibitem{BDLdL15}
Franco Barbanera, Mariangiola Dezani{-}Ciancaglini, Ivan Lanese, and Ugo de'
  Liguoro.
\newblock Retractable contracts.
\newblock In {\em PLACES}, volume 203 of {\em {EPTCS}}, pages 61--72. Open
  Publishing Ass., 2015.

\bibitem{BdL15ice}
Franco Barbanera, Steffen van Bakel, and Ugo de' Liguoro.
\newblock Orchestrated session compliance.
\newblock In {\em {\rm Proceedings ICE'15}}, volume 189 of {\em EPTCS}, pages
  21--36, 2015.

\bibitem{BdL15}
Franco Barbanera, Steffen van Bakel, and Ugo de' Liguoro.
\newblock Orchestrated session compliance.
\newblock {\em Journal of Logical and Algebraic Methods in Programming}, 2016.

\bibitem{BartJLAMP}
Massimo Bartoletti, Tiziana Cimoli, G.~Michele Pinna, and Roberto Zunino.
\newblock Contracts as games on event structures.
\newblock {\em J.of Logical and Algebraic Methods in Progr.}, 85(3):399 -- 424,
  2016.

\bibitem{BernardiH13}
Giovanni Bernardi and Matthew Hennessy.
\newblock Compliance and testing preorders differ.
\newblock In {\em Software Engineering and Formal Methods - {SEFM} 2013},
  volume 8368 of {\em LNCS}, pages 69--81, 2013.

\bibitem{BH13}
Giovanni~Tito Bernardi and Matthew Hennessy.
\newblock Modelling session types using contracts.
\newblock {\em Mathematical Structures in Computer Science}, 26(3):510--560,
  2016.

\bibitem{BravettiZ09a}
Mario Bravetti and Gianluigi Zavattaro.
\newblock A theory of contracts for strong service compliance.
\newblock {\em Mathematical Structures in Computer Science}, 19(3):601--638,
  2009.

\bibitem{CCLP06}
S.~Carpineti, G.~Castagna, C.~Laneve, and L.~Padovani.
\newblock A formal account of contracts for {Web} {S}ervices.
\newblock In {\em WS-FM}, number 4184 in LNCS, pages 148--162. Springer, 2006.

\bibitem{CGP10}
Giuseppe Castagna, Nils Gesbert, and Luca Padovani.
\newblock A theory of contracts for web services.
\newblock {\em ACM Trans. on Prog. Lang. and Sys.}, 31(5):19:1--19:61, 2009.

\bibitem{GH05}
Simon Gay and Malcolm Hole.
\newblock {Subtyping for Session Types in the Pi-Calculus}.
\newblock {\em Acta Informatica}, 42(2/3):191--225, 2005.

\bibitem{LP07}
Cosimo Laneve and Luca Padovani.
\newblock {The Must Preorder Revisited: An Algebraic Theory for Web Services
  Contracts}.
\newblock In {\em CONCUR'07}, volume 4703 of {\em LNCS}, pages 212--225.
  Springer, 2007.

\bibitem{PadTCS}
Luca Padovani.
\newblock Contract-based discovery and adaptation of web services.
\newblock In {\em SFM}, volume 5569 of {\em LNCS}, pages 213--260. Springer,
  2009.

\bibitem{Padovani10}
Luca Padovani.
\newblock {C}ontract-{B}ased {D}iscovery of {W}eb {S}ervices {M}odulo {S}imple
  {O}rchestrators.
\newblock {\em Theoretical Computer Science}, 411:3328--3347, 2010.

\bibitem{WinskelG:eves}
Glynn Winskel.
\newblock Event structures.
\newblock In {\em Advances in Petri Nets 1986, {P}art {I}{I}}, 1987.

\end{thebibliography}

\appendix



\section{Proof of Proposition \ref{th:complyequivtbcomply} ($\ACRel$ and $\complyTB$ equivalence).}
\label{appendix:complyequivtbcomply}


The proof of Theorem \ref{th:complyequivtbcomply} is developed in the present section
along the following lines:
\begin{itemize}
\item
We first recall the equivalent stratified version of Definition \ref{def:ACRel};
\item
To each pair ${\rho}\pp{\sigma}$ we associate a set of regular trees $\rts{{\rho}\pp{\sigma}}$ 
and show that ${\rho}\ACRel{\sigma}$ holds if and only if 
there exists a tree in $\rts{{\rho}\pp{\sigma}}$ with no leaf labeled by the symbol  $\xmark$;
\item
A set $\rts{{\rho}\tbpp{\sigma}}$ of regular trees is also associated to any turn-based
system ${\rho}\tbpp{\sigma}$. Also for ${\rho}\tbpp{\sigma}$, we prove ${\rho}\complyTB{\sigma}$ to hold if and only if 
there exists a tree in $\rts{{\rho}\tbpp{\sigma}}$ with no leaf labeled by $\xmark$;
\item
We conclude by showing how to map $\rts{{\rho}\pp{\sigma}}$ to $\rts{{\rho}\tbpp{\sigma}}$
and vice versa so that a tree without $\xmark$ is always sent to a tree with the same property.
\end{itemize}

In Definition \ref{def:coACRel} we have seen that the co-inductive definition of
$\ACRel$ (Def. \ref{def:ACRel}) can be stratified by the family $\Set{ \!{\ACRelk{k}}\!}_{k\in \Nat}$, such that $\ACRelk{k} \subseteq \ACRelk{k-1}$ for all $k$ and  $\ACRel = \bigcap_{n\in\Nat} \ACRelk{n}$.

\begin{defi}
Let ${\rho},{\sigma}\in\ASC$. We define the set of regular trees of the pair ${\rho}\pp{\sigma}$, which we dub $\rts{{\rho}\pp{\sigma}}$, as follows:\\
\[
\begin{array}{rcl@{~~~~~~~~}l}

\rts{\stopA\pp{\sigma}} & = 
& \Big\{ ~
             \stopA\pp{\sigma} ~\Big\}\\[6mm]

\rts{{\rho}\pp{\sigma}} & = 
 & \Big\{ \begin{array}{c}
            {\rho}\pp{\sigma} \\
            / \cdots\backslash \\
            \mathsf{T}_1 \cdots ~~\mathsf{T}_n
             \end{array}~ \Big| ~~h\in I, \mathsf{T}_h\in \rts{{\rho}_h\pp{\sigma}_h}\Big\}\\[2mm]
& & \mbox{if either }  \rho = \bigoplus_{i\in I}\Dual{a}_i.\rho_i, \hspace{1mm}\sigma = \sum_{j\in J}a_j.\sigma_j,\hspace{1mm} I\subseteq J\\
& &  \mbox{\hspace{2.5mm} or \hspace{6mm}} \rho = \sum_{j\in J}a_j.\rho_j, \hspace{1mm}\sigma = \bigoplus_{i\in I}\Dual{a}_i.\sigma_i,
	\hspace{1mm}I\subseteq J\\
& & \mbox{where } I=\Set{1,..,n}. 
\\[4mm]

\rts{{\rho}\pp{\sigma}} & = 
 & \bigcup_{h\in I\cap J}\Big\{  
\begin{array}{c}
            {\rho}\pp{\sigma} \\
            \mid  \\
            \mathsf{T}
             \end{array}   ~ \Big| ~~ \mathsf{T} \in \rts{{\rho}_h\pp{\sigma}_h}
\Big\}\\
& & \mbox{if } \rho = \sum_{i\in I}\alpha_i.\rho_i, \hspace{1mm}\sigma = \sum_{j\in J}\Dual{\alpha}_j.\sigma_j,\hspace{1mm}
	h \in I \cap J 
	\\[6mm]

\rts{{\rho}\pp{\sigma}} & = 
& \Big\{  
\begin{array}{c}
            {\rho}\pp{\sigma} \\
            \mid  \\
            \xmark
             \end{array}
\Big\} ~~~~~~~~~~
  \mbox{ otherwise } 
\end{array}
\]
\end{defi}

%

\begin{lem}\label{lem:equivtb2}
 Let $\rho, \sigma \in \ASC$.
$\rho\ACRel\sigma$ ~~  iff  ~~ there exists a tree in $\rts{\rho\pp\sigma}$ without any leaf labeled by $\xmark$.
\end{lem}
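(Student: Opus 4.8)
The plan is to exploit the fact that the four clauses in the definition of $\rts{\rho\pp\sigma}$ are in exact correspondence with the clauses of the operator $\FunK$ of Definition \ref{def:coACRel}: the $\stopA$-clause matches clause~1 of $\FunK$; the single-child ``affectible'' clause (both contracts are sums, $h\in I\cap J$) matches the \emph{existential} clause~\ref{lem:coinductiveChar-2}; the all-children clause matches the \emph{universal} clauses \ref{lem:coinductiveChar-3} and \ref{lem:coinductiveChar-4}; and the ``otherwise'' clause, which is the \emph{only} one producing a $\xmark$-leaf, corresponds precisely to a pair $(\rho',\sigma')$ lying outside $\FunK(\mathcal{X})$ for every $\mathcal{X}$. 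I would first record the structural observation, immediate from the recursive shape of the definition, that the subtree of any $\mathsf{T}\in\rts{\rho\pp\sigma}$ rooted at a node labelled $\rho'\pp\sigma'$ is itself a member of $\rts{\rho'\pp\sigma'}$; and that a node is a $\xmark$-labelled leaf exactly when its parent pair falls in the ``otherwise'' case. Throughout I would rely on Fact \ref{fact:relrelco}, so that $\ACRel=\nu\FunK$ and it suffices to reason with post-fixpoints of $\FunK$.

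For the ($\Leftarrow$) direction I would argue by coinduction. Given a tree $\mathsf{T}\in\rts{\rho\pp\sigma}$ with no $\xmark$-leaf, let $\Rel$ be the set of all pairs $(\rho',\sigma')\in\ASC\times\ASC$ occurring as labels of nodes of $\mathsf{T}$; note $(\rho,\sigma)\in\Rel$, being the root label. I claim $\Rel\subseteq\FunK(\Rel)$. Take any $(\rho',\sigma')\in\Rel$ and a node $v$ carrying that label; the subtree at $v$ lies in $\rts{\rho'\pp\sigma'}$ and is $\xmark$-free, so it was produced by one of the first three tree-clauses. If by the $\stopA$-clause then $\rho'=\stopA$ and clause~1 of $\FunK$ holds; if by the affectible clause then the unique child's label $(\rho'_h,\sigma'_h)$ belongs to $\Rel$ and $h\in I\cap J$, witnessing clause~\ref{lem:coinductiveChar-2}; if by the all-children clause then every $(\rho'_h,\sigma'_h)$ with $h\in I$ belongs to $\Rel$, witnessing clause~\ref{lem:coinductiveChar-3} or \ref{lem:coinductiveChar-4}. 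Hence $\Rel$ is an affectible compliance relation, so $\Rel\subseteq\nu\FunK=\ACRel$, and in particular $\rho\ACRel\sigma$.

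For the ($\Rightarrow$) direction I would build a $\xmark$-free tree coinductively from $\rho\ACRel\sigma$. Since $\ACRel=\nu\FunK$, every pair $(\rho',\sigma')$ with $\rho'\ACRel\sigma'$ satisfies $(\rho',\sigma')\in\FunK(\ACRel)$, so one of clauses 1--4 of $\FunK$ holds; I expand the node accordingly, making it a leaf when $\rho'=\stopA$, attaching the single child $\rho'_h\pp\sigma'_h$ for a \emph{fixed} chosen $h\in I\cap J$ in the existential case, and attaching all children $\rho'_h\pp\sigma'_h$, $h\in I$, in the universal case. By construction every node label stays in $\ACRel$, so the ``otherwise'' clause is never invoked and no $\xmark$-leaf is created. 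To see the resulting object genuinely belongs to $\rts{\rho\pp\sigma}$ I must check it is a \emph{regular} tree: since $\rho,\sigma$ are regular, only finitely many pairs are reachable under the transitions used, and if the existential choice $h$ is taken as a fixed function of the label $(\rho',\sigma')$ (e.g.\ the least admissible index), then the subtree at a node is determined by its label, whence there are finitely many distinct subtrees.

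The main obstacle I anticipate is entirely in this last, ($\Rightarrow$), step: justifying that the coinductively specified tree is well defined as an element of the coinductively defined set $\rts{\rho\pp\sigma}$, and in particular that a uniform choice of witnesses yields finitely many distinct subtrees so that regularity holds. The bookkeeping distinguishing the existential clause~\ref{lem:coinductiveChar-2} (one child) from the universal clauses \ref{lem:coinductiveChar-3}, \ref{lem:coinductiveChar-4} (all children), and verifying that these are exhaustive and mutually exclusive with the $\xmark$-producing case, is routine but must be laid out carefully. By contrast the ($\Leftarrow$) direction is a clean post-fixpoint argument once the subtree observation is in place.
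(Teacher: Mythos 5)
Your ($\Leftarrow$) direction coincides with the paper's: you collect the node labels of the $\xmark$-free tree into a relation $\Rel$, verify it is a post-fixpoint of $\FunK$ (using the observation that each subtree of a member of $\rts{\rho\pp\sigma}$ rooted at a node labelled $\rho'\pp\sigma'$ lies in $\rts{\rho'\pp\sigma'}$), and conclude from $\ACRel=\nu\FunK$. Your ($\Rightarrow$) direction, however, is genuinely different. The paper works with the stratification $\ACRel=\bigcap_{k}\ACRelk{k}$ and shows by induction on $k$ that $\rho\ACRelk{k}\sigma$ yields a tree $\mathsf{T}\in\rts{\rho\pp\sigma}$ whose cut $\mathsf{T}{\downarrow}_k$ at level $k$ has no $\xmark$-leaf; note that the trees it produces may vary with $k$, so passing from ``for every $k$ some tree is $\xmark$-free up to level $k$'' to ``some single tree is $\xmark$-free everywhere'' still requires a compactness or regularity argument that the paper leaves implicit. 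You instead exploit the fixpoint equation $\ACRel=\FunK(\ACRel)$ to run a single corecursive construction: every $\ACRel$-pair falls under one of the clauses of $\FunK$ with all designated children again in $\ACRel$, so expanding nodes accordingly --- with a canonical least-index witness in the existential clause --- can never invoke the $\xmark$-producing case. This buys a one-shot construction with no limit step, at the price of having to check membership in $\rts{\rho\pp\sigma}$, i.e.\ regularity of the resulting tree; your argument (contracts are regular, hence finitely many reachable pairs, and the uniform choice function makes each subtree a function of its root label, hence finitely many distinct subtrees) settles this correctly. In short, your proof makes explicit exactly the point the paper's induction glosses over, and is if anything the tighter of the two arguments.
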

\proof
($\Leftarrow$)~~
Let $\mathsf{T}\in\rts{\rho\pp\sigma}$ such that no leaf is labeled by $\xmark$ and consider the following relation:
$$\mathcal{R} = \Set{(\rho',\sigma') \mid \rho'\pp\sigma' \mbox{ is a node of } \mathsf{T}}$$
Then $\rho\, \mathcal{R}\, \sigma$ as ${\rho}\pp{\sigma}$ is the label of the root of $\mathsf{T}$. Besides,
it is easy to check that $\mathcal{R}$ is an affectible compliance relation according to
Definition \ref{def:coACRel}.\\

\noindent
($\Rightarrow$)~~ Recall that $\ACRel = \bigcap_{k\in\Nat} \ACRel_{\!k}$. Then we prove by induction over $k$ that 
if $\rho\ACRel_{\!k} \sigma$ then there exists a tree $\mathsf{T} \in \rts{\rho\pp\sigma}$ such the cut of $\mathsf{T}$ at level $k$, written
$\mathsf{T} {\downarrow}_k$ has no leaf labelled by $\xmark$. 

If $k =0$ we just observe that $\mathsf{T} {\downarrow}_0$ is the root of a tree $\mathsf{T} \in \rts{\rho\pp\sigma}$, that cannot be labelled by 
$\xmark$. 

Let $k>0$, then we proceed according to the cases in the definition of $\mathsf{T}$. 
If $\mathsf{T}{\downarrow}_k\  = \stopA\pp{\sigma}$ there is nothing to prove.
The case $\mathsf{T}{\downarrow}_k ~ = \Big\{  
\begin{array}{c}
            {\rho}\pp{\sigma} \\
            \mid  \\
            \xmark
             \end{array}
\Big\}$ is impossible since it implies that none of the cases of Definition \ref{def:coACRel}(\ref{def:coACRel-i}) holds. This contradicts $\rho\ACRel_{\!k} \sigma$, as $k>0$.

Suppose that $\mathsf{T} ~ = \begin{array}{c}
            {\rho}\pp{\sigma} \\
            / \cdots\backslash \\
            \mathsf{T}_1 \cdots ~~\mathsf{T}_n
             \end{array} \in \rts{\rho\pp\sigma}$, where 
$\rho$ and $\sigma$ are as described in the definition.
The hypothesis $\rho\ACRel_{\!k}\sigma$ implies that $\rho_h\ACRel_{\! k-1}\sigma_h$ for all $h$; hence by induction, for all $h$
there exists $\mathsf{T}'_h \in \rts{\rho_h\pp\sigma_h}$ such that $\mathsf{T}'_h {\downarrow}_{k-1}$ has no leaf labelled by $\xmark$. 
It follows that $\mathsf{T}' = \begin{array}{c}
            {\rho}\pp{\sigma} \\
            / \cdots\backslash \\
            \mathsf{T}'_1 \cdots ~~\mathsf{T}'_n
             \end{array}$ 
is a tree in $\rts{\rho\pp\sigma}$, such that $\mathsf{T}' {\downarrow}_k$ has no leaf labelled by $\xmark$.

Finally if $\mathsf{T} ~ = \begin{array}{c}
            {\rho}\pp{\sigma} \\
            \mid  \\
            \mathsf{T}_h
             \end{array}$
where $\rho$ and $\sigma$ are as in the third case of the definition and $h\in I\cap J$ and
$\mathsf{T}_h \in \rts{\rho_h\pp\sigma_h}$, then by induction and reasoning as in the previous case we get
a tree $\mathsf{T}'_h \in \rts{\rho_h\pp\sigma_h}$ with no leaf labelled by $\xmark$. Hence 
$\mathsf{T}' ~ = \begin{array}{c}
            {\rho}\pp{\sigma} \\
            \mid  \\
            \mathsf{T}'_h
             \end{array} \in \rts{\rho\pp\sigma}$
is such that $\mathsf{T}' {\downarrow}_k$ has no leaf labelled by $\xmark$.
\qed

\begin{defi}
\label{def:rtstbpp}
Let $\tilde{\rho},\tilde{\sigma}\in\sctb$. We define the set of regular trees of the system $\tilde{\rho}\tbpp\tilde{\sigma}$, which we dub $\rts{\tilde{\rho}\tbpp\tilde{\sigma}}$; 
let $B=\{\playerA{:}a,\playerA{:}\Dual{a},\playerB{:}a,\playerB{:}\Dual{a} \mid a\in\Names\}$, then:
\[
\begin{array}{rcl@{~~~~~~~~}l}

\rts{\stopA\tbpp\tilde{\sigma}} & = 
& \Big\{ ~\begin{array}{c}
             \stopA\tbpp\tilde{\sigma}\\
            \mid \\
           \Zero\tbpp\tilde{\sigma}
            \end{array} ~\Big\}\\[6mm]

\rts{\tilde{\rho}\tbpp\tilde{\sigma}} & = 
 & \Big\{ \begin{array}{c}
            \tilde{\rho}\tbpp\tilde{\sigma} \\
            / \cdots\backslash \\
            \mathsf{T}_1 \cdots ~~\mathsf{T}_n
             \end{array}~ \Big| ~~ \mathsf{T}_1\in \rts{\tilde{\rho}_1\tbpp\tilde{\sigma}_1}, \ldots ,\mathsf{T}_n\in \rts{\tilde{\rho}_n\tbpp\tilde{\sigma}_n}\Big\}\\
 & &
\mbox{where } \Set{\tilde{\rho}_i\tbpp\tilde{\sigma}_i}_{i=1..n} = \Set{\tilde{\rho}'\tbpp\tilde{\sigma}' ~\mid~\tilde{\rho}\tbpp\tilde{\sigma} \tblts{\beta}\tilde{\rho}'\tbpp\tilde{\sigma}', ~\beta\in B } \neq \emptyset \\[6mm]


\rts{\tilde{\rho}\tbpp\tilde{\sigma}} & = 
 & \Big\{  
\begin{array}{c}
            \tilde{\rho}\tbpp\tilde{\sigma} \\
            \mid  \\
            \mathsf{T}
             \end{array}   ~ \Big| \quad \exists \tilde{\rho}', \tilde{\sigma}', a.~
             \tilde{\rho}\tbpp\tilde{\sigma} \tblts{\playerC:a}\tilde{\rho}'\tbpp\tilde{\sigma}', \mathsf{T} \in
             	\rts{\tilde{\rho}'\tbpp\tilde{\sigma}'} \Big\}
             \\[6mm]

\rts{\tilde{\rho}\tbpp\tilde{\sigma}} & = 
& \Big\{  
\begin{array}{c}
            \tilde{\rho}\tbpp\tilde{\sigma} \\
            \mid  \\
            \xmark
             \end{array}
\Big\}
~~~~~~~~  \mbox{ if } ~~\rho\neq\stopA \mbox{ and }\tilde{\rho}\tbpp\tilde{\sigma} ~\notblts{}\\[2mm]

\end{array}
\]
\end{defi}

%

\begin{lem}\label{lem:equivtb1}
 Let $\rho, \sigma \in \ASC (\subseteq \tbASC)$.\\
$\rho\complyTB\sigma$ ~~  iff  ~~ there exists a tree in $\rts{\rho\tbpp\sigma}$
such that no leaf is labeled by $\xmark$. 
\end{lem}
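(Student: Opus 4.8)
The proof will parallel that of Lemma~\ref{lem:equivtb2}, with $\ACRel$ replaced by $\complyTB=\nu\FunH$ and the systems $\rho\pp\sigma$ replaced by turn-based configurations $\tilde\rho\tbpp\tilde\sigma$. First I would record a structural fact about the LTS of Figure~\ref{fig:tb-opsem}: at any single-buffered configuration the four defining clauses of $\rts{\tilde\rho\tbpp\tilde\sigma}$ (Definition~\ref{def:rtstbpp}) are mutually exclusive and jointly exhaustive. Indeed a move labelled in $B = \Set{\playerA{:}a,\playerA{:}\Dual{a},\playerB{:}a,\playerB{:}\Dual{a} \mid a\in\Names}$ requires one side to be an internal choice $\bigoplus_i\Dual{a}_i.\sigma_i$, or a buffer $\buf{\Dual{a}}\sigma$ facing an external input $\sum_j a_j.\sigma_j$, whereas a $\playerC{:}a$ move requires an affectible output $\sum_i\Dual{a}_i.\sigma_i$ facing an external input; these shapes cannot coexist, so $B$-moves and $\playerC{:}a$-moves are never simultaneously enabled. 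Moreover $\Zero$ and the winning label $\cmark$ are produced only by $\stopA\tbpp\tilde\sigma\tblts{\playerC:\cmark}\Zero\tbpp\tilde\sigma$, so $\Zero$ occurs solely as the leaf child of a $\stopA$-node and is never a $B$- or $\playerC{:}a$-successor.

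For ($\Leftarrow$), given $\mathsf{T}\in\rts{\rho\tbpp\sigma}$ with no leaf labelled $\xmark$, I would set $\mathcal{R}=\Set{(\tilde\rho',\tilde\sigma')\mid \tilde\rho'\tbpp\tilde\sigma'\ \mbox{labels a node of}\ \mathsf{T}}$ and prove $\mathcal{R}\subseteq\FunH(\mathcal{R})$; since $\complyTB=\nu\FunH$ is the greatest such relation this yields $\mathcal{R}\subseteq\complyTB$, and as $\rho\tbpp\sigma$ labels the root, $\rho\complyTB\sigma$. Verifying $(\tilde\rho',\tilde\sigma')\in\FunH(\mathcal{R})$ is a case split on the clause that produced the node. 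A $\stopA$-node and its $\Zero$-leaf satisfy conditions (c1)--(c3) either vacuously or because $\tilde\rho'=\Zero$; at a clause-2 node all $B$-successors appear as children, giving (c2), while (c3) is vacuous by mutual exclusivity; at a clause-3 node one $\playerC{:}a$-successor appears as a child, giving (c3), with (c2) vacuous; and every internal node is non-stuck, so (c1) is vacuous. The hypothesis that no leaf is $\xmark$ is precisely what forbids clause~4, so $\mathcal{R}$ is a turn-based compliance relation.

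For ($\Rightarrow$), assuming $\rho\complyTB\sigma$ I would build $\mathsf{T}$ top-down, maintaining the invariant that each node $\tilde\rho'\tbpp\tilde\sigma'$ satisfies $\tilde\rho'\complyTB\tilde\sigma'$. At a node one unfolds $\complyTB=\FunH(\complyTB)$: if $\tilde\rho'=\stopA$, apply clause~1 (the leaf child $\Zero\tbpp\tilde\sigma'$); otherwise, since the root lies in $\ASC$ and, by the observation above, no $B$- or $\playerC{:}a$-successor equals $\Zero$, every expanded node has $\tilde\rho'\neq\Zero$, so by (c1) it is not stuck and, being distinct from $\stopA$, offers a $B$- or a $\playerC{:}a$-move. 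If $B$-moves are present, apply clause~2, whose children are all $B$-successors, each still compliant by (c2); otherwise apply clause~3, choosing (canonically, to keep the tree regular) a $\playerC{:}a$-successor that is compliant by (c3). No node ever meets the conditions of clause~4, so no $\xmark$-leaf is created; regularity of session-contract expressions guarantees finitely many distinct configurations and hence a regular tree $\mathsf{T}\in\rts{\rho\tbpp\sigma}$ without $\xmark$-leaves. Equivalently one may mirror Lemma~\ref{lem:equivtb2} verbatim, passing to the approximants $\FunH^k(\sctb\times\sctb)$ with $\complyTB=\bigcap_k\FunH^k(\sctb\times\sctb)$ and showing by induction on $k$ that $\mathsf{T}{\downarrow}_k$ is $\xmark$-free.

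The main obstacle is the bookkeeping that excludes the $\xmark$-clause in both directions: one must establish the mutual-exclusivity claim carefully from the transition rules and keep track of the special roles of $\stopA$, $\Zero$ and $\cmark$, so that the $\Zero$-leaf --- a \emph{success} marker arising from the $\cmark$ move --- is never conflated with a stuck $\xmark$-leaf. The co-continuity of $\FunH$ underlying the stratified variant of ($\Rightarrow$) is routine given that the turn-based LTS is finitely branching, but I would state it explicitly.
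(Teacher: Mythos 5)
Your overall strategy---transporting the proof of Lemma~\ref{lem:equivtb2} to the turn-based setting---is exactly the paper's own approach (its proof of Lemma~\ref{lem:equivtb1} is literally ``similar to the proof of Lemma~\ref{lem:equivtb2}''), and your ($\Rightarrow$) direction is sound. However, your ($\Leftarrow$) direction contains a genuine error, and it sits precisely at the point you dismissed as bookkeeping: the $\stopA$- and $\Zero$-nodes.

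The claim that a $\stopA$-node and its $\Zero$-leaf satisfy conditions (c1)--(c3) ``either vacuously or because $\tilde\rho'=\Zero$'' is false. The transition rule $\tilde\rho\tbpp \Internal_{i\in I}\Dual{a}_i.\sigma_i \tblts{\playerB:\Dual{a}_k} \tilde\rho\tbpp\buf{\Dual{a}_k}\sigma_k$ fires for an \emph{arbitrary} left component, including $\tilde\rho=\stopA$ and $\tilde\rho=\Zero$. Hence a node $\stopA\tbpp\tilde\sigma$ (or a leaf $\Zero\tbpp\tilde\sigma$) whose server side is an internal choice \emph{does} have $B$-successors, and condition (c2) in the definition of $\FunH$ requires those successors to belong to your relation $\mathcal{R}$---but they are not nodes of the tree, since clause~1 of Definition~\ref{def:rtstbpp} truncates the tree immediately below $\stopA$. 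Concretely, for $\rho=\stopA$ and $\sigma=\Dual{a}.\stopA\oplus\Dual{b}.\stopA$ you get $\mathcal{R}=\Set{(\stopA,\sigma),(\Zero,\sigma)}$, and $(\stopA,\sigma)\notin\FunH(\mathcal{R})$ because the $\playerB{:}\Dual{a}$-successor $(\stopA,\buf{\Dual{a}}\stopA)$ is missing; so $\mathcal{R}$ is not a turn-based compliance relation and the coinduction does not go through. The deeper point is that this step does \emph{not} transfer verbatim from Lemma~\ref{lem:equivtb2}: there $\FunK$ has the explicit escape disjunct ``$\rho=\stopA$'', whereas $\FunH$ has no such disjunct and keeps imposing (c2) on successful configurations. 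The repair is simple but must be stated: use $\mathcal{R}' = \mathcal{R}\cup\Set{(\stopA,\tilde\sigma)\mid \tilde\sigma\in\sctb}\cup\Set{(\Zero,\tilde\sigma)\mid \tilde\sigma\in\sctb}$. With client $\stopA$ or $\Zero$ the only enabled transitions are $\playerB{:}\Dual{a}$-moves (which leave the client unchanged) and, for $\stopA$, the $\playerC{:}\cmark$ move, so this added family is closed under all successors demanded by (c1)--(c3); hence $\mathcal{R}'\subseteq\FunH(\mathcal{R}')$, $\mathcal{R}'\subseteq{\complyTB}$, and the conclusion follows since the root pair is in $\mathcal{R}'$.
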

\begin{proof}
Similar to the proof of Lemma \ref{lem:equivtb2}.
\end{proof}

Next we define a function $\ff : \rts{\rho\tbpp\sigma} \rightarrow  \rts{\rho\pp\sigma}$, for which some technical results are in order.

\begin{lem}
\label{lem:tbpptoppA}
Let ${\rho},{\sigma}\in\ASC\subseteq\tbASC$ and let  $\tilde{\rho},\tilde{\sigma}\in\tbASC$ such that $~{\rho}\tbpp{\sigma}\tbltsstar{} \tilde{\rho}\tbpp\tilde{\sigma}$.
\begin{enumerate}
\item If $~\tilde{\rho}\tbpp\tilde{\sigma}\tblts{\beta}$ then only one of the following cases can occur:
\begin{enumerate}
\item 
$\beta\in\{\playerA:a,\playerB:a \mid a\in\Names\}$ and $\beta$ is unique;
\item
$\beta\in\{\playerA:\Dual{a},\playerB:\Dual{a} \mid a\in\Names\}$;
\item
$\beta\in\{\playerC:a \mid a\in\Names\}$ and $\beta$ is unique.
\end{enumerate}
\item
If ${\tilde{\rho}}\complyTB{\tilde{\sigma}}$ and  $\tilde{\rho}\tbpp\tilde{\sigma}\tblts{\beta} \tilde{\rho}'\tbpp\tilde{\sigma}'\tblts{}$, where $\beta\in\{\playerA:\Dual{a},\playerB:\Dual{a} \mid a\in\Names\}$, then $\tilde{\rho},\tilde{\sigma}\in\ASC$ and there exists a unique
$\beta'\in\{\playerA:a,\playerB:a \mid a\in\Names\}$ such that $\tilde{\rho}'\tbpp\tilde{\sigma}'\tblts{\beta'}{\rho}'\tbpp{\sigma}'$ with ${\rho'},{\sigma'}\in\ASC$. Moreover, $\tilde{\rho}\tbpp\tilde{\sigma}\Lts{} {\rho}'\tbpp{\sigma}'$
\end{enumerate}
\end{lem}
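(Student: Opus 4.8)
The plan is to prove both items by a direct analysis of the transition rules in Figure~\ref{fig:tb-opsem}, exploiting the fact that every component of a reachable configuration lies in $\sctb$, i.e.\ is either an element of $\ASC$, or $\Zero$, or a single-buffered term $\buf{\Dual a}\tau$ with $\tau\in\ASC$. This is immediate from $\rho,\sigma\in\ASC$ together with the observation that each rule sends $\sctb$-configurations to $\sctb$-configurations (a one-line induction on the length of $\tbltsstar{}$ if one wants it explicit), so that for item~(1) no fine invariant beyond membership in $\sctb$ is needed.

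For item~(1) I would run a finite case analysis on the shapes of $\tilde\rho$ and $\tilde\sigma$. The observation driving everything is that the left-hand sides of the rules are syntactically disjoint: an $\playerA{:}\Dual a$/$\playerB{:}\Dual a$ move requires the mover's side to be an internal choice $\bigoplus$; an $\playerA{:}a$/$\playerB{:}a$ move requires the mover's side to be an input $\sum a_i.(\cdot)$ \emph{and} the partner to be a buffer $\buf{\Dual a}$; and a $\playerC{:}a$ move requires one side to be an affectible output $\sum\Dual a_i.(\cdot)$ and the other an input. Checking each pair of these conditions shows they are pairwise incompatible (a side cannot be simultaneously a buffer and an input, nor an internal choice and an affectible output, etc.), which yields the mutual exclusivity of (a), (b), (c). In case~(a) the only enabled transition is the buffer-consuming step forced by the buffered name, and in both (a) and (c) the reduct is uniquely determined by the label because names and co-names in each choice are pairwise distinct (Definition~\ref{Adef:ckpt-behav}); in case~(b) several internal-choice labels may be enabled, which is why no uniqueness is asserted there. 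The only configuration escaping this trichotomy is $\tilde\rho=\stopA$, whose sole move is the winning $\playerC{:}\cmark$ step; this is understood to be excluded (the statement concerns $\tilde\rho\neq\stopA$), while all genuinely stuck configurations—those with a $\Zero$ component or with both sides buffered—satisfy $\tilde\rho\tbpp\tilde\sigma\notblts{}$ and make the implication vacuous.

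For item~(2) the starting point is that $\beta\in\{\playerA{:}\Dual a,\playerB{:}\Dual a\}$ forces, say for $\playerA{:}\Dual a_k$, that $\tilde\rho=\bigoplus_{i\in I}\Dual a_i.\rho_i\in\ASC$ and that $\tilde\rho'\tbpp\tilde\sigma'=\buf{\Dual a_k}\rho_k\tbpp\tilde\sigma$. Here compliance does the decisive work: from $\tilde\rho\complyTB\tilde\sigma$ and the clause of the $\complyTB$-functional applied to the $B$-move $\playerA{:}\Dual a_k$ we get $\buf{\Dual a_k}\rho_k\complyTB\tilde\sigma$. I would then inspect the possible shapes of $\tilde\sigma$ against the rules that can fire from a configuration whose left component is a buffer. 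Every shape other than an input $\sum_{j\in J}a_j.\sigma_j$ with $k\in J$ leads, after at most one further $B$-move, to a configuration with no outgoing transition and a non-$\Zero$ client: both-buffered when $\tilde\sigma$ is an internal choice, and immediately stuck when $\tilde\sigma$ is an affectible output, $\stopA$, a buffer, $\Zero$, or an input missing $k$. By the stuck-state and $B$-closure clauses of $\complyTB$ each of these contradicts $\buf{\Dual a_k}\rho_k\complyTB\tilde\sigma$ together with the hypothesis that $\buf{\Dual a_k}\rho_k\tbpp\tilde\sigma$ can still move. Hence $\tilde\sigma=\sum_{j\in J}a_j.\sigma_j\in\ASC$ with $k\in J$, the only enabled move is the consumption $\beta'=\playerB{:}a_k$, its reduct $\rho_k\tbpp\sigma_k$ lies in $\ASC\times\ASC$, and concatenating the two steps yields exactly the macro-transition $\tilde\rho\tbpp\tilde\sigma\Lts{}\rho'\tbpp\sigma'$. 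The case $\beta=\playerB{:}\Dual a_k$ is symmetric.

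The main obstacle is precisely this compliance-driven exclusion in item~(2): one must verify that the \emph{only} way $\buf{\Dual a_k}\rho_k\tbpp\tilde\sigma$ can be simultaneously movable and compliant is when $\tilde\sigma$ is a matching input, and in particular one has to rule out the tempting case where $\tilde\sigma$ is itself an internal choice (which would keep $\tilde\sigma\in\ASC$ yet produce a $B$-output move instead of the asserted consume). This is exactly where the single-buffer design of the semantics is used: a second internal choice stacked on a pending buffer produces a both-buffered, irrevocably stuck configuration, which compliance forbids. A secondary and purely cosmetic point is the intended reading of ``$\beta$ is unique'' in case~(c); I read it as determinacy of the reduct given the synchronisation name, since distinct common names genuinely yield distinct $\playerC$-labels and so the label itself need not be unique.
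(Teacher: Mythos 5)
Your item (1) and the $\playerA{:}\Dual a_k$ half of item (2) are correct, and considerably more detailed than the paper's own proof, which simply declares the lemma ``immediate by definition of $\tbASC$ and $\tblts{}$''. The genuine gap is your final sentence of item (2): the case $\beta=\playerB{:}\Dual a_k$ is \emph{not} symmetric, because both the winning rule $\stopA\tbpp \tilde{\rho} \tblts{\playerC:\cmark} \Zero \tbpp \tilde{\rho}$ and the relation $\complyTB$ itself are biased towards the client. Concretely, take $\tilde\rho=\stopA$ and $\tilde\sigma=\Dual a.\sigma_1\oplus\Dual b.\sigma_2$ (reachable in zero steps, since both lie in $\ASC$). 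Then $\stopA\complyTB\tilde\sigma$ holds --- $\stopA$ is turn-based compliant with \emph{every} server, in accordance with clause (1) of Definition \ref{def:ACRel} --- and $\stopA\tbpp\tilde\sigma\tblts{\playerB:\Dual a}\stopA\tbpp\buf{\Dual a}\sigma_1$, where the reduct is not stuck because the $\playerC{:}\cmark$ rule applies to it. So all hypotheses of item (2) are satisfied, yet its conclusion fails: from $\stopA\tbpp\buf{\Dual a}\sigma_1$ there is no transition labelled in $\{\playerA{:}a,\playerB{:}a \mid a \in \Names\}$ at all, the only enabled move being $\playerC{:}\cmark$. Your exclusion argument breaks at exactly this point: in the $\playerA$-push case a server equal to $\stopA$ gives a stuck configuration (the $\cmark$-rule requires the \emph{client} to be $\stopA$), so compliance kills it; in the $\playerB$-push case a client equal to $\stopA$ gives a configuration that is neither stuck nor non-compliant, so neither compliance nor the hypothesis that the reduct moves kills it.

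Consequently, either the statement must carry a side condition such as $\tilde\rho\neq\stopA$ --- harmless for its only use, namely justifying the definition of $\ff$, since the clause for $\rts{\stopA\tbpp\tilde\sigma}$ in Definition \ref{def:rtstbpp} is applied before any $\playerB$-move is expanded, so a buffer pushed against an $\stopA$ client never occurs inside the trees --- or your proof must isolate and dispose of the $\tilde\rho=\stopA$ subcase explicitly instead of appealing to symmetry. As written, ``the case $\beta=\playerB{:}\Dual a_k$ is symmetric'' papers over the one place where the client/server asymmetry of $\complyTB$ actually matters; notably, you did spot the analogous $\stopA$ anomaly in item (1) and in your reading of ``$\beta$ is unique'', but not here, which is where it has real consequences.
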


\begin{proof} Immediate by definition of $\tbASC$ and $\tblts{}$.
\end{proof}

\begin{defi}
Let $\mathsf{T}\in\rts{\rho\tbpp\sigma}$ where ${\rho},{\sigma}\in\ASC$. We define the regular tree $\ff(\mathsf T)$ as follows:\\
let  $B=\{\playerA{:}a,\playerB{:}a \mid a\in\Names\}$ and $\Dual{B}=\{\playerA{:}\Dual{a},\playerB{:}\Dual{a} \mid a\in\Names\}$
\[
\begin{array}{ccc@{~~~~~~~~}l}

\ff\Big( ~\begin{array}{c}
             \stopA\tbpp\tilde{\sigma}\\
            \mid \\
           \Zero\tbpp\tilde{\sigma}
            \end{array} ~\Big) & = &\stopA\pp\tilde{\sigma}
\\[6mm]


\ff  \Big( \begin{array}{c}
            {\rho}\tbpp{\sigma} \\
            / \cdots\backslash \\
            \buf{\Dual{a}_1}.\rho_1\tbpp{\sigma} \cdots ~~\buf{\Dual{a}_n}.\rho_n\tbpp{\sigma}\\
            \mid ~~\cdots~~\mid \\
             ~~\mathsf{T}_1 \cdots \mathsf{T}_n
             \end{array}~ \Big)  & = 
 &  \begin{array}{c}
            {\rho}\pp{\sigma} \\
            / \cdots\backslash \\
            \ff(\mathsf{T}_1) \cdots \ff(\mathsf{T}_n)
             \end{array} \\[-2mm]
& & \multicolumn{2}{l}{ \begin{array}{l} \mbox{if ${\rho}\tbpp{\sigma} \tblts{\playerA:\Dual{a}_k} \buf{\Dual{a}_k}\rho_k \tbpp \tilde\sigma$ for $k = 1,\ldots,n$,}\\
\mbox{and similarly if 
${\rho}\tbpp{\sigma} \tblts{\playerA:\Dual{a}_k} {\tilde \rho} \tbpp  \buf{\Dual{a}_k} \sigma_k$.}
\end{array}
}\\[6mm]


\ff\Big(  
\begin{array}{c}
            {\rho}\tbpp{\sigma} \\
            \mid  \\
            \mathsf{T'}
             \end{array}   ~ \Big) & = 
 &  
\begin{array}{c}
            {\rho}\pp{\sigma} \\
            \mid  \\
            \ff(\mathsf{T'})
             \end{array}  
&  \mbox{ if } ~~\begin{array}{c}
                       \exists a\in\Names.\, {\rho}\tbpp{\sigma} \tblts{\playerC:a}\\
                       \mbox{ or }\\
                     \exists \beta\in B.\, {\rho}\tbpp{\sigma} \tblts{\beta}
                       \end{array}
\\[8mm]
\ff\Big(
\begin{array}{c}
            \tilde{\rho}\tbpp\tilde{\sigma} \\
            \mid  \\
            \xmark
             \end{array}
~ \Big) & = 
 &  
\begin{array}{c}
            \tilde{\rho}\pp\tilde{\sigma} \\
            \mid  \\
            \xmark
             \end{array} &  \mbox{else}
\end{array}  
\]
\end{defi}


\begin{lem}
\label{lem:tbpptopp}
Let $\mathsf{T}\in\rts{\rho\tbpp\sigma}$ where ${\rho},{\sigma}\in\ASC$ and such that
all its leaves are of the form $\Zero\tbpp\tilde{\sigma}$.
Then all leaves of $\ff(\mathsf T)$ are of the form $\stopA\pp\tilde{\sigma}$. 
\end{lem}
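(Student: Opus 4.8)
The plan is to argue directly from the defining clauses of the map $\ff$, reading off which of them can ever create a terminal node of the output tree $\ff(\mathsf{T})$. First I would classify the four clauses in the definition of $\ff$ according to whether they emit a leaf. The clause acting on the two-node tree rooted at $\stopA\tbpp\tilde\sigma$ with child $\Zero\tbpp\tilde\sigma$ returns the single node $\stopA\pp\tilde\sigma$, which is a leaf; and the final (``else'') clause, which acts on a subtree of the form $\tilde\rho\tbpp\tilde\sigma$ with unique child $\xmark$, returns a node whose only child is $\xmark$, so its leaf is $\xmark$. By contrast, the clause collapsing a player-$\playerA$ (or $\playerB$) output step $\rho\tbpp\sigma\tblts{\playerA:\Dual{a}_k}\buf{\Dual{a}_k}\rho_k\tbpp\sigma$ together with its following synchronisation emits a node carrying the subtrees $\ff(\mathsf{T}_1),\dots,\ff(\mathsf{T}_n)$ with $n\geq 1$ (since the index set is non-empty by Definition~\ref{Adef:ckpt-behav}), and the single-child clause for a $\playerC$-move or an input move emits a node with the one child $\ff(\mathsf{T}')$. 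Neither of these last two is ever a leaf. Thus only two clauses are leaf-producing: the one yielding $\stopA\pp\tilde\sigma$ and the one yielding $\xmark$.

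Next I would invoke the hypothesis. By assumption every leaf of $\mathsf{T}$ has the shape $\Zero\tbpp\tilde\sigma$; in particular $\mathsf{T}$ has no leaf labelled $\xmark$, so no subtree of $\mathsf{T}$ is of the stuck shape $\tilde\rho\tbpp\tilde\sigma$ with child $\xmark$ (recall from Definition~\ref{def:rtstbpp} that such a subtree occurs exactly at a stuck configuration with $\rho\neq\stopA$). Consequently the $\xmark$-producing clause of $\ff$ is never invoked during the computation of $\ff(\mathsf{T})$, and the only remaining leaf-producing clause is the one returning $\stopA\pp\tilde\sigma$. Combining this with the previous paragraph, every leaf of $\ff(\mathsf{T})$ must arise from that clause and hence has the form $\stopA\pp\tilde\sigma$, which is precisely the claim. (Lemma~\ref{lem:tbpptoppA} guarantees that the listed clauses are exhaustive on every reachable configuration of $\mathsf{T}$, so $\ff$ is total and each node is matched by exactly one clause.)

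To make the ``only way to create a leaf'' step fully rigorous — the one place where I expect the real work to sit — I would phrase it as an induction on the finite depth of a leaf position in $\ff(\mathsf{T})$, carrying the invariant that every subtree reached still has all its $\mathsf{T}$-leaves of the form $\Zero\tbpp\tilde\sigma$ (this invariant is immediate, since leaves of a subtree are leaves of $\mathsf{T}$). The genuine subtlety is the level-collapsing clause for output moves, where a single node of $\ff(\mathsf{T})$ corresponds not to one node but to the two-level block $\rho\tbpp\sigma \to \buf{\Dual{a}_k}\rho_k\tbpp\sigma \to \mathsf{T}_k$ of $\mathsf{T}$; here I would record explicitly that this block is internal in $\ff(\mathsf{T})$ (its children are the $\ff(\mathsf{T}_k)$), that it can therefore contribute no leaf, and that the invariant passes unchanged to each $\mathsf{T}_k$. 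Once this correspondence between positions of $\ff(\mathsf{T})$ and subtrees of $\mathsf{T}$ is pinned down, the conclusion follows by the clause enumeration above.
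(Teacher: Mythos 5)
Your proposal is correct and takes essentially the same route as the paper, whose entire proof is the one-liner ``By definition of $\ff$'': you have simply made explicit the clause-by-clause analysis (only the $\stopA$-clause and the $\xmark$-clause of $\ff$ emit leaves, and the latter is never triggered since $\mathsf{T}$ has no $\xmark$ leaves) that the paper leaves implicit.
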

\begin{proof}
By definition of  $\ff$\,.
\end{proof}

We define now a function $\fftb : \rts{\rho\pp\sigma} \rightarrow  \rts{\rho\tbpp\sigma}$.
The following facts are immediate by definition.

\begin{lem}
\label{lem:Ltsltsoplustau}
Let ${\rho},{\sigma},{\rho_h},{\sigma_h}\in\ASC$ have the form
as in (\ref{lem:coinductiveChar-3}) or
(\ref{lem:coinductiveChar-4}) of  \ref{def:coACRel}, where $h\in I$.\\
Then
 ${\rho}\pp{\sigma}\tblts{X{:}\Dual{a}}\tilde{\rho}\pp\tilde{\sigma}\tblts{\Dual{X}{:}a}{\rho_h}\pp{\sigma_h}$\\ for some $\tilde{\rho},\tilde{\sigma}\in\tbASC$, $a\in\Names$ and $X\in\Set{\playerA,\playerB}$,
where $\Dual{X}=\playerA$ if $X=\playerB$ and $\Dual{X}=\playerB$ if $X=\playerA$.
%
\end{lem}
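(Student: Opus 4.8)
The plan is a straightforward case analysis driven by the shape of $\rho$ and $\sigma$, instantiating $X$ according to whether we are in clause~(\ref{lem:coinductiveChar-3}) or clause~(\ref{lem:coinductiveChar-4}) of Definition~\ref{def:coACRel} and then reading off the two consecutive moves directly from the turn-based rules of Figure~\ref{fig:tb-opsem}. In both clauses the structural fact I rely on is the hypothesis $I\subseteq J$ together with $h\in I$, which gives $h\in J$, so that the input summand $a_h$ is indeed present on the side that has to answer the buffered output. Throughout, the configurations are turn-based, so the parallel is $\tbpp$ and the contracts lie in $\tbASC$ (recall $\ASC\subseteq\tbASC$).

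First I would treat clause~(\ref{lem:coinductiveChar-3}), where $\rho=\bigoplus_{i\in I}\Dual{a}_i.\rho_i$ and $\sigma=\sum_{j\in J}a_j.\sigma_j$; here I set $X=\playerA$. The client resolves its internal choice by the client-output rule of Figure~\ref{fig:tb-opsem}, buffering the selected output: $\rho\tbpp\sigma\ \tblts{\playerA:\Dual{a}_h}\ \buf{\Dual{a}_h}\rho_h\tbpp\sigma$. Since $h\in I\subseteq J$, the server can then consume this buffered output through the server-input rule, yielding $\buf{\Dual{a}_h}\rho_h\tbpp\sigma\ \tblts{\playerB:a_h}\ \rho_h\tbpp\sigma_h$. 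Thus the claim holds with $a=a_h\in\Names$, $\tilde\rho=\buf{\Dual{a}_h}\rho_h$ and $\tilde\sigma=\sigma$, and indeed $\Dual{X}=\playerB$ as required.

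Symmetrically, for clause~(\ref{lem:coinductiveChar-4}), where $\rho=\sum_{j\in J}a_j.\rho_j$ and $\sigma=\bigoplus_{i\in I}\Dual{a}_i.\sigma_i$, I set $X=\playerB$ and let the server move first: $\rho\tbpp\sigma\ \tblts{\playerB:\Dual{a}_h}\ \rho\tbpp\buf{\Dual{a}_h}\sigma_h$, after which the client answers by the client-input rule on the buffered output, $\rho\tbpp\buf{\Dual{a}_h}\sigma_h\ \tblts{\playerA:a_h}\ \rho_h\tbpp\sigma_h$, using again $h\in I\subseteq J$. This gives the statement with $a=a_h$, $\tilde\rho=\rho$, $\tilde\sigma=\buf{\Dual{a}_h}\sigma_h$ and $\Dual{X}=\playerA$. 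There is essentially no obstacle: the lemma is indeed ``immediate by definition.'' The only point requiring a little care is to respect the turn discipline enforced by the single-buffer construct $\buf{\cdot}$ — an internal (unaffectible) output must first be buffered and only in the subsequent step synchronized with the matching input — so that $X$ must be chosen as the player owning the internal choice and $\Dual{X}$ as the player owning the corresponding input.
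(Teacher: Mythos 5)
Your proof is correct and takes essentially the same route as the paper, which states this lemma with no argument at all (``immediate by definition'' of the turn-based LTS): your two-step case analysis on clauses~(\ref{lem:coinductiveChar-3}) and~(\ref{lem:coinductiveChar-4}), instantiating $X$ as the player owning the internal choice and using $h\in I\subseteq J$ to fire the matching input rule of Figure~\ref{fig:tb-opsem} on the buffered output, is exactly the verification the paper leaves implicit.
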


\begin{lem}
\label{lem:Ltsltsoplustaubis}
Let ${\rho},{\sigma},{\rho_h},{\sigma_h}\in\ASC$ have the form
as in (\ref{lem:coinductiveChar-2}) of  \ref{def:coACRel}, where $h\in I$.\\
Then
 ${\rho}\pp{\sigma}\tblts{\playerC{:}a}{\rho_h}\pp{\sigma_h}$  for some $a\in\Names$.
\end{lem}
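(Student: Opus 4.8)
The plan is to establish the transition by a direct case analysis on the four rules defining the player-$\playerC$ moves in Figure~\ref{fig:tb-opsem}, paralleling the argument for Lemma~\ref{lem:Ltsltsoplustau} but for the affectible (external-choice) clause rather than the unaffectible ones. First I would record the shapes imposed by clause~(\ref{lem:coinductiveChar-2}) of Definition~\ref{def:coACRel}: here $\rho = \sum_{i\in I}\alpha_i.\rho_i$ and $\sigma = \sum_{j\in J}\Dual{\alpha}_j.\sigma_j$ with $h\in I\cap J$, so the $h$-th summand $\alpha_h.\rho_h$ of $\rho$ is matched by the co-summand $\Dual{\alpha}_h.\sigma_h$ of $\sigma$. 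Since in $\ASC$ every external choice has all its summands of the same kind---either all inputs (names) or all affectible outputs (co-names)---exactly one of $\rho,\sigma$ is an input and the other an affectible output. I then set $a\in\Names$ to be whichever of $\alpha_h$ and $\Dual{\alpha}_h$ lies in $\Names$; this $a$ will be the label produced.

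Next I would split into the two symmetric configurations and, inside each, distinguish whether the \emph{input} side is a singleton (the affectible-output side always has at least two summands by Notation~\ref{not:affcontracts}, hence is always of the form $\Dual{a}.\tau+\tau'$ with $\tau'$ nonempty). If $\rho$ is the input and $\sigma$ the affectible output, then $\sigma=\Dual{a}.\sigma_h+\sigma'$ with $\sigma'$ nonempty; writing $\rho=a.\rho_h$ when $|I|=1$ and $\rho=a.\rho_h+\rho'$ when $|I|\geq 2$, the fourth, respectively second, player-$\playerC$ rule of Figure~\ref{fig:tb-opsem} delivers $\rho\tbpp\sigma\tblts{\playerC{:}a}\rho_h\tbpp\sigma_h$. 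Symmetrically, if $\rho$ is the affectible output and $\sigma$ the input, then $\rho=\Dual{a}.\rho_h+\rho'$ with $\rho'$ nonempty, and according to whether $|J|=1$ or $|J|\geq 2$ the third, respectively first, rule yields the same transition with the same label $\playerC{:}a$. In each of the four cases the reduct is exactly $\rho_h\tbpp\sigma_h$ and $a\in\Names$, which is the claim.

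I do not expect any genuine obstacle: the statement is, like Lemma~\ref{lem:Ltsltsoplustau}, essentially immediate from the definitions of $\tbASC$ and $\tblts{}$. The only point demanding care is the bookkeeping of the degenerate singleton cases, which force the use of the specialised rules (the third and fourth $\playerC$-rules) in place of the generic ones (the first and second), together with the use of Notation~\ref{not:affcontracts} to guarantee that the affectible-output side always presents the pattern $\Dual{a}.\sigma_h+\sigma'$ (or $\Dual{a}.\rho_h+\rho'$) with a nonempty residual, so that one of the four rules always applies.
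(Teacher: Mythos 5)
Your proof is correct and takes essentially the same approach as the paper: the paper states this lemma (together with Lemma~\ref{lem:Ltsltsoplustau}) as ``immediate by definition'' of $\tbASC$ and $\tblts{}$, and your four-way case analysis on the player-$\playerC$ rules of Figure~\ref{fig:tb-opsem} is precisely the verification being alluded to. The one point of substance---that Notation~\ref{not:affcontracts} forces the affectible-output side to have at least two summands, so that the singleton cases can only occur on the input side and are caught by the two specialised $\playerC$-rules---is handled correctly.
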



\begin{defi}
Let $\mathsf{T}\in\rts{\rho\pp\sigma}$ where ${\rho},{\sigma}\in\ASC$. We define the regular tree $\fftb(\mathsf T)$ as follows:\\
\[
\begin{array}{ccl@{~~~~~~~~}l}

\fftb(\stopA\pp\tilde{\sigma}) & = & 
            \begin{array}{c}
             \stopA\tbpp\tilde{\sigma}\\
            \mid \\
           \Zero\tbpp\tilde{\sigma}
            \end{array}
\\[6mm]

\fftb  \Big( \begin{array}{c}
            {\rho}\pp{\sigma} \\
            / \cdots\backslash \\
            \mathsf{T}_1 \cdots ~~\mathsf{T}_n
             \end{array}~ \Big)  & = 
 &  \begin{array}{c}
            {\rho}\tbpp{\sigma} \\
            / \cdots\backslash \\
        \tilde{\rho}_1\tbpp\tilde{\sigma}_1 \cdots \tilde{\rho}_n\tbpp\tilde{\sigma}_n\\  
            \mid ~~~~\cdots~~~~\mid \\
            \fftb(\mathsf{T}_1) \cdots \fftb(\mathsf{T}_n)
             \end{array}
 ~~~~~~ \mbox{ if $\rho$ and $\sigma$ are as in (\ref{lem:coinductiveChar-3}) or
(\ref{lem:coinductiveChar-4}) of  \ref{def:coACRel}}
\\[6mm]
& &
\mbox{where }\Set{\tilde{\rho}_i\tbpp\tilde{\sigma}_i}_{i=1..n} = \Set{\tilde{\rho}'\tbpp\tilde{\sigma}' ~\mid~{\rho}\tbpp{\sigma} \tblts{\beta}\tilde{\rho}'\tbpp\tilde{\sigma}', \beta\in\Dual{B}}\\
& &  \text{with}~ \Dual{B}=\{\playerA{:}\Dual{a},\playerB{:}\Dual{a} \mid a\in\Names\}
\\[6mm]

\fftb\Big(  
\begin{array}{c}
            {\rho}\pp{\sigma} \\
            \mid  \\
            \mathsf{T'}
             \end{array}   ~ \Big) & = 
 &  
\begin{array}{c}
            {\rho}\tbpp{\sigma} \\
            \mid  \\
            \fftb(\mathsf{T'})
             \end{array}  
 ~~~~~~  \mbox{if $\rho$ and $\sigma$ are as in (\ref{lem:coinductiveChar-2})
 of  \ref{def:coACRel}}
\\[8mm]
\fftb\Big( \begin{array}{c}
              \tilde{\rho}\pp\tilde{\sigma} \\
              \mid  \\
              \xmark
             \end{array} ~ \Big) & = 
 &   
          \begin{array}{c}
            \tilde{\rho}\tbpp\tilde{\sigma} \\
            \mid  \\
            \xmark
             \end{array} 
\end{array} 
\]
\end{defi}


\begin{lem}
\label{lem:pptbpp}
Let $\mathsf{T}\in\rts{\rho\pp\sigma}$ where ${\rho},{\sigma}\in\ASC$ and such that
all its leaves are of the form $\stopA\pp\tilde{\sigma}$.
Then all the leaves of $\fftb(\mathsf T)$ are of the form $\Zero\tbpp\tilde{\sigma}$.
\end{lem}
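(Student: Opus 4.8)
The plan is to prove this by structural induction on $\mathsf{T}$, following the four clauses in the definition of $\fftb$; since the assertion concerns leaves only, and each leaf sits at finite depth, the induction can be carried out on the length of the path from the root to a leaf even though $\mathsf{T}$ may be infinite. The proof is essentially a definitional unfolding, entirely parallel to Lemma~\ref{lem:tbpptopp} for $\ff$.

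First I would record the decisive consequence of the hypothesis: since every leaf of $\mathsf{T}$ has the form $\stopA\pp\tilde{\sigma}$, the tree $\mathsf{T}$ contains no $\xmark$-leaf, so the fourth clause in the definition of $\fftb$ (the one mapping an $\xmark$-leaf to an $\xmark$-leaf) is never invoked in the computation of $\fftb(\mathsf{T})$. Consequently the only clause that produces a childless node, i.e.\ a leaf of the output, is the first clause, which sends $\stopA\pp\tilde{\sigma}$ to the two-node tree whose unique leaf is $\Zero\tbpp\tilde{\sigma}$. This is exactly the conclusion we want for that base node.

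Next I would treat the recursive clauses, observing that they are structure preserving and introduce no new leaves. In the single-child clause (case where $\rho,\sigma$ are as in Definition~\ref{def:coACRel}(\ref{lem:coinductiveChar-2})), the node $\rho\pp\sigma$ becomes the internal node $\rho\tbpp\sigma$ with sole child $\fftb(\mathsf{T}')$; in the branching clause (cases as in Definition~\ref{def:coACRel}(\ref{lem:coinductiveChar-3}) or~(\ref{lem:coinductiveChar-4})), the node $\rho\pp\sigma$ becomes $\rho\tbpp\sigma$ together with the auxiliary intermediate nodes $\tilde{\rho}_i\tbpp\tilde{\sigma}_i$, each of which carries the single child $\fftb(\mathsf{T}_i)$ and is therefore internal as well. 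In either case every subtree $\mathsf{T}_i$ (resp.\ $\mathsf{T}'$) is again a member of the appropriate $\rts{\cdot}$ whose set of leaves is a subset of the leaves of $\mathsf{T}$, hence all of its leaves are of the form $\stopA\pp\tilde{\sigma}$; by the induction hypothesis all leaves of $\fftb(\mathsf{T}_i)$ (resp.\ $\fftb(\mathsf{T}')$) are then of the form $\Zero\tbpp\tilde{\sigma}$.

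Putting these together, every leaf of $\fftb(\mathsf{T})$ is produced by the first clause applied to some leaf $\stopA\pp\tilde{\sigma}$ of $\mathsf{T}$, and is therefore of the form $\Zero\tbpp\tilde{\sigma}$, which is the claim. I do not expect a genuine obstacle here: the only two points requiring care are verifying that the auxiliary nodes inserted in the branching clause are internal (so they do not contribute spurious leaves) and that the $\xmark$-clause is excluded by the hypothesis; both are immediate from inspecting the definition of $\fftb$.
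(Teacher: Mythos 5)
Your proof is correct and is essentially the paper's own argument spelled out in full: the paper disposes of this lemma with ``By definition of $\fftb$'', and your clause-by-clause unfolding (the $\xmark$-clause excluded by hypothesis, the recursive clauses introducing only internal nodes, every output leaf arising from the $\stopA\pp\tilde{\sigma}$ clause) is exactly the reasoning that one-line proof compresses. The care you take about infinite regular trees, inducting on the finite depth of each leaf rather than on the whole tree, is a sound way to make the unfolding rigorous.
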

\begin{proof}
By definition of  $\fftb$\,.
\end{proof}

The following immediately descends from Lemmas \ref{lem:tbpptopp} and \ref{lem:pptbpp}.

\begin{thm}
\label{lem:toutoft}
$\rho, \sigma \in \ASC (\subset\tbASC)$.
\begin{enumerate}
\item
Let $\mathsf{T}\in\rts{{\rho}\tbpp{\sigma}}$
such that all its leaves are of the form $\Zero\tbpp\tilde{\sigma}$, then there exists $\mathsf{T'}\in\rts{{\rho}\pp{\sigma}}$ such that
all its leaves are of the form $\stopA\pp\tilde{\sigma}$.
\item
Let $\mathsf{T}\in\rts{{\rho}\pp{\sigma}}$
such that all its leaves are of the form $\stopA\pp\tilde{\sigma}$, then there exists $\mathsf{T'}\in\rts{{\rho}\tbpp{\sigma}}$ such that
all its leaves are of the form $\Zero\tbpp\tilde{\sigma}$.
\end{enumerate}
\end{thm}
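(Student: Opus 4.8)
The plan is to observe that Theorem~\ref{lem:toutoft} is simply a repackaging of the two leaf-preservation lemmas already established, once we recall that $\ff$ and $\fftb$ are total functions with the declared domains and codomains, namely $\ff:\rts{\rho\tbpp\sigma}\rightarrow\rts{\rho\pp\sigma}$ and $\fftb:\rts{\rho\pp\sigma}\rightarrow\rts{\rho\tbpp\sigma}$. Each item of the theorem is then discharged by applying the appropriate map to the witness tree supplied by the hypothesis and reading off the leaf shape from the corresponding lemma.

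For item~(1) I would take the given $\mathsf{T}\in\rts{\rho\tbpp\sigma}$, all of whose leaves have the shape $\Zero\tbpp\tilde{\sigma}$, and set $\mathsf{T'}=\ff(\mathsf{T})$. Since $\ff$ maps $\rts{\rho\tbpp\sigma}$ into $\rts{\rho\pp\sigma}$ by construction, we get $\mathsf{T'}\in\rts{\rho\pp\sigma}$; and Lemma~\ref{lem:tbpptopp} states precisely that, under this leaf hypothesis on $\mathsf{T}$, every leaf of $\ff(\mathsf{T})$ has the shape $\stopA\pp\tilde{\sigma}$. This is exactly the required $\mathsf{T'}$. Item~(2) is symmetric: starting from $\mathsf{T}\in\rts{\rho\pp\sigma}$ with all leaves of the shape $\stopA\pp\tilde{\sigma}$, I would put $\mathsf{T'}=\fftb(\mathsf{T})\in\rts{\rho\tbpp\sigma}$ and invoke Lemma~\ref{lem:pptbpp} to conclude that all leaves of $\fftb(\mathsf{T})$ have the shape $\Zero\tbpp\tilde{\sigma}$.

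The only point that needs checking beyond the two lemmas is that the applications of $\ff$ and $\fftb$ are legitimate, i.e.\ that these maps are genuinely defined on the whole of their declared domains. This is already secured by the case analyses in their definitions, which are exhaustive over the possible shapes of trees in $\rts{\rho\tbpp\sigma}$ and $\rts{\rho\pp\sigma}$; in the turn-based direction, the crucial point that each player-$\playerA$ (resp.\ player-$\playerB$) output move is uniquely followed by the matching input move — so that the two-step ``single-buffered'' transitions can be collapsed into one abstract communication — is exactly the structural dichotomy recorded in Lemma~\ref{lem:tbpptoppA}. Consequently there is no real obstacle here: the substance of the correspondence lives in the construction of $\ff$ and $\fftb$ and in Lemmas~\ref{lem:tbpptopp} and \ref{lem:pptbpp}, while the theorem merely assembles them, which is why it ``immediately descends''.
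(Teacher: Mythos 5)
Your proof is correct and is exactly the paper's argument: the theorem is stated there as descending immediately from Lemmas~\ref{lem:tbpptopp} and \ref{lem:pptbpp}, i.e.\ by applying $\ff$ (resp.\ $\fftb$) to the witness tree and reading off the leaf shapes, which is what you do. Your additional remark on the totality of $\ff$ and $\fftb$ (via the exhaustive case analysis and Lemma~\ref{lem:tbpptoppA}) just makes explicit what the paper leaves implicit.
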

\noindent
We then get Theorem \ref{th:complyequivtbcomply} as a corollary of Lemmas \ref{lem:equivtb1}, \ref{lem:equivtb2} and Theorem \ref{lem:toutoft}.

%
%

\section{Proof of Main Theorem I (Theorem \ref{th:complyAwinstrat})}
\label{appendix:mainthmI}

\subsection{Proof of $\ref{th:complyAwinstrat-1} \Iff \ref{th:complyAwinstrat-2}$ of Theorem \ref{th:complyAwinstrat} ( $\complyR\ =\ \ACRel$).}\hfill
\label{subsect:rbkiffaffect}


In order to prove the equivalence of items \ref{th:complyAwinstrat-1} and
\ref{th:complyAwinstrat-2} of the Main Theorem I (Theorem \ref{th:complyAwinstrat}),
that is 
$$\complyR\ =\ \ACRel,$$
we need to prove that $\complyR$ satifies the properties 
in Lemma \ref{lem:stack-len}.

\subsection{Proof of Lemma \ref{lem:stack-len} (\cite{BDLdL15})  (Rollback properties)}
\footnote{The  proof of Lemma \ref{lem:stack-len} is from  the workshop paper \cite{BDLdL15}.
We restate it here for the reader's convenience.}\hfill\\
\label{subsec:rollprop}
{\bf Lemma \ref{lem:stack-len}(\ref{lem:stack-len-1})}
 If
	$\np{\emptystack}{\rho} \pp \np{\emptystack}{\sigma} \ored{*} \np{\vec{\delta}}{\rho'} \pp \np{\vec{\gamma}}{\sigma'} \not\!\!\ored{}$, then	 $\vec{\delta} = \vec{\gamma} = \emptystack$.
\proof
Clearly $\np{\vec{\delta}}{\rho'} \pp \np{\vec{\gamma}}{\sigma'} \not\!\!\ored{}$ implies either $\vec{\delta} =\emptystack$ or $\vec{\gamma} = \emptystack$.
Observe that:
\begin{itemize}
\item rule  $(\CommRule)$ adds one element to both stacks;
\item rule  $(\tau)$ does not modify both stacks;
\item rule  $(\RbkRule)$ removes one element from both stacks.
\qed
\end{itemize}

\noindent
{\bf Lemma \ref{lem:stack-len}(\ref{lem:stack-len-2})} If
$\np {\vec{\delta}}{\rho} \complyR \np {\vec{\gamma}}{\sigma}$, then 
	 $\np {\vec{\delta'} \cons \vec{\delta}}{\rho} \complyR \np {\vec{\gamma'} \cons \vec{\gamma}}{\sigma}$
	 for all $\vec{\delta'}$ , $\vec{\gamma'}$. 
\proof
It suffices to show that
\[\np {\vec{\delta}}{\rho} \complyR \np {\vec{\gamma}}{\sigma} \implies 
	\np {\rho' \cons \vec{\delta}}{\rho} \complyR \np {\vec{\gamma}}{\sigma} \text{ and }
	\np { \vec{\delta}}{\rho} \complyR \np {\sigma' \cons \vec{\gamma}}{\sigma}
\]
which we prove by contraposition.

Suppose that $\np {\rho' \cons \vec{\delta}}{\rho} \ncomplyR \np {\vec{\gamma}}{\sigma}$; then
\[\np {\rho' \cons \vec{\delta}}{\rho} \pp \np {\vec{\gamma}}{\sigma} \ored{*} \np {\vec{\delta}'}{\rho''} \pp \np {\vec{\gamma}'}{\sigma''}
\not\!\!\ored{} \text{ and } \rho''\neq \stopA
\]
If $\rho'$ is never used, then  $\vec{\delta}' = \rho' \cons \vec{\delta}''$ and $\vec{\gamma}'=\emptystack$, so that
we get\[\np {\vec{\delta}}{\rho} \pp \np {\vec{\gamma}}{\sigma} 
\ored{*} \np {\vec{\delta}''}{\rho''} \pp \np \emptystack{\sigma''} \not\!\!\ored{}\]
Otherwise we have that
\[\np {\rho' \cons \vec{\delta}}{\rho} \pp \np {\vec{\gamma}}{\sigma} \ored{*} \np {\rho'}{\rho''} \pp \np {\vec{\gamma}'}{\sigma''}
\ored{} \np\emptystack\rho'\pp\np {\vec{\gamma}''}{\sigma'''}\]
and we assume that $\ored{*}$ is the shortest such reduction.
It follows that $\rho''\neq \stopA $. By the minimality assumption about the length of $\ored{*}$ we know
that $\rho'$ neither has been restored by some previous application of rule $(\rlbk)$, nor pushed back into the stack before. We get
\[\np {\vec{\delta}}{\rho} \pp \np {\vec{\gamma}}{\sigma} \ored{*} \np {\emptystack}{\rho''} \pp \np {\vec{\gamma}''}{\sigma''} \not\!\!\ored{}\]
In both cases we conclude that $\np {\vec{\delta}}{\rho} \ncomplyR \np {\vec{\gamma}}{\sigma}$ as desired. 

\medskip
\noindent
Similarly we can show that $\np { \vec{\delta}}{\rho} \ncomplyR \np {\sigma' \cons \vec{\gamma}}{\sigma} \implies
\np {\vec{\delta}}{\rho} \ncomplyR \np {\vec{\gamma}}{\sigma}$.
\qed



\noindent
We can now show that the rollback compliance and the relation $\ACRel$ do coincide. \\


\begin{lem}\label{lem:coinductiveCharRbk}
We have ${\rho} \complyR {\sigma}$ if and only if one of the following conditions holds:
\begin{enumerate}
\item \label{lem:coinductiveChaRbkr-1} $\rho = \stopA$;
\item \label{lem:coinductiveCharRbk-2} $\rho = \sum_{i\in I}\alpha_i.\rho_i$, $\sigma = \sum_{j\in J}\Dual{\alpha}_j.\sigma_j$ and 
	$\exists k \in I \cap J.\; {\rho_k} \complyR {\sigma_k}$;
\item \label{lem:coinductiveCharRbk-3} $\rho = \bigoplus_{i\in I}\Dual{\alpha}_i.\rho_i$, $\sigma = \sum_{j\in J}\alpha_j.\sigma_j$,
	$I\subseteq J$ and $\forall k \in I. \; {\rho_k} \complyR {\sigma_k}$;
\item \label{lem:coinductiveCharRbk-4} $\rho = \sum_{i\in I}\Dual{\alpha}_i.\rho_i$, $\sigma = \bigoplus_{j\in J}\alpha_j.\sigma_j$,
	$I\supseteq J$ and $\forall k \in J. \; {\rho_k} \complyR {\sigma_k}$.
\end{enumerate}
\end{lem}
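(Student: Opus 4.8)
The plan is to prove the biconditional by a case analysis on the shape of the (equi-recursively unfolded) client $\rho$, and, where it matters, of the server $\sigma$, reading the four clauses as a description of the admissible \emph{first} moves of the reduction $\np\emptystack\rho\pp\np\emptystack\sigma\ored{}\cdots$ and then relating the behaviour of the whole pair to that of its immediate continuations. Unfolding the definition of retractable compliance, $\rho\complyR\sigma$ means that every maximal sequence $\np\emptystack\rho\pp\np\emptystack\sigma\ored{*}\np{\vec\delta}{\rho'}\pp\np{\vec\gamma}{\sigma'}\not\!\!\ored{}$ has $\rho'=\stopA$. The two items of Lemma~\ref{lem:stack-len} are the workhorses and play complementary roles: item~(1) tells me that a $\not\!\!\ored{}$-state reached from empty histories again has empty histories, so ``stuck with a non-empty stack'' never occurs and a maximal run can only terminate once every memorised alternative has been exhausted; item~(2) lets me prepend arbitrary histories, so that compliance of a continuation $\np\emptystack{\rho_k}\pp\np\emptystack{\sigma_k}$ is inherited by the stacked pair $\np{\vec\delta'\cons\emptystack}{\rho_k}\pp\np{\vec\gamma'\cons\emptystack}{\sigma_k}$ that actually arises after a synchronisation.

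The clauses (1), (3), (4) are the ``determined'' ones and are handled by inspecting the forced first step. For (1), if $\rho=\stopA$ the client is inert: only the server $(\tau)$-rule can fire, it keeps the client at $\stopA$, and every run is trivially successful. For (3), when $\rho=\bigoplus_{i\in I}\Dual{a}_i.\rho_i$ rule $(\oplus)$ forces the client to resolve its internal choice to some $\Dual{a}_i.\rho_i$ and then to emit $\Dual{a}_i$ via rule $(\alpha)$; for the system not to deadlock with a non-$\stopA$ client \emph{for every} $i\in I$, the server must be an input sum $\sum_{j\in J}a_j.\sigma_j$ with $I\subseteq J$, and after the forced $(\CommRule)$-synchronisation the state is $\np{\circ}{\rho_i}\pp\np{\sigma''_i}{\sigma_i}$, with $\sigma''_i$ the residual server sum. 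Since the client history is the inert symbol $\circ$, any rollback down to that level is immediately fatal (the client becomes $\circ\neq\stopA$ and can never reach $\stopA$); hence $\np{\circ}{\rho_i}\complyR\np{\sigma''_i}{\sigma_i}$ holds \emph{iff} $\np\emptystack{\rho_i}\complyR\np\emptystack{\sigma_i}$, the forward implication using this inertness and the backward one using Lemma~\ref{lem:stack-len}(2). This yields exactly clause~(3), and clause~(4) is symmetric, with the roles of client and server exchanged.

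The real obstacle is clause~(2), the \emph{agreement point} $\rho=\sum_{i\in I}\alpha_i.\rho_i$, $\sigma=\sum_{j\in J}\Dual{\alpha}_j.\sigma_j$, where a \emph{single} compliant branch must be shown equivalent to compliance of the whole pair. Here the first step is a $(\CommRule)$-synchronisation on some $h\in I\cap J$ that memorises the unchosen branches on both stacks, and the subsequent dynamics is a depth-first backtracking search over $I\cap J$: by Lemma~\ref{lem:stack-len}(1) a run can stop only with empty stacks, i.e. only after \emph{all} top-level alternatives have been tried. For the $(\Leftarrow)$ direction I argue that, since $\rho_k\complyR\sigma_k$, Lemma~\ref{lem:stack-len}(2) makes the stacked pair reached when branch $k$ is selected already compliant, so that whenever the search enters branch $k$ it must reach $\stopA$ and terminate there; since branch $k$ is a top-level alternative, no run can empty its stacks \emph{without} having entered $k$, and therefore no run can end in failure. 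For the $(\Rightarrow)$ direction I argue contrapositively: if \emph{every} branch failed, i.e. each $\np\emptystack{\rho_k}\pp\np\emptystack{\sigma_k}$ with $k\in I\cap J$ had a failing maximal run, then these runs can be concatenated via the $(\RbkRule)$ rule into one failing run of the whole pair — each branch's run ends, by Lemma~\ref{lem:stack-len}(1), with empty local stacks, which lifts (prepending the memorised prefix) to a state from which $(\RbkRule)$ restores the next untried branch, until all are exhausted and the client is left non-$\stopA$ with empty stacks, contradicting $\rho\complyR\sigma$. The delicate points, where I expect the bookkeeping to concentrate, are (i) verifying that the reduction rules are stable under prepending a common history to both sides, so that sub-runs lift faithfully, and (ii) checking that ``entering branch $k$'' in the $(\Leftarrow)$ case genuinely subsumes the entire residual search, which is precisely what Lemma~\ref{lem:stack-len}(2) delivers.

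Once the lemma is established, the four clauses are literally those defining $\FunK$, so the $(\Rightarrow)$ direction exhibits $\complyR$ as an affectible compliance relation in the sense of Definition~\ref{def:coACRel}, whence $\complyR\subseteq\nu\FunK=\ACRel$ by Fact~\ref{fact:relrelco}; the reverse inclusion $\ACRel\subseteq\complyR$ is then obtained by re-running the $(\Leftarrow)$ analysis with $\ACRel$ in place of $\complyR$, i.e. by checking directly that $\ACRel$ enjoys the maximal-run property defining $\complyR$. Together these give $\complyR=\ACRel$, which is the content of $\ref{th:complyAwinstrat-1}\Iff\ref{th:complyAwinstrat-2}$ of the Main Theorem.
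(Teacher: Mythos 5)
Your proof is correct and follows essentially the same route as the paper's: a case analysis on the shapes of $\rho$ and $\sigma$, with Lemma~\ref{lem:stack-len} doing the real work --- item~(1) forcing any stuck state reached from empty histories to have empty stacks, item~(2) lifting compliance of continuations under prepended histories --- and with the ($\Rightarrow$) direction established contrapositively by concatenating failing sub-runs of the branches, via rule (\RbkRule), into a single failing run of the whole pair, exactly as in the paper's construction. The only difference is one of emphasis: you spell out the ($\Leftarrow$) direction (and note that $\ACRel\subseteq\complyR$ requires a direct check of the maximal-run property rather than pure fixed-point reasoning), steps the paper dismisses as immediate.
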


\proof
($\Leftarrow$)\ Immediate. \\
($\Rightarrow$)\footnote{This proof of the direction of the Lemma was presented in the workshop paper \cite{BDLdL15}. We restate it here for the reader's convenience.}
We prove this
by contraposition and by cases according to the possible shapes of $\rho$ and
$\sigma$ in the conditions of Definition \ref{def:ACRel}.
Suppose $\rho = \sum_{i\in I}\alpha_i.\rho_i$, $\sigma = \sum_{j\in J}\Dual{\alpha}_j.\sigma_j$, $I \cap J=\set{k_1,\ldots,k_n}$ and 
$ \rho_{k_i} \ncomplyR \sigma_{k_i}$ for $1\leq i\leq n$. Then we get 
\[\np\emptystack\rho_{k_i} \pp \np\emptystack\sigma_{k_i} \ored{*} \np{\vec{\delta}_i}{\rho'_i} \pp \np{\vec{\gamma}_i}{\sigma'_i} \not\!\!\ored{}\]
for $1\leq i\leq n$, where $\rho'_i\neq\stopA$ and $\vec{\delta}_i = \vec{\gamma}_i = \emptystack$ by Lemma~\ref{lem:stack-len}. This implies 
\[\np{\mbox{\small$\sum$}_{i\in I\setminus\Set{k_1}}\alpha_i.\rho_i}\rho_{k_1} \pp \np{\mbox{\small$\sum$}_{j\in J\setminus\Set{k_1}}\Dual{\alpha}_j.\sigma_j}\sigma_{k_1} \ored{*}
 \np{\mbox{\small$\sum$}_{i\in I\setminus\Set{k_1}}\alpha_i.\rho_i}{\rho'_1} \pp \np{\mbox{\small$\sum$}_{j\in J\setminus\Set{k_1}}\Dual{\alpha}_j.\sigma_j}{\sigma'_1}\]
Let $I'=I\setminus J$ and $J'=J\setminus I$. We can reduce $\np\emptystack\rho \pp \np\emptystack\sigma$ only as follows:
\[\begin{array}{llll}
\np\emptystack\rho \pp \np\emptystack\sigma & \ored{} &
\np{\sum_{i\in I\setminus\Set{k_1}}\alpha_i.\rho_i}\rho_{k_1} \pp \np{\sum_{j\in J\setminus\Set{k_1}}\Dual{\alpha}_j.\sigma_j}\sigma_{k_1} & \mbox{by $(\CommRule)$}\\
&\ored{*}&
 \np{\sum_{i\in I\setminus\Set{k_1}}\alpha_i.\rho_i}{\rho'_1} \pp \np{\sum_{j\in J\setminus\Set{k_1}}\Dual{\alpha}_j.\sigma_j}{\sigma'_1} \\
 &\ored{}&
 \np\emptystack{\sum_{i\in I\setminus\Set{k_1}}\alpha_i.\rho_i} \pp \np\emptystack{\sum_{j\in J\setminus\Set{k_1}}\Dual{\alpha}_j.\sigma_j}
 	& \mbox{by $(\RbkRule)$}\\
	&~~\vdots&~\qquad\qquad\qquad\qquad\quad\vdots\\
	&\ored{*}&
 \np{\sum_{i\in I'}\alpha_i.\rho_i}{\rho'_n} \pp \np{\sum_{j\in J'}\Dual{\alpha}_j.\sigma_j}{\sigma'_n} \\
 &\ored{}&
 \np\emptystack{\sum_{i\in I'}\alpha_i.\rho_i} \pp \np\emptystack{\sum_{j\in J'}\Dual{\alpha}_j.\sigma_j}
 	& \mbox{by $(\RbkRule)$}
\end{array}\]
and $ \np\emptystack{\sum_{i\in I'}\alpha_i.\rho_i} \pp \np\emptystack{\sum_{j\in J'}\Dual{\alpha}_j.\sigma_j}$ is stuck since $I'\cap J'=\emptyset$.

\bigskip

Suppose $\rho = \bigoplus_{i\in I}\Dual{\alpha}_i.\rho_i$ and $\sigma = \sum_{j\in J}\alpha_j.\sigma_j$. If $I\not\subseteq J$ let
$k\in I\setminus J$; then we get
\[\begin{array}{llll}
\np\emptystack\rho \pp \np\emptystack\sigma & \ored{} &
\np{\emptystack}{\Dual{\alpha}_k.\rho_k}\pp \np\emptystack{\sigma} & \mbox{by $(\tau)$} \\
& \not\!\!\ored{} &
\end{array}\]
Otherwise $I\subseteq J$ and $ {\rho_k} \ncomplyR {\sigma_k}$ for some $k\in I$. By reasoning as above we have
\[\np\emptystack\rho_k \pp \np\emptystack\sigma_k \ored{*} \np\emptystack{\rho'} \pp \np\emptystack{\sigma'} \not\!\!\ored{}\]
and 
\[  \np{\circ}{\rho_k}\pp \np{\mbox{\small$\sum$}_{j\in J\setminus\Set{k}}\alpha_j.\sigma_j}{\sigma_k} 
 \ored{*} \np{\circ}{\rho'}\pp \np{\mbox{\small$\sum$}_{j\in J\setminus\Set{k}}\alpha_j.\sigma_j}{\sigma'}\]
which imply
\[\begin{array}{llll}
\np\emptystack\rho \pp \np\emptystack\sigma 
& \ored{} &
\np{\emptystack}{\Dual{\alpha}_k.\rho_k}\pp \np\emptystack{\sigma} & \mbox{by $(\tau)$} \\
&\ored{} & \np{\circ}{\rho_k}\pp \np{\sum_{j\in J\setminus\Set{k}}\alpha_j.\sigma_j}{\sigma_k} & \mbox{by $(\CommRule)$} \\
& \ored{*} & \np{\circ}{\rho'}\pp \np{\sum_{j\in J\setminus\Set{k}}\alpha_j.\sigma_j}{\sigma'} &\\
& \ored{} & \np\emptystack\circ \pp \np\emptystack{\sum_{j\in J\setminus\Set{k}}\alpha_j.\sigma_j} & \mbox{by $(\RbkRule)$} \\
& \not\!\!\ored{} &
\end{array}\]
In both cases we conclude that ${\rho} \ncomplyR {\sigma}$.

\noindent
The proof in case  $\rho = \sum_{i\in I}\Dual{\alpha}_i.\rho_i$, $\sigma = \bigoplus_{j\in J}\alpha_j.\sigma_j$ is similar.
\qed

From Lemma \ref{lem:coinductiveCharRbk} and definition of $\ACRel$ we  immediately have
$\complyR\ =\ \ACRel$.



\subsection{Proof of $\ref{th:complyAwinstrat-1} \Iff \ref{th:complyAwinstrat-3}$ of Theorem \ref{th:complyAwinstrat}}
\label{subsec:tbstratequiv}
Since 
(by Theorem \ref{th:complyequivtbcomply}) we have that $\ACRel\ =\ \complyTB$,
 it is enough to show that 
\begin{equation}
\label{eq:aiffb}
\rho \complyTB \sigma \mbox{\ \em if and only if \
there exists a winning strategy for player $\playerC$ in $\game_{\rho\pp\sigma}$} 
\end{equation}

We recall that a strategy $\Sigma$ is {\em univocal} if $|\Sigma(\vec{e})| \leq 1$ for all $\vec{e}$.

We begin by (\ref{eq:aiffb})($\Leftarrow$)

\begin{defi}\hfill
\label{def:treefromstrat}
\begin{enumerate}
\item
Let $\tilde\rho,\tilde\sigma\in\tbASC$, let $\Sigma$ be a univocal strategy for player $\playerC$ and let $\vec{e}$ be a play.
We define the tree $\rtaux{\Sigma}{\tilde\rho\tbpp\tilde\sigma}{\vec{e}}$ as follows:
\[
\begin{array}{rcl@{~~~~~~~~}l}

\rtaux{\Sigma}{\stopA\tbpp\tilde{\sigma}}{\vec{e}} & = 
&  ~\begin{array}{c}
             \stopA\tbpp\tilde{\sigma}\\
            \mid \\
           \Zero\tbpp\tilde{\sigma}
            \end{array} ~\\[6mm]

\rtaux{\Sigma}{\tilde{\rho}\tbpp\tilde{\sigma}}{\vec{e}} & = 
 &  \begin{array}{c}
            \tilde{\rho}\tbpp\tilde{\sigma} \\
            / \cdots\backslash \\
            \mathsf{T}_1 \cdots ~~\mathsf{T}_n
             \end{array}\\
& & \mbox{ if } ~~\exists \beta\in B.\, \tilde{\rho}\tbpp\tilde{\sigma} \tblts{\beta}\\
 & &
\mbox{ where } 
\mathsf{T}_1 = \rtaux{\Sigma}{\tilde{\rho}_1\tbpp\tilde{\sigma}_1}{\vec{e}\beta_1}, \ldots ,\mathsf{T}_n = \rtaux{\Sigma}{\tilde{\rho}_n\tbpp\tilde{\sigma}_n}{\vec{e}\beta_n}\\
 & &
\mbox{ with } \Set{\tilde{\rho}_i\tbpp\tilde{\sigma}_i}_{i=1..n} = \Set{\tilde{\rho}'\tbpp\tilde{\sigma}' ~\mid~\tilde{\rho}\tbpp\tilde{\sigma} \tblts{\beta}\tilde{\rho}'\tbpp\tilde{\sigma}', ~\beta\in B }\\
 & &
\mbox{ and }  \tilde{\rho}\tbpp\tilde{\sigma} \tblts{\beta_i}\tilde{\rho}_i\tbpp\tilde{\sigma}_i
\\[6mm]

\rtaux{\Sigma}{\tilde{\rho}\tbpp\tilde{\sigma}}{\vec{e}} & = 
 &  
\begin{array}{c}
            \tilde{\rho}\tbpp\tilde{\sigma} \\
            \mid  \\
            \mathsf{T}
             \end{array}   
\\
&  & \mbox{ if } ~~\tilde{\rho}\tbpp\tilde{\sigma} \tblts{\playerC:a} \tilde{\rho}'\tbpp\tilde{\sigma}'\\
& &
\mbox{ where }  \Sigma({\vec{e}}) = (k, \playerC{:}a) \mbox{ for some $k$ },\\
& &
\mbox{ and where } \mathsf{T} = \rtaux{\Sigma}{\tilde{\rho}'\tbpp\tilde{\sigma}'}{\vec{e}\Sigma({\vec{e}})}\\[6mm]

\rtaux{\Sigma}{\tilde{\rho}\tbpp\tilde{\sigma}}{\vec{e}} & = 
&   
\begin{array}{c}
            \tilde{\rho}\tbpp\tilde{\sigma} \\
            \mid  \\
            \xmark
             \end{array}

~~~~~~~~  \mbox{ otherwise}\\[2mm]

\end{array}
\]

\item
Let $\rho,\sigma\in\ASC$ and let $\Sigma$ be a univocal strategy for player $\playerC$ for the game $\game_{\rho\!\pp\!\sigma}$.
We define 
$$\rtsigma{\Sigma}{\rho\!\tbpp\!\sigma} \ByDef \rtaux{\Sigma}{\rho\!\tbpp\!\sigma}{\varepsilon}$$

\end{enumerate}
\end{defi}

\begin{lem}
\label{lem:treefromstrat}
Let $\Sigma$ be a univocal winning strategy for player $\playerC$ for the game $\game_{\rho\!\pp\!\sigma}$. Then all the leaves of $\rtsigma{\Sigma}{\rho\!\tbpp\!\sigma}$ are of the form $\Zero\tbpp\tilde{\sigma}$.
\end{lem}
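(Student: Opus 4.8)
The goal is to show that the only leaves the construction of Definition~\ref{def:treefromstrat} can produce are of the form $\Zero\tbpp\tilde{\sigma}$, i.e.\ that the $\xmark$-clause is never reached. The plan is to argue by contradiction: assuming an $\xmark$ leaf occurs, I extract from the path leading to it a finite play of $\game_{\rho\pp\sigma}$ that conforms to $\Sigma$ yet is lost by $\playerC$, contradicting that $\Sigma$ is winning.

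First I would record the invariant that drives the whole argument. By induction on the construction of $\rtaux{\Sigma}{\tilde\rho\tbpp\tilde\sigma}{\vec{e}}$ along any path from the root $\rtsigma{\Sigma}{\rho\tbpp\sigma}=\rtaux{\Sigma}{\rho\tbpp\sigma}{\varepsilon}$, every node visited with accumulated sequence $\vec{e}$ satisfies: (i) $\vec{e}$ is a play of $\game_{\rho\pp\sigma}$, i.e.\ $\vec{e}\in\Trace{\emptyset,\rightarrow_{\Sem{\rho\pp\sigma}}}$; (ii) $\vec{e}$ conforms to $\Sigma$; and (iii) the configuration labelling the node is exactly the $\snd{\vec{e}}$-reduct of $\rho\tbpp\sigma$. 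Claim (iii) is immediate from the correspondence between events $(n,\beta)$ and turn-based transitions $\tblts{\beta}$ built into the event structure $\Sem{\rho\pp\sigma}$ of Definition~\ref{def:esacs}; claim (i) follows because each extension is by an enabled event carrying a fresh time stamp $|\vec{e}|+1$, hence conflict-free; and claim (ii) holds because the construction extends $\vec{e}$ only either by an $\playerA/\playerB$-labelled event (the branching clause, which never constrains conformance of $\playerC$) or by the single $\playerC$-event $\Sigma(\vec{e})$ prescribed by the strategy (the third clause).

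Next I would locate the offending leaf. An $\xmark$ can be produced only by the last clause of Definition~\ref{def:treefromstrat}, at a node with configuration $\tilde\rho\tbpp\tilde\sigma$ and play $\vec{e}$ for which $\tilde\rho\neq\stopA$, no $\playerA/\playerB$-move is enabled, and $\Sigma(\vec{e})$ enables no $\playerC$-transition. By Lemma~\ref{lem:tbpptoppA}(1) the moves available at a reachable configuration are uniform in kind, so exactly two situations arise. In the first, $\tilde\rho\tbpp\tilde\sigma\notblts{}$: here $\vec{e}$ is already a finite maximal play of the game, and since the only node carrying a $\cmark$ as last action is $\Zero\tbpp\tilde\sigma$ --- which the construction always emits as a leaf through the $\stopA$-clause and never reprocesses --- the configuration $\tilde\rho\tbpp\tilde\sigma$ has $\tilde\rho\notin\Set{\stopA,\Zero}$, so $\vec{e}$ does not end in a $(k,\playerC\!:\!\cmark)$ event. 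By Lemma~\ref{lem:winC}, $\playerC$ does not win in $\vec{e}$; as $\vec{e}$ conforms to $\Sigma$ by the invariant, this contradicts the hypothesis that $\Sigma$ is winning.

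The remaining situation is the main obstacle and is where winning is used in its full force: only $\playerC$-moves are enabled at $\tilde\rho\tbpp\tilde\sigma$, yet $\Sigma(\vec{e})$ selects none of them. Then no play conforming to $\Sigma$ can properly extend $\vec{e}$, so $\vec{e}$ is a maximal conforming play that stops at the non-successful configuration $\tilde\rho\tbpp\tilde\sigma$ with $\tilde\rho\neq\stopA$, and whose last event is not a $\playerC\!:\!\cmark$. Since $\Sigma$ is a winning strategy for $\playerC$, every maximal play conforming to $\Sigma$ must be won by $\playerC$, i.e.\ be infinite or terminate by a $\playerC\!:\!\cmark$ move (Lemma~\ref{lem:winC}); the play $\vec{e}$ violates this, the desired contradiction. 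Hence the $\xmark$-clause is never reached and, by the $\stopA$-clause, every leaf of $\rtsigma{\Sigma}{\rho\tbpp\sigma}$ has the form $\Zero\tbpp\tilde\sigma$. I expect the delicate point to be precisely this last case: making rigorous that a winning $\Sigma$ cannot leave $\playerC$ idle at a reachable, non-successful configuration that still admits a $\playerC$-move, which relies on reading ``winning'' through the maximal-play formulation of Definition~\ref{def:Phi} together with Lemma~\ref{lem:winC}.
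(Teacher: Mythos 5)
Your proposal cannot be matched against the paper's own argument for the simple reason that the paper gives none: Lemma~\ref{lem:treefromstrat} is stated bare and immediately used, together with Lemma~\ref{lem:equivtb1}, to close the direction ($\Leftarrow$) of the game-theoretic characterization, so your proof supplies reasoning the paper leaves implicit. The structure you set up is the right one: the invariant that every node of $\rtaux{\Sigma}{\rho\tbpp\sigma}{\vec{e}}$ carries a play of $\game_{\rho\pp\sigma}$ that conforms to $\Sigma$ and whose $\snd{\cdot}$-reduct (Definition~\ref{def:esacs}) is the configuration labelling that node, and the exhaustive split of the $\xmark$-clause into (i) a stuck configuration and (ii) a configuration where only $\playerC$-moves are enabled but $\Sigma(\vec{e})=\emptyset$ --- exhaustive because, by the definition of strategy, $\Sigma(\vec{e})$ may only contain \emph{enabled} events, and $\playerC{:}\cmark$ is never enabled at a clause-(4) node since there the client differs from $\stopA$. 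One correction on exactly the point you flag as delicate, case (ii): the appeal to ``the maximal-play formulation of Definition~\ref{def:Phi}'' does not do what you want, because maximality in Definition~\ref{def:Phi} is maximality \emph{as a trace of the LTS}, and your play $\vec{e}$ is not trace-maximal --- the enabled $\playerC$-moves do extend it as a trace, just not conformantly --- so $\mathbf{P}(\playerC,\vec{e})$ holds vacuously and Definition~\ref{def:Phi} alone yields no contradiction there (conforming-maximality and trace-maximality must not be conflated). What does close the case is Lemma~\ref{lem:winC} exactly as the paper states it: a finite play is won by $\playerC$ if and only if it contains a $\playerC{:}\cmark$ event, with no maximality proviso. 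Under that reading your two cases collapse into one uniform argument --- any $\xmark$-leaf produces a finite conforming play containing no $\cmark$ event (the recursion never traverses a $\cmark$ transition), hence a conforming play lost by $\playerC$, contradicting that $\Sigma$ is winning --- and the detour through maximal conforming plays becomes unnecessary. So the proof stands provided the delicate step rests on Lemma~\ref{lem:winC} (equivalently, on reading ``winning'' as constraining the plays that cannot be conformantly extended); the half of your justification resting on Definition~\ref{def:Phi} would, taken literally, fail.
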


We hence get (\ref{eq:aiffb})($\Leftarrow$)
by Lemmas \ref{lem:treefromstrat} and \ref{lem:equivtb1}.

\paragraph{We can now proceed with (\ref{eq:aiffb})($\Rightarrow$)}

\begin{defi}
\label{def:rtsedgeslabelling}
Let $\mathsf{T}\in\rts{\rho\!\tbpp\!\sigma}$ such that all its leaves are of the form $\Zero\tbpp\tilde{\sigma}$.
\begin{enumerate}

\item
The moves-labelled tree of $\mathsf{T}$, dubbed $\movlabtree{\mathsf{T}}$ is obtained out of $\mathsf{T}$ by labelling its edges as follows:
if $E$ is an edge from a node $N$ to a child $M$ of its, we label it by $(L(N),\beta)$, where $\beta$ is such that $N\tblts{\beta}M$ and $L(N)$ is the level of $N$.

\item
Given a finite path $\bm p$ in $\movlabtree{\mathsf{T}}$ starting from the root, we define
$\finplay{\vec{p}}$ as the sequence of labels of the edges of the path.

\item
Given a finite path $\bm p$ in $\movlabtree{\mathsf{T}}$ starting from the root,
we define 
$$\nextmove{\bm p} \ByDef \left\{ \begin{array}{l@{\hspace{8mm}}l}
                                     (n,\beta)  & if ~~(*)\\[2mm]  
                                      \emptyset & \mbox{otherwise}
                                        \end{array} \right.
$$
(*) the last node $N$ in $\bm p$ is of the form $\sum_{i\in I}\alpha_i.\tilde\rho_i \tbpp \sum_{j\in J}\Dual{\alpha}_j.\tilde\sigma_j$ or $\stopA \tbpp \tilde\sigma$ and $(n,\beta)$ is the label of the only edge out of $N$. 
\end{enumerate}
\end{defi}

\begin{lem}
\label{lem:playsfrompaths}
Let $\mathsf{T}\in\rts{\rho\!\tbpp\!\sigma}$ such that all its leaves are of the form $\Zero\tbpp\tilde{\sigma}$ and let $\bm p$ be a finite path in $\movlabtree{\mathsf{T}}$.
Then $\finplay{\vec{p}}$ is a finite play
 of  $\game_{\rho\!\pp\!\sigma}$
\end{lem}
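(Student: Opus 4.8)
The plan is to unwind the two layers of definitions and check directly that the edge-label sequence of a path meets the conditions for a trace of the LTS over configurations of $\Sem{\rho\pp\sigma}$. First I would fix the path $\bm p = N_0 N_1 \cdots N_k$ starting at the root $N_0 = \rho\tbpp\sigma$, where each consecutive pair is related by a transition $N_{i-1}\tblts{\beta_i} N_i$. Here the hypothesis that all leaves have the form $\Zero\tbpp\tilde{\sigma}$ is exactly what guarantees that every edge of $\movlabtree{\mathsf T}$ is a genuine $\tblts{}$-transition: an $\xmark$-leaf, which carries no transition, cannot occur, and inspecting the clauses of Definition~\ref{def:rtstbpp} shows that each child of a node is reached by some $\tblts{\beta}$ (including the terminal step $\stopA\tbpp\tilde\sigma \tblts{\playerC:\cmark}\Zero\tbpp\tilde\sigma$). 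Since $N_i$ is a child of $N_{i-1}$, the tree levels satisfy $L(N_i)=L(N_{i-1})+1$; fixing the level convention so that the first edge receives time stamp $1$ (matching $n=|X|+1$ with $X=\emptyset$ in the definition of $\vdash_{\!\rho\!\pp\!\sigma}$) yields $L(N_{i-1})=i$, and hence $\finplay{\bm p}=\seq{(1,\beta_1)(2,\beta_2)\cdots(k,\beta_k)}$.

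Next I would exhibit this sequence as an element of $\Trace{\emptyset,\rightarrow_{\Sem{\rho\!\pp\!\sigma}}}$, identifying a play with its underlying event sequence in the usual way. Setting $C_0=\emptyset$ and $C_i=\Set{(1,\beta_1),\ldots,(i,\beta_i)}$, it suffices to verify $C_{i-1}\lts{(i,\beta_i)}C_i$ for each $i$ against the three clauses of the LTS over configurations. Non-membership $(i,\beta_i)\notin C_{i-1}$ is immediate, since every event in $C_{i-1}$ has first component strictly below $i$. Conflict-freeness $\mathit{CF}(C_i)$ holds because $\#$ relates only events sharing a time stamp, whereas the elements of $C_i$ carry the pairwise distinct stamps $1,\ldots,i$.

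The only clause demanding care is the enabling $C_{i-1}\vdash(i,\beta_i)$, and this is where the correspondence with the operational semantics enters. Taking $X=C_{i-1}$, which is finite, conflict-free, and of exactly the shape on which $\snd{-}$ is defined, I obtain $\snd{X}=\seq{\beta_1\cdots\beta_{i-1}}$ and $|X|+1=i$; the prefix of the path supplies precisely the run $\rho\tbpp\sigma\tblts{\snd{X}}N_{i-1}\tblts{\beta_i}$, so $(X,(i,\beta_i))\in\vdash_{\!\rho\!\pp\!\sigma}$, and since $X$ is conflict-free this entails $(X,(i,\beta_i))\in\sat{\vdash_{\!\rho\!\pp\!\sigma}}=\;\vdash$ (take $Y=X$ in the saturation), giving $C_{i-1}\vdash(i,\beta_i)$. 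Chaining these transitions exhibits $\finplay{\bm p}$ as a finite trace of $(\emptyset,\rightarrow_{\Sem{\rho\!\pp\!\sigma}})$, i.e.\ a finite play of $\game_{\rho\!\pp\!\sigma}$. The argument is essentially bookkeeping, and I would present it as an induction on the path length $k$ so that the incremental extension of the play is explicit; the one genuine subtlety is the level-to-time-stamp alignment, which must be pinned down so that the path prefixes match the premises of $\vdash_{\!\rho\!\pp\!\sigma}$.
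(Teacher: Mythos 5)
Your proof is correct and takes essentially the same route as the paper, whose entire argument is a one-line appeal to Definition~\ref{def:rtsedgeslabelling} and the definition of play: you have simply carried out that definitional unwinding in full, checking non-membership, conflict-freeness, and enabling (via $\snd{-}$, $n=|X|+1$, and saturation) for each prefix configuration. The only point the paper leaves implicit that you rightly pin down is the alignment of tree levels with time stamps so that the first edge carries stamp $1$.
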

\proof
By Definition \ref{def:rtsedgeslabelling} and by definition of play of $\game_{\rho\!\pp\!\sigma}$.
\qed

\begin{defi}
\label{defi:stratfromtree}
Let $\mathsf{T}\in\rts{\rho\!\tbpp\!\sigma}$ such that all its leaves are of the form $\Zero\tbpp\tilde{\sigma}$. We define the strategy $\Sigma_\mathsf{T}$ in $\game_{\rho\!\pp\!\sigma}$, for player $\playerC$, by

$$
\Sigma_\mathsf{T}(\vec{e}) \ByDef \left\{ \begin{array}{l@{\hspace{8mm}}l}
                                     \nextmove{\bm p}  & if ~~(*)\\[2mm]  
                                      \emptyset & \mbox{otherwise}
                                        \end{array} \right.
$$
(*) $\vec{e} = \finplay{\vec{p}}$ for some $\bm p$ which is a finite path in $\movlabtree{\mathsf{T}}$.
\end{defi}

\begin{lem}
\label{lem:stratfromtree}
Let $\mathsf{T}\in\rts{\rho\!\tbpp\!\sigma}$ such that all its leaves are of the form $\Zero\tbpp\tilde{\sigma}$. Then $\Sigma_\mathsf{T}$ is a univocal winning strategy for player $\playerC$ for the game $\game_{\rho\!\pp\!\sigma}$.
\end{lem}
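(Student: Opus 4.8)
The plan is to establish three things about $\Sigma_\mathsf{T}$: that it is a well-defined univocal strategy for $\playerC$, that every play conforming to it follows a branch of the tree $\mathsf{T}$, and that consequently $\playerC$ wins every such play. Throughout I would use that $\mathsf{T}\in\rts{\rho\tbpp\sigma}$ has \emph{all} its leaves of the form $\Zero\tbpp\tilde\sigma$, i.e. it contains no $\xmark$-leaf.

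First I would check that $\Sigma_\mathsf{T}$ is a univocal strategy. Since sibling edges leaving any node of $\movlabtree{\mathsf{T}}$ carry distinct labels (distinct actions, hence distinct pairs $(L(N),\beta)$), a finite play $\vec{e}$ determines \emph{at most one} path $\bm p$ with $\finplay{\bm{p}}=\vec{e}$; thus $\Sigma_\mathsf{T}(\vec{e})=\nextmove{\bm p}$ is well defined, and by construction $\nextmove{\bm p}$ is either $\emptyset$ or a singleton, so $|\Sigma_\mathsf{T}(\vec{e})|\le 1$. Moreover, when $\nextmove{\bm p}=(n,\beta)$ it is the label of the unique edge leaving the last node $N$ of $\bm p$, and this edge is a move of $\playerC$: a $\playerC{:}a$ move when $N$ is an affectible-synchronisation node $\sum_i\alpha_i.\tilde\rho_i\tbpp\sum_j\Dual{\alpha}_j.\tilde\sigma_j$, or the $\playerC{:}\cmark$ move when $N=\stopA\tbpp\tilde\sigma$. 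Extending $\bm p$ by that edge yields a path $\bm{p}'$ with $\finplay{\bm{p}'}=\vec{e}(n,\beta)$, which is a play by Lemma \ref{lem:playsfrompaths}; hence $\Sigma_\mathsf{T}$ satisfies the defining condition of a strategy for $\playerC$.

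Next I would prove the key correspondence: every finite play $\vec{e}$ conforming to $\Sigma_\mathsf{T}$ equals $\finplay{\bm{p}}$ for a unique finite path $\bm p$ of $\movlabtree{\mathsf{T}}$, by induction on $|\vec{e}|$. The empty play corresponds to the root. For the step, let $\bm p$ end at node $N$. By Lemma \ref{lem:tbpptoppA}(1) the moves enabled at $N$ fall into exactly one class: all in $B=\{\playerA{:}a,\playerA{:}\Dual a,\playerB{:}a,\playerB{:}\Dual a\mid a\in\Names\}$, or all of the form $\playerC{:}a$, or the single $\playerC{:}\cmark$ when $N=\stopA\tbpp\tilde\sigma$. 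If $N$ is a $B$-node, then $\nextmove{\bm p}=\emptyset$, so conformance does not constrain the next event $e_i$; being legal, $e_i$ is some $B$-transition of $N$, and since the second clause of Definition \ref{def:rtstbpp} makes every $B$-transition of $N$ a child, $e_i$ labels an edge of $\mathsf{T}$ and extends $\bm p$. If instead $N$ is a $\playerC$-node, then $e_i\in\euniverse_\playerC$, so conformance forces $e_i\in\Sigma_\mathsf{T}(\vec{e})=\nextmove{\bm p}$, i.e. $e_i$ is exactly the edge recorded in $\mathsf{T}$, again extending $\bm p$. The main obstacle is getting this case analysis exactly right: one must invoke Lemma \ref{lem:tbpptoppA}(1) to guarantee both that at a $B$-node no $\playerC$-move is available (so conformance is vacuous there and the play may follow any child), and that at a $\playerC$-node the tree's unique recorded child is the only conforming continuation.

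Finally I would derive that $\Sigma_\mathsf{T}$ is winning. By Lemma \ref{lem:winC}, $\playerC$ can fail to win a play only if it is finite, maximal, and does not end with a $\playerC{:}\cmark$ move. So suppose $\vec{e}$ is a finite maximal play conforming to $\Sigma_\mathsf{T}$, corresponding by the previous step to a path $\bm p$ ending at $N$. Maximality means $N\notblts{}$, so $N$ has no children and is a leaf of $\mathsf{T}$; by hypothesis every leaf has the form $\Zero\tbpp\tilde\sigma$, and such a node is reached in $\mathsf{T}$ only by a $\playerC{:}\cmark$ edge out of $\stopA\tbpp\tilde\sigma$. Hence the last event of $\vec{e}$ is a $\playerC{:}\cmark$ move, and $\playerC$ wins. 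Every infinite conforming play wins by Lemma \ref{lem:winC}, and every non-maximal finite conforming play wins since $\mathbf{P}(\playerC,\vec{e})$ holds vacuously. Therefore $\playerC$ wins all plays conforming to $\Sigma_\mathsf{T}$, and combined with univocity this shows $\Sigma_\mathsf{T}$ is a univocal winning strategy for $\playerC$ in $\game_{\rho\pp\sigma}$, as required.
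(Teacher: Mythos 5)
Your univocity argument, and your analysis of the second‑clause ($B$‑move) and third‑clause ($\playerC{:}a$) nodes, are sound. The genuine gap is in your central correspondence claim, namely that every finite play conforming to $\Sigma_\mathsf{T}$ traces out a path of $\movlabtree{\mathsf{T}}$. This is false. Conformance constrains only $\playerC$'s events, whereas the tree prunes more than $\playerC$'s choices: by the first clause of Definition \ref{def:rtstbpp}, a node $\stopA\tbpp\tilde{\sigma}$ keeps \emph{only} the $\playerC{:}\cmark$ child, yet when $\tilde{\sigma}$ is an internal choice the rule $\tilde{\rho}\tbpp\Internal_{i\in I}\Dual{a}_i.\sigma_i\tblts{\playerB:\Dual{a}_k}\tilde{\rho}\tbpp\buf{\Dual{a}_k}\sigma_k$ applies for \emph{every} $\tilde{\rho}$, including $\stopA$ and $\Zero$, so $\playerB$‑moves are also enabled at such nodes and even at the leaves $\Zero\tbpp\tilde{\sigma}$. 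These $\playerB$‑events are unconstrained by conformance, so conforming plays can leave the tree. Your appeal to Lemma \ref{lem:tbpptoppA}(1) cannot close this hole: that lemma says nothing about $\cmark$, and the mutual exclusivity of move classes fails precisely at configurations whose client is $\stopA$ or $\Zero$. Concretely, take $\rho=a$ and $\sigma=\Dual{a}.(\Dual{b}\oplus\Dual{c})$: here $\mathsf{T}$ passes through $\stopA\tbpp(\Dual{b}\oplus\Dual{c})$ with single child $\Zero\tbpp(\Dual{b}\oplus\Dual{c})$, but both $(1,\playerB{:}\Dual{a})(2,\playerA{:}a)(3,\playerB{:}\Dual{b})$ (deviation before $\cmark$) and $(1,\playerB{:}\Dual{a})(2,\playerA{:}a)(3,\playerC{:}\cmark)(4,\playerB{:}\Dual{b})$ (continuation after $\cmark$) are plays conforming to $\Sigma_\mathsf{T}$ that follow no path of $\mathsf{T}$. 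The second play also defeats the winning criterion you use: it is finite, maximal, conforming, and does not \emph{end} with $\playerC{:}\cmark$, so under your ``fails to win only if finite, maximal and not ending with $\cmark$'' reading, $\Sigma_\mathsf{T}$ would not be winning at all; the lemma survives only under the ``some prefix ends with $\cmark$'' formulation of Lemma \ref{lem:winC}.

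The repair is exactly where the content of the lemma lies: you must treat off‑tree conforming plays. Show that a conforming play can leave the tree only by a $\playerB$‑move taken at a first‑clause node $\stopA\tbpp\tilde{\sigma}$ or at a leaf $\Zero\tbpp\tilde{\sigma}$, landing in a configuration ($\stopA\tbpp\buf{\Dual{a}_k}\sigma_k$, resp.\ $\Zero\tbpp\buf{\Dual{a}_k}\sigma_k$) from which no further conforming extension exists. If the deviation happens \emph{before} the $\cmark$‑move, the resulting play is not a maximal trace --- $\playerC{:}\cmark$ is still enabled in the LTS, merely forbidden by conformance because $\Sigma_\mathsf{T}$ is empty off‑tree --- hence $\mathbf{P}(\playerC,\vec{e})$ holds vacuously and the play is won; if it happens \emph{after} the $\cmark$‑move, the play contains $\cmark$ and is won by Lemma \ref{lem:winC}. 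Combined with your tree‑following analysis for the remaining plays, this yields the statement. (Note that the paper asserts this lemma without proof, so there is no official argument to compare against; but as written, your proof does not establish it.)
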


We hence get (\ref{eq:aiffb})($\Rightarrow$) from the above Lemma and \ref{lem:equivtb1}.

%

\subsection{Proof of $\ref{th:complyAwinstrat-1} \Iff \ref{th:complyAwinstrat-4} $ 
 of Theorem \ref{th:complyAwinstrat}}
\label{subsec:affectorchequiv}

In the following proof we could have proceed by using a lemma similar to \ref{lem:coinductiveCharRbk}. However, in order to get also a correspondence between 
orchestrators and derivations to be used for the proof of the Main Theorem II (\ref{th:derstratorchequiv}), we proceed by providing two formal systems axiomatizing the relation of orchestrated compliance.

\subsubsection{
(Formal systems and synthesis for $\complyO$)}
\label{appendix:soundcomplsynth}

In the following, orchestrators are considered as explicit recursive terms rather than as possibly infinite
regular trees. 
We first define a formal system $\derOrch$ in which the relation of derivability characterizes the relation $\complyO$. 
In System $\derOrch$ the relation $\complyO$ is the intended interpretation of the symbol  $\complyOF$.

\begin{defi}[Formal System for Orchestrated Compliance]\label{def:formalOrchCompl}
An environment $\Gamma$ is a finite set of expressions of the form $f : \delta \complyOF \gamma$ where
$\delta,\gamma\in\ASC$ and $f\in\Orch$. The judgments of System $\derOrch$ are expressions of the form
$\Gamma\der f : \rho \complyOF \sigma$. The axioms and rules of $\derOrch$ are as in Figure
\ref{fig:forsystorchder}.
\end{defi}

\begin{figure}
\hrule
\vspace{2mm}
\[\begin{array}{c@{\hspace{8mm}}c}
\mbox{\scriptsize$(\CkptcomplAx)$}:\Inf{}
	{\Gamma\derOrch \stopf : \stopA \complyOF \sigma}
&
\mbox{\scriptsize$(\CkptcomplHyp)$}:\Inf{}
{\Gamma, f:\rho\complyOF\sigma \derOrch f:\rho\complyOF\sigma }
\end{array}\]
\vspace{0mm}
\[\begin{array}{l}
(\mbox{\footnotesize$+\cdot+$}):
\Inf{
     \Gamma'
     	\derOrch f:\rho
    	  \complyOF
		\sigma}
{ \Gamma\derOrch \orchAct{\Dual{\alpha}}{\alpha}^+.f : \alpha.\rho+\rho'\complyOF\Dual{\alpha}.\sigma+\sigma'}\\[4mm]
\mbox{\footnotesize where $\Gamma' = \Gamma, \langle\Dual{\alpha},{\alpha}\rangle^+.f: \alpha.\rho+\rho'\complyOF\Dual{\alpha}.\sigma+\sigma'$}
\\[8mm]
(\mbox{\footnotesize$\oplus\cdot+$}):
\Inf{
    \forall i\in I.~ \Gamma' \derOrch 
    	f_i:\rho_i
    	\complyOF
    		\sigma_i}
{ \Gamma\derOrch  \bigvee_{i\in\Set{1..n}}\langle{a_i},\Dual{a}_i\rangle.f_{a_i}: \mbox{\small $\bigoplus$}_{i\in I} \Dual{a}_i.{\rho}_i\complyOF
    	\mbox{\small $\sum$}_{j\in I\cup J} a_j.{\sigma}_j} \\[4mm]
\mbox{\footnotesize where $\Gamma' = \Gamma, \bigvee_{i\in\Set{i=1..n}}\langle{a_i},\Dual{a}_i\rangle.f_i:
 \mbox{\footnotesize $\bigoplus$}_{i\in I} \Dual{a}_i.{\rho}_i\complyOF
    	\mbox{\footnotesize $\sum$}_{j\in I\cup J} a_j.{\sigma}_j $}
\\[8mm]
(\mbox{\footnotesize$+\cdot\oplus$}):
\Inf{
    \forall i\in I.~ \Gamma \derOrch 
    	f_i:\rho_i
    	\complyOF
    		\sigma_i}
{ \Gamma\derOrch \bigvee_{i\in\Set{i=1..n}}\langle\Dual{a}_i,{a_i}\rangle.f_i:\mbox{\small $\sum$}_{j\in I\cup J} a_j.{\sigma}_j\complyOF \mbox{\small $\bigoplus$}_{i\in I} \Dual{a}_i.{\rho}_i    }\\[4mm]
\mbox{\footnotesize where $\Gamma' = \Gamma, \bigvee_{i\in\Set{1..n}}\langle\Dual{a}_i,{a_i}\rangle.f_i:
  \mbox{\footnotesize $\sum$}_{j\in I\cup J} a_j.{\sigma}_j\complyOF \mbox{\footnotesize $\bigoplus$}_{i\in I} \Dual{a}_i.{\rho}_i $}	\\[2mm]
\end{array}\]
\caption{System $\derOrch$}
\label{fig:forsystorchder}
\vspace{2mm}
\hrule
\end{figure}

\begin{thm}[Soundness and Completeness of System $\derOrch$ w.r.t $\complyO$]
\label{th:scder}
$$ \derOrch f:\rho\complyOF\sigma ~~~~ \Iff ~~~~ f: \rho\complyO\sigma$$
\end{thm}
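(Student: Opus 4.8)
The plan is to mirror the soundness/completeness argument already carried out for system $\der$ with respect to $\ACRel$ (Propositions \ref{prop:soundnessACRel} and \ref{thr:completeness}), exploiting the fact that the rules of $\derOrch$ are precisely those of $\der$ decorated with orchestrators. The one genuinely new ingredient is a structural (coinductive) characterization of the operationally defined relation $\complyO$, which here plays the role that Lemma \ref{lem:coinductiveCharRbk} plays for $\complyR$.

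First I would establish such a characterization: for $f\in\Orch$ and $\rho,\sigma\in\ASC$, $f:\rho\complyO\sigma$ holds if and only if one of the following holds: (i) $\rho=\stopA$; (ii) $\rho=\alpha.\rho_0+\rho'$, $\sigma=\Dual{\alpha}.\sigma_0+\sigma'$, $f=\orchAct{\Dual{\alpha}}{\alpha}^+.f_0$ and $f_0:\rho_0\complyO\sigma_0$; (iii) $\rho=\bigoplus_{i\in I}\Dual{a}_i.\rho_i$, $\sigma=\sum_{j\in I\cup J}a_j.\sigma_j$, $f=\bigvee_{i\in I}\orchAct{a_i}{\Dual{a}_i}.f_i$ and $\forall i\in I.\ f_i:\rho_i\complyO\sigma_i$; (iv) the symmetric clause with the roles of $\bigoplus$ and $\sum$ exchanged. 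The proof is by analysis of the orchestrated LTS. The crucial observation is that, because of the shape constraints imposed on orchestrators (a $\orchAct{\alpha}{\Dual{\alpha}}^+$ prefix cannot occur inside a disjunction, and disjunctions carry only $\orchAct{\alpha}{\Dual{\alpha}}$ actions), a single $\Lts{}$ step from $\rho\pf{f}\sigma$ is forced to follow exactly the synchronization enabled by the top symbol of $f$; hence the reachable stuck configurations of $\rho\pf{f}\sigma$ are completely determined by $f$ together with the top operators of $\rho$ and $\sigma$. In particular, if $f$ does not have the shape prescribed in clauses (ii)--(iv), or if one of the residual pairs $\rho_i\pf{f_i}\sigma_i$ reaches a stuck configuration with a non-terminated client, then so does $\rho\pf{f}\sigma$, and conversely.

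Given this characterization I would introduce the stratified approximants $\complyOcok{k}$ and $\complyOco=\bigcap_k\complyOcok{k}$, with $\complyOcok{k+1}\subseteq\complyOcok{k}$, and prove $\complyO=\complyOco$ exactly as for $\ACRel$. Soundness of $\derOrch$ then follows by the same simultaneous induction on the derivation $\Der::\Gamma\derOrch f:\rho\complyOF\sigma$ and on the level $k$ used in Proposition \ref{prop:soundnessACRel}: for each rule the premise's residual judgment is validated at level $k-1$ under the extended environment, while the matching $\Lts{}$ step (uniquely determined, by the characterization, by the rule's orchestration prefix) raises the level back to $k$; the two axioms $(\CkptcomplAx)$ and $(\CkptcomplHyp)$ are immediate. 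For completeness I would adapt the proof-search procedure $\Prove$ of Figure \ref{fig:Prove} into a procedure driven simultaneously by $\rho$, $\sigma$ and $f$: at each step the shape of $f$ selects the rule to attempt, the residual orchestrators feed the recursive calls, and the $(\CkptcomplHyp)$ clause fires as soon as a judgment $f:\rho\complyOF\sigma$ reappears. Termination is argued as in Lemma \ref{prop:provecorr}(\ref{prop:provecorr-ii}): the contracts occurring in recursive calls are subexpressions of $\rho$ and $\sigma$, hence finitely many since affectible contracts are regular trees, so the $(\CkptcomplHyp)$ test cannot fail infinitely often; correctness of the returned derivation is immediate by construction, and the fact that the search cannot fail when $f:\rho\complyO\sigma$ follows by contraposition from the characterization.

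I expect the main obstacle to be the coinductive characterization itself, specifically proving that the orchestrator's structure forces the reachable stuck configurations to match the clause conditions, since this requires tracking how the bundled reduction $\Lts{}=\ltsstar{}\circ(\lts{\tau}\cup\lts{+})$ interleaves the internal-choice resolutions with the single orchestrated synchronization, and ruling out spurious stuck states that the orchestrator might otherwise permit.
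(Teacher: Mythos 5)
Your overall strategy---a structural characterization of $\complyO$, stratified approximants $\complyOcok{k}$, soundness by the simultaneous induction of Proposition \ref{prop:soundnessACRel}, completeness via an $f$-driven variant of \Prove---is exactly what the paper intends: its own proof of Theorem \ref{th:scder} is a one-line delegation to Proposition \ref{prop:soundnessACRel} and Lemma \ref{prop:provecorr}, and the stratified relation you propose even appears verbatim later in the paper (Definition \ref{def:coskipcompl}, Proposition \ref{lem:orchcoinductiv-char}). The soundness half of your argument goes through as you describe.

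The gap is in the ``only if'' direction of your characterization lemma, which you rightly identify as the crux but which is false as stated---and with it, literal completeness. Operational compliance places no constraint on parts of $f$ that can never fire, whereas the rules of $\derOrch$ pin the orchestrator's syntax down exactly. Concretely, take $\rho=\Dual{a}.\stopA$ (a singleton internal choice), $\sigma=a.\stopA$, and $f=\orchAct{a}{\Dual{a}}.\stopf\vee\orchAct{b}{\Dual{b}}.\stopf$: the second disjunct is inert, the unique maximal reduction of $\rho\pf{f}\sigma$ ends in $\stopA\pf{\stopf}\stopA$, so $f:\rho\complyO\sigma$ holds; yet no rule of $\derOrch$ can have this $f$ as subject, since rule $(\mbox{\footnotesize$\oplus\cdot+$})$ forces the disjunction to range exactly over the client's branches, so $\derOrch f:\rho\complyOF\sigma$ is not derivable. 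Even simpler: when $\rho=\stopA$ \emph{every} $f$ is compliant, while axiom $(\CkptcomplAx)$ derives only $\stopf:\stopA\complyOF\sigma$. So your clauses (iii)/(iv) must be weakened to $f=\bigvee_{h\in H}\orchAct{a_h}{\Dual{a}_h}.f_h$ with $H\supseteq I$ and $f_i:\rho_i\complyO\sigma_i$ required for $i\in I$ only, and clause (i) must leave $f$ unconstrained; once this is done, your $f$-driven proof search no longer yields a derivation for $f$ itself, but only for the orchestrator obtained by pruning inert disjuncts (and by replacing $f$ with $\stopf$ under a terminated client). Completeness therefore holds only up to this pruning. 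This imprecision is inherited from the paper's own statement (compare Proposition \ref{th:soundcomplSynth}, which already retreats to the existence of a $g$ with $\regtree{f}=\regtree{g}$), but your proof as written would fail at exactly this step, so you need either the weakened statement or an explicit restriction to orchestrators containing no unusable disjuncts.
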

\begin{proof}
The proof can be developed along the very same lines of the proofs for Proposition \ref{prop:soundnessACRel} and 
Lemma \ref{prop:provecorr} for what concerns, respectively, soundness and completeness.
\end{proof}

We provide now a formal system  $\derinfOrch$ equivalent to $\derOrch$.
In such a system, unproper (namely open) orchestrators can be used.
However we shall apply the system only for proper (namely closed) orchestrators.

The algorithm $\Synth$ corresponds to a proof-search algorithm in system $\derinfOrch$:
it synthesises,
given $\rho$ and $\sigma$, an orchestrator $f$ such that $f: \rho\complyO\sigma$, and hence
a univocal winning strategy for player $\playerC$ in the game $\game_{\rho\!\pp\!\sigma}$.

\begin{defi}[The equivalent System $\derinfOrch$]\label{def:formalOrchComplInf}
An environment $\Gamma$ is a finite set of expressions of the form $x : \delta \complyOF \gamma$ where
$\delta,\gamma\in\ASC$ and $x$ is an orchestrator variable. The judgments of System $\derOrch$ are expressions of the form
$\Gamma\der f : \rho \complyOF \sigma$, where $f$ is an orchestrator, possibly open. The axioms and rules of $\derinfOrch$ are as in Figure
\ref{fig:forsystinforchder}.
\end{defi}

\begin{figure}
\hrule
\vspace{2mm}
\[\begin{array}{c@{\hspace{8mm}}c}
\mbox{\scriptsize$(\CkptcomplAx)$}:\Inf{}
	{\Gamma\derinfOrch \stopf : \stopA \complyOF \sigma}
&
\mbox{\scriptsize$(\CkptcomplHyp)$}:\Inf{}
{\Gamma, x:\rho\complyOF\sigma \derinfOrch x:\rho\complyOF\sigma }
\end{array}\]
\vspace{0mm}
\[\begin{array}{l}
(\mbox{\footnotesize$+\cdot+$}):
\Inf{
     \Gamma'
     	\derinfOrch f:\rho
    	  \complyOF
		\sigma}
{ \Gamma\derinfOrch \rec x.\orchAct{\Dual{\alpha}}{\alpha}.f : \alpha.\rho+\rho'\complyOF\Dual{\alpha}.\sigma+\sigma'}\\[4mm]
\mbox{\footnotesize where $\Gamma' = \Gamma, x: \alpha.\rho+\rho'\complyOF\Dual{\alpha}.\sigma+\sigma'$}
\\[8mm]
(\mbox{\footnotesize$\oplus\cdot+$}):
\Inf{
    \forall i\in I.~ \Gamma' \derinfOrch 
    	f_{a_i}:\rho_i
    	\complyOF
    		\sigma_i}
{ \Gamma\derinfOrch  \rec x.\bigvee_{\Set{a_i\mid i=1..n}}f_{a_i}:\mbox{\small $\bigoplus$}_{i\in I} \Dual{a}_i.{\rho}_i\complyOF
    	\mbox{\small $\sum$}_{j\in I\cup J} a_j.{\sigma}_j} \\[4mm]
\mbox{\footnotesize where $\Gamma' = \Gamma, x:
 \mbox{\footnotesize $\bigoplus$}_{i\in I} \Dual{a}_i.{\rho}_i\complyOF
    	\mbox{\footnotesize $\sum$}_{j\in I\cup J} a_j.{\sigma}_j $}
\\[8mm]
(\mbox{\footnotesize$+\cdot\oplus$}):
\Inf{
    \forall i\in I.~ \Gamma \derinfOrch 
    	f_{a_i}:\rho_i
    	\complyOF
    		\sigma_i}
{ \Gamma\derinfOrch \rec x. \bigvee_{\Set{a_i\mid i=1..n}}f_{a_i}: \mbox{\small $\sum$}_{j\in I\cup J} a_j.{\sigma}_j\complyOF \mbox{\small $\bigoplus$}_{i\in I} \Dual{a}_i.{\rho}_i    }\\[4mm]
\mbox{\footnotesize where $\Gamma' = \Gamma, \bigvee_{\Set{a_i\mid i=1..n}}f_{a_i}:
  \mbox{\footnotesize $\sum$}_{j\in I\cup J} a_j.{\sigma}_j\complyOF \mbox{\footnotesize $\bigoplus$}_{i\in I} \Dual{a}_i.{\rho}_i $}	\\[2mm]
\end{array}\]
\caption{System $\derinfOrch$}
\label{fig:forsystinforchder}
\vspace{2mm}
\hrule
\end{figure}

\begin{prop} 
\label{prop:derOrchderinfOrchequiv}
Let $f$ be a proper (closed) orchestrator.
$$\derOrch f:\rho\complyOF\sigma \hspace{3mm}\Iff \hspace{3mm}\derinfOrch f:\rho\complyOF\sigma$$
\end{prop}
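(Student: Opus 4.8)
The plan is to prove the two systems derive, for proper orchestrators, exactly the same judgments, by a shape-preserving translation of derivations in each direction, exploiting the fact that $\derOrch$ and $\derinfOrch$ are two presentations of one and the same proof system that differ only in the way the synthesised orchestrator is carried along a derivation. The key observation is that both systems are syntax-directed by the pair $(\rho,\sigma)$ together with the \emph{coordinates} $\delta\complyOF\gamma$ stored in the environment (forgetting the orchestrator $f$, resp.\ the variable $x$, attached to each entry). Comparing Figures~\ref{fig:forsystorchder} and~\ref{fig:forsystinforchder}, each rule of $\derOrch$ has a twin in $\derinfOrch$ acting on the same conclusion coordinate, with the same side conditions and the same choice of index sets and matching action; moreover both twins add exactly the same coordinate $\rho\complyOF\sigma$ to $\Gamma$ when passing to the premises. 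Hence a derivation in either system has a canonical \emph{skeleton} -- the tree of rule names, coordinates and side-condition choices, stripped of orchestrator decorations -- and a skeleton is admissible in $\derOrch$ if and only if it is admissible in $\derinfOrch$, since admissibility of a $(\CkptcomplHyp)$-leaf depends only on whether the required coordinate occurs in the accumulated environment, which is tracked identically in both derivations.

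For the direction $\derinfOrch f:\rho\complyOF\sigma \Rightarrow \derOrch f:\rho\complyOF\sigma$ (with $f$ closed) I would take a $\derinfOrch$ derivation $\Der$ and decorate its skeleton, bottom-up, with $\derOrch$-orchestrators. At a node where $\derinfOrch$ inserts a binder $\rec x.g$, the matching $\derOrch$ node must carry the closed orchestrator obtained from $\rec x.g$ by substituting every still-free recursion variable by the $\rec$-term binding it at an ancestor; by the equi-recursive convention $\rec x.g = g\Subst{x}{\rec x.g}$ this is precisely the (regular) orchestrator demanded by the $\derOrch$ rules. The only point to check is the $(\CkptcomplHyp)$-leaves: where $\derinfOrch$ returns a variable $x$ bound by an ancestor $\rec x.g$, the translated $\derOrch$ leaf returns the full orchestrator labelling that ancestor, which up to unfolding equals $\rec x.g$ and is exactly the entry that the corresponding rule placed in the $\derOrch$ environment at that ancestor; hence the $\derOrch$ hypothesis rule fires. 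The orchestrator read off at the root is then $f$ itself (identified up to unfolding), so $\derOrch f:\rho\complyOF\sigma$ holds.

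The converse $\derOrch f:\rho\complyOF\sigma \Rightarrow \derinfOrch f:\rho\complyOF\sigma$ is symmetric: I keep the skeleton of the given $\derOrch$ derivation, replace each environment orchestrator by a fresh variable, and insert a $\rec$ binder at every rule step as prescribed by $\derinfOrch$. A $(\CkptcomplHyp)$-leaf of the $\derOrch$ derivation returns the full orchestrator of an ancestor; since in $\derOrch$ that orchestrator is \emph{forced} to be the unique solution of the fixpoint equation induced by the loop back to the ancestor -- of the form $g=\mu^+.C[g]$ or $g=\bigvee_{i}\mu_i.C_i[g]$, according to the ancestor's rule -- it coincides up to unfolding with the $\rec$-term $\rec x.(\ldots)$ synthesised at that ancestor in $\derinfOrch$; the leaf therefore translates to the $\derinfOrch$ hypothesis $x$. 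Closedness of $f$ guarantees that every such variable is actually bound, so the translated term is a proper orchestrator whose unfolding is $f$.

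The main work is exactly this fixpoint/unfolding bookkeeping: making precise that a $\derOrch$ derivation whose $(\CkptcomplHyp)$-leaves point back to ancestors denotes the same equi-recursive orchestrator as the $\rec$-annotated term produced by $\derinfOrch$, and verifying that the substitution resolving free recursion variables is well defined and commutes with the orchestration-action constructors in each of the three rules $(+\cdot+)$, $(\oplus\cdot+)$ and $(+\cdot\oplus)$. This is where the hypothesis that $f$ is proper (closed) is used, and where the convention of working up to unfolding turns the literal equality of the root orchestrators -- required because the statement quantifies over the same $f$ on both sides -- into a routine check. Once the correspondence of skeletons and the per-node unfolding identity are established, both implications follow by a straightforward induction on the structure of the given derivation.
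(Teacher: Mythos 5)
The paper states Proposition \ref{prop:derOrchderinfOrchequiv} without any proof at all, treating it as a routine observation that $\derOrch$ and $\derinfOrch$ are two bookkeeping disciplines for one and the same syntax-directed system; so there is no paper proof to compare yours against, and your proposal supplies exactly the missing argument, correctly. You identify the right technical content: the rules of the two systems match on the same coordinates $(\rho,\sigma)$, and the only discrepancy is how a loop is closed at a $(\CkptcomplHyp)$ leaf --- in $\derOrch$ by repeating the full orchestrator stored in the environment, in $\derinfOrch$ by returning a bound recursion variable --- and this is reconciled precisely by the equi-recursive identification $\rec x.f = f\Subst{x}{\rec x.f}$, i.e. by equality of the regular trees $\regtree{\cdot}$. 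One refinement you should make explicit: a ``skeleton'' must record not only rule names and coordinates but also, at each $(\CkptcomplHyp)$ leaf, \emph{which} ancestor the leaf points back to, since in a $\derOrch$ derivation two ancestors may share a coordinate $\rho'\complyOF\sigma'$ while carrying different orchestrators, and it is the matched environment entry (orchestrator included) that determines the ancestor; your own phrasing already treats the back-pointer as part of the data, so this is a matter of stating the invariant precisely rather than repairing a gap. With that convention, your bottom-up substitution of binding $\rec$-terms in one direction, and your guarded-fixpoint argument in the other (uniqueness of solutions of $g=\mu^+.C[g]$ and $g=\bigvee_i\mu_i.C_i[g]$ over regular trees, which holds because every rule guards the recursion by an orchestration action, matching the contractiveness requirement on $\rec$), go through as described. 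Your proof is also consonant with the machinery the paper does develop nearby: Definition \ref{def:derinforch} and Lemma \ref{prop:derderorch-i} perform the same fresh-variable/$\rec$-binder bookkeeping when translating $\der$-derivations into $\derinfOrch$-derivations, so what you have written makes explicit the argument the authors evidently regarded as implicit in that construction.
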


Now we show how  to build an orchestrator $f$  and a derivation in $\derinfOrch$ such that $\derinfOrch f : \rho\complyF \sigma$ when
 a derivation of $\der \rho\complyF \sigma$ is given.

\begin{defi}\hfill
\label{def:derinforch}
\begin{enumerate}
\item
The partial functions
$\dertoderinfaux{\mbox{-},\mbox{-}}$ and  $\fofderaux{\mbox{-},\mbox{-}}$, from derivations in $\der$ and environments  to, respectively,  derivations in $\derinfOrch$ and (possibly open) orchestrators, are inductively and simultaneously defined as follows:
\medskip
\begin{description}

\item[$\Der = \Inf{}
	{\Gamma\der \stopA \complyF \sigma}~(\mbox{\footnotesize$\CkptcomplAx$})$]\hfill\\[2mm]
$\begin{array}{lcl}
\dertoderinfaux{\Der,\Gammainf} &=& \Inf{}
	{\Gammainf\derinfOrch \stopf : \stopA \complyOF \sigma}~\mbox{\scriptsize$(\CkptcomplAx)$}\\
\fofderaux{\Der,\Gammainf} &=& \stopf
\end{array}$

\vspace{2mm}
\item[$\Der =  
\Inf{}
{\Gamma, \rho\complyF\sigma \der \rho\complyF\sigma}
~(\mbox{\footnotesize$\CkptcomplHyp$})$]\hfill\\[2mm]
$\begin{array}{lcl}
\dertoderinfaux{\Der,\Gammainf} & = & \Inf{}
{\Gammainf \derinfOrch x:\rho\complyOF\sigma }~\mbox{\scriptsize (\CkptcomplHyp)}\\
\fofderaux{\Der,\Gammainf} &=& x
\end{array}$ ~~\Big\} if $x:\rho\complyOF\sigma \in \Gammainf$.

\vspace{2mm}
\item[$\Der =  
\Inf{
     \Der'}
{ \Gamma\der \alpha.\rho+\rho'\complyF\Dual{\alpha}.\sigma+\sigma'}~(\mbox{\footnotesize$+\cdot+$})
$]\hfill\\[2mm]
$\begin{array}{lcl}
\dertoderinfaux{\Der,\Gammainf} & = &
\Inf{\dertoderinfaux{\Der',(\Gammainf,x{:} \alpha.\rho+\rho'\complyOF\Dual{\alpha}.\sigma+\sigma')={\Gammainf}'}}
{\Gammainf\derinfOrch \rec x.\orchAct{\Dual{\alpha}}{\alpha}.\fofderaux{\Der',{\Gammainf}'} : \alpha.\rho+\rho'\complyOF\Dual{\alpha}.\sigma+\sigma'} ~ (\mbox{\footnotesize$+\cdot+$})\\

\fofderaux{\Der,\Gammainf} & = & \rec x.\orchAct{\Dual{\alpha}}{\alpha}.\fofderaux{\Der',{\Gammainf}'}\\
& & \mbox{\footnotesize where $x$  is a fresh variable.}
\end{array}$

\vspace{2mm}
\item[$\Der =  
\Inf{
    \forall i\in I.~ \Der_i}
{ \Gamma\der \mbox{\small $\bigoplus$}_{i\in I} \Dual{a}_i.{\rho}_i\complyF
    	\mbox{\small $\sum$}_{j\in I\cup J} a_j.{\sigma}_j} ~ (\mbox{\footnotesize$\oplus\cdot+$})
$]\hfill\\[2mm]
$\begin{array}{lcl}
\dertoderinfaux{\Der,\Gammainf} & = &
\Inf{
    \forall i\in I.~ \dertoderinfaux{\Der_i,(\Gammainf,x{:} \mbox{\small $\bigoplus$}_{i\in I} \Dual{a}_i.{\rho}_i\complyOF
    	\mbox{\small $\sum$}_{j\in I\cup J} a_j.{\sigma}_j)={\Gammainf}'}}
{ \Gamma\derinfOrch  \rec x.\bigvee_{\Set{a_i\mid i=1..n}}f_{a_i}:\mbox{\small $\bigoplus$}_{i\in I} \Dual{a}_i.{\rho}_i\complyOF
    	\mbox{\small $\sum$}_{j\in I\cup J} a_j.{\sigma}_j}  ~  (\mbox{\footnotesize$\oplus\cdot+$})\\

\fofderaux{\Der,\Gammainf} & = & \rec x.\bigvee_{\Set{a_i\mid i=1..n}}\fofderaux{\Der_i,{\Gammainf}'}\\
& & \mbox{{\footnotesize where $f_{a_i} = \fofderaux{\Der_i,{\Gammainf}'}$ and $x$  is a fresh variable.}}
\end{array}$\\

\vspace{2mm}
\item[$\Der =  
\Inf{
    \forall i\in I.~ \Der_i}
{ \Gamma\der\mbox{\small $\sum$}_{j\in I\cup J} a_j.{\sigma}_j\complyF \mbox{\small $\bigoplus$}_{i\in I} \Dual{a}_i.{\rho}_i    }(\mbox{\footnotesize$+\cdot\oplus$})$]\hfill\\[2mm]
$\begin{array}{lcl}
\dertoderinfaux{\Der,{\Gammainf}} & = &  
\Inf{
    \forall i\in I.~ \dertoderinfaux{\Der_i,(\Gammainf,x{:} \mbox{\small $\sum$}_{j\in I\cup J} a_j.{\sigma}_j\complyOF \mbox{\small $\bigoplus$}_{i\in I} \Dual{a}_i.{\rho}_i  )={\Gammainf}'}}
{ \Gamma\derinfOrch \rec x. \bigvee_{\Set{a_i\mid i=1..n}}f_{a_i}: \mbox{\small $\sum$}_{j\in I\cup J} a_j.{\sigma}_j\complyOF \mbox{\small $\bigoplus$}_{i\in I} \Dual{a}_i.{\rho}_i    }\\
\fofderaux{\Der,{\Gammainf}} & = & \rec x.\bigvee_{\Set{a_i\mid i=1..n}}\fofderaux{\Der_i,{\Gammainf}'}\\
 & & \mbox{{\footnotesize where $f_{a_i} = \fofderaux{\Der_i,{\Gammainf}'}$ and $x$  is a fresh variable.}}
\end{array}$\\

\end{description}
\item
The partial function
$\dertoderinf{\mbox{-}}$ from derivations in $\der$  to derivations in $\derinfOrch$, and the partial function
$\fofderaux{\mbox{-}}$ from derivations in $\der$ to (possibly open) orchestrators are defined by
$$\dertoderinf{\Der} = \dertoderinfaux{\Der,\emptyset}$$
$$\fofder{\Der} = \fofderaux{\Der,\emptyset}$$
\end{enumerate}
\end{defi}

\begin{lem}
\label{prop:derderorch-i}
Let $\Der :: \der \rho\complyF \sigma$. Then $\dertoderinf{\Der}$ and
$\fofder{\Der}$ are well-defined,  $\fofder{\Der}$ is a proper (i.e. closed) orchestrator and
$$ \dertoderinf{\Der} ::\ \derinfOrch \fofder{\Der}: \,\rho\complyOF \sigma$$
\end{lem}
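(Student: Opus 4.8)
The plan is to prove, by structural induction on $\Der$, a statement slightly more general than the one asserted, because the auxiliary maps $\fofderaux{\mbox{-},\mbox{-}}$ and $\dertoderinfaux{\mbox{-},\mbox{-}}$ carry an accumulator environment $\Gammainf$ that is empty only at the outermost call. Say that an environment $\Gammainf$ of $\derinfOrch$ \emph{matches} an environment $\Gamma$ of $\der$ when for each statement $\rho'\complyF\sigma'\in\Gamma$ there is exactly one declaration $x:\rho'\complyOF\sigma'\in\Gammainf$, with pairwise distinct variables, and conversely. The generalised claim is: for every $\Der::\Gamma\der\rho\complyF\sigma$ and every $\Gammainf$ matching $\Gamma$, the objects $\fofderaux{\Der,\Gammainf}$ and $\dertoderinfaux{\Der,\Gammainf}$ are defined, $\fofderaux{\Der,\Gammainf}$ is a (contractive) orchestrator whose free variables all occur in $\Gammainf$, and $\dertoderinfaux{\Der,\Gammainf}::\Gammainf\derinfOrch\fofderaux{\Der,\Gammainf}:\rho\complyOF\sigma$. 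The lemma follows by specialising to $\Gamma=\Gammainf=\emptyset$: the empty environment matches the empty one, ``free variables among those of $\emptyset$'' gives closedness of $\fofder{\Der}=\fofderaux{\Der,\emptyset}$, and the produced derivation is exactly the one required.

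I would first dispatch the two axiom cases. For $(\CkptcomplAx)$ both maps return immediately, namely $\stopf$ and the $\derinfOrch$-axiom with conclusion $\Gammainf\derinfOrch\stopf:\stopA\complyOF\sigma$; since $\stopf$ has no free variables, all three invariants hold trivially. The case $(\CkptcomplHyp)$, closing a leaf $\Gamma',\rho\complyF\sigma\der\rho\complyF\sigma$, is the one where definedness is at stake: the clause for $\fofderaux{\mbox{-},\mbox{-}}$ applies only if some declaration $x:\rho\complyOF\sigma$ belongs to $\Gammainf$. This is precisely guaranteed by the matching hypothesis, since $\rho\complyF\sigma$ lies in the $\der$-environment and hence some such $x$ sits in $\Gammainf$; the returned orchestrator is that variable $x$, which is free but lies in $\Gammainf$, and the corresponding $\derinfOrch$-axiom yields the derivation.

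For each of the inductive rules $(+\cdot+)$, $(\oplus\cdot+)$, $(+\cdot\oplus)$ the construction picks a fresh variable $x$ and extends the accumulator to $\Gammainf'$ by declaring $x$ for the statement occurring in the rule's conclusion (now written with $\complyOF$). This $\Gammainf'$ matches the corresponding $\der$-premise environment, which adds exactly that statement written with $\complyF$; moreover freshness keeps all declared variables distinct, so the matching property is preserved. The induction hypothesis then applies to each immediate subderivation under $\Gammainf'$, making the sub-orchestrators defined and $\derinfOrch$-derivable under $\Gammainf'$, so that the homonymous rule of $\derinfOrch$ (Figure~\ref{fig:forsystinforchder}) is literally applicable and produces $\dertoderinfaux{\Der,\Gammainf}$ with conclusion under $\Gammainf$; the orchestrator returned is $\rec x.\orchAct{\Dual{\alpha}}{\alpha}.(\cdots)$ or $\rec x.\bigvee_{i}(\cdots)$ as prescribed, whose body is a prefix or a disjunction and hence contractive. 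The free-variable invariant propagates because, by the hypothesis, the free variables of the sub-orchestrators lie among the variables of $\Gammainf'$, i.e.\ those of $\Gammainf$ together with $x$, and the binder $\rec x$ removes $x$, leaving only variables of $\Gammainf$.

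The genuinely delicate points, which I would check explicitly, are the environment bookkeeping and the handling of free variables: one must verify that the premise environment dictated by each $\derinfOrch$-rule coincides with the accumulator $\Gammainf'$ produced by the definition, so the rule applies verbatim, and that the fresh-variable convention simultaneously maintains matching and lets closedness pass through the $\rec x$ binders. These are routine once the matching invariant is fixed, but they are the heart of turning the partial functions into total ones on well-formed inputs and of guaranteeing that $\fofder{\Der}$ is closed.
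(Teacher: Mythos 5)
You take exactly the route the paper intends --- its own proof is literally ``By Induction'' --- namely structural induction on $\Der$ with the statement generalised over the accumulator environment, a matching condition between the $\der$-environment and $\Gammainf$ to discharge the $(\CkptcomplHyp)$ case, and free-variable bookkeeping through the $\rec$ binders to obtain closedness. That architecture is right, and your treatment of the axiom cases, of contractivity, and of the free-variable invariant is correct.

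There is, however, one step that fails as literally written: preservation of your matching invariant, which demands \emph{exactly one} declaration per statement, bijectively. Environments of $\der$ are finite \emph{sets}, and nothing in system $\der$ forces $(\CkptcomplHyp)$ to be used when the current statement already occurs in the environment. So a derivation may apply an inductive rule to a conclusion $\Gamma\der\rho\complyF\sigma$ with $\rho\complyF\sigma\in\Gamma$: the $\der$-premise environment is then just $\Gamma$ again, while the accumulator grows from $\Gammainf$ to $\Gammainf,\,x:\rho\complyOF\sigma$ with $x$ fresh, so it now holds \emph{two} distinct declarations of the same statement; $\Gammainf'$ no longer matches $\Gamma$ in your sense and your induction hypothesis cannot be invoked on the subderivation. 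Such derivations exist: with $\rho=\rec x.(a.x+b)$ and $\sigma=\rec y.(\Dual{a}.y+\Dual{b})$, under the equi-recursive reading the judgement $\rho\complyF\sigma\der\rho\complyF\sigma$ is the premise of an instance of $(+\cdot+)$ whose conclusion is that same judgement, so $(+\cdot+)$ can be applied twice before closing with $(\CkptcomplHyp)$. The repair is local: weaken matching to ``every statement of the $\der$-environment has \emph{at least} one declaration in $\Gammainf$''; this is preserved by adding fresh declarations whether or not the statement is new, and it still gives definedness at $(\CkptcomplHyp)$ (the converse inclusion is never used in your argument). What is lost is only single-valuedness of the $(\CkptcomplHyp)$ clause of Definition~\ref{def:derinforch}, which becomes ambiguous when a statement carries several declarations; fixing any choice (say, the most recently introduced such variable) restores well-definedness, and every choice yields a valid $\derinfOrch$-axiom whose variable is declared in $\Gammainf$, so derivability and closedness of $\fofder{\Der}$ go through unchanged.
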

\proof
By Induction.
\qed

\medskip
We hence get $\ref{th:complyAwinstrat-1} \Rightarrow \ref{th:complyAwinstrat-4}$
as an immediate consequence of Lemma \ref{prop:derderorch-i} above, Proposition \ref{prop:derOrchderinfOrchequiv} and Theorem \ref{th:scder}.\\

The  implication $\ref{th:complyAwinstrat-4} \Rightarrow \ref{th:complyAwinstrat-1}$
is instead an easy consequence of the observation that by erasing all orchestrators in a derivation
of $\derOrch f: \rho\complyF \sigma$ we get a derivation of $\der \rho\complyF \sigma$.

%


\section{Proof of Main Theorem II (\ref{th:derstratorchequiv}) \\
(Getting derivations, orchestrators and strategies out of each other)}\hfill
\label{appendix:mainthmII}

\noindent
We begin by providing a stratified version of orchestrated compliance.
\begin{defi}[Coinductive  orchestrated compliance]\label{def:coskipcompl}\hfill
\\Let $\Set{\complyOcok{k}}_{k \in \Nat}$ be the family of relations over $\Orch\times\ASC\times\ASC$ such that 

\begin{enumerate}
\item ${\complyOcok{0}} = \Orch\times\ASC\times\ASC$ and
\item $f:\rho~\complyOcok{k+1}~\sigma$ if either:
\begin{enumerate}
\item\label{def:coinductiv-compl1} $\rho = \stopA $; or
\item\label{def:coinductiv-compl3} $\rho \neq \stopA $, $\rho \pf{f} \sigma \Lts{} $, and $\rho \pf{f} \sigma \Lts{} \rho' \pf{f'} \sigma' $ implies $f':\rho'~\complyOcok{k}~\sigma'$.
\end{enumerate}
\end{enumerate}
Then we define $\complyOco = \bigcap_{k\in \Nat} \complyOcok{k}$.
\end{defi}

\begin{prop}\label{lem:orchcoinductiv-char}
The relation $\complyOco$ and the compliance relation $\complyO$ coincide, i.e.
 \[ \begin{array}{rcl}
f:\rho \complyOco \sigma &\Iff& f:\rho \complyO \sigma.
 \end{array} \]
\end{prop}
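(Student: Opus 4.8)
The plan is to prove the two inclusions $\complyO \,\subseteq\, \complyOco$ and $\complyOco \,\subseteq\, \complyO$ separately, each by a direct induction, in the same spirit as the stratified argument behind Fact \ref{fact:relrelco} for $\ACRel$. No fixed-point theorem is actually required: it is enough to read off the functional implicit in Definition \ref{def:coskipcompl} and match the approximation index $k$ against the length of a reduction sequence. Before starting I would isolate two elementary facts about $\complyO$ from Definition \ref{def:disstrictcompl}. First, a \emph{progress} fact: if $f:\rho\complyO\sigma$ and $\rho\neq\stopA$, then $\rho\pf{f}\sigma\Lts{}$ (there is at least one big step), for otherwise the reflexive reduction $\rho\pf{f}\sigma\Lts{}^{*}\rho\pf{f}\sigma\notLts{}$ would force $\rho=\stopA$. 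Second, a \emph{reduction-closure} fact: if $f:\rho\complyO\sigma$ and $\rho\pf{f}\sigma\Lts{}\rho'\pf{f'}\sigma'$, then $f':\rho'\complyO\sigma'$; this holds because any stuck state $\rho''\pf{f''}\sigma''$ reachable from $\rho'\pf{f'}\sigma'$ is, by prepending the given step, also reachable from $\rho\pf{f}\sigma$, so compliance of $f$ yields $\rho''=\stopA$.

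For $\complyO \subseteq \complyOco$ I would fix $f:\rho\complyO\sigma$ and show $f:\rho\complyOcok{k}\sigma$ for every $k$ by induction on $k$. The case $k=0$ is immediate since $\complyOcok{0}=\Orch\times\ASC\times\ASC$. For $k+1$, if $\rho=\stopA$ we are in clause \ref{def:coinductiv-compl1} of Definition \ref{def:coskipcompl}; otherwise $\rho\neq\stopA$, the progress fact gives $\rho\pf{f}\sigma\Lts{}$, and for every transition $\rho\pf{f}\sigma\Lts{}\rho'\pf{f'}\sigma'$ the reduction-closure fact gives $f':\rho'\complyO\sigma'$, whence $f':\rho'\complyOcok{k}\sigma'$ by the induction hypothesis. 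This is exactly clause \ref{def:coinductiv-compl3}, so $f:\rho\complyOcok{k+1}\sigma$. Taking the intersection over all $k$ yields $f:\rho\complyOco\sigma$.

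For $\complyOco \subseteq \complyO$ I would fix $f:\rho\complyOco\sigma$ (so $f:\rho\complyOcok{k}\sigma$ for all $k$) and verify the defining property of $\complyO$: given any maximal reduction $\rho\pf{f}\sigma\Lts{}^{*}\rho'\pf{f'}\sigma'\notLts{}$ of length $n$, I must show $\rho'=\stopA$. The key step is an auxiliary induction along this sequence proving that the state reached after $i$ steps lies in $\complyOcok{n+1-i}$: starting from $f:\rho\complyOcok{n+1}\sigma$, each big step $\rho_i\pf{f_i}\sigma_i\Lts{}\rho_{i+1}\pf{f_{i+1}}\sigma_{i+1}$ (with $i<n$) lets me apply clause \ref{def:coinductiv-compl3} of Definition \ref{def:coskipcompl} at index $n+1-i\geq 2$ to descend to $\complyOcok{n-i}$ for the successor. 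After $n$ steps I obtain $f':\rho'\complyOcok{1}\sigma'$; since $\rho'\pf{f'}\sigma'\notLts{}$, the second disjunct of $\complyOcok{1}$ is impossible, forcing $\rho'=\stopA$. The main obstacle, and the only place demanding care, is precisely this index bookkeeping — choosing the approximation level $n+1$ to match the reduction length and keeping the decrement synchronized with the steps — together with the verification that $\notLts{}$ truly rules out clause \ref{def:coinductiv-compl3}; once these are in place the equivalence follows, as recorded in Proposition \ref{lem:orchcoinductiv-char}.
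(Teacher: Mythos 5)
The paper never actually proves Proposition \ref{lem:orchcoinductiv-char}: it is stated in the appendix without argument, evidently regarded as routine in analogy with the stratification behind Fact \ref{fact:relrelco}. So there is no official proof to compare against; judged on its own terms, your proposal takes the expected route and is essentially correct: the progress and reduction-closure facts you isolate for $\complyO$ are exactly what clause~(\ref{def:coinductiv-compl3}) of Definition \ref{def:coskipcompl} demands, and matching the approximation index against the length of a stuck reduction sequence is the right mechanism for the converse inclusion.

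Two steps should be made explicit. First, in the direction $\complyO\subseteq\complyOco$ you fix a triple and induct on $k$, but in the inductive step you apply the hypothesis to the \emph{successor} triple $(f',\rho',\sigma')$; the statement proved by induction on $k$ must therefore be quantified over all triples in $\complyO$, not asserted for a fixed one (a one-line rephrasing). Second, and more substantively, in the direction $\complyOco\subseteq\complyO$ your descent step says that membership of the state after $i$ steps in $\complyOcok{n+1-i}$ ``lets me apply clause~(\ref{def:coinductiv-compl3})''. That clause can only be invoked after ruling out that the membership holds merely via clause~(\ref{def:coinductiv-compl1}), i.e.\ after checking $\rho_i\neq\stopA$ for each intermediate state $i<n$. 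This is true, but it rests on an observation you never state: a configuration $\stopA\pf{f}\sigma$ has no $\Lts{}$-transition, because every $\Lts{}$ step terminates with a $\lts{\tau}$ or $\lts{+}$ step, both of which require the client to fire an action, and $\stopA$ has no transitions in the contract LTS (the unlabelled steps $\lts{}$ can only move the server). Since state $i<n$ does take a step, $\rho_i\neq\stopA$ follows, clause~(\ref{def:coinductiv-compl1}) is excluded, and clause~(\ref{def:coinductiv-compl3}) must be the reason for membership, which is what licenses the descent. (Equivalently: if some $\rho_i=\stopA$, the run is already stuck at $i$, so $i=n$ and the conclusion $\rho'=\stopA$ is immediate.) With these two clarifications added, your proof is complete.
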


\begin{figure}[t]
\hrule
\vspace*{2mm}
\[ \begin{array}{rcl}
\Internal_{i\in I}\Dual{a}_i.\rho_i\tbpf{f} \sigma &\tbolts{\playerA:\Dual{a}_k} &\buf{\Dual{a}_k}\rho_k \tbpf{f} \sigma \hspace{6mm}(k\in I)\\[4mm]
\External_{i\in I} a_i.\rho_i \tbpf{\bigvee_{h\in H} \langle{\Dual{a}_h},{a_h}\rangle.f_h} \buf{\Dual{a}_k} \sigma &\tbolts{\playerA:a_k}& \rho_k \tbpf{f_k} \sigma \hspace{6mm}(k\in (I\cap H))
\end{array}
\]
\vspace{2mm}
\[ \begin{array}{rcl}
\rho\tbpf{f}  \Internal_{i\in I}\Dual{a}_i.\sigma_i&\tbolts{\playerB:\Dual{a}_k} & \rho \tbpf{f} \buf{\Dual{a}_k}\sigma_k \hspace{6mm}(k\in I)\\[4mm]
\buf{\Dual{a}_k} \rho\tbpf{\bigvee_{h\in H} \langle{a_h},{\Dual{a}_h}\rangle.f_h}  \External_{i\in I} a_i.\sigma_i  &\tbolts{\playerB:a_k}& \rho \tbpf{f_k} \sigma_k \hspace{6mm}(k\in (I\cap H))
\end{array}
\]
\vspace{2mm}
\[
\begin{array}{rcl@{\hspace{6mm}}rcl}
\Dual{a}.\rho + \rho' \tbpf{\langle{a},{\Dual{a}}\rangle^+.f'} a. \sigma + \sigma'  &\tbolts{\playerC:a}& \rho\tbpf{f'} \sigma
&
a.\rho + \rho' \tbpf{\langle{\Dual{a}},{a}\rangle^+.f'} \Dual{a}. \sigma + \sigma'  &\tbolts{\playerC:a}& \rho\tbpf{f'} \sigma\\[4mm]
\multicolumn{6}{c}{
\stopA\tbpf{f} \tilde{\rho} ~~ \tbolts{\playerC:\cmark}  ~~\Zero \tbpf{f} \tilde{\rho}
}
\end{array} \]
\caption{Turn-based operational semantics of orchestrated-configurations systems}\label{fig:tb-orchopsem}\label{fig:tborch-opsem}
\vspace*{2mm}
\hrule
\end{figure}

As done for the relation $\ACRel$, we provide an equivalent ''turn-based'' version of the 
relation $\complyO$.

\begin{defi}[Turn-based operational semantics of turn-based orchestrated configurations]\hfill
\label{def:tbLTS}
Let  $\tbAct = \{ \playerA,\playerB,\playerC\} \times (\Act \cup \Set{\cmark})$. In Figure \ref{fig:tborch-opsem} we define the LTS $\tbolts{}$ over
turn-based configurations, with labels in $\tbAct$.
\end{defi}

We define $\tbolts{}=\, \bigcup_{\beta\in \tbAct}\!\tbolts{\beta}$.

\begin{defi}[Turn-based orchestrated compliance $\complyTBO$]
\label{def:tbdisstrictcompl}
Let $f\in\Orch$ and $\tilde\rho,\tilde\sigma\in\tbASC$.
\begin{enumerate}
\item
\label{def:tbdisstrictcompl-i}
$f: \tilde\rho\complyTBO \tilde\sigma$ \hspace{1mm} if \hspace{1mm} 
\[ \begin{array}{rcl}
\tilde\rho \tbpf{f} \tilde\sigma \tbolts{}^{\hspace{-2mm}*}~ \tilde\rho' \tbpf{f'} \tilde\sigma' \nottbolts{} &\mbox{implies}& \tilde\rho'=\mbox{\em $\stopA$}.
 \end{array} \]
\item
$ \begin{array}{@{}rcl}
\tilde\rho\complyTBO \tilde\sigma &\textrm{if}& \exists f .~\Pred[ f: \tilde\rho\complyTBO \tilde\sigma].
 \end{array} $

\end{enumerate}
\end{defi}

Along the very same lines of the proof of Theorem \ref{th:complyequivtbcomply}, it is possible to show the equivalence of $\complyO$ and $\complyTBO$.

\begin{lem}\label{lem:complyOeqcomplyTBO}
Let ${\rho},{\sigma}\in\ASC (\subseteq\tbASC)$ and $f\in\Orch$.
$$f: \rho\complyO \sigma \hspace{2mm} \Iff \hspace{2mm}  f: \rho\complyTBO \sigma$$
\end{lem}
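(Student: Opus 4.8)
The plan is to replay, almost verbatim, the regular-tree argument developed in Appendix~\ref{appendix:complyequivtbcomply} for Theorem~\ref{th:complyequivtbcomply}, now carrying the orchestrator $f$ through every configuration. The only structural difference between the two settings is that in the turn-based orchestrated semantics of Figure~\ref{fig:tborch-opsem} a single orchestrated synchronization is split into two moves — an output-buffering move by $\playerA$ (resp.\ $\playerB$), during which the orchestrator is left untouched, followed by the input-consumption move at which the orchestrator actually performs its transition — whereas in the system $\rho\pf{f}\sigma$ this pair collapses to one communicating $\Lts{}$-step. Exactly as the single-buffered encoding $\buf{\Dual a}\sigma$ introduced a redundant step that did not affect $\complyTB$, this split does not affect $\complyTBO$, and the correspondence between the two sides will be mediated by two success-preserving maps on regular trees.

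Concretely, I would first define, for an orchestrated turn-based configuration $\tilde\rho\tbpf{f}\tilde\sigma$, a set of regular trees $\rts{\tilde\rho\tbpf{f}\tilde\sigma}$ in complete analogy with Definition~\ref{def:rtstbpp}: branching nodes are generated by the buffering moves of $\playerA,\playerB$ (those driven by internal choices and by the $\bigvee$-structure of $f$), unary nodes by the affectible moves $\playerC{:}a$ selected by the single $\orchAct{\cdot}{\cdot}^+$ prefix of $f$, the successful leaf $\Zero\tbpf{f}\tilde\sigma$ is produced from $\stopA\tbpf{f}\tilde\sigma$, and a $\xmark$-leaf is produced exactly when $\tilde\rho\neq\stopA$ and no $\tbolts{}$-move is available. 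A routine coinductive argument, mirroring Lemma~\ref{lem:equivtb1}, then gives: $f:\tilde\rho\complyTBO\tilde\sigma$ iff some tree in $\rts{\tilde\rho\tbpf{f}\tilde\sigma}$ has no $\xmark$-labelled leaf. For the non-turn-based side I would use the coinductive characterization $\complyO=\complyOco$ of Proposition~\ref{lem:orchcoinductiv-char}, unfold Definition~\ref{def:coskipcompl} into a set $\rts{\rho\pf{f}\sigma}$ of regular trees built from the $\Lts{}$-steps, and obtain the corresponding statement: $f:\rho\complyO\sigma$ iff some tree in $\rts{\rho\pf{f}\sigma}$ has no $\xmark$-leaf, paralleling Lemma~\ref{lem:equivtb2}.

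Finally I would transfer success between the two tree sets by defining orchestrator-aware maps $\ff:\rts{\rho\tbpf{f}\sigma}\rightarrow\rts{\rho\pf{f}\sigma}$ and $\fftb:\rts{\rho\pf{f}\sigma}\rightarrow\rts{\rho\tbpf{f}\sigma}$ that contract, respectively expand, the redundant buffering step, exactly as in the proofs of Lemmas~\ref{lem:tbpptopp} and~\ref{lem:pptbpp} and Theorem~\ref{lem:toutoft}. Both maps send $\Zero\tbpf{f}\tilde\sigma$-leaves to $\stopA\pf{f}\tilde\sigma$-leaves and back, and hence preserve the property ``every leaf is successful''; consequently one tree set contains a $\xmark$-free tree iff the other one does, and the equivalence $f:\rho\complyO\sigma\Iff f:\rho\complyTBO\sigma$ follows by combining this with the two characterization lemmas above. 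The one point needing genuine care — the main obstacle — is the bookkeeping of the orchestrator along the split synchronization: in $\fftb$ one must re-attach to the intermediate buffered node $\buf{\Dual a_k}\rho_k\tbpf{f}\sigma$ the \emph{same} orchestrator $f$, letting it reduce to $f_k$ (or consume its $\orchAct{\cdot}{\cdot}^+$ prefix) only at the consumption move, while checking that the branch selected by $f$ on the collapsed step is precisely the branch whose two-step image survives in $\rts{\rho\tbpf{f}\sigma}$. Since an orchestrator selects exactly one affectible synchronization and merely constrains the unaffectible ones (Remark~\ref{rem:nondetorch}), this matching is forced, and the verification reduces to the same finite case analysis on the shapes of $\rho$, $\sigma$ and $f$ as in Lemma~\ref{lem:tbpptoppA}.
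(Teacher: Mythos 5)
Your proposal is correct and takes exactly the route the paper intends: the paper's own proof of Lemma~\ref{lem:complyOeqcomplyTBO} is the one-line remark that it goes ``along the very same lines'' of Proposition~\ref{th:complyequivtbcomply}, i.e.\ the regular-tree argument of Appendix~\ref{appendix:complyequivtbcomply}, which is precisely what you replay with the orchestrator threaded through (on the turn-based side your orchestrator-aware trees are essentially the $\rt{\rho\tbpf{f}\sigma}$ of Definition~\ref{def:defofrt}, and your characterization lemma is Lemma~\ref{lem:equivtb}). Your key care point --- the orchestrator left untouched at the buffering move and transitioning only at the consumption move, with the affectible $\playerC$-moves collapsing to single $\lts{+}$-steps --- matches Figure~\ref{fig:tborch-opsem}, so the proposal just fills in the details the paper leaves implicit.
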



\subsection{Proof of Theorem \ref{th:derstratorchequiv}\ref{th:derstratorchequiv-i}}
\label{subsect:derstrat-i}
We can simply use the function $\fofder{\mbox{-},\emptyset}$  described in the proof of Lemma \ref{prop:derderorch-i}.


\subsection{Proof of Theorem \ref{th:derstratorchequiv}\ref{th:derstratorchequiv-ii}}
\label{subsect:derstrat-ii}

We  proceed as follows:
given an orchestrator $f$ such that $f: \rho\complyTBO \sigma$ (and hence $f: \rho\complyO \sigma$, by Lemma \ref{lem:complyOeqcomplyTBO}) we build a regular tree
which corresponds to a tree in $\rts{\rho\tbpp \sigma}$ (see Def. \ref{def:rtstbpp}) with no leaf of the form $\xmark$.
We can then decorate such a tree so that it is easy to obtain a winning strategy 
for player $\playerC$ in $\game_{\rho\!\pp\! \sigma}$.\\

We begin by showing how to get a regular tree out of a turn-based orchestrated system

\begin{defi}
\label{def:defofrt}
Let $\tilde{\rho},\tilde{\sigma}\in\sctb$ and $f\in\Orch$. We define the regular tree of the orchestrated system $\tilde{\rho}\tbpf{f}\tilde{\sigma}$, which we dub $\rt{\tilde{\rho}\tbpf{f}\tilde{\sigma}}$, as follows:\\
let $B=\{\playerA:a,\playerA:\Dual{a},\playerB:a,\playerB:\Dual{a} \mid a\in\Names\}$
\[
\begin{array}{rcc@{~~~~~~~~}l}

\rt{\stopA\tbpf{f}\tilde{\sigma}} & = 
&         \begin{array}{c}
             \stopA\tbpf{f}\tilde{\sigma}\\
            \mid \\
           \Zero\tbpf{f}\tilde{\sigma}
            \end{array} \\[8mm]

\rt{\tilde{\rho}\tbpf{f}\tilde{\sigma}} & = 
 &        \begin{array}{c}
            \tilde{\rho}\tbpf{f}\tilde{\sigma} \\
            / \cdots\backslash \\
            \mathsf{T}_1 \cdots \mathsf{T}_n
             \end{array}
&  \mbox{ if } ~~\exists \beta\in B.\, \tilde{\rho}\tbpf{f}\tilde{\sigma} \tbolts{\beta}\\[2mm]
\multicolumn{4}{l}{
\mbox{where $\Set{\tilde{\rho}_i\tbpf{f_i}\tilde{\sigma}_i}_{i=1..n} = \Set{\tilde{\rho}'\tbpf{f}\tilde{\sigma}' ~\mid~\tilde{\rho}\tbpf{f}\tilde{\sigma} \tbolts{\beta}\tilde{\rho}'\tbpf{f'}\tilde{\sigma}', ~\beta\in B }$  }
}\\[1mm]
\multicolumn{4}{l}{
\mbox{and   $\mathsf{T}_i = \rt{\tilde{\rho}_i\tbpf{f_i}\tilde{\sigma}_i}$ ($i=1..n$) }
}
\\[6mm]

\rt{\tilde{\rho}\tbpf{f}\tilde{\sigma}} & = 
 &         \begin{array}{c}
            \tilde{\rho}\tbpf{f}\tilde{\sigma} \\
            \mid  \\
            \mathsf{T}
             \end{array}   
&  \mbox{ if } ~~\exists a\in\Names.\, \tilde{\rho}\tbpf{f}\tilde{\sigma} \tbolts{\playerC:a} \tilde{\rho}'\tbpf{f'}\tilde{\sigma}'\\[6mm]
\multicolumn{4}{l}{
\mbox{where  $\mathsf{T} =  \rt{\tilde{\rho}'\tbpf{f'}\tilde{\sigma}'}$ }
}\\[6mm]

\rt{\tilde{\rho}\tbpf{f}\tilde{\sigma}} & = 
&          \begin{array}{c}
            \tilde{\rho}\tbpf{f}\tilde{\sigma} \\
            \mid  \\
            \xmark
             \end{array}
&  \mbox{ if } ~~\rho\neq\stopA \mbox{ and }\; \tilde{\rho}\tbpf{f}\tilde{\sigma} ~~\nottblts{}\\[2mm]

\end{array}
\]
\end{defi}

Notice that the condition of the third clause in the above definition is actually nondeterministic;
so, strictly speaking, we are not defining a function.
We can get a proper function definition by any method through which it is possible to get rid of such an ambiguity. For instance, by totally ordering the set $\Names$ and considering 
the first element of its satisfying the condition.


\begin{lem}\label{lem:equivtb}
 Let $\tilde\rho, \tilde\sigma \in \tbASC$ and $f\in\Orch$.
$f: \tilde\rho\complyTBO\tilde\sigma$ ~~  iff  ~~ in $\rt{\tilde\rho\tbpf{f}\tilde\sigma}$
all the leaves are of the form $\Zero\tbpf{f}\tilde{\sigma}$
\end{lem}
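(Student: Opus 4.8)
The plan is to establish a tight correspondence between the maximal finite $\tbolts{}$-reductions of the orchestrated configuration $\tilde\rho\tbpf{f}\tilde\sigma$ and the finite branches of the tree $\rt{\tilde\rho\tbpf{f}\tilde\sigma}$, and then read off the statement from Definition \ref{def:tbdisstrictcompl}. The argument will follow the lines of Lemma \ref{lem:equivtb2}, but it should be simpler: since $\complyTBO$ is defined directly on reductions rather than as a greatest fixpoint, no stratification is needed and the correspondence is purely finite. The (possibly infinite) tree may have infinite branches, but these carry no leaves and correspond to non-terminating reductions, hence are irrelevant to both sides of the equivalence.

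First I would prove the key auxiliary claim: for every node $N=\tilde\rho'\tbpf{f'}\tilde\sigma'$ of $\rt{\tilde\rho\tbpf{f}\tilde\sigma}$ the path from the root to $N$ spells out a reduction $\tilde\rho\tbpf{f}\tilde\sigma\ \tbolts{}^{*}\ \tilde\rho'\tbpf{f'}\tilde\sigma'$, and conversely every such reduction is the spelling of a unique path. This goes by induction on the length of the path (respectively of the reduction), inspecting the four clauses of Definition \ref{def:defofrt}: each edge is exactly one $\tbolts{}$-step, and the children of $N$ exhaust all $\tbolts{}$-successors of $N$ of the relevant move-class. For this exhaustiveness one needs the orchestrated analogue of Lemma \ref{lem:tbpptoppA}(1), namely that in any reachable turn-based orchestrated configuration only one of the move-classes $\{\playerA{:}\alpha,\playerB{:}\alpha\}$, the single $\playerC$-synchronisation enabled by $f$, or the winning move $\playerC{:}\cmark$ can be active; this guarantees that the clause splitting in Definition \ref{def:defofrt} is both well-defined and complete. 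Note that it is precisely $f$ being a strategy orchestrator that makes the $\playerC$-branch (third clause) a single child, so no genuine nondeterminism enters there.

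With the claim in hand, I would observe that a node is a leaf exactly when it has no successor to expand: either $\tilde\rho'=\stopA$, in which case the unique $\playerC{:}\cmark$ move leads to the successful leaf $\Zero\tbpf{f'}\tilde\sigma'$, or $\tilde\rho'\neq\stopA$ and $\tilde\rho'\tbpf{f'}\tilde\sigma'\nottbolts{}$, yielding the failure leaf $\xmark$. Hence a finite branch ends in a $\Zero\tbpf{f'}\tilde\sigma'$ leaf precisely when the corresponding maximal reduction terminates successfully (the client reached $\stopA$ and then played its winning move), and ends in $\xmark$ precisely when the reduction gets stuck with $\tilde\rho'\neq\stopA$. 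The two implications then follow immediately. For ($\Leftarrow$), assuming all leaves have the form $\Zero\tbpf{f}\tilde\sigma$, any stuck reduction $\tilde\rho\tbpf{f}\tilde\sigma\ \tbolts{}^{*}\ \tilde\rho'\tbpf{f'}\tilde\sigma'\nottbolts{}$ corresponds by the claim to a leaf, which must be a $\Zero$-leaf, so the client has terminated and $f:\tilde\rho\complyTBO\tilde\sigma$. For ($\Rightarrow$), by contraposition an $\xmark$ leaf produces a reduction to a stuck configuration with $\tilde\rho'\neq\stopA$, contradicting $\complyTBO$.

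The main obstacle is the careful bookkeeping inside the auxiliary claim, and in particular the interplay between the success state $\stopA$ appearing in Definition \ref{def:tbdisstrictcompl} and its turn-based incarnation $\Zero$, reached through the $\playerC{:}\cmark$ transition. One must verify that the only stuck configurations are $\Zero\tbpf{f}\tilde\sigma$ (success) and genuinely deadlocked configurations with a non-terminated client (failure); once this is checked, the success requirement of $\complyTBO$ is matched exactly by the $\Zero$-leaves of the tree and the failure case by the $\xmark$-leaves, which is what the statement asserts.
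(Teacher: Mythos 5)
Your overall plan --- a direct correspondence between $\tbolts{}$-reductions and branches of $\rt{\tilde\rho\tbpf{f}\tilde\sigma}$, read off against Definition \ref{def:tbdisstrictcompl} --- is a reasonable and more elementary route than the paper's implicit one (the paper leaves this lemma unproved, its template being the coinductive/stratified argument of Lemma \ref{lem:equivtb2}). However, one step of your auxiliary claim is false as stated. You assert, as the ``orchestrated analogue of Lemma \ref{lem:tbpptoppA}(1)'', that in any reachable configuration the winning move $\playerC{:}\cmark$ is mutually exclusive with the $\playerA/\playerB$ move-classes. It is not: the rule $\stopA\tbpf{f}\tilde\sigma\tbolts{\playerC:\cmark}\Zero\tbpf{f}\tilde\sigma$ has no side condition, while the server-side internal-choice rule $\rho\tbpf{f}\Internal_{i\in I}\Dual{a}_i.\sigma_i\tbolts{\playerB:\Dual{a}_k}\rho\tbpf{f}\buf{\Dual{a}_k}\sigma_k$ admits $\rho=\stopA$. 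For instance, starting from $\Dual{a}\tbpf{\orchAct{a}{\Dual{a}}.\stopf}\,a.(\Dual{b}\oplus\Dual{c})$ one reaches $\stopA\tbpf{\stopf}(\Dual{b}\oplus\Dual{c})$, where both $\playerC{:}\cmark$ and the two $\playerB$ moves are enabled (and, incidentally, where the clauses of Definition \ref{def:defofrt} overlap, so the tree is well defined only if they are read in order of priority). Consequently the reduction$\to$path half of your correspondence fails: the tree expands a client-$\stopA$ node only by its $\cmark$-child, so the perfectly legal reduction $\stopA\tbpf{\stopf}(\Dual{b}\oplus\Dual{c})\tbolts{\playerB:\Dual{b}}\stopA\tbpf{\stopf}\buf{\Dual{b}}\stopA\tbolts{\playerC:\cmark}\Zero\tbpf{\stopf}\buf{\Dual{b}}\stopA\nottbolts{}$ is not the spelling of any path, and its stuck endpoint is not a node of the tree.

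The gap affects only your ($\Leftarrow$) direction, where you quantify over \emph{all} stuck reductions and need each of them to land on a leaf; and it is repairable. Deviations from the tree can occur only at configurations whose client is $\stopA$; server moves never modify the client, a buffered server cannot move again without a client input, and no configuration with client $\stopA$ is ever stuck (the $\cmark$ move is always available). Hence any reduction that leaves the tree does so only after the client has succeeded, and every stuck configuration it subsequently reaches has client $\Zero$ --- a success in the corrected sense of Definition \ref{def:tbdisstrictcompl}, which, as you rightly observe, must be read with $\Zero$ in place of $\stopA$, exactly as in the definition of $\complyTB$. So you should weaken the auxiliary claim to: every reduction either spells out a path of the tree, or passes through a client-$\stopA$ configuration, in which case all its stuck extensions are successful. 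With that amendment your ($\Leftarrow$) argument goes through, and your ($\Rightarrow$) direction is untouched, since tree edges are always genuine $\tbolts{}$-steps, so an $\xmark$ leaf still yields a reachable deadlock with a non-terminated client.
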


Given $\rt{\rho\tbpf{f}\sigma}$, we denote by $\rt{\rho\tbpf{f}\sigma}^-$ the tree obtained out of $\rt{\rho\tbpf{f}\sigma}$ by erasing the 
orchestrators from the label of its nodes.

\begin{lem}
\label{lem:rtinrts}
Let $\rho, \sigma \in \ASC (\subseteq\tbASC)$ and $f\in\Orch$.
If $f:\rho\complyTBO\sigma$, then $\rt{\rho\tbpf{f}\sigma}^-\in \rts{\rho\tbpp\sigma}$.
\end{lem}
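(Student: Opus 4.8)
The plan is to prove the result by coinduction on the (regular) tree structure, comparing the corecursive clauses defining $\rt{\cdot}$ (Definition \ref{def:defofrt}) with those defining the set $\rts{\cdot}$ (Definition \ref{def:rtstbpp}), after the orchestrators have been erased. Concretely, I would establish that $\rt{\tilde\rho\tbpf g\tilde\sigma}^-\in\rts{\tilde\rho\tbpp\tilde\sigma}$ for every turn-based orchestrated configuration with $g:\tilde\rho\complyTBO\tilde\sigma$, which specialises to the stated claim for $\rho,\sigma\in\ASC$. Since $\rts{\tilde\rho\tbpp\tilde\sigma}$ is defined corecursively, membership of a tree amounts to local consistency at every node together with the immediate subtrees lying in the corresponding sets; the natural invariant to coinduct on is $\mathcal S=\{\,\rt{\tilde\rho\tbpf g\tilde\sigma}^-\mid g:\tilde\rho\complyTBO\tilde\sigma\,\}$.

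First I would record two auxiliary facts. \emph{(i) Erasure simulates the orchestrated semantics by the plain one}: inspecting the rules of Figure \ref{fig:tborch-opsem} against those of Figure \ref{fig:tb-opsem}, every orchestrated step $\tilde\rho\tbpf g\tilde\sigma\tbolts\beta\tilde\rho'\tbpf{g'}\tilde\sigma'$ induces a plain step $\tilde\rho\tbpp\tilde\sigma\tblts\beta\tilde\rho'\tbpp\tilde\sigma'$ with the same label and the same erased target, because the orchestrator enters only as a side condition that $(-)^-$ discards. \emph{(ii) Compliance is preserved by reduction}: if $g:\tilde\rho\complyTBO\tilde\sigma$ and $\tilde\rho\tbpf g\tilde\sigma\tbolts{}\tilde\rho'\tbpf{g'}\tilde\sigma'$, then $g':\tilde\rho'\complyTBO\tilde\sigma'$, since every stuck configuration reachable from the reduct is reachable from the redex, hence has client $\stopA$. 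Fact (ii) guarantees that the children produced by the clauses of $\rt{\cdot}$ again satisfy the compliance hypothesis, so the coinductive step applies to them.

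The core of the argument is the case analysis on the clause of Definition \ref{def:defofrt} producing the root of $\rt{\tilde\rho\tbpf g\tilde\sigma}$, showing it matches the homonymous clause of Definition \ref{def:rtstbpp}. The crucial preliminary is that $g:\tilde\rho\complyTBO\tilde\sigma$ excludes the $\xmark$ clause: by Lemma \ref{lem:equivtb} all leaves of $\rt{\tilde\rho\tbpf g\tilde\sigma}$ have the form $\Zero\tbpf g\tilde\sigma$, so no node is stuck with $\tilde\rho\neq\stopA$. Thus only the $\stopA$, the $B$-move, and the $\playerC$-move clauses occur. The $\stopA$ clause matches verbatim. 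For a $\playerC$-move node the orchestrator necessarily has the shape $\orchAct{\alpha}{\Dual\alpha}^+.g'$, hence selects a single affectible synchronisation; by (i) this is one of the plain $\playerC$-moves, so the single erased child realises one of the members of the union in the $\playerC$-clause of $\rts{\cdot}$. For a $B$-move node I would use the turn-based shape of reachable configurations (the orchestrated analogue of Lemma \ref{lem:tbpptoppA}): either $\tilde\rho$ or $\tilde\sigma$ is an internal choice, whose outputs are orchestrator-independent and so coincide in both semantics; or the partner carries a singleton buffer $\buf{\Dual a}$, whence the only candidate $B$-move is the matching input. In this last situation erasure is \emph{full}: since the node is not an $\xmark$-node, that input is enabled orchestrator-side, which forces the orchestrator to permit it, so the erased set of $B$-children equals the plain one. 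Hence in every case the erased children coincide with those prescribed by $\rts{\cdot}$, and by (ii) together with the coinductive hypothesis they lie in the appropriate $\rts{\cdot}$ sets.

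I expect the delicate point to be exactly this $B$-move matching for buffered configurations: a priori the orchestrator could block the single available input, making $\rt{\cdot}$ diverge from $\rts{\cdot}$ (an $\xmark$ on one side, a genuine child on the other). This is precisely where the compliance hypothesis is indispensable, entering through the exclusion of $\xmark$-nodes via Lemma \ref{lem:equivtb}; without it the erased tree need not belong to $\rts{\cdot}$. The remaining verifications — the simulation (i), the preservation (ii), and the orchestrator-independence of internal-choice outputs — are routine inspections of the two transition systems.
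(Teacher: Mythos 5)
The paper states Lemma \ref{lem:rtinrts} with no proof at all---it is implicitly treated as routine ``by construction'' from Definitions \ref{def:defofrt} and \ref{def:rtstbpp}---so there is no official argument to compare yours against; judged on its own merits, your proof is correct and supplies exactly the details the paper leaves implicit. Your generalisation to arbitrary turn-based configurations is necessary (the children of the root are buffered configurations, not elements of $\ASC$) and correctly handled, and your two auxiliary facts hold by inspection: each rule of Figure \ref{fig:tborch-opsem} is the corresponding rule of Figure \ref{fig:tb-opsem} with an extra side condition on the orchestrator, which erasure discards, and $\complyTBO$ is closed under $\tbolts{}$-reduction. The coinductive invariant then reduces the claim to node-by-node matching of clauses, and you correctly isolate the single place where the erased tree could diverge from a member of $\rts{\tilde\rho\tbpp\tilde\sigma}$: at a node where a buffered output $\buf{\Dual{a}}$ faces an input choice, the plain semantics always offers the (unique) matching input, whereas the orchestrated one offers it only if the orchestrator's disjunction contains $\orchAct{\Dual{a}}{a}$ (resp.\ $\orchAct{a}{\Dual{a}}$); compliance---equivalently, the absence of $\xmark$-leaves guaranteed by Lemma \ref{lem:equivtb}---is precisely what forbids the orchestrator from blocking it, since otherwise that node would be stuck with a client that is neither $\stopA$ nor $\Zero$. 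The remaining cases are indeed immediate: internal-choice moves are orchestrator-independent in both semantics, so those children coincide for free, and the $\playerC$-clause of $\rts{\cdot}$ quantifies over an arbitrary single affectible synchronisation, so the unique move selected by the orchestrator prefix $\mu^+.f'$ is one of the admissible children. Your argument therefore stands, and in fact fills a gap the paper chose not to write out.
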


\begin{defi}
Let ${\rho},{\sigma}\in\ASC (\subseteq\tbASC)$ and $f\in\Orch$. We define the strategy regular tree of the orchestrated system ${\rho}\tbpf{f}{\sigma}$, which we dub $\srt{{\rho}\tbpf{f}{\sigma}}$, as follows:\\
let 
$B=\{\playerA:a,\playerA:\Dual{a},\playerB:a,\playerB:\Dual{a} \mid a\in\Names\}$ and 
let $\srtaux{\tilde{\rho}\tbpf{f}\tilde{\sigma}}{n}$ with $\tilde{\rho},\tilde{\sigma}\in\ASC $ and $n\in\Nat$ be defined by   
\[
\begin{array}{rcc@{~~~~~~~~}l}

\srtaux{\stopA\tbpf{f}\tilde{\sigma}}{n} & = 
&         \begin{array}{c}
             \stopA\tbpf{f}\tilde{\sigma}\\
            \hspace{40pt}\mid \mbox{\scriptsize $(n+1,\playerC:\cmark)$}\\
           \Zero\tbpp\tilde{\sigma}
            \end{array} \\[8mm]

\srtaux{\tilde{\rho}\tbpf{f}\tilde{\sigma}}{n} & = 
 &        \begin{array}{c}
            \tilde{\rho}\tbpf{f}\tilde{\sigma} \\
           \mbox{\scriptsize $(n+1,\beta_1)$} /  \cdots\backslash\mbox{\scriptsize $(n+1,\beta_n)$} \\
            \mathsf{T}_1 \cdots \mathsf{T}_n
             \end{array}
&  \mbox{ if } ~~\exists \beta\in B.\, \tilde{\rho}\tbpp\tilde{\sigma} \tbolts{\beta}\\[2mm]
\multicolumn{4}{l}{
\mbox{where $\tilde{\rho}\tbpf{f}\tilde{\sigma} \tbolts{\beta_i}\tilde{\rho}_i\tbpf{f'}\tilde{\sigma}_i, ~\beta_i\in B$, $i=1..n$, and  $\mathsf{T}_i = \srtaux{\tilde{\rho}_i\tbpp\tilde{\sigma}_i}{n+1}$}  }
\\[6mm]

\srtaux{\tilde{\rho}\tbpf{f}\tilde{\sigma}}{n} & = 
 &         \begin{array}{c}
            \tilde{\rho}\tbpf{f}\tilde{\sigma} \\
            \hspace{40pt}\mid \mbox{\scriptsize $(n+1,\playerC{:}a)$}  \\
            \mathsf{T}
             \end{array}   
& \hspace{-1mm} \mbox{ if } ~~\exists a\in\Names.\, \tilde{\rho}\tbpf{f}\tilde{\sigma} \tbolts{\playerC:a} \tilde{\rho}'\tbpf{f'}\tilde{\sigma}'\\[6mm]
\multicolumn{4}{l}{
\mbox{where  $\mathsf{T} =  \rt{\tilde{\rho}'\tbpf{f'}\tilde{\sigma}'}$ }
}\\[6mm]

\srtaux{\tilde{\rho}\tbpf{f}\tilde{\sigma}}{n} & = 
&          \begin{array}{c}
            \tilde{\rho}\tbpf{f}\tilde{\sigma} \\
            \mid  \\
            \xmark
             \end{array}
&  \mbox{ if } ~~\rho\neq\stopA \mbox{ and }\; \tilde{\rho}\tbpp\tilde{\sigma} ~~\nottblts{}_{\!\!\!\!\!o}\\[2mm]

\end{array}
\]
hence  
\[
\srt{{\rho}\tbpf{f}{\sigma}} = \srtaux{{\rho}\tbpf{f}{\sigma}}{0}.
\]
\end{defi}
The same observation we made after Definition \ref{def:defofrt} holds here for what concerns
the third clause of the definition of ${\sf srt\mbox{-}aux}$.

Notice that, by construction, $\srt{{\rho}\tbpf{f}{\sigma}}$ is but a ''decorated'' version of
$\rt{{\rho}\tbpp{\sigma}}$. We hence get the following

\begin{fact}
\label{fac:leavesform}
All the leaves of $\srt{{\rho}\tbpf{f}{\sigma}}$ are of the form  $\Zero\tbpp\tilde{\sigma}$
if and only if all the leaves of $\rt{{\rho}\tbpp{\sigma}}$ are so.
\end{fact}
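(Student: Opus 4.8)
The plan is to prove the biconditional by exhibiting a level-preserving, node-for-node correspondence between $\srt{\rho\tbpf{f}\sigma}$ and the orchestrator-free regular tree $\rt{\rho\tbpp\sigma}$, under which a node is a leaf of the form $\Zero\tbpp\tilde\sigma$ in one tree exactly when the corresponding node is such a leaf in the other. Once this correspondence is in place, the two universally quantified statements ``all leaves are of the form $\Zero\tbpp\tilde\sigma$'' become logically equivalent, and the Fact follows at once in both directions.

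First I would pin down which tree plays the role of $\rt{\rho\tbpp\sigma}$: it is the orchestrator-erased skeleton $\rt{\rho\tbpf{f}\sigma}^{-}$, which by Lemma~\ref{lem:rtinrts} (using $f:\rho\complyTBO\sigma$) is a genuine member of $\rts{\rho\tbpp\sigma}$, and it is precisely the undecorated tree of which $\srt{\rho\tbpf{f}\sigma}$ is the ``decorated version''. Then I would compare, by simultaneous induction on their generation, the defining recursion of $\srtaux{-}{n}$ with that of $\rt{-}$ in Definition~\ref{def:defofrt}. The point to verify is that the two recursions select the \emph{same} clause at each configuration and descend to the \emph{same} successor configurations: the success clause fires exactly at configurations $\stopA\tbpf{f}\tilde\sigma$, the branching clause exactly when some $\beta\in B$ is enabled, the player-$\playerC$ clause exactly when a $\tbolts{\playerC:a}$ move is available, and the stuck clause exactly when $\rho\neq\stopA$ and the configuration is $\tbolts{}$-blocked. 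The only differences introduced by $\srtaux$ are cosmetic: it threads a level counter $n$ and decorates each edge with a time stamp $(n{+}1,\beta)$, and it drops the orchestrator from the labels of the leaves. Since neither the counter, the edge decorations, nor the orchestrator-erasure gates the recursion, the underlying branching structure of $\srt{\rho\tbpf{f}\sigma}$ is identical to that of $\rt{\rho\tbpf{f}\sigma}^{-}=\rt{\rho\tbpp\sigma}$.

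With the correspondence established, I would finish by inspecting the leaf clauses. In both definitions a leaf is produced only by the success clause or by the stuck clause: the former yields a leaf $\Zero\tbpp\tilde\sigma$ (its parent being a $\stopA$-configuration), the latter yields a leaf $\xmark$ (its parent being $\tbolts{}$-blocked with a non-terminated client). Hence, under the node-for-node correspondence, a leaf of $\srt{\rho\tbpf{f}\sigma}$ is of the form $\Zero\tbpp\tilde\sigma$ if and only if the matching leaf of $\rt{\rho\tbpp\sigma}$ is, and likewise a leaf is $\xmark$ in one tree exactly when it is in the other. Consequently \emph{all} leaves of $\srt{\rho\tbpf{f}\sigma}$ are of the form $\Zero\tbpp\tilde\sigma$ if and only if \emph{all} leaves of $\rt{\rho\tbpp\sigma}$ are so, which is the claim.

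I expect the main obstacle to be the bookkeeping in the correspondence step: one must check, clause by clause against Definition~\ref{def:defofrt} and the definition of $\srtaux{-}{n}$, that passing to orchestrated node labels and adding the time-stamp decorations never changes which clause applies nor which successors are generated. In particular one verifies that the set $\Set{\tilde\rho_i\tbpf{f_i}\tilde\sigma_i}$ of branching successors maps onto the set $\Set{\tilde\rho_i\tbpp\tilde\sigma_i}$ after erasure, so that the two trees are isomorphic as leaf-form-labelled trees rather than merely agreeing on a single distinguished branch. Resolving the nondeterminism in the $\playerC$-clause consistently on both sides — by the fixed ordering of $\Names$ noted after Definition~\ref{def:defofrt} — is the delicate detail that makes this an equality of trees rather than of families of trees.
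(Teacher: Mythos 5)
Your proposal is correct and takes essentially the same route as the paper, which dispenses with a proof entirely: the Fact is justified there by the single observation that, by construction, $\srt{\rho\tbpf{f}\sigma}$ is a ``decorated'' version of $\rt{\rho\tbpf{f}\sigma}$, and your clause-by-clause correspondence (same LTS $\tbolts{}$, same clause firing at each configuration, decorations and orchestrator-erasure purely cosmetic, nondeterminism in the third clause resolved identically on both sides) is exactly that observation spelled out. The only superfluous ingredient is the appeal to Lemma~\ref{lem:rtinrts} with the hypothesis $f:\rho\complyTBO\sigma$ when identifying the second tree: the node-for-node correspondence, and hence the Fact, holds unconditionally, and the compliance assumption plays no role in it.
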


\begin{defi}
\label{def:Sigmaf}
Given an orchestrated system $\rho\tbpf{f} \sigma$, the strategy $\Sigma_f$ for $\playerC$ in the game  $\game_{\rho\pp\sigma}$ is defined as follows:\\
let $\vec{e}=\seq{e_0\cdots e_n}$ be a finite play in $\game_{\rho\pp\sigma}$
$$\Sigma_f(\vec{e}) = \left\{ \begin{array}{l@{\hspace{8mm}}l}
                                      (k+1,\playerC{:}a)  & if ~~(*)\\[2mm]  
                                      \emptyset & \mbox{otherwise}
                                        \end{array} \right.
$$
(*) $\vec{e}$ is a sequence of labels in $\srt{\tilde{\rho}\tbpf{f}\tilde{\sigma}}$ from the root to a
node of the form\linebreak
$\Dual{\alpha}.\rho' + \rho'' \tbpf{f'} \alpha. \sigma' + \sigma''$, and where  where $(k,\beta)$ is the label of the arc above such a node node ($k=0$ if the node coincides with the root).
\end{defi}

\begin{fact}
\label{fac:sigmafunivocal}
Given an orchestrated system $\rho\tbpf{f} \sigma$,
the strategy $\Sigma_f$ is univocal.
\end{fact}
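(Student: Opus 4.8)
The plan is to show directly that $|\Sigma_f(\vec e)|\le 1$ for every play $\vec e$, which is exactly the univocality condition of Definition \ref{def:univocalstrat}. Since Definition \ref{def:Sigmaf} assigns to each $\vec e$ either $\emptyset$ or the single element $(k+1,\playerC{:}a)$, the only genuine content is to verify that, whenever condition (*) is met, the value $(k+1,\playerC{:}a)$ is \emph{uniquely} determined by $\vec e$; then $\Sigma_f$ is a well-defined function whose values are at most singletons.

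First I would establish that $\srt{\rho\tbpf{f}\sigma}$ is a tree whose edges leaving any given node carry pairwise distinct labels. This follows by inspecting the inductive clauses of ${\sf srt\mbox{-}aux}$: the branching clause attaches one edge $(n+1,\beta_i)$ for each transition $\tilde\rho\tbpf{f}\tilde\sigma\tbolts{\beta_i}\tilde\rho_i\tbpf{f'}\tilde\sigma_i$ with $\beta_i\in B$ (the $\playerA/\playerB$-action labels), and distinct children carry distinct $\beta_i$ because the turn-based orchestrated semantics of Figure \ref{fig:tborch-opsem} admits at most one transition for each fixed label in $B$; the remaining clauses attach a single child. Consequently a sequence of edge-labels $\vec e$ traces at most one path from the root, so it determines at most one node $N$ and at most one incoming-arc label $(k,\beta)$; in particular $k$ is uniquely fixed.

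Next I would analyse the shape of $N$ in the case where (*) holds, i.e.~when $N$ has the form $\Dual{\alpha}.\rho'+\rho''\tbpf{f'}\alpha.\sigma'+\sigma''$ (or its symmetric variant). By Figure \ref{fig:tborch-opsem}, a $\playerC$-labelled transition is enabled from such a node only if the orchestrator component $f'$ is of the form $\orchAct{\alpha}{\Dual\alpha}^+.g$. Because a strategy-orchestrator prefix carries exactly one affectible action — affectible prefixes $\orchAct{\cdot}{\cdot}^+$ cannot occur inside a disjunction — there is a unique affectible synchronization out of $N$, which fixes the action $a$ (namely $a=\alpha$). Hence $(k+1,\playerC{:}a)$ is uniquely determined by $\vec e$, so $\Sigma_f(\vec e)$ is either $\emptyset$ or this single element, giving $|\Sigma_f(\vec e)|\le 1$.

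The argument is essentially a determinism check, so there is no deep obstacle. The one point requiring care is the second step, namely confirming that an orchestrator enabling a $\playerC$-move must be a single-prefix term $\mu^+.g$ rather than a disjunction, so that the affectible move — and hence the value of $\Sigma_f$ — is forced to be unique; this is guaranteed directly by the grammar of strategy orchestrators, in which affectible prefixes do not appear in disjunctions.
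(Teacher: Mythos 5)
Your proof is correct. The paper states this as an unproved Fact, treating it as immediate from Definition \ref{def:Sigmaf} — which by construction assigns to each play either $\emptyset$ or a single event $(k+1,\playerC{:}a)$ — and your determinism check (pairwise-distinct edge labels in $\srt{\rho\tbpf{f}\sigma}$, so a label sequence traces at most one path, plus uniqueness of the affectible move forced by the fact that prefixes $\orchAct{\alpha}{\Dual{\alpha}}^+$ cannot occur in disjunctions) simply makes explicit the well-definedness argument the paper leaves implicit.
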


\begin{lem}
\label{lem:rightimplth}
Let $ f: \rho\complyO \sigma$. The strategy $\Sigma_f$ is univocal and winning  for $\playerC$ in the game  $\game_{\rho\pp\sigma}$.
\end{lem}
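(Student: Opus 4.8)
Looking at Lemma \ref{lem:rightimplth}, I need to prove that given $f : \rho\complyO\sigma$, the strategy $\Sigma_f$ is univocal and winning for player $\playerC$ in $\game_{\rho\pp\sigma}$.

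\textbf{The plan.} The univocality is already recorded as Fact \ref{fac:sigmafunivocal}, so the real work is to show $\Sigma_f$ is winning. First I would reduce from $\complyO$ to $\complyTBO$: by Lemma \ref{lem:complyOeqcomplyTBO}, $f:\rho\complyO\sigma$ gives $f:\rho\complyTBO\sigma$, which by Lemma \ref{lem:equivtb} means that in $\rt{\rho\tbpf{f}\sigma}$ all leaves have the form $\Zero\tbpf{f}\tilde\sigma$. Then, by the remark that $\srt{\rho\tbpf{f}\sigma}$ is a decorated version of $\rt{\rho\tbpf{f}\sigma}$ together with Fact \ref{fac:leavesform}, I conclude that all leaves of $\srt{\rho\tbpf{f}\sigma}$ are of the form $\Zero\tbpp\tilde\sigma$. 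This is the crucial structural fact: every finite maximal branch of the strategy tree terminates by a successful $\playerC{:}\cmark$ move rather than at a $\xmark$ deadlock.

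\textbf{Key steps.} Second, I would argue that every play $\vec e$ conforming to $\Sigma_f$ corresponds to a path in $\srt{\rho\tbpf{f}\sigma}$ from the root. This requires checking that $\Sigma_f$ exactly reproduces the $\playerC$-moves recorded along the branches of the tree: by Definition \ref{def:Sigmaf}, $\Sigma_f(\vec e)$ returns the unique $(k{+}1,\playerC{:}a)$ labelling the edge out of the current node when that node is an affectible-synchronisation node, and $\emptyset$ otherwise. Since moves of $\playerA$ and $\playerB$ are not constrained by a $\playerC$-strategy (by the definition of conformance, only events in $\euniverse_\playerC$ are tested against $\Sigma_f$), a play conforms to $\Sigma_f$ precisely when its $\playerC$-events follow the tree's $\playerC$-edges. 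Thus a conforming play is a branch of $\srt{\rho\tbpf{f}\sigma}$ (possibly extended by the terminal $\playerC{:}\cmark$ edge at a $\stopA$ leaf).

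\textbf{Conclusion and main obstacle.} Third, I would invoke Lemma \ref{lem:winC}: player $\playerC$ wins in $\vec e$ iff $\vec e$ is infinite or ends with $(k,\playerC{:}\cmark)$. For any conforming play that is a finite maximal trace, it must reach a leaf of the tree; since all leaves are $\Zero\tbpp\tilde\sigma$, the branch reaches a node $\stopA\tbpf{f}\tilde\sigma$ whose only outgoing edge is labelled $(n{+}1,\playerC{:}\cmark)$, so the maximal play ends with the winning move. Infinite conforming plays win vacuously. Hence $\Phi\,\playerC\,\vec e = 1$ for all conforming plays, so $\Sigma_f$ is winning. I expect the \emph{main obstacle} to be the careful matching between the abstract notion of ``play conforming to $\Sigma_f$'' in the event-structure game and the concrete branches of the decorated tree $\srt{\rho\tbpf{f}\sigma}$: one must verify that the time-stamped events generated by reading edge labels genuinely form elements of $\Trace{\emptyset,\rightarrow_{\Sem{\rho\pp\sigma}}}$ and that maximality of a play forces arrival at a tree leaf (rather than the play stalling while the tree continues, or vice versa). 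This bookkeeping — that the turn-based LTS steps underlying the tree are in exact correspondence with the enabling relation $\vdash_{\rho\pp\sigma}$ of Definition \ref{def:esacs} — is where the proof's subtlety lies, though each individual check is a routine unfolding of the definitions.
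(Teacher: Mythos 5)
Your proposal is correct and follows essentially the same route as the paper's proof: the identical chain Lemma \ref{lem:complyOeqcomplyTBO} $\rightarrow$ Lemma \ref{lem:equivtb} $\rightarrow$ Fact \ref{fac:leavesform} $\rightarrow$ Definition \ref{def:Sigmaf} with Lemma \ref{lem:winC} $\rightarrow$ Fact \ref{fac:sigmafunivocal}. The only difference is that you spell out the correspondence between plays conforming to $\Sigma_f$ and branches of $\srt{\rho\tbpf{f}\sigma}$, a step the paper compresses into the single phrase ``by definition of $\Sigma_f$ and by Lemma \ref{lem:winC}''.
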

\begin{proof}
Let $f: \rho\complyO \sigma$, then by Lemma \ref{lem:complyOeqcomplyTBO} we have that
$f: \rho\complyTBO \sigma$. By Lemma  \ref{lem:equivtb}, no leaf in
$\rt{\rho\tbpf{f}\sigma}$, and hence in $\rt{\rho\tbpf{f}\sigma}$ (by Fact \ref{fac:leavesform}), is of the form $\xmark$. By definition of $\Sigma_f$ 
and by
Lemma \ref{lem:winC}, 
it is winning  for $\playerC$ in the game  $\game_{\rho\pp\sigma}$. It is also univocal by Fact \ref{fac:sigmafunivocal}.
\end{proof}


\subsection{Proof of Theorem \ref{th:derstratorchequiv}\ref{th:derstratorchequiv-iii}}\hfill\\
\label{subsect:derstrat-iii}

Let $\Sigma$ be a univocal winning strategy for player $\playerC$ for the game $\game_{\rho\!\pp\!\sigma}$. We then take into account the tree $\rtsigma{\Sigma}{\rho\!\tbpp\!\sigma}$
as defined in Definition \ref{def:treefromstrat}. 
We show now how to get an orchestrator such that $f: \rho\complyTBO \sigma$ (and hence $f: \rho\complyO \sigma$) out of $\rtsigma{\Sigma}{\rho\!\tbpp\!\sigma}$.

By means of the following definition we shall be able to get an orchestrator out of a tree in 
$\rts{\rho\tbpp\sigma}$ which does not contain leaves of the form $\xmark$.

\begin{defi}
Let $\mathsf T\in \rts{\rho\tbpp\sigma}$, the unproper orchestrator $\orch{\mathsf T}$ (that is possibly containing leaves of the form $\xmark$) is defined as
follows:
\[
\begin{array}{lcc@{~~~~~~~~}l}
\mathsf{orch}\Big( ~\begin{array}{c}
             \stopA\tbpp\tilde{\sigma}\\
            \mid \\
           \Zero\tbpp\tilde{\sigma}
            \end{array} ~\Big)
& = &
\stopf
\\[8mm]
\mathsf{orch}\Big(\begin{array}{c}
            \tilde{\rho}\tbpp\tilde{\sigma} \\
            / \cdots\backslash \\
            \mathsf{T}_1 \cdots \mathsf{T}_n
             \end{array}\Big)
&=&
\bigvee_{i\in\{1..n\}}\orch{ \mathsf{T}_i}\\
\multicolumn{4}{l}{
\mbox{where, for $i=1..n$, $\mathsf{T}_i \in \rts{\rho_i\tbpp\sigma_i}$ with  ${\rho}\tbpp{\sigma} \tbolts{Y:\Dual{a}_i}{\rho}'\tbpp{\sigma}'$ and $Y\in\Set{\playerA,\playerB}$.
}}
\\[4mm]
\mathsf{orch}\Big( ~\begin{array}{c}
            \tilde{\rho}\tbpp\tilde{\sigma} \\
            \mid  \\
            \mathsf{T}
             \end{array}\Big)
  
& =  & 
\orchAct{\Dual{a}}{a}.\orch{\mathsf{T}}
\\
\multicolumn{4}{l}{
\mbox{where $\mathsf{T} \in \rts{\rho'\tbpp\sigma'}$ with  ${\rho}\tbpp{\sigma} \tbolts{\playerA:a}{\rho}'\tbpp{\sigma}'$.
}}
\\[4mm]
\mathsf{orch}\Big( ~\begin{array}{c}
            \tilde{\rho}\tbpp\tilde{\sigma} \\
            \mid  \\
            \mathsf{T}
             \end{array}\Big)
  
& =  & 
\orchAct{a}{\Dual{a}}.\orch{\mathsf{T}}
\\
\multicolumn{4}{l}{
\mbox{where $\mathsf{T} \in \rts{\rho'\tbpp\sigma'}$ with  ${\rho}\tbpp{\sigma} \tbolts{\playerB:a}{\rho}'\tbpp{\sigma}'$.
}}
\\[4mm]
\mathsf{orch}\Big( ~\begin{array}{c}
            \tilde{\rho}\tbpp\tilde{\sigma} \\
            \mid  \\
            \mathsf{T}
             \end{array}\Big)
  
& =  & 
\orchAct{\alpha}{\Dual{\alpha}}^+.\orch{\mathsf{T}}
\\
\multicolumn{4}{l}{
\mbox{where $\tilde{\rho}\tbpp\tilde{\sigma}= \Dual{\alpha}.\rho + \rho' \tbpp \alpha.\sigma + \sigma'$ and $\mathsf{T} \in \rts{\rho\tbpp\sigma}$.
}}\\[6mm]

\mathsf{orch}\Big( ~\begin{array}{c}
            \tilde{\rho}\tbpp\tilde{\sigma} \\
            \mid  \\
            \xmark
             \end{array}\Big) 
& = &   
\xmark
\end{array}
\]
\end{defi}
By the above definition it is immediate to check the following fact.

\begin{fact}
\label{fact:orchnocirc}
Let $\mathsf T\in \rts{\rho\tbpp\sigma}$ such that all its leaves are of the form $\Zero\tbpp\tilde{\sigma}$.\\
 Then $\orch{\mathsf T} $ is a proper orchestrator, i.e. it does not contain any leaf of the form $`\xmark$'.
\end{fact}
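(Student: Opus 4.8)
The plan is to establish the statement by a direct inspection of the defining clauses of $\orch{-}$, which amounts to a structural induction on the regular tree $\mathsf{T}$. The key observation is that, among those clauses, the symbol $\xmark$ is introduced only by the very last one, namely the clause applying to a leaf labelled by $\xmark$; every other clause builds its output either as $\stopf$, or as a disjunction $\bigvee_i \orch{\mathsf{T}_i}$, or as one of the prefixed orchestrators $\orchAct{\Dual a}{a}.\orch{\mathsf{T}'}$, $\orchAct{a}{\Dual a}.\orch{\mathsf{T}'}$, $\orchAct{\alpha}{\Dual\alpha}^+.\orch{\mathsf{T}'}$, in which the only possible occurrences of $\xmark$ are those already present in the orchestrators computed on the children.

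First I would check the base case: when $\mathsf{T}$ consists of the root $\stopA\tbpp\tilde\sigma$ with its single child $\Zero\tbpp\tilde\sigma$, we have $\orch{\mathsf{T}}=\stopf$, which is proper. Then, in each inductive case, I would use that the leaves of $\mathsf{T}$ are exactly the leaves of its immediate subtrees $\mathsf{T}_i$; hence every $\mathsf{T}_i$ again satisfies the hypothesis that all of its leaves are of the form $\Zero\tbpp\tilde\sigma$, so by the induction hypothesis each $\orch{\mathsf{T}_i}$ is proper. Since a disjunction, and each of the three kinds of prefixing, of proper orchestrators is again proper, $\orch{\mathsf{T}}$ is proper in all these cases. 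Finally, the $\xmark$-leaf clause can never be invoked, because by hypothesis $\mathsf{T}$ has no leaf labelled by $\xmark$; this closes the induction.

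I do not expect any genuine obstacle, since the statement is essentially an unfolding of the definition of $\orch{-}$. The only point deserving care is that the trees in $\rts{\rho\tbpp\sigma}$ are regular and possibly infinite, so \emph{structural induction} is to be read as the remark that $\orch{-}$ is a local, $\xmark$-preserving relabelling of $\mathsf{T}$: any occurrence of $\xmark$ in $\orch{\mathsf{T}}$ sits at a finite position, and therefore descends, along a finite path from the root, from the application of the last clause to some $\xmark$-labelled leaf of $\mathsf{T}$. As there is no such leaf, $\orch{\mathsf{T}}$ contains no $\xmark$, which is exactly the claim.
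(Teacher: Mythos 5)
Your proof is correct and matches the paper's treatment: the paper states this fact as immediate by inspection of the defining clauses of $\orch{\cdot}$, which is exactly the argument you spell out. Your extra remark about reading the induction as a local, $\xmark$-preserving relabelling is a sensible precaution for possibly infinite regular trees, but it does not change the substance of the argument.
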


\begin{lem}
Let $\mathsf T\in \rts{\rho\tbpp\sigma}$.
The definition of $\orch{\mathsf T} $ is well-founded
\end{lem}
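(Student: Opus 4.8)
The plan is to read the statement as the assertion that the clauses defining $\orch{\mathsf T}$ determine, for every $\mathsf T\in\rts{\rho\tbpp\sigma}$, a unique and well-formed (possibly infinite, but regular) orchestrator. The subtlety is that $\mathsf T$ may be infinite, so a structural recursion over $\mathsf T$ is not literally well-founded; instead one must argue that the definition is a productive (guarded) corecursion whose defining clauses are exhaustive and mutually exclusive.

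First I would check exhaustiveness and determinacy. By Definition \ref{def:rtstbpp} every node of $\mathsf T$ is either the success leaf $\stopA\tbpp\tilde\sigma$ (with child $\Zero\tbpp\tilde\sigma$), a $B$-branching node, a $\playerC$-labelled single-child node, or an $\xmark$-leaf. By Lemma \ref{lem:tbpptoppA}(1) the $B$-moves enabled at any reachable turn-based configuration fall into exactly one of the categories ``a unique input move $\playerA{:}a$ or $\playerB{:}a$'' and ``one or more output moves $\playerA{:}\Dual{a}$, $\playerB{:}\Dual{a}$'', and these are incompatible with a $\playerC$-move and with being stuck. Hence at each node exactly one of the clauses of the definition of $\orch$ applies, the clause being fixed by the label(s) of the outgoing edge(s). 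This makes $\orch{\mathsf T}$ unambiguous as a case analysis.

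Then I would establish productivity. The base clauses (yielding $\stopf$ and $\xmark$) and the three single-child clauses, which emit an orchestration prefix $\orchAct{\Dual{a}}{a}$, $\orchAct{a}{\Dual{a}}$ or $\orchAct{\alpha}{\Dual{\alpha}}^+$ before recurring, are unproblematic: each recursive call is guarded by a prefix. The only clause whose recursive calls are not directly guarded is the disjunction $\bigvee_{i}\orch{\mathsf T_i}$, arising at an output-branching node. Here I would invoke Lemma \ref{lem:tbpptoppA}(2): each child $\mathsf T_i$ is reached by an output move into a buffered configuration, and the unique move enabled there is the matching input; hence $\mathsf T_i$ falls under one of the input clauses, so that $\orch{\mathsf T_i}$ begins with an orchestration prefix $\orchAct{\cdot}{\cdot}$. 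Consequently $\bigvee_i\orch{\mathsf T_i}$ is a disjunction of \emph{prefixed} orchestrators, a legal term of the grammar of $\Orch$, and its nested recursive calls are again guarded.

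Finally I would combine these observations. Since an output-branching node is always immediately succeeded, in every branch, by an input node, no two consecutive steps along any path of $\mathsf T$ are both unguarded; thus orchestration prefixes are emitted cofinally along every infinite path, and the corecursion is productive, defining a unique orchestrator for each $\mathsf T$. Because $\mathsf T$ is a regular tree and $\orch$ is defined compositionally on subtrees, $\orch{\mathsf T}$ has finitely many distinct subterms and is therefore a genuine regular orchestrator (representable with $\rec$). The main obstacle is precisely this productivity argument, namely ruling out an infinite run of unguarded disjunction steps, which is exactly what the alternation and uniqueness statements of Lemma \ref{lem:tbpptoppA} provide.
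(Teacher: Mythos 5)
Your reading of ``well-founded'' as productivity of a guarded corecursion is a legitimate (and more informative) take on the statement, and the exhaustiveness/determinacy step via Lemma~\ref{lem:tbpptoppA}(1) is fine. But the step you yourself single out as the crux fails. You invoke Lemma~\ref{lem:tbpptoppA}(2) to conclude that every child of a disjunction (output-branching) node is an input node, hence that $\orch{\mathsf T_i}$ always starts with a prefix. That lemma has two hypotheses that are not available here: compliance $\tilde\rho\complyTB\tilde\sigma$, and the assumption that the buffered configuration can move at all. The present lemma quantifies over \emph{arbitrary} $\mathsf T\in\rts{\rho\tbpp\sigma}$ --- indeed $\orch{\cdot}$ is explicitly designed to return ``unproper'' orchestrators with $\xmark$ leaves precisely to cover non-compliant pairs --- so neither hypothesis may be assumed. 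And the alternation claim is genuinely false in this generality: take $\rho=\Dual a.\stopA$ and $\sigma=\Dual b.\stopA$ (singleton \emph{internal} choices, by Notation~\ref{not:affcontracts}). The root of any $\mathsf T\in\rts{\rho\tbpp\sigma}$ is a branching node, and its child $\buf{\Dual a}\stopA\tbpp\Dual b.\stopA$ still has the server output move $\playerB{:}\Dual b$ enabled, so it is \emph{again} a branching node, whose only child $\buf{\Dual a}\stopA\tbpp\buf{\Dual b}\stopA$ is a stuck $\xmark$ leaf. So two consecutive unguarded disjunction steps do occur, contradicting your ``no two consecutive unguarded steps'' claim.

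The gap is repairable, which is why it is a flaw in the argument rather than a dead end: a child of a branching node has one side buffered; a buffered side can never output again, so a second consecutive branching step buffers the other side, after which the configuration is stuck. Hence runs of consecutive unguarded steps have length at most two, and the children that are not guarded input nodes are either $\xmark$ leaves or branching nodes all of whose children are $\xmark$ leaves; this bounded-run observation (not strict alternation) is what gives productivity, and together with your regularity remark it yields the statement. For comparison, the paper's own proof is a single sentence: ``by the regularity of $\mathsf T$'' --- i.e.\ $\mathsf T$ has finitely many distinct subtrees, so the defining clauses close off into a finite system of equations producing a regular (unproper) orchestrator. Your route tries to explain why that system is actually solvable, which is a genuinely different and finer-grained analysis, but as written it rests on a false intermediate claim and an inapplicable lemma.
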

\begin{proof}
By the regularity of $\mathsf T$.
\end{proof}

\begin{lem}
\label{lem:forchT}
Let $\mathsf T\in \rts{\rho\tbpp\sigma}$ and let $f=\orch{\mathsf T}$.
Then $\rt{\rho\tbpf{f}\sigma}^- = \mathsf T$.
\end{lem}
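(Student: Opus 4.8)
The plan is to prove the stronger, relativised statement: for every configuration $\tilde\rho\tbpp\tilde\sigma$ reachable from a pair in $\ASC$ and every $\mathsf T\in\rts{\tilde\rho\tbpp\tilde\sigma}$, setting $f=\orch{\mathsf T}$ one has $\rt{\tilde\rho\tbpf{f}\tilde\sigma}^-=\mathsf T$. Since both $\rt{\tilde\rho\tbpf{f}\tilde\sigma}$ and $\mathsf T$ are regular trees, I would argue coinductively (equivalently, by establishing that the evident node-correspondence is a tree bisimulation): it suffices to check that the two trees carry the same root label after erasing the orchestrator, and that their immediate subtrees are again so related. The whole argument is then a case analysis on the clause of Definition \ref{def:rtstbpp} that produced the root of $\mathsf T$; by Lemma \ref{lem:tbpptoppA} these cases are mutually exclusive, so at each node exactly one applies, and one may simultaneously read off the shape of $\orch{\mathsf T}$ from its defining clauses.

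The direct cases come first. If $\tilde\rho=\stopA$, then $\orch{\mathsf T}=\stopf$, and both the $\stopA$-clause of Definition \ref{def:rtstbpp} and the first clause of Definition \ref{def:defofrt} produce the one-child tree with child $\Zero\tbpp\tilde\sigma$; erasing the orchestrator yields back $\mathsf T$. If the root is an $\xmark$-leaf (so $\tilde\rho\neq\stopA$ and $\tilde\rho\tbpp\tilde\sigma$ is stuck), then $\orch{\mathsf T}=\xmark$ and, no orchestrated transition being enabled, the last clause of Definition \ref{def:defofrt} yields the same $\xmark$-leaf. If the root performs an affectible move, i.e.\ $\tilde\rho\tbpp\tilde\sigma=\Dual\alpha.\rho+\rho'\tbpp\alpha.\sigma+\sigma'$ with single child reached through $\playerC{:}\alpha$, then $\orch{\mathsf T}=\orchAct{\alpha}{\Dual\alpha}^+.\orch{\mathsf T'}$; the affectible rule of Figure \ref{fig:tborch-opsem} fires exactly this move and carries the residual $\orch{\mathsf T'}$ to the single child, so Definition \ref{def:defofrt} reconstructs the same subtree and the coinductive hypothesis applies.

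The interesting case is that of an internal choice, say $\tilde\rho=\Internal_{k\in I}\Dual a_k.\rho_k$ (the symmetric output on the server side is identical up to swapping $\playerA$ and $\playerB$). Here Definition \ref{def:rtstbpp} records one child $\buf{\Dual a_k}\rho_k\tbpp\tilde\sigma$ for every $k\in I$, each carrying a subtree $\mathsf T_k$, and $\orch{\mathsf T}=\bigvee_{k\in I}\orch{\mathsf T_k}$. The crux is that the orchestrator does not constrain internal outputs: the $\playerA{:}\Dual a_k$ rule of Figure \ref{fig:tborch-opsem} fires for every $k\in I$ and leaves $f$ unchanged, so $\rt{\tilde\rho\tbpf{f}\tilde\sigma}$ branches over exactly the same index set $I$, with children $\buf{\Dual a_k}\rho_k\tbpf{\bigvee_{j\in I}\orch{\mathsf T_j}}\tilde\sigma$. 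I would then use that each $\mathsf T_k$ has a root performing a \emph{unique} input move (Lemma \ref{lem:tbpptoppA}(a)), whence $\orch{\mathsf T_k}$ has the shape $\orchAct{a_k}{\Dual a_k}.\orch{\mathsf T_k'}$ when the server offers $a_k$, and is $\xmark$ otherwise. Feeding the disjunction $\bigvee_j\orch{\mathsf T_j}$ into the buffered configuration, the input rule of Figure \ref{fig:tborch-opsem} selects precisely the disjunct indexed by the buffered label, passes the residual $\orch{\mathsf T_k'}$ to the single child, and is disabled exactly when that disjunct is $\xmark$ --- which, by the same lemma, coincides with the server not offering $a_k$ and hence with $\mathsf T_k$ being an $\xmark$-leaf. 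Thus the child subtree computed by Definition \ref{def:defofrt} equals $\mathsf T_k$ up to the residual orchestrator $\orch{\mathsf T_k'}$, and the coinductive hypothesis closes the case. The two single-input clauses of $\orch$ (the $\orchAct{\Dual a}{a}.\orch{\mathsf T}$ and $\orchAct{a}{\Dual a}.\orch{\mathsf T}$ cases) are precisely these intermediate input nodes, handled \emph{en passant}.

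The main obstacle I anticipate is exactly this two-step decoupling between the shape of $\orch{\mathsf T}$ and the turn-based semantics: the disjunction $\bigvee_k\orch{\mathsf T_k}$ is produced by $\orch$ at the internal-output node, whereas in the orchestrated LTS it is only resolved one step later, at the unique input move, while the orchestrator rides unchanged across the (unconstrained) output step. Making this bookkeeping precise --- and in particular verifying that improper disjuncts $\xmark$ line up with stuck inputs, so that the $\xmark$-leaves of the two trees sit at the same nodes --- is where care is needed; everything else is a mechanical unfolding of the two definitions against Lemma \ref{lem:tbpptoppA}.
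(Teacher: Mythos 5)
Your proof is correct and takes essentially the same approach as the paper: the paper disposes of this lemma with the one-line proof ``By construction,'' and your coinductive case analysis --- matching the clauses of Definition \ref{def:rtstbpp} against those of Definition \ref{def:defofrt} via Lemma \ref{lem:tbpptoppA}, including the one-step-delayed resolution of the disjunction $\bigvee_k\orch{\mathsf{T}_k}$ at the buffered-input node and the alignment of $\xmark$ disjuncts with stuck configurations --- is exactly that construction spelled out in full.
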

\begin{proof}
By construction.
\end{proof}

By Lemma \ref{lem:treefromstrat} we have that $\rtsigma{\Sigma}{\rho\!\tbpp\!\sigma}$ is such that all its leaves are of the form $\Zero\tbpp\tilde{\sigma}$.  Let now $f=\orch{\mathsf T}$. By Fact \ref{fact:orchnocirc}, 
$f$ is a proper orchestrator.
 By Lemma \ref{lem:forchT} we have that
$\rt{\rho\tbpf{f}\sigma}^- = \mathsf T$. So $\rt{\rho\tbpf{f}\sigma}^-$, and hence $\rt{\rho\tbpf{f}\sigma}$, is such that all its leaves are of the form $\Zero\tbpp\tilde{\sigma}$.
We now get the thesis by Lemma \ref{lem:equivtb}.


\subsection{Proof of Theorem \ref{th:derstratorchequiv}\ref{th:derstratorchequiv-iv}}
\label{subsec:orchtoder}

\begin{figure}[b]
\hrule
\vspace{2mm}
\centering
{\small
\begin{tabbing}
\OrchToDerAux\=$(f, \Gamma, \rho\complyF\sigma)$ \\ [1mm]
\IF \> $\rho = \stopA$~~ \THEN ~~~~~
	$(\mbox{\scriptsize$\CkptcomplAx$}):\Inf{}{\Gamma\der \stopA \complyF \sigma}$
\\[1mm]

\ELSE \> \IF \= ~~~$\rho\complyF\sigma \in \Gamma$ ~~\THEN~~~~~
	$(\mbox{\scriptsize$\CkptcomplHyp$}):\Inf{}{\Gamma, \rho\complyF\sigma \der \rho\complyF\sigma}$ 
\\ [1mm]

\ELSE \> \IF ~~\=~~ \= $f =\orchAct{\Dual{\alpha}_k}{\alpha_k}.f'$~ \AND~ $\rho = \sum_{i\in I}\alpha_i.\rho_i$ ~\= \AND~$\sigma = \sum_{j\in J}\Dual{\alpha}_j.\sigma_j$ \\[2mm]
\> \> \> \AND~~~$k \in I\cap J$ ~\AND~ $\mathcal{D} = $ \OrchToDerAux$(f',\Gamma, \rho\complyF\sigma \der \rho_k\complyF\sigma_k) \neq \FAIL$ 
\\[2mm]
\>  \THEN  ~~~~~
	$(\mbox{\scriptsize$+\cdot+$}):\Inf{\Der}
	{\Gamma\der \rho\complyF\sigma}$ \hspace{2mm} \ELSE ~~\FAIL
\\ [2mm]

\ELSE \> \IF \> $f=\bigvee_{k\in K}\orchAct{a_k}{\Dual{a}_k}.f_{k}$ ~\AND~$\rho = \bigoplus_{i\in I}\Dual{a}_i.\rho_i$ ~\AND~ $\sigma = \sum_{j\in J}a_j.\sigma_j$\\[2mm]
\>\>\> \AND~ $K\supseteq I\subseteq J$ \\[2mm]
\>\>\> \AND~\FA~$i\in I$ ~~$\mathcal{ D}_i = $~\OrchToDerAux$(f_{i},\Gamma, \rho\complyF\sigma \der \rho_i\complyF\sigma_i) \neq \FAIL$\\ [2mm]
\>  \THEN ~~~~~
	$(\mbox{\scriptsize$\oplus\cdot+$}):\Inf{
    		\forall i\in I ~~\mathcal{ D}_i }{
    		\Gamma\der \rho\complyF\sigma}$ \hspace{2mm} \ELSE ~~\FAIL
\\ [1mm]

\ELSE \> \IF \> ~~~$f=\bigvee_{k\in K}\orchAct{\Dual{a}_k}{a_k}.f_{k}$ ~\AND~ $\rho = \sum_{i\in I}a_i.\rho_i$ ~\AND~ $\sigma = \bigoplus_{j\in J}\Dual{a}_j.\sigma_j$\\[2mm]
\>\>\> \AND~ $I\supseteq J\subseteq K$\\ [2mm]
\>\>\> \AND~~\FA~$j\in J$~~~$\mathcal{ D}_j = $~\OrchToDerAux$(f_j,\Gamma, \rho\complyF\sigma \der \rho_j\complyF\sigma_j)\neq \FAIL$\\ [2mm]
\>  \THEN ~~~~~
	$(\mbox{\scriptsize$+\cdot\oplus$}):\Inf{
    		\forall j\in J ~~\mathcal{ D}_j}{
    		\Gamma\der \rho\complyF\sigma}$  \hspace{2mm} \ELSE ~~\FAIL\\ [1mm]
\ELSE \> \FAIL
\end{tabbing}
}
\caption{The procedure\;\OrchToDerAux.}\label{fig:OrchToDerAux}
\vspace{2mm}
\hrule
\end{figure}

\begin{defi}
We define the procedure \OrchToDer\ by
$$\OrchToDer(f,\rho,\sigma) \ByDef \OrchToDerAux(f,\emptyset, \rho\complyF\sigma)$$
where \OrchToDerAux\ is defined as in Figure \ref{fig:OrchToDerAux}.
\end{defi}

\begin{lem}
Let $f$ be such that $f:\rho\complyO\sigma$ holds.\\ Then ${\OrchToDer}(f,\rho,\sigma)$ terminates and $\OrchToDer(f,\rho,\sigma)::\ \der \rho\complyF\sigma$
\end{lem}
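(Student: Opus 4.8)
The plan is to mirror, almost verbatim, the correctness-and-termination argument for $\Prove$ (Lemma~\ref{prop:provecorr}), adapting it so that the orchestrator $f$ drives the branch selection instead of a blind case split. First I would record that \emph{soundness} of the output is immediate by construction, exactly as in Lemma~\ref{prop:provecorr}.(\ref{prop:provecorr-i}): each non-$\FAIL$ clause of $\OrchToDerAux$ applies precisely one axiom or rule of $\der$ ($(\CkptcomplAx)$, $(\CkptcomplHyp)$, $(+\cdot+)$, $(\oplus\cdot+)$ or $(+\cdot\oplus)$) whose premises are exactly the judgments on which the procedure recurses. Hence, by induction on the recursion, whenever $\OrchToDerAux(f,\Gamma,\rho\complyF\sigma)$ returns a value $\Der\neq\FAIL$ one has $\Der::\Gamma\der\rho\complyF\sigma$, and in particular $\OrchToDer(f,\rho,\sigma)::\der\rho\complyF\sigma$.

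For termination I would argue as in Lemma~\ref{prop:provecorr}.(\ref{prop:provecorr-ii}): in every recursive call the contracts $\rho_k,\sigma_k$ are subexpressions of $\rho,\sigma$, and the orchestrators passed to the recursive calls are subterms of $f$; since both contracts and orchestrators denote regular (hence finitely generated) trees, only finitely many judgments $\rho'\complyF\sigma'$ can ever be formed. As the environment $\Gamma$ grows strictly by the element $\rho\complyF\sigma$ at each descent through a structural clause, the membership test $\rho'\complyF\sigma'\in\Gamma$ closing with $(\CkptcomplHyp)$ must eventually succeed on every branch, so the nesting depth of $\OrchToDerAux$ is bounded and the procedure halts. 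This part in fact holds for arbitrary $f,\rho,\sigma$.

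The crux is to show that the hypothesis $f:\rho\complyO\sigma$ prevents $\FAIL$. Using $\complyO\,=\,\complyOco$ (Proposition~\ref{lem:orchcoinductiv-char}), I would prove, by induction on the finite number of relevant judgments not yet occurring in $\Gamma$, the statement: \emph{if $f:\rho\complyO\sigma$ then $\OrchToDerAux(f,\Gamma,\rho\complyF\sigma)\neq\FAIL$}. If $\rho=\stopA$ or $\rho\complyF\sigma\in\Gamma$ a base clause fires. Otherwise the second case of the definition of $\complyOcok{k+1}$ (Definition~\ref{def:coskipcompl}) yields $\rho\pf{f}\sigma\Lts{}$, and an inspection of the orchestrated LTS shows that this single $\Lts{}$-step (which may first resolve internal choices via the $(\oplus)$ rule and then perform a $\lts{\tau}$ or a $\lts{+}$ synchronization) forces $(\rho,\sigma,f)$ into one of the three shapes matched by the structural clauses; the top-level form of $f$ (a single $\mu^+$-prefix versus a disjunction) fixes which one, and the side conditions ($k\in I\cap J$, resp.\ $K\supseteq I\subseteq J$, resp.\ $I\supseteq J\subseteq K$) are exactly the requirements for the corresponding transition to exist. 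Moreover, for each judgment $\rho_k\complyF\sigma_k$ on which the clause recurses one has $\rho\pf{f}\sigma\Lts{}\rho_k\pf{f_k}\sigma_k$, so the same clause of Definition~\ref{def:coskipcompl} again gives $f_k:\rho_k\complyO\sigma_k$; thus the recursive calls are invoked on compliant triples with the strictly larger environment $\Gamma,\rho\complyF\sigma$, and the induction hypothesis guarantees they do not $\FAIL$. Hence the structural clause succeeds, completing the induction and, together with soundness and termination, the lemma.

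The main obstacle is precisely this last shape analysis: one must verify that the purely syntactic matching conditions of $\OrchToDerAux$ coincide with the existence of an $\Lts{}$-reduction of $\rho\pf{f}\sigma$, i.e.\ that compliance never leaves the procedure stuck at a non-$\stopA$ state while the orchestrator offers no matching action. I expect the smoothest way to discharge it is to pass through the turn-based orchestrated compliance $\complyTBO$ (Lemma~\ref{lem:complyOeqcomplyTBO}), whose one-step transitions in Figure~\ref{fig:tborch-opsem} already separate internal-choice resolution from synchronization and make the correspondence with the three structural clauses transparent, much as the stratified relation $\ACRelk{k}$ was used in the correctness proof of $\Prove$.
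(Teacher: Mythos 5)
Your proposal is correct and follows the same route as the paper, whose entire proof of this lemma is the remark that the argument is ``similar to the proof of Lemma~\ref{prop:provecorr}'': soundness of the returned derivation by construction, termination via regularity of contracts (and, here, of orchestrators), and non-failure of the orchestrator-driven clauses under the compliance hypothesis, which is the part you work out in detail. One claim should be sharpened, though. For the $\oplus$-clauses it is not true that the side conditions are ``exactly the requirements for the corresponding transition to exist'': a single $\Lts{}$-step yields only one index in $I\cap J\cap K$, whereas the clause of $\OrchToDerAux$ applying $(\oplus\cdot+)$ demands $K\supseteq I\subseteq J$. What actually forces these inclusions is that the client's internal choice is demonic --- compliance must survive the client committing to \emph{any} $\Dual{a}_i.\rho_i$ with $i\in I$, not merely to some branch that happens to synchronize --- and this is obscured by the $\Lts{}$-based formulation of $\complyO$, which quantifies only over states reached by completed synchronizations. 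It is made explicit by the turn-based relation $\complyTBO$ of Figure~\ref{fig:tborch-opsem}, where each move $\playerA{:}\Dual{a}_k$, $k\in I$, produces a separate successor that must lead to a good state. So the detour through Lemma~\ref{lem:complyOeqcomplyTBO} that you offer as a mere convenience is in fact the step that discharges the inclusion conditions; with it, your induction on the finitely many judgments not yet in $\Gamma$ is sound, and the three parts together establish the lemma.
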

\proof
Similar to the proof of Lemma \ref{prop:provecorr}.
\qed

%

\section{Proof of Proposition \ref{th:soundcomplSynth}\\ (Correctness and completeness of \Synth)}
\label{appendix:synth}

Recall that in the algorithm \Synth\ (defined in Figure \ref{fig:Synth}) we are not considering orchestrators up to recursion unfolding.

We prove now a version of Proposition $\ref{th:soundcomplSynth}$, in which derivability
in system $\derinfOrch$ (defined in Figure \ref{fig:forsystinforchder}) is taken into account 
instead of the orchestrated compliance relation.
%
%
Given an orchestrator
$f$ we denote by $\regtree{f}$ its corresponding (possibly infinite) regular tree.

\begin{lem}\hfill
\label{lem:scSynth}
\begin{enumerate}
\item
\label{fact:PdsFmSynthExt1}
If $f=$\Synth$(\emptyset, \rho, \sigma)\neq\FAIL$~then ~$\derinfOrch f: \rho\complyOF \sigma$.
\item
\label{fact:PdsFmSynthExt}
If $\derinfOrch f: \rho\complyOF \sigma$~then~ there exists g such that 
$g=$\Synth$(\emptyset, \rho, \sigma)$ with $\regtree{f}=\regtree{g}$.
\end{enumerate}
\end{lem}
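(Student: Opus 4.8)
The plan is to prove both parts of Lemma~\ref{lem:scSynth} by exploiting the clause-by-clause correspondence between the algorithm $\Synth$ (Figure~\ref{fig:Synth}) and the inference rules of system $\derinfOrch$ (Figure~\ref{fig:forsystinforchder}). Concretely, each branch of the case analysis inside $\Synth$ is designed to match exactly one rule: the test $x{:}\rho\complyOF\sigma\in\Gamma$ corresponds to $(\CkptcomplHyp)$; the case $\rho=\stopA$ to $(\CkptcomplAx)$; the case $\rho=\bigoplus_{i\in I}\Dual{a}_i.\rho_i$, $\sigma=\sum_{j\in I\cup J}a_j.\sigma_j$ to $(\oplus\cdot+)$; its mirror to $(+\cdot\oplus)$; and the case $\rho=\sum_{i\in I}\Dual{\alpha}_i.\rho_i$, $\sigma=\sum_{j\in J}\alpha_j.\sigma_j$ with $|I|\geq 2$ to $(+\cdot+)$. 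I would first record this correspondence as a preliminary observation, noting that the accumulator $\Gamma$ in $\Synth$ plays exactly the role of the environment in $\derinfOrch$, and that the fresh $\rec x$ binders introduced at each recursive call of $\Synth$ are the same ones that $(\CkptcomplHyp)$ later binds (leaving a vacuous binder when the hypothesis clause is never reached, as remarked in the footnote to Example~\ref{ex:startoutorch}).

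For part~(\ref{fact:PdsFmSynthExt1}), I would argue by induction on the computation tree of $\Synth(\emptyset,\rho,\sigma)$, which is finite by Proposition~\ref{lem:terminationSynth}. Given an $f$ emitted by $\Synth(\Gamma,\rho,\sigma)$, the branch that produced it fixes the shapes of $\rho$ and $\sigma$ and a rule of $\derinfOrch$; the induction hypothesis supplies derivations $\derinfOrch f_i:\rho_i\complyOF\sigma_i$ under the extended environment $\Gamma'$ for each recursively synthesised subcomponent $f_i$, and assembling these by the matching rule yields $\derinfOrch f:\rho\complyOF\sigma$. The only bookkeeping to check is that the environment passed to the recursive calls coincides with the $\Gamma'$ demanded by the rule, which holds by construction. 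This mirrors the soundness argument of Proposition~\ref{prop:soundnessACRel} and the correctness part of Lemma~\ref{prop:provecorr}.

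For part~(\ref{fact:PdsFmSynthExt}), I would argue by induction on a derivation $\Der::\derinfOrch f:\rho\complyOF\sigma$. The last rule of $\Der$ fixes the shapes of $\rho$ and $\sigma$, hence selects the unique $\Synth$ branch that applies, and $\Synth$ recurses on exactly the premises of that rule; the induction hypothesis then produces, for each premise, a synthesised orchestrator with the same regular tree as the corresponding subterm of $f$. The one genuinely non-cosmetic point is $(+\cdot+)$: there $\Der$ commits to a single matching index $k\in I\cap J$, whereas the corresponding $\Synth$ clause returns the union $\bigcup_{i\in I}\Set{\rec x\procdot\orchAct{\alpha_i}{\Dual{\alpha}_i}^+\Actdot f \mid \ldots}$ over all choices; since $k$ is among them, the requisite $g$ is found inside the returned set. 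Reassembling yields $g\in\Synth(\emptyset,\rho,\sigma)$ with $\regtree{f}=\regtree{g}$.

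The main obstacle is precisely this last equality of regular trees rather than of terms. Because $\Synth$ is run \emph{without} identifying orchestrators up to recursion unfolding (as emphasised just before Proposition~\ref{lem:terminationSynth}), the $g$ it returns may differ syntactically from the given $f$: it introduces a fresh $\rec$-binder at every step, often vacuous, and closes cycles only through the hypothesis-lookup clause, whereas $\Der$ may have been written with already-unfolded bodies or different variable names. Controlling this requires showing that the cycle-closing behaviour of the $(\CkptcomplHyp)$ clause, together with the guarded and contractive shape of strategy orchestrators, forces $f$ and $g$ to unfold to the same infinite tree; this is where the regularity hypothesis does the work, exactly as it did for the termination argument. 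Finally, the stated Proposition~\ref{th:soundcomplSynth} about $\complyO$ follows from Lemma~\ref{lem:scSynth} by composing with the equivalence of $\derinfOrch$ and $\derOrch$ (Proposition~\ref{prop:derOrchderinfOrchequiv}) and the soundness and completeness of $\derOrch$ with respect to $\complyO$ (Theorem~\ref{th:scder}).
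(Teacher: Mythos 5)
Your part~(1) is fine and is essentially the paper's own argument spelled out: the paper dismisses it as immediate because $\Synth$ is the formalisation of proof search in $\derinfOrch$, and your clause-by-rule induction on the (finite, by Proposition~\ref{lem:terminationSynth}) computation tree is just the honest version of that remark.

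Part~(2) has a genuine gap, and it sits exactly where you placed your ``main obstacle'' --- but it is not an obstacle your induction can absorb. The step ``the last rule of $\Der$ fixes the shapes of $\rho$ and $\sigma$, hence selects the unique $\Synth$ branch that applies, and $\Synth$ recurses on exactly the premises of that rule'' is false once the environments are non-empty: the first clause of $\Synth$ (the lookup $x:\rho\complyOF\sigma\in\Gamma$) preempts every shape-directed clause, while nothing forces $\Der$ to end with $(\CkptcomplHyp)$ at such a point. Concretely, $\Der$ may re-derive a judgement on a pair $\rho'\complyOF\sigma'$ by a structural rule even though that pair already occurs in the environment --- e.g.\ unfolding a recursive pair once more and possibly committing to a \emph{different} index $k\in I\cap J$ in $(+\cdot+)$ than the one chosen at the first occurrence of the pair. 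At that node $\Synth$ makes no recursive call at all, returning $\Set{x}$, so the induction hypothesis has nothing to attach to; and the appeal to guardedness/contractivity cannot rescue $\regtree{f}=\regtree{g}$, because the subderivation above the second occurrence is in no way constrained to repeat the choices made above the first one, whereas closing the cycle with $x$ forces exactly that repetition in $g$.

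This is precisely why the paper does not argue by structural induction on $\Der$. Its proof of part~(2) is a global derivation-rewriting argument: it locates a path ending in $(\CkptcomplHyp)$ on a pair $\rho'\complyOF\sigma'$ that passes through more than one other judgement on that same pair, splices the derivation so that the cycle is closed at the first repetition (invoking determinism of proof search in $\derinfOrch$ to justify the splice), checks that this surgery preserves the regular tree of the synthesized orchestrator, and iterates until no such path remains; only then does the normalised derivation literally read off as a run of $\Synth$. Your proposal names this difficulty but defers it (``controlling this requires showing\ldots''), and the induction scaffolding you set up cannot carry the deferred argument: what is missing is the paper's normalisation of $\Der$ (or an equivalent coinductive argument comparing $\regtree{f}$ with the tree generated by the $\Synth$ run), performed \emph{before}, not within, the clause-by-clause comparison.
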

\begin{proof}\hfill
\begin{enumerate}
\item[(\ref{fact:PdsFmSynthExt1})]
Immediate, since the procedure \Synth\ is the formalisation of a proof search in System $\derinfOrch$,
as defined in Definition \ref{def:formalCompl}
\item[(\ref{fact:PdsFmSynthExt})]
Given a derivation tree for $\derinfOrch f: \rho\complyOF \sigma$, let us consider one
path $p$ starting from the root, and such that 
\begin{itemize}
\item
$p$ ends with an occurrence of (\CkptcomplHyp):
$\Gamma', x: \rho'\complyOF \sigma'\derinfOrch x: \rho'\complyOF \sigma'$
\item
$p$ contains more than one other judgments of the form
$\Gamma'' \derinfOrch f'': \rho'\complyOF \sigma'$.
\end{itemize}
If no such a path exists, then any rule in the derivation does precisely correspond to a clause of the algorithm \Synth\ and hence the algorithm returns $f$.
Otherwise, since the rules of System $\derinfOrch$ are such that
the proof search is deterministic it is possible to modify the derivation such that in 
the path from the root to the last judgment of $p$ there is just one other 
judgment of the form $\Gamma'' \derinfOrch f'': \rho'\complyOF \sigma'$.
The conclusion of the new derivation will now be 
$\derinfOrch g': \rho\complyOF \sigma$ with $\regtree{f}=\regtree{g'}$.
We can now keep on applying such a procedure until paths like $p$ above no longer exists.
\qedhere
\end{enumerate}
\end{proof}
\noindent
Now we can get Proposition \ref{th:soundcomplSynth} as a corollary of Lemma 
\ref{lem:scSynth}, using 
Proposition \ref{lem:terminationSynth}, Theorem \ref{th:scder}
and Proposition
\ref{prop:derOrchderinfOrchequiv}.

\section{Proof of Theorem \ref{thm:functder}\\ (Derivations as orchestrator functors)}\hfill
\label{subsec:functder}


Form now on, we consider orchestrators, contracts and functors as the (possibly infinite) regular trees they represent.
%
We consider now  infinitary versions of $\derOrch$, $\dersc$ and $\der$.
\begin{defi}
We define
\begin{itemize}
\item 
$\derOrchinfty$ as the system $\derOrch$ without rule \mbox{\scriptsize $(\CkptcomplHyp)$};
\item
{$\derscinfty$} as the system $\dersc$ without rule \mbox{\scriptsize $(\CkptcomplHyp\mbox{\,-\!}\subcontrF)$};
\item
$\derinfty$ as the system $\der$ without rule \mbox{\scriptsize $(\CkptcomplHyp)$}.
\end{itemize}
Moreover, infinite derivations are allowed in the above systems.
\end{defi}

It is not difficult now to check the following lemma.
\begin{lem}\hfill
\label{lem:derderinftyequiv}
\begin{enumerate}
\item
$
\hspace{1mm} \derOrch f: {\rho} \complyOF \sigma~~ \Iff~~ \derOrchinfty f: {\rho} \complyOF \sigma
$
\item
$
\dersc \sigma \subcontrF \sigma'~~ \Iff~~ \derscinfty \sigma \subcontrF \sigma'
$
\end{enumerate}
\end{lem}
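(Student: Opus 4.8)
The statement to prove is Lemma~\ref{lem:derderinftyequiv}, asserting that the finitary systems $\derOrch$ and $\dersc$ prove exactly the same judgments as their infinitary counterparts $\derOrchinfty$ and $\derscinfty$ (obtained by dropping the hypothesis rule but allowing infinite derivations).

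\textbf{Overall approach.} The plan is to prove each biimplication by converting a finite derivation with hypotheses into an infinite hypothesis-free derivation and vice versa. The two directions are conceptually different and should be handled separately. The forward direction ($\der \Rightarrow \derinfty$, and similarly for the other systems) unwinds uses of the hypothesis rule into genuine (infinite) subderivations; the backward direction ($\derinfty \Rightarrow \der$) folds a regular infinite derivation back into a finite one by detecting repeated judgments and closing them with $(\CkptcomplHyp)$. I would present the argument once in detail for $\derOrch$/$\derOrchinfty$ and remark that the case of $\dersc$/$\derscinfty$ is identical, since both systems share the same structure of an axiom, a hypothesis rule, and finitely-branching logical rules that strictly decompose contracts.

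\textbf{Direction $\der \Rightarrow \derinfty$ (eliminating hypotheses).} First I would observe that a use of $(\CkptcomplHyp)$ to close a leaf $\Gamma, j \der j$ records that the judgment $j$ (with $j$ of the form $f:\rho\complyOF\sigma$ or $\sigma\subcontrF\sigma'$) already occurs as a conclusion of an enclosing rule application lower in the derivation. Given a finite derivation $\Der :: \der j$, I would define by corecursion an infinite derivation in $\derinfty$: proceed identically on all rules except $(\CkptcomplHyp)$; whenever $(\CkptcomplHyp)$ closes a leaf pointing back to an ancestor conclusion $j$, splice in a copy of the (already-being-built) subderivation rooted at that ancestor. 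Because each logical rule strictly decreases the size of the contracts involved before a repetition can recur, and contracts denote regular trees (finitely many distinct subexpressions, as used in Lemma~\ref{prop:provecorr}.(\ref{prop:provecorr-ii})), this unfolding is well-defined and yields a legal infinite $\derinfty$-derivation of the same conclusion.

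\textbf{Direction $\derinfty \Rightarrow \der$ (reintroducing hypotheses) and the main obstacle.} Given an infinite derivation in $\derinfty$, I would walk down each branch maintaining the set of ancestor conclusions in the accumulator $\Gamma$; the first time a conclusion $j$ repeats a conclusion already recorded in $\Gamma$, I prune the branch and replace it by an application of $(\CkptcomplHyp)$. The resulting tree is finite by a König's-lemma / regularity argument: every contract appearing is a subexpression of the original $\rho,\sigma$ (or $\sigma,\sigma'$), of which there are finitely many since contracts are regular trees, and environments are finite, so along any branch a repeated judgment must eventually occur. I expect \emph{this finiteness/regularity argument to be the main obstacle}: one must argue carefully that the infinite derivation of interest is itself regular (finitely many distinct subderivations up to renaming of recursion variables) so that pruning at repetitions terminates on every branch and the pruned tree is genuinely finite. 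This is precisely the same regularity phenomenon already exploited for termination of $\Prove$ in Lemma~\ref{prop:provecorr}, so I would invoke that style of reasoning rather than re-prove it from scratch. Finally, both constructions preserve the annotating orchestrator (respectively the subcontract judgment) because the logical rules are syntax-directed and identical in the finitary and infinitary systems, giving the stated equivalences.
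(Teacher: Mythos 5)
The paper offers no proof of this lemma at all: it is introduced with ``It is not difficult now to check,'' so there is nothing to compare against step by step. Your proposal supplies the argument the paper leaves implicit, and the route you take --- the standard correspondence between finite derivations with a cycle-closing hypothesis rule and their infinite, hypothesis-free unfoldings --- is the right one and is essentially sound in both directions.

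Two corrections to your own assessment. First, the issue you flag as ``the main obstacle'' is not an obstacle. For the direction $\derOrchinfty\, f:\rho\complyOF\sigma \Rightarrow\ \derOrch f:\rho\complyOF\sigma$ you do \emph{not} need the infinite derivation to be regular. What you need is exactly what you already state: every rule passes from an orchestrator and a pair of contracts to immediate subtrees of these, and regular trees have only finitely many distinct subtrees, so only finitely many judgments $g:\rho'\complyOF\sigma'$ (note that the judgment stored by $(\CkptcomplHyp)$ includes the orchestrator annotation, so repetition must be of the whole triple) can occur anywhere in a derivation of $\emptyset \derOrchinfty f:\rho\complyOF\sigma$. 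Pruning each branch at the first repeated judgment therefore bounds every branch's length by that finite number, and since the rules are finitely branching, bounded depth already gives finiteness --- no regularity of the derivation, and not even K\"onig's lemma, is required. Second, in the forward direction your justification that ``each logical rule strictly decreases the size of the contracts'' is false for recursive contracts (a continuation of a regular tree can be the tree itself). Well-definedness of the corecursive unfolding instead follows from productivity: a judgment can enter an environment only as the conclusion of a logical rule application, so every back-edge created from a $(\CkptcomplHyp)$ leaf targets a logical-rule node, and at least one rule is emitted between consecutive jumps. Finally, when splicing the ancestor subderivation at a hypothesis leaf you should note either that environments only grow along a path (so copied hypothesis leaves stay justified), or simply that in $\derOrchinfty$ and $\derscinfty$ environments are inert, since the only rule consulting them has been removed; with these repairs your proof is complete, and the same argument covers the $\dersc$/$\derscinfty$ case as you say.
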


\noindent
We now prove Theorem \ref{thm:functder} by proceeding as follows:
We first define a proof reconstruction procedure $\PProcinfty$ taking as argument two derivations $\Deriv' :: \,\derOrchinfty f: {\rho} \complyOF \sigma$ and
$\Deriv'' :: {\derscinfty} \!\sigma \subcontrF \sigma'$, and, in case it does not fails, it produces a 
(possibly infinite)  derivation $\Deriv'''$ in system $\derinfty$ partially
decorated with orchestration actions.
 We then show that $\PProc$ does not fail and that the derivation $\Deriv'''$ can be easily turned in a derivation $\widetilde{\Deriv'''}$ is such that 
 $\widetilde{\Deriv'''}  :: \, \derOrchinfty \mathbf{F}(f): {\rho} \complyOF
 \sigma$.

Theorem \ref{thm:functder} hence descends immediately by Theorem \ref{th:scder}
and Lemma \ref{lem:derderinftyequiv}.

\begin{defi}[The algorithm $\PProcinfty$] 
Let 
$\Deriv' :: \, \derOrchinfty f: {\rho} \complyOF \sigma$ and
$\Deriv'' :: \derscinfty \sigma \subcontrF \sigma' $.
The algorithm $\PProc$ is defined by
$$\PProcinfty(\Deriv' ,\,\Deriv'')\ =\ \Procinfty(\Deriv' ,\,\Deriv'',\, \emptyset)$$
where $\Procinfty$ is a procedure with an extra argument (an environment),
defined by
cases according to the following clauses.
We name the clauses with the name of the
last rules applied in the first two derivations.
The algorithm fails in case no clause can be applied.

\def\Clause#1{\vspace{10pt} \noindent {\bf Clause #1}\,{\rm :}} 

\Clause{$*${-}$(\TcomplAx\mbox{\,-\!}\subcontrF)$} 

\[\Procinfty\Big( ~
{\InfBox {\Deriv_1 }{\Gamma_1\derinfOrch  f:{\rho} \complyOF \sigma}}, ~
\Inf[\TcomplAx]{}
{ \Gamma_2\dersc \stopA \subcontrF \sigma'}
, ~
\Gamma_3~
\Big) =
\Inf[\TcomplAx]{}{ \Gamma_3\derinfty\stopf: \stopA \complyF \sigma'}\] \\

\noindent
Notice that it is not necessary to have a {\bf Clause $*$-$(\TcomplAx)$}, since in that case
$\sigma=\stopA$ and hence also $\rho=\stopA$. This means that {\bf Clause $(\TcomplAx)$-$*$} applies.

\Clause {$(+\cdot\oplus)$-$(\oplus\cdot+\mbox{\,-\!}\subcontrF)$}

\[ 
\begin{array}{ll}
\Procinfty\Big( &
\Inf	[+\cdot\oplus] 
	{\InfBox {\Deriv'_i }
		{ \Gamma'_1 \derOrchinfty f_i : \rho_i \complyOF {\sigma}_i} 
	~~ ( \forall i \in I)
	}{ \Gamma_1\derOrchinfty \bigvee_{i\in I}\langle\Dual{a}_i,{a_i}\rangle.f_i : \bigExternal_{j \in I\cup J} a_j.{\rho}_j\complyOF\bigInternal_{i \in I} \Dual{a}_i.{\sigma}_i   }~~,\\[2mm]
&\mbox{\scriptsize where~~$
\Gamma_1'= \Gamma_1, \bigvee_{i\in I}\langle\Dual{a}_i,{a_i}\rangle.f_i  :\sum_{j \in I\cup J} a_j.{\rho}_j \complyOF \oplus_{i \in I} \Dual{a}_i.{\sigma}_i $}
\\[4mm]
&
\Inf	[\oplus\cdot+\mbox{\,-\!}\subcontrF] 
	{(h\in I\cap K) \quad \InfBox {\Deriv''_h }
		{ \Gamma_2'
     	\derscinfty \sigma_h \subcontrF \sigma'_h}
	}{ \Gamma_2\derscinfty \bigInternal_{i \in I} \Dual{a}_i.{\sigma}_i \subcontrF\bigExternal_{k \in K} \Dual{a}_k.\sigma_k' }~~,   \Gamma_3  ~\Big) =\\[2mm]
&\mbox{\scriptsize where~~$
\Gamma_2'=\Gamma_2, \bigInternal_{i \in I} \Dual{a}_i.{\sigma}_i \subcontrF\bigExternal_{k \in K} \Dual{a}_k.\sigma_k'  $} 
\end{array} 
\] 
\vspace{2mm}
{\hfill $\Inf[+\cdot+]
	{\InfBox {\Deriv_h }
		{ \Gamma_3, \rho \complyF \sigma' \derinfty \mu:\rho_h \complyOF \sigma_h'} 
	\quad (h\in I\cap K)
	}{ \Gamma_3\derinfty \orchAct{\Dual{a}_h}{a_h}^+ : \bigExternal_{j \in I\cup J} a_j.{\rho}_j \complyOF \bigExternal_{k \in K} \Dual{a}_k.\sigma_k'  }
$}\\
\WHERE\\[3mm]
 {\hfill $\InfBox {\Deriv_h }
		{ \Gamma_3, \rho \complyF \sigma' \derinfty \mu:\rho_h \complyOF \sigma_h'} $ 
\ =\ 
$\Procinfty \Big( \InfBox{\Deriv'_i }{ \Gamma'_1\derOrchinfty f_i : \rho_i \complyOF {\sigma}_i }
, ~
\InfBox {\Deriv''_h }
		{ \Gamma'_2 \dersc  \sigma_h \subcontrF {\sigma}'_h} 
, ~
(\Gamma_3,  \rho  \complyF \sigma')~\Big)$} \\[2mm]
\AND~~ $\rho=\bigExternal_{j \in I\cup J} a_j.{\rho}_j$ and $\sigma' = \bigExternal_{k \in K} a_k.\sigma_k' $.

\vspace{2mm}

\Clause {$(+\cdot+)$-$(+\cdot+\mbox{\,-\!}\subcontrF)$}

\[ 
\begin{array}{ll}
\Procinfty\Big( &
\Inf	[+\cdot+] 
	{\InfBox {\Deriv_p }
		{ \Gamma'_1 \derOrchinfty f' : \rho_p \complyOF {\sigma}_p} 
	\qquad (p  \in I\cap K)
	}{ \Gamma_1\derOrchinfty \orchAct{\alpha_p}{\Dual{\alpha}_p}^+.f' : \bigExternal_{k\in K} \Dual{\alpha}_k.{\rho}_k \complyOF \bigExternal_{i \in I}\alpha_i.\sigma_i  }~~,\\[2mm]
&\mbox{\footnotesize where~~ $\Gamma_1'= \Gamma_1, \orchAct{\alpha_p}{\Dual{\alpha}_p}^+.f' : \bigExternal_{k\in K} \Dual{\alpha}_k.{\rho}_k \complyOF \bigExternal_{i \in I}\alpha_i.\sigma_i $ }
\\ [4mm]
&
\Inf	[+\cdot+\mbox{\,-\!}\subcontr] 
	{\InfBox {\tilde\Deriv_i }
		{ \Gamma'_2 \derscinfty \sigma_i \subcontrF \sigma'_i }
	\qquad ( \forall i \in I)
	}{ \Gamma_2\derscinfty \bigExternal_{i \in I}\alpha_i.\sigma_i \subcontrF \bigExternal_{j \in I\cup J} \alpha_j.\sigma'_j }~~,\ \Gamma_3 \ ~\Big) =\\[2mm]
&\mbox{\footnotesize where~~
$\Gamma_2'=\Gamma_2, \bigExternal_{i \in I}\alpha_i.\sigma_i \subcontrF \bigExternal_{j \in I\cup J} \alpha_j.\sigma'_j$} 
\end{array} \vspace{2mm}\]

{\hfill $\Inf	[+\cdot+]
	{
	\InfBox {\Deriv'_p }
		{ \Gamma_3, \rho \complyF \sigma' \derinfty \mu : \rho_p \complyF \sigma'_p} 
	\qquad ( p \in   (I\cup J) \cap K)
	}{ \Gamma_3\derinfty \orchAct{\alpha_p}{\Dual{\alpha}_p}^+ : \rho \complyF \sigma' }$}\\[2mm]
\noindent
\WHERE\\[3mm]
{\hfill $\InfBox {\Deriv'_p }{ \Gamma_3, \rho \complyOF \sigma' \derinfty \mu: \rho_p \complyF \sigma'_p}\
=\ 
\Procinfty \big( \InfBox {\Deriv_p }{ \Gamma'_1\derOrchinfty f' : \rho_p \complyOF {\sigma}_p }
, ~
\InfBox {\tilde\Deriv_p}
		{ \Gamma'_2 \derscinfty \sigma_p \subcontrF \sigma'_p }
, ~
(\Gamma_3,  \rho  \complyF \sigma') \ \big)
$}\\[2mm]
\AND~~ $\rho=\bigExternal_{k \in K} \Dual{\alpha}_k.{\rho}_k $ and $\sigma' =\bigExternal_{j\in I\cup J} \alpha_j.\sigma'_j$,

\Clause {$(\oplus\cdot+)$-$(+\cdot+\mbox{\,-\!}\subcontrF)$}

The construction follows a definition pattern similar to those of the previous clauses.

\Clause {$(+\cdot\oplus)$-$(\oplus\cdot\oplus\mbox{\,-\!}\subcontrF)$}

The construction follows a definition pattern similar to those of the previous clauses.

\setlength {\unitlength} {1\point}
\end{defi}

\begin{prop}\hfill\\
Let 
$\Deriv' ::\ \derOrchinfty f: {\rho} \complyOF \sigma$ and
$\Deriv'' ::\ \derscinfty \sigma \subcontrF \sigma' $, and let $\Funct_{\Der''}$ be defined as in Definition \ref{def:orchfunctor}. 
\begin{enumerate}
\item
The computation of $\PProcinfty(\Deriv', \Deriv'' ) $ never fails.
\item
$\PProcinfty(  \Deriv', \Deriv'' )\ =\ \Deriv ::\ \derinfty \rho \complyF \sigma' $, 
where $\Deriv$ is decorated with orchestration actions. Moreover, out of $\Deriv$ it is possible to get $\widetilde{\Deriv}$ 
such that $\widetilde{\Deriv}::\ \derinfty \Funct_{\Der''}(f) : \rho \complyOF \sigma'$.
\end{enumerate}
\end{prop}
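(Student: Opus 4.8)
The plan is to prove the two items by a single case analysis driven by the contract $\sigma$ that is shared between the conclusions of $\Deriv'$ and $\Deriv''$, exploiting that both $\derOrchinfty$ and $\derscinfty$ are syntax-directed once their hypothesis rules are dropped (Lemma \ref{lem:derderinftyequiv}). First I would record the driving observation: in $\Deriv' :: \derOrchinfty f:\rho\complyOF\sigma$ the last rule is fixed by the shapes of $\rho$ and $\sigma$, while in $\Deriv'' :: \derscinfty \sigma\subcontrF\sigma'$ it is fixed by the shapes of $\sigma$ and $\sigma'$; since $\sigma$ occurs in both, the shape of $\sigma$ restricts the admissible pairs of bottom rules to a short list, and this is exactly what the clauses of $\Procinfty$ are indexed by.

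For item (1) I would run through the three possible forms of $\sigma$. If $\sigma=\stopA$ then $\Deriv''$ ends with $(\TcomplAx\mbox{\,-\!}\subcontrF)$ and the clause $*$-$(\TcomplAx\mbox{\,-\!}\subcontrF)$ fires regardless of $\Deriv'$ (and in fact $\rho=\stopA$, as remarked after that clause). If $\sigma=\bigoplus_i\Dual{a}_i.\sigma_i$ then $\Deriv'$ ends with $(+\cdot\oplus)$ and $\Deriv''$ ends with either $(\oplus\cdot+\mbox{\,-\!}\subcontrF)$ or $(\oplus\cdot\oplus\mbox{\,-\!}\subcontrF)$ according to whether $\sigma'$ is an affectible or an unaffectible output choice, matched respectively by the clauses $(+\cdot\oplus)$-$(\oplus\cdot+\mbox{\,-\!}\subcontrF)$ and $(+\cdot\oplus)$-$(\oplus\cdot\oplus\mbox{\,-\!}\subcontrF)$. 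If $\sigma=\sum_i\alpha_i.\sigma_i$ then $\Deriv''$ ends with $(+\cdot+\mbox{\,-\!}\subcontrF)$, while $\Deriv'$ ends with $(+\cdot+)$ when $\rho$ is the matching external choice and with $(\oplus\cdot+)$ when $\rho$ is an internal output choice; these are matched by the clauses $(+\cdot+)$-$(+\cdot+\mbox{\,-\!}\subcontrF)$ and $(\oplus\cdot+)$-$(+\cdot+\mbox{\,-\!}\subcontrF)$. In each case the index inclusions carried by the two rules guarantee the side conditions of the chosen clause, so no case is missed. I would then check that every recursive call $\Procinfty(\Deriv'_h,\Deriv''_h,\cdot)$ is fed sub-derivations whose conclusions again share the middle contract $\sigma_h$, so the invariant is preserved; since $\Deriv'$ and $\Deriv''$ are regular, the recursion is productive and defines a (possibly infinite) regular derivation $\Deriv$, never reaching the failure branch.

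For item (2), the output $\Deriv$ is a $\derinfty$-derivation of $\rho\complyF\sigma'$ whose steps are decorated with orchestration actions; reading those actions off along $\Deriv$ yields an orchestrator, and turning $\Deriv$ into a $\derOrchinfty$-derivation $\widetilde\Deriv$ of $\rho\complyOF\sigma'$ is merely the bookkeeping of absorbing those actions into the orchestrator terms of the $(\oplus\cdot+)$, $(+\cdot\oplus)$ and $(+\cdot+)$ rules. The substantive claim is that the orchestrator so obtained equals $\Funct_{\Der''}(f)$, which I would establish by coinduction (a bisimulation) on regular trees: each clause of $\Procinfty$ emits precisely the orchestration action prescribed by the corresponding case of $\auxFunct{\cdot}{\cdot}$ in Definition \ref{def:orchfunctor}, applied to the orchestrator carried by $\Deriv'$. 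For instance the clause $(+\cdot\oplus)$-$(\oplus\cdot+\mbox{\,-\!}\subcontrF)$ rewrites the disjunction $\bigvee_i\orchAct{\Dual{a}_i}{a_i}.f_i$ read from $\Deriv'$ into $\orchAct{\Dual{a}_h}{a_h}^+.(\cdots)$, exactly as the $(\oplus\cdot+\mbox{\,-\!}\subcontrF)$ branch of the functor does; the remaining clauses align symmetrically, and the point at which $\Deriv''$ uses the hypothesis rule corresponds to where the functor closes a $\rec$ binder through its accumulator environment $\widehat\Gamma$. Equality of the two regular trees then gives $\widetilde\Deriv :: \derinfty \Funct_{\Der''}(f):\rho\complyOF\sigma'$, and by Lemma \ref{lem:derderinftyequiv} together with Theorem \ref{th:scder} this delivers $\Funct_{\Der''}(f):\rho\complyO\sigma'$, i.e.\ Theorem \ref{thm:functder}.

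The main obstacle I expect is item (2): reconciling the open-ended recursion of $\Procinfty$, which threads the sub-orchestrators implicitly through sub-derivations, with the syntactic functor of Definition \ref{def:orchfunctor}, which introduces explicit (and sometimes vacuous) $\rec$ binders and branches on the incoming orchestrator $f$. The identification must therefore be carried out up to regular-tree equality and with careful alignment of the functor's recursion variables against the accumulator environments threaded by $\PProcinfty$. The exhaustiveness argument of item (1) is routine but has to respect the provisos on the subcontract rules, in particular the degenerate $|I|=1$ cases in which a single prefix stands in for a sum, which introduce a few extra sub-cases that must each be matched against a clause.
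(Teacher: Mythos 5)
Your proposal is correct and follows essentially the same route as the paper: the paper's own proof is a two-line sketch that argues non-failure ``by inspection of the clauses of the procedure'' and obtains $\tilde f=\Funct_{\Der''}(f)$ by ``working on the form of the clauses,'' which is exactly what your case analysis on the shared contract $\sigma$ (for exhaustiveness of the clauses of $\Procinfty$) and your clause-by-clause coinductive matching against Definition~\ref{def:orchfunctor} (up to regular-tree equality, aligning $\rec$ binders with the accumulator environments) spell out in detail. Your version is in fact more explicit than the paper's, but introduces no new decomposition or lemma.
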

\proof
By inspection of the clauses of the procedure, the computation never fails if we 
start from $\Deriv'$ and $\Deriv''$ as above.\\
Out of the (possible infinite) decorated derivation $\Der$ is is possible to get an
orchestrator $\tilde f$ such that  $\tilde f : \rho \complyOF \sigma'$ (because of 
the  regularity of the derivation tree $\Der$) and a derivation $\widetilde{\Deriv}::\ \derinfty \tilde f : \rho \complyOF \sigma'$. Besides, working on the
form of the clauses of the procedure, it can be shown that $\tilde f=\Funct_{\Der''}(f)$.
\qed

\noindent
Theorem \ref{thm:functder} descends now as a corollary from the above proposition.


 \if false   

\subsection{Alternative Proof of Theorem \ref{thm:functder} (with explicit orchestrators and functors)}\hfill
\label{subsec:functder}\hfill\\

In order to prove Theorem \ref{thm:functder}, we proceed as follows:\\
We first define a proof reconstruction procedure $\PProc$ taking as argument two derivations $\Deriv' :: \derinfOrch f: {\rho} \complyOF \sigma$ and
$\Deriv'' :: \dersc \sigma \subcontrF \sigma' $, and, in case of successful termination, returns a 
derivation $\Deriv'''$ in system $\derinfOrch$.\\
 We then prove that $\PProc$ terminates and that the derivation $\Deriv'''$ returned is such that 
 $\Deriv''' = :: \derinfOrch \mathbf{F}(f'): {\rho} \complyOF \sigma$ (
where $f'$ represents the same regular tree as $f$).\\
Theorem \ref{thm:functder} hence descends immediately by Theorem \ref{th:scder}
and Proposition \ref{prop:derOrchderinfOrchequiv}.

\begin{defi}[The algorithm $\PProc$] 
Let 
$\Deriv' :: \derinfOrch f: {\rho} \complyOF \sigma$ and
$\Deriv'' :: \dersc \sigma \subcontrF \sigma' $.\\
The algorithm $\PProc$ is defined by
$$\PProc(\Der',\, \Deriv'') = \Proc(\Deriv',\Deriv'',\emptyset,\Deriv',\Deriv'')$$

where the auxiliary procedure $\Proc$ (whose first and fourth arguments are derivations in $\derinfOrch$, the second and fifth ones are derivatiosn in $\derinfOrch$ and the third one is 
 an environment) is defined by cases, according to the following clauses.\\
We name the clauses with the name of the
last rules applied in the first two derivations.In names like $R$-$*$, the symbol $*$ stands for any rule that can be paired with $R$. \\
The algorithm fails in case no clause can be applied. \\

Notice how the fourth and  fifth arguments of $\Proc$ are used to keep track of the two derivations $\Deriv'$ and $\Deriv''$
of the call of $\PProc(\Deriv',\Deriv'')$ .
We assume the Barendregt's convention (i.e. we assume all the bound variables in an orchestrator to be distinct).

\Clause{Init} 

\[
\Proc\big( ~
\Inf[\TcomplHyp]{}{ \Gamma_1,x:{\rho} \complyOF \sigma \derinfOrch x:{\rho} \complyOF \sigma},~
\Inf[\TcomplHyp\mbox{\,-\!}\subcontr]{}{ \Gamma_2, \sigma \subcontrF \sigma' \dersc\sigma \subcontrF \sigma'},~
\Gamma_3, x{:} {\rho}  \complyOF \sigma',\ \Deriv',\ \Deriv''~
\big)
~=~\]
{\hfill $\Inf[\TcomplHyp]{}
{ \Gamma_3, x{:} {\rho}  \complyF \sigma' \derinfOrch x{:}{\rho}  \complyF \sigma'}
$}

\Clause{$*${-}$(\TcomplAx\mbox{\,-\!}\subcontr)$} 

\[\Proc\Big( ~
{\InfBox {\Deriv_1 }{\Gamma_1\derinfOrch  f:{\rho} \complyOF \sigma}}, ~
\Inf[\TcomplAx]{}
{ \Gamma_2\dersc \stopA \subcontrF \sigma'}
, ~
\Gamma_3, ~\Deriv',~\Deriv''~
\Big) =
\Inf[\TcomplAx]{}{ \Gamma_3\derinfOrch\stopf :\stopA \complyF \sigma'}\] \\

\noindent
Notice that it is not necessary to have a {\bf Clause $*$-$(\TcomplAx)$}, since in that case
$\sigma=\stopA$ and hence also $\rho=\stopA$. This means that {\bf Clause $(\TcomplAx)$-$*$} applies.

\Clause{$(\TcomplHyp)$-$(\TcomplHyp\mbox{\,-\!}\subcontr)$}\hfill\\[1mm]
\underline{\sc in case } $x:{\rho} \complyOF \sigma'\not\in\Gamma_3\ $:
 \[ \Proc\big( ~
\Inf[\TcomplHyp]{}{ \Gamma_1,x:{\rho} \complyOF \sigma \derinfOrch x:{\rho} \complyOF \sigma},~
\Inf[\TcomplHyp\mbox{\,-\!}\subcontr]{}{ \Gamma_2, \sigma \subcontrF \sigma' \dersc\sigma \subcontrF \sigma'},~
\Gamma_3,\ \Deriv',\ \Deriv''~
\big) = \]
{\hfill $\Proc\big( ~
{\InfBox {\Deriv_1 }{ \Gamma'_1 \derinfOrch f':{\rho} \complyOF \sigma}},~
{\InfBox {\Deriv_2 }{ \Gamma'_2 \dersc \sigma \subcontrF \sigma'}},~ \Gamma_3, ~\Deriv',~\Deriv'' ~\big)$
}\\
\WHERE~~~ \\[1mm] $\InfBox {\Deriv_1 }{ \Gamma'_1 \derinfOrch f':{\rho} \complyOF \sigma}$
is a subderivation of $\Deriv'$ such that $x:{\rho} \complyOF \sigma\not\in\Gamma'_1 $, if there is one like that;\\[2mm]
\AND    ~~~\\[1mm]$\InfBox {\Deriv_2 }{ \Gamma'_2 \dersc \sigma \subcontrF \sigma'}$
is a subderivation of $\Deriv''$ such that $\sigma \subcontrF \sigma'\not\in \Gamma'_2$, if there is one like that.\\

\Clause{$(\TcomplHyp)$-$*$}
\[\Proc\big( ~
\Inf[\TcomplHyp]{}{ \Gamma_1,x:{\rho} \complyOF \sigma \derinfOrch x:{\rho} \complyOF \sigma},~
\InfBox {\Deriv_2 }{ \Gamma_2\dersc \sigma \subcontrF \sigma' },~
\Gamma_3,\ \Deriv',\ \Deriv''~
\big) = \]
{\hfill $\Proc\big( 
\InfBox {\Deriv_1 }{ \Gamma'_1 \derinfOrch f' : {\rho} \complyOF \sigma},~
\InfBox {\Deriv_2 }{ \Gamma_2\dersc \sigma \subcontrF \sigma' } ,~
\Gamma_3,\ \Deriv',\ \Deriv''~\big)$}\\
\WHERE ~~~\\[1mm] 
$\InfBox {\Deriv_1 }{ \Gamma'_1 \derinfOrch f' : {\rho} \complyOF \sigma}$
is a subderivation of $\Deriv'$,
such that $x:{\rho} \complyOF \sigma\not\in\Gamma'_1 $, if there is one like that.\\

\Clause{$*$-$(\TcomplHyp\mbox{\,-\!}\subcontr)$}
\[
\Proc\Big( 
\InfBox {\Deriv_1 }{ \Gamma_1 \derinfOrch f : {\rho} \complyF \sigma},~
\Inf[\TcomplHyp\mbox{\,-\!}\subcontr]{}{ \Gamma_2, \sigma \subcontrF \sigma' \dersc\sigma \subcontrF \sigma'} ,~
\Gamma_3,\ \Deriv',\ \Deriv''~\big) =
\] 
{\hfill $\Proc( 
\InfBox {\Deriv_1 }{ \Gamma_1 \derinfOrch f : {\rho} \complyF \sigma},~
\InfBox {\Deriv_2 }{ \Gamma'_2\dersc \sigma \subcontrF \sigma' } ,~ \Gamma_3 ,\ \Deriv',\ \Deriv''~\big)
$}\\
\WHERE ~~~\\[1mm] 
$\InfBox {\Deriv_2 }{ \Gamma'_2\dersc \sigma \subcontrF \sigma' }$
is a subderivation of $\Deriv''$
such that ${\sigma} \subcontrF \sigma'\not\in\Gamma' $, if there is one like that.\\

\begin{rem}
No rule among the last three above can be applied immediately after
the application of one of them. This will be formally shown in Lemma \ref{lem:noinvinv} below.
\end{rem}

\Clause {$(+\cdot\oplus)$-$(\oplus\cdot+\mbox{\,-\!}\subcontr)$}

\[ 
\begin{array}{ll}
\Proc\Big( &
\Inf	[+\cdot\oplus] 
	{\InfBox {\Deriv'_i }
		{ \Gamma'_1 \derinfOrch f_i : \rho_i \complyOF {\sigma}_i} 
	~~ ( \forall i \in I)
	}{ \Gamma_1\derinfOrch \rec x.\bigvee_{i\in I}\langle\Dual{a}_i,{a_i}\rangle.f_i : \bigExternal_{j \in I\cup J} a_j.{\rho}_j\complyOF\bigInternal_{i \in I} \Dual{a}_i.{\sigma}_i   }~~,\\
&\mbox{where~~}
\Gamma_1'= \Gamma_1,x{:}\bigExternal_{j \in I\cup J} a_j.{\rho}_j \complyOF \bigInternal_{i \in I} \Dual{a}_i.{\sigma}_i 
\\[4mm]
&
\Inf	[\oplus\cdot+\mbox{\,-\!}\subcontr] 
	{(h\in I\cap K) \quad \InfBox {\Deriv''_h }
		{ \Gamma_2'
     	\dersc \sigma_h \subcontrF \sigma'_h}
	}{ \Gamma_2\dersc \bigInternal_{i \in I} \Dual{a}_i.{\sigma}_i \subcontrF\bigExternal_{k \in K} \Dual{a}_k.\sigma_k' }~~,\\
& \mbox{where~~}
\Gamma_2'=\Gamma_2, \bigInternal_{i \in I} \Dual{a}_i.{\sigma}_i \subcontrF\bigExternal_{k \in K} \Dual{a}_k.\sigma_k' 
\\[2mm]
&
 \Gamma_3 ,\ \Deriv',\ \Deriv''~\Big) =
\end{array} 
\] 
{\hfill $\Inf[+\cdot+]
	{\InfBox {\Deriv_h }
		{ \Gamma_3, x{:}\rho \complyF \sigma' \derinfOrch g:\rho_h \complyOF \sigma_h'} 
	\quad (h\in I\cap K)
	}{ \Gamma_3\derinfOrch \rec x.\orchAct{\Dual{a}_h}{a_h}^+.g : \bigExternal_{j \in I\cup J} a_j.{\rho}_j \complyOF \bigExternal_{k \in K} \Dual{a}_k.\sigma_k'  }
$}\\
\WHERE\\[3mm]
 {\hfill $\InfBox {\Deriv_h }
		{ \Gamma_3, x{:}\rho \complyF \sigma' \derinfOrch g:\rho_h \complyOF \sigma_h'} $ 
\hspace{2mm}= \\[4mm]
$\Proc \Big( \InfBox{\Deriv'_i }{ \Gamma'_1\derinfOrch f_i : \rho_i \complyOF {\sigma}_i }
, ~
\InfBox {\Deriv''_h }
		{ \Gamma'_2 \dersc  \sigma_h \subcontrF {\sigma}'_h} 
, ~
\Gamma_3,  x{:}\rho  \complyOF \sigma',\ \Deriv',\ \Deriv''~\Big)$} \\[2mm]
\AND~~ $\rho=\bigExternal_{j \in I\cup J} a_j.{\rho}_j$ and $\sigma' = \bigExternal_{k \in K} a_k.\sigma_k' $.

\vspace{2mm}

\Clause {$(+\cdot+)$-$(+\cdot+\mbox{\,-\!}\subcontr)$}

\[ 
\begin{array}{ll}
\Proc\Big( &
\Inf	[+\cdot+] 
	{\InfBox {\Deriv_p }
		{ \Gamma'_1 \derinfOrch f' : \rho_p \complyOF {\sigma}_p} 
	\qquad (p  \in I\cap K)
	}{ \Gamma_1\der \rec x.\orchAct{\alpha_p}{\Dual{\alpha}_p}^+.f' : \bigExternal_{k\in K} \Dual{\alpha}_k.{\rho}_k \complyOF \bigExternal_{i \in I}\alpha_i.\sigma_i  }~~,\\[2mm]
&\mbox{\footnotesize where~~ $\Gamma_1'= \Gamma_1, x: \bigExternal_{k\in K} \Dual{\alpha}_k.{\rho}_k \complyOF \bigExternal_{i \in I}\alpha_i.\sigma_i $ }
\\ [4mm]
&
\Inf	[+\cdot+\mbox{\,-\!}\subcontr] 
	{\InfBox {\tilde\Deriv_i }
		{ \Gamma'_2 \dersc \sigma_i \subcontrF \sigma'_i }
	\qquad ( \forall i \in I)
	}{ \Gamma_2\dersc \bigExternal_{i \in I}\alpha_i.\sigma_i \subcontrF \bigExternal_{j \in I\cup J} \alpha_j.\sigma'_j }~~,\\[2mm]
&\mbox{\footnotesize where~~
$\Gamma_2'=\Gamma_2, \bigExternal_{i \in I}\alpha_i.\sigma_i \subcontrF \bigExternal_{j \in I\cup J} \alpha_j.\sigma'_j$}
\\[2mm]
&
 \Gamma_3 ,\ \Deriv',\ \Deriv''~\Big) =
\end{array} \] 
{\hfill $\Inf	[+\cdot+]
	{
	\InfBox {\Deriv'_p }
		{ \Gamma_3,x: \rho \complyF \sigma' \derinfOrch g : \rho_p \complyF \sigma'_p} 
	\qquad ( p \in   (I\cup J) \cap K)
	}{ \Gamma_3\der \rec x.\orchAct{\alpha_p}{\Dual{\alpha}_p}^+.g : \rho \complyF \sigma' }$}\\[2mm]

\WHERE\\[3mm]
{\hfill $\InfBox {\Deriv'_p }{ \Gamma_3, x:\rho \complyOF \sigma' \der g: \rho_p \complyF \sigma'_p}
= \\[4mm]
\Proc \big( \InfBox {\Deriv_p }{ \Gamma'_1\der f' : \rho_p \complyOF {\sigma}_p }
, ~
\InfBox {\tilde\Deriv_p}
		{ \Gamma'_2 \dersc \sigma_p \subcontrF \sigma'_p }
, ~
\Gamma_3,  x:\rho  \complyF \sigma' ,\ \Deriv',\ \Deriv''  \big)
$}\\[2mm]
\AND~~ $\rho=\bigExternal_{k \in K} \Dual{\alpha}_k.{\rho}_k $ and $\sigma' =\bigExternal_{j\in I\cup J} \alpha_j.\sigma'_j$,

\Clause {$(\oplus\cdot+)$-$(+\cdot+\mbox{\,-\!}\subcontr)$}

The construction follows the same pattern as before.

\Clause {$(+\cdot\oplus)$-$(\oplus\cdot\oplus\mbox{\,-\!}\subcontr)$}

The construction follows the same pattern as before.
\setlength {\unitlength} {1\point}
\end{defi}

\bigskip
\noindent
We proceed now by showing succesful termination of $\Proc$.

\begin{defi}
The set $\Inv$ of clauses is defined as follows:\\
\lmcscenterline{$\Inv =
\left\{
\begin{array}{c}
(\TcomplHyp)\cdot(\TcomplHyp\mbox{\,-\!}\subcontr),~ (\TcomplHyp)\cdot*,~ *\cdot(\TcomplHyp\mbox{\,-\!}\subcontr)
\end{array}
\right \}$ }
We call {\em invariant} the clauses belonging to the above set.
\end{defi}

\bigskip
\begin{lem}
\label{lem:noinvinv}
Let 
$\Deriv' :: \derinfOrch f: {\rho} \complyOF \sigma$ and
$\Deriv'' :: \dersc \sigma \subcontrF \sigma' $.\\
Then during the evaluation of the procedure $\PProc(\Deriv,\Deriv')$, no two consecutive recursive calls of $\Proc$ can both be due to clauses belonging to the set $\Inv$.
\end{lem}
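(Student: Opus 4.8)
\proof
The plan is to argue by a case analysis on which clause of $\Inv$ is applied, and to show that in each case the two derivation arguments handed to the \emph{immediately following} recursive call of $\Proc$ can no longer match the pattern of any clause in $\Inv$. Recall that each pattern in $\Inv$ requires the first argument to end with $(\TcomplHyp)$, or the second to end with $(\TcomplHyp\mbox{\,-\!}\subcontr)$, or both. Hence it suffices to prove that after any invariant clause the next call has a first argument not ending in $(\TcomplHyp)$ \emph{and} a second argument not ending in $(\TcomplHyp\mbox{\,-\!}\subcontr)$.

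First I would record the key structural fact underlying the whole argument: in $\derinfOrch$ a hypothesis $x:\delta\complyOF\gamma$ is placed into the context of a premise only by one of the logical rules $(+\cdot+)$, $(\oplus\cdot+)$, $(+\cdot\oplus)$, and never by $(\TcomplAx)$ or $(\TcomplHyp)$; the analogous statement holds for $\dersc$ with the rules $(+\cdot+\mbox{\,-\!}\subcontr)$, $(\oplus\cdot+\mbox{\,-\!}\subcontr)$, $(\oplus\cdot\oplus\mbox{\,-\!}\subcontr)$. Consequently, whenever $\Proc$ replaces its first argument by a subderivation $\Deriv_1::\Gamma'_1\derinfOrch f':\rho\complyOF\sigma$ chosen so that $x:\rho\complyOF\sigma\notin\Gamma'_1$, that $\Deriv_1$ is rooted exactly at the conclusion of the logical rule that discharged the $\rec x$ binder; thus $\Deriv_1$ ends with a logical rule and in particular not with $(\TcomplHyp)$. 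Dually, a replacement performed on the subcontract side ends with a logical $\dersc$ rule, hence not with $(\TcomplHyp\mbox{\,-\!}\subcontr)$.

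With this in hand the case analysis is short, assuming the convention that the clauses are matched by decreasing specificity (so that $(\TcomplHyp)$-$(\TcomplHyp\mbox{\,-\!}\subcontr)$ is tried before the one-sided clauses, and the base cases $*$-$(\TcomplAx\mbox{\,-\!}\subcontr)$ and $(\TcomplAx)$-$*$ before them). For $(\TcomplHyp)$-$(\TcomplHyp\mbox{\,-\!}\subcontr)$ both arguments are replaced, so by the structural fact the new first argument ends in a logical rule and the new second argument in a logical $\dersc$ rule; neither invariant pattern can match. For $(\TcomplHyp)$-$*$ the first argument is replaced (hence ends in a logical rule, not $(\TcomplHyp)$), while the second argument is kept; by the matching convention this kept argument did not end in $(\TcomplHyp\mbox{\,-\!}\subcontr)$, and since the call is recursive it did not end in $(\TcomplAx\mbox{\,-\!}\subcontr)$ either, so it ends in a logical $\dersc$ rule. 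The case $*$-$(\TcomplHyp\mbox{\,-\!}\subcontr)$ is symmetric. In all three cases the following call has a first argument not ending in $(\TcomplHyp)$ and a second not ending in $(\TcomplHyp\mbox{\,-\!}\subcontr)$, so no clause of $\Inv$ applies to it.

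I expect the main obstacle to be the bookkeeping around clause selection rather than any deep step: one must make precise that the subderivation selected upon replacement is genuinely the one rooted at the hypothesis-introducing logical rule (so that it cannot itself end in a hypothesis rule), and that in the one-sided clauses the kept argument really is a logical rule and not an axiom or hypothesis. Both points rest on the invariant that each hypothesis is discharged exactly at its matching $\rec$ binder, together with the fixed priority ordering of the clauses of $\Proc$, and I would state these explicitly before running the case analysis.
\qed
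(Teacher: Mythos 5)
Your proof has the same skeleton as the paper's: a case analysis on which clause of $\Inv$ produced the first of the two calls, using (i) the side condition on the replacement subderivations to rule out a hypothesis-rule ending, and (ii) the fact that when both arguments end in a hypothesis rule the call is classified as $(\TcomplHyp)\cdot(\TcomplHyp\mbox{\,-\!}\subcontr)$ — your ``decreasing specificity'' convention, the paper's ``the call $(\TcomplHyp)\cdot*$ is actually a $(\TcomplHyp)\cdot(\TcomplHyp\mbox{\,-\!}\subcontr)$ call'' — to dispose of the one-sided cases. The one substantive divergence is in how you justify (i), and there your argument as written does not go through. You claim that the subderivation $\Deriv_1 :: \Gamma'_1 \derinfOrch f' : \rho \complyOF \sigma$ chosen by an $\Inv$ clause ``is rooted exactly at the conclusion of the logical rule that discharged the $\rec x$ binder.'' The clause guarantees no such thing: it selects \emph{some} subderivation of $\Deriv'$ whose environment omits $x:\rho\complyOF\sigma$, ``if there is one like that,'' and nothing in the procedure privileges the occurrence sitting at the discharging rule. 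Your structural fact about where hypotheses are introduced shows that such a subderivation \emph{exists}; it does not identify which one is picked, so the inference ``thus $\Deriv_1$ ends with a logical rule'' is unsupported. You half-acknowledge this in your closing remark, but the invariant you propose there (each hypothesis is discharged at its matching binder) still does not pin down the procedure's choice.

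The repair is short and is exactly the paper's argument: reason from the selection condition rather than from the identity of the selected subderivation. If $\Deriv_1$ ended with $(\TcomplHyp)$, its conclusion would have the form $\Gamma''_1, x:\rho\complyOF\sigma \derinfOrch x:\rho\complyOF\sigma$, i.e. the statement $\rho\complyOF\sigma$ would occur in $\Gamma'_1$, contradicting the very condition under which $\Deriv_1$ was chosen (symmetrically on the $\dersc$ side with $\sigma\subcontrF\sigma'$). This makes step (i) hold for \emph{every} admissible choice of replacement, which is what the nondeterministic clause requires; with that substitution your case analysis goes through unchanged and coincides with the paper's proof. (The identification of the hypothesis variable in the would-be $(\TcomplHyp)$ ending with the $x$ of the calling pattern is a looseness the paper's own proof also has, so you incur no extra debt there.)
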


\proof
By contradiction, let us assume to have two recursive calls as described above.
We proceed by cases according to the clause corresponding to the first recursive call:
\begin{description}
\item [$(\TcomplHyp)\cdot(\TcomplHyp\mbox{\,-\!}\subcontr)$]
In such a case the recursive call due to application of such a clause produce the call\\
\lmcscenterline{
$\Proc\big( ~
\InfBox {\Deriv_1 }{ \Gamma'_1 \derinfOrch f':{\rho} \complyOF \sigma},~
\InfBox {\Deriv_2 }{ \Gamma'_2 \dersc \sigma \subcontrF \sigma'},~ \Gamma_3, ~\Deriv',~\Deriv'' ~\big)$
}
where 
\begin{equation}
\label{eq:cond}
x:{\rho} \complyOF \sigma\not\in \Gamma'_1 \mbox{ and } \sigma \subcontrF \sigma'\not\in \Gamma'_2
\end{equation}
In order for such a call to be evaluated by means of a clause in $\Inv$, at least one among
$\InfBox {\Deriv_1 }{ \Gamma'_1 \derinfOrch f':{\rho} \complyOF \sigma}$ ~ and ~ 
$\InfBox {\Deriv_2 }{ \Gamma'_2 \dersc \sigma \subcontrF \sigma'}$
should have the form, respectively,
$$\Inf[\TcomplHyp]{}{ \Gamma''_1, x:{\rho} \complyOF \sigma \derinfOrch x: {\rho} \complyOF \sigma} ~\mbox{ and }~ \Inf[\TcomplHyp]{}{ \Gamma''_2,\sigma \subcontrF \sigma' \dersc \sigma \subcontrF \sigma'}$$
 where $\Gamma'_1=\Gamma''_1, x:{\rho} \complyOF \sigma \derinfOrch x: {\rho} \complyOF \sigma$
and  $\Gamma'_2 =\Gamma''_2,\sigma \subcontrF \sigma'$
which is impossible by (\ref{eq:cond}).

\item[$(\TcomplHyp)\cdot*$]
If a call corresponding to such a clause were followed by a call due to $(\TcomplHyp)\cdot(\TcomplHyp\mbox{\,-\!}\subcontr)$ or $(\TcomplHyp)\cdot*$, 
the very same argument of the previous case would apply.\\
If it were followed by $*\cdot(\TcomplHyp\mbox{\,-\!}\subcontr)$, we notice that such a call produces a call of the procedure
with the second argument unchanged. So the last rule of the second derivation to be
actually $(\TcomplHyp\mbox{\,-\!}\subcontr)$. That is the call $(\TcomplHyp)\cdot*$ is actually a $(\TcomplHyp)\cdot(\TcomplHyp\mbox{\,-\!}\subcontr)$ call, and the argument of the first case applies.
\item[$*\cdot(\TcomplHyp\mbox{\,-\!}\subcontr)$] Similar to the previous case.
\end{description}
\qed

\begin{lem}\hfill
\label{lem:dersatcond}

Let us assume that in a derivation $\Deriv ::\  \emptyset\derinfOrch \rho \complyOF \sigma $ (resp. $\Deriv ::\   \emptyset\dersc \sigma \subcontrF \sigma' $)
is present an axiom of the form 
$\Inf[\TcomplHyp]{}{ \Gamma', x:\rho' \complyF \sigma' \derinfOrch x:\rho' \complyOF \sigma'}$
 (resp. $\Inf[\TcomplHyp\mbox{\,-\!}\subcontr]{}{ \Gamma', \sigma'' \subcontrF \sigma''' \dersc \sigma'' \subcontrF \sigma'''}$). \\
Then there exists a subderivation $\Deriv'$ of $\Deriv$ such that
$\Deriv' ::\ \Gamma''\derinfOrch f' : \rho' \complyOF \sigma' $
(resp. $\Deriv' ::\ \Gamma''\dersc \sigma'' \subcontrF \sigma''' $)
with $x:\rho' \complyOF \sigma' \not\in\Gamma''$ (resp. $\sigma'' \subcontrF \sigma''' \not\in\Gamma''$).
\end{lem}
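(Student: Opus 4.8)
The plan is to prove the $\derinfOrch$ part of the statement by tracing the unique path in the derivation tree $\Deriv$ from its root down to the given $(\CkptcomplHyp)$ leaf, exploiting the fact that in system $\derinfOrch$ environments grow monotonically as one moves from the conclusion of a rule towards its premises; the $\dersc$ part is then proved by the identical argument.

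First I would record the environment discipline that can be read off directly from Figure~\ref{fig:forsystinforchder}: the axioms $(\CkptcomplAx)$ and $(\CkptcomplHyp)$ are leaves carrying an arbitrary environment, whereas each of the three proper rules $(+\cdot+)$, $(\oplus\cdot+)$ and $(+\cdot\oplus)$ passes from a conclusion $\Gamma\derinfOrch g:\delta\complyOF\gamma$ to premises whose environment is exactly $\Gamma,\,x{:}\delta\complyOF\gamma$, where $x$ is the fresh variable bound by the $\rec$ introduced at that step, and where no previously present assumption is dropped. Consequently, along any root-to-leaf path the successive environments form an increasing chain, and a tagged assumption $x{:}\delta\complyOF\gamma$ can be present in the premise environment of a step only because it was added there by that step or inherited from below; since the root of $\Deriv$ carries the empty environment, every assumption occurring in an environment is the conclusion pair of some rule lying strictly below it.

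Then I would instantiate this at the path $\pi$ running from the root $\emptyset\derinfOrch\rho\complyOF\sigma$ to the hypothesised leaf with conclusion $\Gamma',x{:}\rho'\complyOF\sigma'\derinfOrch x:\rho'\complyOF\sigma'$. The environment at the root of $\pi$ is empty while the one at its leaf contains $x{:}\rho'\complyOF\sigma'$, so by the monotonicity observation there is a lowest node on $\pi$ whose premise environment already contains $x{:}\rho'\complyOF\sigma'$; by the discipline above this node is the conclusion of a proper rule, hence a judgment of the form $\Gamma''\derinfOrch f':\rho'\complyOF\sigma'$ whose premise environments equal $\Gamma'',x{:}\rho'\complyOF\sigma'$. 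Taking $\Deriv'$ to be the subderivation of $\Deriv$ rooted at this judgment yields the required conclusion, and $x{:}\rho'\complyOF\sigma'\notin\Gamma''$ precisely because we chose the lowest such node: were it present in $\Gamma''$ it would have been introduced still lower. (Under the Barendregt convention this node is in fact the unique one introducing the binder $\rec x$, which is an alternative way to see both its existence and that $x$ tags nothing in $\Gamma''$.)

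Finally, the half of the statement concerning $\dersc$ is proved verbatim, since the rules of Figure~\ref{fig:forsystsubcontr} obey the same discipline: each proper rule extends the context of its premises by its own conclusion $\sigma''\subcontrF\sigma'''$ and nothing else, and $\dersc$ environments are simply untagged pairs. The same tracing argument along the path to the $(\CkptcomplHyp\mbox{\,-\!}\subcontrF)$ leaf, again selecting the lowest node at which $\sigma''\subcontrF\sigma'''$ enters the environment, locates a subderivation $\Deriv'::\Gamma''\dersc\sigma''\subcontrF\sigma'''$ with $\sigma''\subcontrF\sigma'''\notin\Gamma''$. I do not expect a genuine obstacle here; the single point requiring care is the bookkeeping of the contexts along $\pi$ (and, in the orchestrator system, of recursion variables under the Barendregt convention), which is exactly what guarantees both the existence of the introducing node and the absence of the discharged assumption from $\Gamma''$.
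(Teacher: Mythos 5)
Your proof is correct and follows essentially the same route as the paper's: the paper's own argument is a one-line observation that, since the conclusion of $\Deriv$ carries the empty environment, the assumption $x:\rho'\complyOF\sigma'$ (resp.\ $\sigma''\subcontrF\sigma'''$) appearing at the leaf must be discharged somewhere along the path to the root, and the discharging node is precisely the required subderivation. Your version merely spells out the environment-monotonicity bookkeeping and the choice of the lowest introducing node that the paper leaves implicit.
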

\proof
Easy. The conclusion of $\Deriv$ has an empty environment, hence the judgment
$x:\rho' \complyF \sigma'$ (resp. $\sigma'' \subcontrF \sigma'''$) will have to be discharged, sooner or later, from the environment $\Gamma'$. 
\qed

 \begin{lem}\hfill\\
\label{lem:terminv}
Let 
$\Deriv' ::\ \derinfOrch f: {\rho} \complyOF \sigma$ and
$\Deriv'' ::\ \dersc \sigma \subcontrF \sigma' $.
\begin{enumerate}
\item
\label{lem:terminv-i}
If
$\Proc (~\widetilde{\Deriv}_1 ,~\widetilde{\Deriv}_2
,~\Gamma_3, \ \Deriv',\Deriv''~)$ is any call that can occur during the computation
of $\PProc(  \Deriv', \Deriv'' ) $,
then $\widetilde{\Deriv}_1$ and $\widetilde{\Deriv}_2$ are subderivation of $\Deriv'$ and $\Deriv''$, respectively.
\item
\label{lem:terminv-ii}
The computation of $\PProc(  \Deriv', \Deriv'' ) $ cannot fail because of a clause in $\Inv$.
\end{enumerate}
\end{lem}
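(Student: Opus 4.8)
The plan is to establish the two items in order: item \ref{lem:terminv-i} is a syntactic invariant maintained by every recursive call of $\Proc$, to be proved by induction on the length of the computation, and item \ref{lem:terminv-ii} will then follow from item \ref{lem:terminv-i} together with Lemma \ref{lem:dersatcond}. Throughout, the key point is that the fourth and fifth arguments of $\Proc$ never change: they remain the original derivations $\Deriv'$ and $\Deriv''$, so that ``being a subderivation of $\Deriv'$ (respectively $\Deriv''$)'' is a property that can be tracked all along the recursion.

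For item \ref{lem:terminv-i} I would induct on the number of nested recursive calls of $\Proc$ preceding the call under consideration. The computation begins with $\PProc(\Deriv',\Deriv'') = \Proc(\Deriv',\Deriv'',\emptyset,\Deriv',\Deriv'')$, whose first two arguments are literally $\Deriv'$ and $\Deriv''$, settling the base case. For the inductive step I would inspect each clause in the definition of $\Proc$ and check that every recursive call it issues preserves the property. In the structural clauses, those named by genuine inference rules such as $(+\cdot\oplus)$-$(\oplus\cdot+\mbox{\,-\!}\subcontr)$ and $(+\cdot+)$-$(+\cdot+\mbox{\,-\!}\subcontr)$, the new first and second arguments are premises of the current first and second arguments; since by the induction hypothesis the latter are subderivations of $\Deriv'$ and $\Deriv''$, so are the former. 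In the invariant clauses collected in $\Inv$, the recursive call replaces an argument by a derivation which the clause's side condition explicitly requires to be a subderivation of $\Deriv'$ (respectively $\Deriv''$), so the property again holds outright. Transitivity of the subderivation relation closes the induction.

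For item \ref{lem:terminv-ii} I would argue by contradiction, assuming the computation fails at a call fired by a clause of $\Inv$, the failure being caused by the absence of the subderivation sought in its side condition. By item \ref{lem:terminv-i} the relevant argument, say the first, is a subderivation of $\Deriv'$ whose last rule is the axiom $(\TcomplHyp)$ discharging some judgement $x{:}\rho'\complyOF\sigma'$; in particular this axiom occurs inside $\Deriv'$. Since $\Deriv'$ concludes with an empty environment, Lemma \ref{lem:dersatcond} yields a subderivation of $\Deriv'$ concluding $\Gamma''\derinfOrch f'{:}\rho'\complyOF\sigma'$ with $x{:}\rho'\complyOF\sigma'\notin\Gamma''$, which is exactly the subderivation that the side condition searches for; the symmetric argument, using the empty environment of $\Deriv''$, handles a failure triggered on the second argument. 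Hence the lookup cannot fail, contradicting the assumption.

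The main obstacle is item \ref{lem:terminv-ii}: the soundness of the invariant clauses rests entirely on the fact that every hypothesis introduced along $\Deriv'$ or $\Deriv''$ is eventually discharged, which is precisely what the empty-environment hypothesis on the top-level derivations buys us via Lemma \ref{lem:dersatcond}. Once these two items are in place, they combine with Lemma \ref{lem:noinvinv}, forbidding two consecutive invariant steps, to support the termination of $\PProc$: between any two invariant steps a structural clause is applied, and by item \ref{lem:terminv-i} such a step strictly descends into the finite derivations $\Deriv'$ and $\Deriv''$, so the recursion cannot run forever.
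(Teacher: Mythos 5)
Your proof is correct and takes essentially the same approach as the paper's: item (1) is the paper's ``easy, by inspection of the clauses of $\Proc$'' made rigorous as an induction on recursion depth (using that the fourth and fifth arguments stay fixed at $\Deriv',\Deriv''$, that structural clauses recurse on premises, and that the side conditions of the $\Inv$ clauses explicitly demand subderivations of the original $\Deriv'$, $\Deriv''$), while item (2) is exactly the paper's contradiction argument, combining item (1) with Lemma~\ref{lem:dersatcond} applied to the empty-environment conclusions of $\Deriv'$ and $\Deriv''$ to produce the subderivation whose absence would cause the failure. The only difference is presentational: the paper works out one representative clause of $\Inv$, namely $*\cdot(\TcomplHyp\mbox{\,-\!}\subcontr)$, and declares the others similar, whereas you argue the generic case and its symmetric counterpart.
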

\proof
(\ref{lem:terminv-i})~
Easy, by inspection of the clauses of the algorithm $\Proc$.\\
(\ref{lem:terminv-ii})~
By contradiction, let us assume the computation of $\PProc(  \Deriv', \Deriv'' ) $ does fail because of  the clause $*\cdot(\TcomplHyp\mbox{\,-\!}\subcontr)$ (all the other clauses in  $\Inv$ can be 
treated similarly). This means that we cannot go on with the computation of 
\[
\Proc\Big( 
\InfBox {\Deriv_1 }{ \Gamma_1 \derinfOrch f:{\rho} \complyOF \sigma},~
\Inf[\TcomplHyp\mbox{\,-\!}\subcontr]{}{ \Gamma_2, \sigma \subcontrF \sigma' \dersc \sigma \subcontrF \sigma'} ,~
\Gamma_3,\ \Deriv',\ \Deriv''~\big) =
\] 
because there exists no subderivation of $\Deriv''$ such that 
\begin{enumerate}[(a)]
\item
\label{c1}
 it has the form $\InfBox {\Deriv_2 }{ \Gamma'_2 \der  \sigma \subcontrF \sigma'}$ and
\item
\label{c2}
 $\sigma \subcontrF \sigma'\not\in\Gamma'_2 $.
\end{enumerate}
This is impossible because, by item (\ref{lem:terminv-i}),  
$\Inf[\TcomplHyp\mbox{\,-\!}\subcontr]{}{ \Gamma_2, \sigma \subcontrF \sigma' \dersc \sigma \subcontrF \sigma'}$ is an axiom of $\Deriv''$ and hence, by Lemma \ref{lem:dersatcond},
there exists a subderivation satisfying both the conditions (\ref{c1}) and (\ref{c2}) above.
\qed

\begin{prop} \label{prop:Proctermination}
Let 
$\Deriv' ::\ \derinfOrch f: {\rho} \complyOF \sigma$ and
$\Deriv'' ::\ \dersc \sigma \subcontrF \sigma' $.
the computation of $\PProc(  \Deriv', \Deriv'' ) $ always succesfully terminates.
\end{prop}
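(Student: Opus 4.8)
The plan is to prove two facts, handled separately: that the recursion never reaches a \textbf{fail}, and that it is well founded. For the absence of failure I would rely on the fact that the two input derivations share the middle contract $\sigma$. The shape of $\sigma$ simultaneously determines which rule of $\derinfOrch$ can have concluded $f:\rho\complyOF\sigma$ and which rule of $\dersc$ can have concluded $\sigma\subcontrF\sigma'$; a short case analysis then shows that the pair formed by the last rule of the first argument and the last rule of the second argument is always one of the combinations for which a clause of $\Proc$ is defined. The only sensitive combinations are those in which at least one of the two derivations ends with a hypothesis rule, i.e. the clauses collected in $\Inv$, and for these Lemma~\ref{lem:terminv}.(\ref{lem:terminv-ii}) already rules out failure, using Lemma~\ref{lem:dersatcond} to guarantee that the subderivation required by the unfolding step always exists.

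For termination I would assume, as is the case for all derivations to which $\PProc$ is actually applied, that $\Deriv'$ and $\Deriv''$ are the canonical (proof-search) derivations, so that along every path no contract-judgement is expanded structurally twice, a repeated judgement being instead discharged by a hypothesis rule. The key measure is combinatorial: to each call of $\Proc$ I would associate the number $m$ of pairs $(\rho,\sigma')$ of subexpressions of the original contracts for which $x:\rho\complyOF\sigma'$ does \emph{not} occur in the accumulator $\Gamma_3$. Since affectible contracts denote regular trees, they have only finitely many distinct subexpressions, so $m$ is a well-defined natural number bounded by the total count of such pairs.

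The two ingredients combine as follows. The structural (non-invariant) clauses each extend $\Gamma_3$ by a genuinely new judgement $x:\rho\complyOF\sigma'$ and hence strictly decrease $m$; by the canonicity assumption the judgement is fresh, because a recurrence would have been met at a hypothesis leaf rather than at a structural rule. The invariant clauses leave $\Gamma_3$, and therefore $m$, unchanged, while the \textbf{Init} clause makes no recursive call and so closes its branch. Now by Lemma~\ref{lem:noinvinv} no branch can contain two consecutive invariant calls; hence an infinite branch would contain infinitely many non-invariant calls, which, \textbf{Init} being terminal, must be structural and would force infinitely many strict decreases of $m\in\Nat$, a contradiction. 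Every branch is therefore finite, and since each clause spawns only finitely many recursive calls (indexed by the finite sets $I$, $J$, $K$), the computation tree is finitely branching; by König's Lemma it is finite, which gives termination.

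The hard part is precisely the interplay of the invariant clauses with any size-based termination argument: an unfolding clause in $\Inv$ replaces a hypothesis leaf by the (possibly far larger) subderivation in which that hypothesis was introduced, as furnished by Lemma~\ref{lem:dersatcond}, so the combined size of the two argument derivations may \emph{grow}, and a naive structural-size measure cannot work. My resolution is to abandon derivation size altogether and measure progress through the bounded, monotone accumulator $\Gamma_3$, leaning on the regularity of contracts for the finiteness of $m$ and on Lemma~\ref{lem:noinvinv} to guarantee that the size-increasing invariant steps are always interleaved with genuine progress toward saturating $\Gamma_3$.
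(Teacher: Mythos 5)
Your proof follows the same skeleton as the paper's own (rather sketchy) argument: failure is excluded through Lemma~\ref{lem:terminv}(\ref{lem:terminv-ii}), and termination is derived from monotone growth of the accumulator $\Gamma_3$ inside a finite universe, with Lemma~\ref{lem:noinvinv} preventing consecutive invariant steps. What you add is an explicit statement of the step both arguments secretly depend on --- every structural clause must extend $\Gamma_3$ by a \emph{fresh} judgement --- together with a patch for it, namely the assumption that $\Deriv'$ and $\Deriv''$ are canonical proof-search derivations. That patch does not work, and this is a genuine gap.

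Canonicity constrains repetitions along a single path of $\Deriv'$ or of $\Deriv''$ \emph{separately}, but $\Proc$ interleaves the two trees: structural clauses move both arguments down in lock-step, while a clause of $\Inv$ resets \emph{one} argument (to the subderivation supplied by Lemma~\ref{lem:dersatcond}) and freezes the other. The two one-sided hypothesis clauses impose no condition whatsoever on $\Gamma_3$, and \textbf{Init} fires only when \emph{both} arguments end with a hypothesis rule in the same call; hence after a one-sided reset the next structural clause can re-process a pair whose associated judgement is already in $\Gamma_3$, and your measure $m$ does not decrease. This is not hypothetical. Take $\sigma=\rec z. a.z$, $\rho=\rec z. \Dual{a}.\Dual{a}.z$, $f=\rec x.\orchAct{a}{\Dual{a}}.\rec x_1.\orchAct{a}{\Dual{a}}.x$, and $\sigma'=\tau_0$ where $\tau_0=a.\tau_1+b$, $\tau_1=a.\tau_2$, $\tau_2=a.\tau_1+c$. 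The canonical derivations are: $\Deriv'$ with two structural rules (concluding $\rho\complyOF\sigma$, then $\Dual{a}.\rho\complyOF\sigma$) closed by the hypothesis $x$ (a loop of length $2$), and $\Deriv''$ with three structural rules (concluding $\sigma\subcontrF\tau_0$, $\sigma\subcontrF\tau_1$, $\sigma\subcontrF\tau_2$) closed by the hypothesis $\sigma\subcontrF\tau_1$ (a loop of length $2$ entered after a pre-period of length $1$). In $\PProc(\Deriv',\Deriv'')$ the hypothesis leaf of the first argument is reached exactly after an even number of structural steps and that of the second after an odd number, so the two leaves are never reached simultaneously and \textbf{Init} never fires; after three structural steps the environment is stable, and the computation cycles forever through four states, each structural step re-adding an already present judgement. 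On this infinite branch $m$ is constant, your argument yields no contradiction, and indeed $\PProc$ diverges on these inputs --- which shows the missing ingredient cannot be canonicity (it also shows that the paper's own wording, that once the judgement is in the environment one ``shall get to'' the Ax/Hyp/Init clauses, glosses over exactly the same problem; you located the right crux). What does repair the argument is strengthening \textbf{Init}, i.e.\ having the procedure return the corresponding hypothesis axiom \emph{whenever} the judgement attached to the current pair already occurs in $\Gamma_3$, regardless of the shape of the two input derivations: then every structural step either strictly decreases $m$ or closes its branch, and your counting argument plus Lemma~\ref{lem:noinvinv} and K\"onig's Lemma goes through for \emph{arbitrary} derivations, so the canonicity restriction (which Theorem~\ref{thm:functder} cannot afford, since its $\Der$ is arbitrary) can be dropped as well. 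One last detail: entries of $\Gamma_3$ are triples $x:\rho\complyOF\sigma'$, so the finite universe to count is pairs decorated with the finitely many recursion variables, not bare pairs of contracts.
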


\proof
We recall that $\PProc(\Deriv',~\Deriv'') = \Proc(\Deriv',\Deriv'',\emptyset,\Deriv',\Deriv'')$.
The algorithm $\Proc$ is a derivation reconstruction algorithm that tries to build a derivation for the judgment $\derinfOrch \Funct_{\Der''}(f) : \rho \complyOF \sigma'$ in a bottom-up way,
driven by the two derivations 
$\Deriv_1$ and $\Deriv_2$, and by the third argument (the environment). \\
By inspection of the clauses of $\Proc$ we can check that it is impossible the computation of $\PProc(\Deriv',~\Deriv'')$ to  be not terminating because of the existence of an infinite sequence
of recursive calls of $\Proc$, or to fail because of a clause in $\Inv$ that cannot be applied because
the conditions for its application are not satisfied. The latter case is impossible by
Lemma \ref{lem:terminv}(\ref{lem:terminv-ii}). So we proceed now by showing that also the
former case cannot occur.
The proof of this fact is basically rooted in the same ideas as the termination proof for the reconstruction algorithm {\Prove} (see proof of Proposition \ref{prop:provecorr}) used in the completeness proof for the formal system $\der$. Hence in the following we avoid detailing notions similar to those used there.\\

Given a call $\Proc\Big( ~
{\InfBox {\Deriv_1 }{ \Gamma_1\derinfOrch \tilde f: {\tilde\rho} \complyOF \tilde\sigma}} ,~
{\InfBox {\Deriv_2 }{ \Gamma_2\dersc \tilde\sigma \subcontrF \tilde\sigma' }} ,~
\Gamma_3,~ \Deriv', \Deriv''~\Big)$ of the procedure $\Proc$, we refer to $\tilde f$ as {\em the orchestrator}, to $\tilde\rho$ as {\em the client}, to $\tilde\sigma'$ as {\em the server} and to $\sigma$ as 
{\em the intermediate server} of the call.

We first observe that for any recursive call 
corresponding to an application of a clause not in $\Inv$, 
 the client $\rho'$ and the server $\sigma''$ in the call 
are subterms (i.e. they correspond to subtrees) of, respectively, the orchestrator, the client and the server in the call of $\PProc(  \Deriv', \Deriv'' )$, i.e. $\rho$ and $\sigma$, furthermore $x:\rho' \complyOF \sigma''$ is added to the environment parameter (the third parameter) for some variable $x$.  We make a further observations:
 the number of variables that can be associated to an element $\rho' \complyOF\sigma''$ inside the third argument (the environment) of any recursive call during the computation of
$\PProc(  \Deriv', \Deriv'' )$ is necessarily finite (no alpha-conversion is assumed to be performed
on terms during the procedure).
By the above observations and the regularity of the trees generated by our contracts, it follows that, for any sequence of recursive calls out of the initial call of $\PProc(  \Deriv', \Deriv'' )$,
either the clause $(\TcomplAx)$-$*$ will eventually apply, or we get to a call of the form 
$$\Proc \Big(~
{\InfBox {\Deriv_1 }{\Gamma_1\derinfOrch  f:{\rho} \complyOF \sigma}}~
,~
{\InfBox {\Deriv_2 }{ \Gamma_2\dersc \sigma \subcontrF \sigma' }}
,~
\Gamma_3, x{:} {\rho}  \complyOF \sigma',\ \Deriv',\Deriv''~
\Big)$$
By a similar argument as before, by going on with the recursive call, we shall get to
 a clause $(\TcomplAx)$-$*$, or to one of the two clauses:
$(\TcomplHyp)$-$*$ or 
$*$-$(\TcomplHyp\mbox{\,-\!}\subcontr)$ {\bf Init}.
Such an argument holds unless the sequence of calls, from a give point on, is definitely made of calls due to clauses in $\Inv$, a possibility which however can never happen by  Lemma \ref{lem:noinvinv}.\\
So our derivation reconstruction does succesfully terminate.
\qed

\begin{prop} \label{prop:Proccorrectness}
Let 
$\Deriv' ::\ \derinfOrch f: {\rho} \complyOF \sigma$ and
$\Deriv'' ::\ \dersc \sigma \subcontrF \sigma' $, and let $\Funct_{\Der''}$ be defined as in Definition \ref{def:orchfunctor}. Then
$$\PProc(  \Deriv', \Deriv'' )\ =\ \Deriv ::\ \derinfOrch \Funct_{\Der''}(f') : \rho \complyOF \sigma' $$ for some
 derivation $\Deriv$, where $f'$ is obtained out of $f$ by a finite number of $\rec$ expansions (possibly zero).
\end{prop}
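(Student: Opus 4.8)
The plan is to prove the statement by a strengthened induction that follows the well-founded recursion of $\Proc$, which is legitimate because $\PProc(\Deriv',\Deriv'')$ is already known to terminate successfully (Proposition~\ref{prop:Proctermination}). The invariant I would carry is the following generalisation over arbitrary recursive calls: whenever $\Proc(\widetilde{\Deriv}_1,\widetilde{\Deriv}_2,\Gamma_3,\Deriv',\Deriv'')$ is evaluated, with $\widetilde{\Deriv}_1 ::\ \Gamma_1\derinfOrch g:\tilde\rho\complyOF\tilde\sigma$ and $\widetilde{\Deriv}_2 ::\ \Gamma_2\dersc\tilde\sigma\subcontrF\tilde\sigma'$ (both subderivations of $\Deriv'$ and $\Deriv''$ respectively by Lemma~\ref{lem:terminv}(\ref{lem:terminv-i})), the call returns a derivation $\widetilde{\Deriv} ::\ \Gamma_3\derinfOrch h:\tilde\rho\complyOF\tilde\sigma'$ whose orchestrator $h$ is the value produced by $\Funct_{\Der''}$ on the orchestrator attached to $\widetilde{\Deriv}_1$, read up to finitely many unfoldings of $\rec$ binders. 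The target statement is then the special instance with $\Gamma_3=\emptyset$ and $\widetilde{\Deriv}_1=\Deriv'$, $\widetilde{\Deriv}_2=\Deriv''$.

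The bulk of the argument is a clause-by-clause verification of this invariant. For each non-invariant clause I would check two things. First, that the inference written in the conclusion of the clause is a legitimate instance of the corresponding rule of $\derinfOrch$: the required side conditions (such as $I\subseteq J$, membership $h\in I\cap K$, or the matching of the names carried by the orchestration prefixes) all hold because the shapes of $\tilde\rho$, $\tilde\sigma$ and $\tilde\sigma'$ are pinned down by the last rules applied in $\widetilde{\Deriv}_1$ and $\widetilde{\Deriv}_2$. Second, that the orchestrator placed in the conclusion coincides with the value prescribed by the matching case of Definition~\ref{def:orchfunctor}; this is a direct comparison, since the clauses of $\Proc$ were designed to mirror the case split defining $\Funct$ — for instance the clause treating $(+\cdot\oplus)$ against $(\oplus\cdot+\mbox{\,-\!}\subcontrF)$ emits $\rec x.\orchAct{\Dual{a}_h}{a_h}^+.g$, exactly the affectible prefix that the corresponding case of $\Funct$ produces out of a disjunction orchestrator. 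Applying the induction hypothesis to the recursive subcall then supplies the derivation of the premise, completing the inference.

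The delicate part concerns the clauses in $\Inv$, i.e.\ those triggered by $(\TcomplHyp)$ or $(\TcomplHyp\mbox{\,-\!}\subcontrF)$ axioms, which are also the source of the ``$\rec$-expansion'' proviso in the statement. Such a clause does not build a new inference but replaces an axiom occurrence by a subderivation that re-proves the same judgment with a smaller environment; Lemma~\ref{lem:dersatcond} guarantees that a suitable subderivation of $\Deriv'$ (resp.\ $\Deriv''$) always exists, so the clause never fails, and Lemma~\ref{lem:noinvinv} guarantees that two invariant clauses cannot be applied in immediate succession, so each replacement makes genuine progress. Semantically, each replacement corresponds to unfolding a single $\rec$ binder of the orchestrator once, which is precisely why the produced $h$ equals $\Funct_{\Der''}$ applied to $f$ only after finitely many $\rec$ expansions $f'$ of $f$; I would record this unfolding count as part of the invariant and observe that it remains finite by termination.

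The main obstacle I anticipate is the exact reconciliation between the orchestrator that $\Proc$ synthesises incrementally — introducing a fresh $\rec$ variable at every step and discharging it only when an $\Inv$ clause is reached — and the functor $\Funct_{\Der''}$, which Definition~\ref{def:orchfunctor} specifies top-down with its own fresh-variable discipline and accumulator. Making the match precise requires carefully aligning the recursion-variable bookkeeping of the two constructions and checking that the environment parameter $\Gamma_3$ tracks exactly the hypotheses that $\Funct$ keeps in its accumulator argument. Once this alignment is in place, Theorem~\ref{thm:functder} follows by combining the $\Gamma_3=\emptyset$ instance of the invariant with the equivalence of $\derOrch$ and $\derinfOrch$ (Proposition~\ref{prop:derOrchderinfOrchequiv}) and with soundness and completeness of $\derOrch$ with respect to $\complyO$ (Theorem~\ref{th:scder}); alternatively, one may sidestep the variable bookkeeping entirely by proving the infinitary counterpart over regular trees and transferring it back through Lemma~\ref{lem:derderinftyequiv}.
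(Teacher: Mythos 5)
Your proposal is correct and takes essentially the same approach as the paper: the paper's own proof of this proposition is only a one-line sketch (``by working on the derivation and by inspection of the clauses of $\Proc$''), and your induction on the recursion tree of $\Proc$ (well-founded by Proposition~\ref{prop:Proctermination}), with clause-by-clause verification and the $\Inv$ clauses accounting for the finitely many $\rec$ expansions, is exactly the argument that sketch gestures at. Incidentally, your closing alternative---sidestepping the recursion-variable bookkeeping by working with the infinitary systems over regular trees---is the route the paper actually adopts for its primary proof of Theorem~\ref{thm:functder}, via $\PProcinfty$ and Lemma~\ref{lem:derderinftyequiv}.
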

\proof
[Sketch] By working on the derivation $\Deriv$ and by inspection of the clauses of
$\Proc$.
\qed

\fi

\end{document}